\documentclass[11pt,letterpaper]{article}
\usepackage{fullpage}
\usepackage[round]{natbib}
\usepackage{amsmath,amsfonts,amssymb,amsthm,  mathrsfs,mathtools}
\usepackage{tikz}
\usepackage[margin=1in]{geometry}
\usepackage{comment}
\usepackage{multirow} 
\usepackage[hyperindex,breaklinks]{hyperref}
\hypersetup{
  colorlinks,
  citecolor=violet,
  linkcolor=blue,
  urlcolor=blue}
\usepackage{dsfont}
\usepackage[most]{tcolorbox}
\numberwithin{equation}{section}   
\usepackage{booktabs} 
\usepackage{subfigure} 
\usepackage[ruled,noend]{algorithm2e} 

\SetAlFnt{\small}
\SetAlCapFnt{\small}
\SetAlCapNameFnt{\small}
\SetAlCapHSkip{0pt}
\IncMargin{-\parindent}

\usepackage[thinc]{esdiff}

\numberwithin{equation}{section}
\usepackage{packages/macros-common}
\usepackage{packages/macros-project-specific}
\usepackage{packages/mytheorems}

\usepackage[suppress]{color-edits}
\addauthor{ss}{magenta} 
\addauthor{NG}{brown} 

\title{Optimal Contest Beyond Convexity}

  \newcommand{\institution}[1]{#1}
  \newcommand{\email}[1]{Email: \texttt{#1}}
  \newcommand{\affiliation}{\thanks}
\author{
 {Negin Golrezaei
 \affiliation{
   \institution{Massachusetts Institute of Technology, Sloan School of Management. }
 \email{golrezae@mit.edu}
 }}
 \and
 {MohammadTaghi Hajiaghayi
 \affiliation{
   \institution{University of Maryland, Computer Science.}
 \email{hajiagha@umd.edu}
 }}
 \and
 {Suho Shin
 \affiliation{
   \institution{University of Maryland, Computer Science.}
 \email{suhoshin@umd.edu}
 }}
}

\begin{document}
\date{}
\maketitle

\begin{abstract}
    In the \emph{contest design} problem, initiated by Lazear and Rosen (JPE'81), there are $n$ strategic contestants, each of whom decides an effort level. A contest designer with a fixed budget must then design a mechanism that allocates a prize $p_i$ to the $i$-th rank based on the outcome, to incentivize contestants to exert higher costly efforts and induce high-quality outcomes.
    The contest design has received substantial attention in both the economics literature, \eg Moldovanu and Sivan (AER'01) and Fang, Noe, and Strack (JPE'20) and computer science literature, \eg Chawla, Hartline and Sivan (SODA'12) and Ghosh and Kleinberg (EC'14), over the past few decades, with plentiful applications to research contests, crowdsourcing platforms, and modern online content platforms.
    Despite the significant interest, there have been only a few results on optimal mechanisms, even in simplistic settings such as convex (concave) objective functions.

    In this paper, we significantly deepen our understanding of optimal mechanisms under general settings by considering \emph{nonconvex high-dimensional} objective functions in contestants' qualities.
    Notably, our results accommodate the following objective functions: (i) any convex combination of user welfare (motivated by recommender systems) and the average quality of contestants that is neither convex nor concave, (ii) \emph{arbitrary posynomials} over quality. 
    In particular, these subsume \emph{classic measures} in mechanism design such as social welfare, order statistics, and (inverse) S-shaped functions, which have received little or no attention in the contest literature to the best of our knowledge.

    Surprisingly, across all these regimes, we show that the optimal mechanism is \emph{highly structured}:
    it allocates potentially higher prize to the first-ranked contestant, zero to the last-ranked one, and equal prizes to the all intermediate contestants, \ie $p_1 \ge p_2 = \ldots = p_{n-1} \ge p_n = 0$.
    In particular, in some special cases, we observe a stark \emph{phase transition} between two extreme mechanisms: (i) $\HM$ policy ($p_1 = 1, p_2 = \ldots = p_n = 0$) and (ii) $\UNI$ policy ($p_1 = \ldots = p_{n-1}=1/(n-1), p_n = 0$) depending on the objective and cost function, cementing and unifying evidences witnessed in the literature.
    \ssedit{More importantly, thanks to the structural characterization, we obtain a \emph{fully polynomial-time approximation scheme} given a value oracle.}
    
    Our technical results rely on Schur-convexity (or concavity) of Bernstein basis polynomial–weighted functions, total positivity and variation diminishing property.
    \ssedit{En route to our results, we obtain a surprising reduction from a structured high-dimensional nonconvex optimization to a single-dimensional optimization by connecting the shape of the gradient sequences of the objective function to the number of transition points in optimum, which might be of independent interest.}

    \sscomment{Another version for last sentence: Our analysis yields a general dimension-collapse theorem for order-constrained optimization with totally positive gradients: under mild quasiconvexity, any global optimum must lie on a two-level chain, reducing an $n$-dimensional non-convex problem to a single-variable optimization, which might be of independent interest.}

\end{abstract}

\newpage
\tableofcontents
\newpage

\section{Introduction}
The theory of \emph{contest design}, introduced nearly half a century ago by~\cite{lazear1981rank}, has been one of the most central problems in game theory and mechanism design, attracting attention traditionally from economists~\citep{glazer1988optimal,barut1998symmetric,moldovanu2008optimal,fang2020turning} to more recent interest from computer science~\citep{ghosh2011game,ghosh2016optimal,greenwald2018simple,chawla2019optimal}, with numerous applications including research contests, sports tournaments, crowdsourcing, and online content or question-answering platforms.
In the contest design problem, there are $n$ strategic contestants, each of whom decides the amount of costly effort to exert in the competition, which maps to their quality $q$ given a cost function $c(q)$ to maximize their payoff.
The contest designer aims to design a rank-based mechanism that maps each rank $i$ (based on quality) of each contestant to an allocated budget $p_i$ for $i \in [n]$, given a fixed total (normalized) budget $\sum_{i \in [n]} p_i = 1$.


As an illustrative example, consider the problem of online content platforms or social media that design a recommendation algorithm to decide which content to deliver to each user.
The production of content on these platforms is typically handled by individuals known as content producers.
However, producing high-quality content requires significant time and effort. To keep content producers engaged in the ecosystem, platforms provide monetary compensation to encourage the production of high-quality content.
Crucially, the incentives for content producers are closely tied to the recommendation algorithm itself, as greater visibility to users translates to higher potential earnings.
The platform not only needs to maintain high average quality of the content registered on the platform, but also guarantee user welfare, \ie the quality of content that is actually shown to the user.\footnote{Looking forward, a convex combination of these two metrics turns out to be neither convex nor concave, which primarily motivates our study of such generalized objective functions. Still, our results accommodate far more than this specific application, as will be discussed shortly.}



The seminal result by~\cite{glazer1988optimal} reveals that the optimal mechanism uniformly distributes the budget among the contestants except the last-ranked one (henceforth, the $\UNI$ policy) when the cost function is linear and the objective function is the average quality among the contestants.
Since then, despite significant interest, there have been only a few known results even under simplistic cases~\citep{barut1998symmetric,moldovanu2008optimal,greenwald2018simple,fang2020turning}, stating that either of the two extreme mechanisms, $\UNI$ or $\HM$ (which allocates all the budget to the highest-ranked contestant), is optimal in some special cases.
In particular, most of these papers assume a linear objective function to maximize the average quality of the contestants~\citep{glazer1988optimal,moldovanu2008optimal,greenwald2018simple,ghosh2016optimal}, or a convex (concave) function over the quality even in the most general case, with a correspondingly restrictive class of cost functions~\citep{fang2020turning}.
In practice, however, many scenarios, such as recommender system contests, involve objective functions that are much more complicated.


In this context, the following major open questions remain unresolved, with no partial evidence having emerged for decades.
\definecolor{mycolor}{rgb}{0.9, 0.90, 0.9}
\begin{tcolorbox}[colback=mycolor,colframe=gray!75!black,colframe=white]
   \emph{\noindent
   \textbf{Q1:} Would $\HM$ or $\UNI$ still be optimal beyond convex objectives?
   \\
   \textbf{Q2:} If not, what would the optimal mechanism be, and is it even structured?
   }
\end{tcolorbox}
In this paper, we answer these questions both negatively and affirmatively.  
First, we show that neither policy is optimal in general, but that a mechanism interpolating between the two is optimal, \ie one that assigns potentially high prize to the first rank (similar to $\HM$) but uniformly distributes the remaining prizes to intermediate ranks except the last (similar $\UNI$)—a mechanism that has never been studied in the literature.  
On the other hand, we show that either $\HM$ or $\UNI$ is optimal in certain special cases, as evidenced by prior work, and we provide an almost complete characterization of the conditions under which such results hold, thereby filling missing gaps in the literature.



To highlight our main technical contributions:
\begin{enumerate}
    \item We first establish the existence of a symmetric mixed Nash equilibrium (MNE) for nontrivial policies\footnote{A policy is nontrivial if it is not $p_1 = \ldots = p_n$, which induces a trivial pure Nash equilibrium of zero effort.} using topological arguments on discontinuous games from~\cite{reny1999existence}.
    \ssedit{We further prove that the symmetric MNE is unique.}
    \item Towards characterizing the optimal structure, we show that either the $\HM$ policy (with $p_1 = 1$ and $p_2 = \ldots = p_n = 0$) or the $\UNI$ policy (with $p_1 = \ldots = p_{n-1} = 1/(n-1)$ and $p_n = 0$) is optimal if the objective function is either user welfare (motivated by recommender systems) or the average quality of the contestants, exhibiting a stark \emph{phase transition} depending on whether the cost function is convex or concave. In fact, we provide much more general conditions for such optimality, complementing partial evidence~\citep{glazer1988optimal,greenwald2018simple,fang2020turning} in the literature.
    \item We extend our analysis to the case where the objective function is an arbitrary convex combination of user welfare and average quality, leading to an optimization problem involving a sum of convex and concave components, which typically does not admit a closed-form solution.
    Nevertheless, we identify the \emph{structure of the optimal policy} to be $p_1 \ge p_2 = \ldots = p_{n-1} \ge p_n = 0$.
    Surprisingly, this result extends to significantly more general objective functions--arbitrary \emph{posynomial}s over qualities--subsuming classic measures in mechanism design such as social welfare, order statistics, (inverse) S-shaped functions, and exponential utility, which have received little or no attention in the literature, to the extent of our knowledge.
    \item \ssedit{Furthermore, thanks to the structural characterization, we obtain a \emph{fully polynomial-time approximation scheme} to compute the optimal policy, assuming the constant parameter in the cost function.\footnote{This is reasonable in practice, since the only scaling parameter in real-world scenario would be the number of contestants $n$ whereas the cost function is fixed.} Interestingly, our techniques yield a \emph{general reduction} from a structured high-dimensional nonconvex optimization to a single-dimensional optimization problem, which implies potential polynomial-time solvability for a wide class of high-dimensional non-convex optimization.}
\end{enumerate}

In particular, we note that our results subsume prior results in comprehensive manner: linear cost~\citep{glazer1988optimal} and convex cost settings~\citep{greenwald2018simple}, average quality as an objective~\citep{barut1998symmetric}, restricted class of policies~\citep{moldovanu2008optimal}, and convex (concave) utility function with concave (convex) cost~\citep{fang2020turning}.
It may be also worth noting that our existential result uses topological arguments by~\cite{reny1999existence} rather than characterizing several analytical properties of symmetric profiles tailored to each setting~\citep{barut1998symmetric,fang2020turning}, which we believe may be of general interest to the literature, formalizing unstructured observations in the literature. 

Further, en route to our results on the structural characterization of $p_1 \ge p_2 = \ldots = p_{n-1} \ge p_n = 0$, we observe a surprising connection between the shape of the gradient sequence of the objective function and number of transition points in optimum for a constrained symmetric function optimization problem, which may be of independent interest to broader optimization community.

In what follows, we give an overview of our results and techniques.
Related works are discussed in Section~\ref{sec:rel}, and we present formal problem setup in Section~\ref{sec:model}, followed by main technical results.
Main technical results, from problem setup in Section~\ref{sec:model} to main results in Section~\ref{sec:opt-structure-general}, will be presented primarily based on the recommender systems application, and we discuss generalized objective functions and their implications on Section~\ref{sec:general}. All the proofs are deferred to appendices.

\subsection{Results and Techniques}
Before presenting a summary of our results and proof techniques, we briefly introduce our setting. More details can be found in Section~\ref{sec:model}.
Looking ahead, our characterization of the optimal structure is summarized in Table~\ref{tab:summary}.
For ease of exposition, we present our model and results based on the scenario of recommender systems, while noting that the results are broadly applicable to various settings studied in the literature, such as research contests and crowdsourcing contests.



There are $n$ content producers (\ie contestants), each of whom decides on a quality $q_i$ for their content (\ie outcome).
Producing content with quality $q$ requires a cost of $c(q) = q^{\beta}$ for $\beta > 0$.\footnote{All our results readily adapt to the linear transformation $c(q) = \gamma q^{\beta}$.}
The recommender system (RS, \ie a contest designer) designs a rank-based recommendation policy $\bp = (p_1,\ldots,p_n) \in \Delta$ where $\Delta = \{\bp \in \R_{\ge 0}^n: p_1 \ge \ldots \ge p_n \ge 0, \sum_{i=1}^n p_i = 1\}$.
Then, whenever a user joins the platform, the RS sorts the existing content in decreasing order of quality and recommends the $k$-th highest quality content with probability $p_k$,\footnote{From the general contest design problem's perspective, each $p_k$ can be viewed as a prize with normalized total budget. \ssedit{Thus, when we write \emph{probability}, this refers to \emph{prize} in the context of general contest design literature.}} \ssedit{where ties are broken uniformly at random.}

Producer $i$ receives a unit reward of $1$ whenever their content is recommended.
Hence, their expected reward under this policy when their content has rank $k$ is $p_k$.
The eventual payoff of producer $i$ is the expected reward minus the cost of producing content with the corresponding quality, \ie $p_k - c(q_i)$ if $q_i$ is ranked $k$-th given others' strategies.

The RS aims to maximize a convex combination of user welfare ($\cW(\bmu, \bp)$) and platform quality ($\cQ(\bmu, \bp)$) at a symmetric MNE $\bmu$. Here, user welfare denotes the expected quality of content shown to users, while platform quality refers to the average expected quality of all content submitted to the system.\footnote{See Section~\ref{sec:model} for formal definitions.}
Note that $\bmu$ represents the product distribution over producers’ mixed strategies, where each producer may randomize over content quality. 
The goal is to design a policy $\bp$ that maximizes the objective:
$
\alpha \cW(\bmu, \bp) + (1-\alpha) \cQ(\bmu, \bp).
$


We say that the RS is \emph{single-minded} if $\alpha = 0$ or $1$, \ie it is interested solely in maximizing either user welfare or platform quality, and \emph{convex-minded} if $\alpha \in (0,1)$, \ie when the objective function is truly a convex combination of the two. This formulation captures the trade-off between maximizing immediate user satisfaction and ensuring long-term ecosystem quality.

\subsubsection*{Existence of symmetric MNE (Section~\ref{sec:sub-exist})}
First, we show that a pure Nash equilibrium (PNE) does not exist under any policy where $p_i \neq p_j$ for some $i, j \in [n]$, thereby generalizing the non-existence result of~\cite{jagadeesan2024supply} beyond the $\HM$ policy.
This motivates the study of mixed strategy equilibria.
Given the homogeneity of the setting, it is natural to focus on symmetric mixed Nash equilibria (MNE), where all producers adopt the same strategy.\footnote{\ssedit{Symmetric equilibrium is widely adopted, especially in symmetric games, due to its simplicity and anonymity: players do not need to coordinate with others about which strategies to play in order to reach equilibrium outcomes.}}

On the other hand, standard existence theorems do not directly apply due to discontinuities in the producers' payoff functions: a small increase in content quality can cause a discontinuous jump in utility, as the recommendation probability may shift abruptly from $p_{i+1}$ to $p_i$ for some $i$.

To overcome this challenge, we exploit the seminal result of~\cite{reny1999existence}, which establishes the existence of symmetric MNE in quasi-symmetric games under two key conditions: diagonally better-reply security and upper semi-continuity of the payoff function with respect to the mixed strategy.
We verify that our setting satisfies these conditions by metrizing the space of mixed strategies (endowed with the weak$^*$ topology) using the L\'evy-Prokhorov metric under the $\ell_2$ norm.\footnote{We refer to Section~\ref{sec:existence} and Appendix~\ref{sec:prelim} for more details on the weak$^*$ topology and the L\'evy-Prokhorov metric.} This approach allows us to reason about distances in the space of pure strategies rather than directly analyzing the underlying topological properties of the space of distributions.

Our existential result generalizes (i) that of~\cite{jagadeesan2024supply}, who prove the existence of symmetric MNE when the platform runs the $\HM$ policy, to a broader class of policies, 
and (ii) that of~\cite{barut1998symmetric}, who characterize properties of the symmetric MNE for the linear cost function.
In particular, the existential proof by~\cite{barut1998symmetric} relies heavily on the necessary conditions that an equilibrium must satisfy if it exists and manually argues that a system of equations provides a symmetric MNE with bounded support and that any deviation beyond the support only decreases the payoff, all of which are largely tailored to their specific setting.
We provide a more generic proof based on topological arguments, which easily extends beyond the linear cost function case in~\cite{barut1998symmetric}.\footnote{\cite{fang2020turning} informally claim that the existential argument by~\cite{barut1998symmetric} easily extends beyond the linear cost case without formal justification, which we believe may need to be formally justified given their subtle arguments. In any case, we believe our results can serve as a formal framework to argue the existence of symmetric MNE in the contest literature.}


\begin{table}[h]
\centering \footnotesize
\renewcommand{\arraystretch}{1.2}
\begin{tabular}{|c|c|c|c|}
\hline
\textbf{Setting} & \(\beta\) & \textbf{Techniques} & \textbf{Optimal Policy} \\
\hline
\(\alpha = 1\) (User welfare) & Any & Schur-convexity & \(\HM\) \\
\hline
\multirow{3}{*}{\(\alpha = 0\) (Platform quality)} 
& \(< 1\) & Schur-convexity & \(\HM\) \\
& \(= 1\) & Neutral (linear) & Any \\
& \(> 1\) & Schur-concavity & \(\UNI\) \\
\hline
\(\alpha \in (0,1)\) (Mixed) & Any & Variation diminishing & \(p_1 \ge p_2 = \ldots = p_{n-1} \ge p_n = 0\) \\
\hline
Arbitrary posynomial & $-$ & Variation diminishing & \(p_1 \ge p_2 = \ldots = p_{n-1} \ge p_n = 0\) \\
\hline
\end{tabular}
\caption{Optimal policy under different values of \(\alpha\) (objective) and \(\beta\) (cost curvature). ``Any'' in the optimal policy column denotes that any policy with $p_n = 0$ would induce the same objective values.}
\label{tab:summary}
\end{table}

\ssedit{
\subsubsection*{Uniqueness of symmetric MNE (Section~\ref{sec:sub-unique})}
Recall that our existential theorem guarantees that, for any nontrivial policy $\bp$, there exists at least one corresponding symmetric equilibrium.
Looking ahead to Section~\ref{sec:opt}, we prove that the RS's optimization problem at any symmetric MNE, given a policy $\bp$, can be reduced to a simpler optimization problem that involves only $\bp$ in the objective and constraints—without requiring $\bbmu$.
This suggests that the symmetric MNE may be unique. Indeed, in Theorem~\ref{thm:uniq-symm}, we prove that this is the case; that is, the symmetric MNE corresponding to a nontrivial policy is unique.
This follows from some observations on the characteristics of the cumulative distribution function of the symmetric MNE, stemming from the properties of the equilibrium.
}


\subsubsection*{Optimal rank-based policy: optimization over Bernstein basis polynomials (Section~\ref{sec:opt})}
We now turn to the problem of characterizing the optimal rank-based recommendation policy, assuming content producers play according to the symmetric MNE. In this setting, the RS aims to solve the following optimization problem:
\begin{align*}
\begin{aligned}
    (\text{\Opt}_1)\quad  & \underset{\bp \in \Delta}{\text{maximize}}
    & & \alpha \cW(\bbmu,\bp) + (1-\alpha ) \cQ(\bbmu,\bp)\\
    & \text{subject to}
    & & \text{$\bbmu$ is a symmetric MNE}
\end{aligned}
\end{align*}
At first glance, solving this problem appears intractable because it requires an explicit characterization of the equilibrium distribution $\bbmu$. 

Interestingly, we show that solving the RS’s problem does not require explicitly computing $\bbmu$. We first prove that, in any optimal policy, the last-ranked content should receive zero exposure, \ie $p_n = 0$ using correlation inequality, and further that the symmetric MNE does not have a point mass.
Building on this, we derive a reduced optimization problem using characteristics of symmetric MNEs and algebraic properties of Bernstein basis polynomials {along with their intricate relation to the platform's objective function}, stated as follows.
\begin{theorem}\label{thm:prop-user}
    The platform's problem can be written as the following optimization problem:\footnote{As a side product, note that one can directly re-obtain Proposition~\ref{thm:single-hm} without characterizing the equilibrium.}
    \begin{align*}
    \begin{aligned}
        (\text{\Opt})\quad  & \underset{\bp}{\text{maximize}}
        & & \alpha n\int_0^1 h(x,\bp)^{1 + 1/\beta}dx + (1-\alpha) \int_0^1 h(x,\bp)^{1/\beta} dx\\
        & \text{subject to}
        & & p_1 \ge p_2 \ge \ldots \ge p_n = 0
        \\
        & & & \sum_{i=1}^n p_i = 1
    \end{aligned}
    \end{align*}
\end{theorem}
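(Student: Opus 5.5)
The plan is to compute both objectives as integrals over the equilibrium quantile $x = F(q) \in [0,1]$, where $F$ is the CDF of the symmetric MNE $\bbmu$. Two facts established before this theorem are used throughout: at any optimum $p_n = 0$, and for nontrivial $\bp$ the symmetric MNE has no point mass (and is unique). Define the interim reward of a producer whose quality sits at quantile $x$ while the other $n-1$ producers draw i.i.d.\ from $F$. Being ranked $k$-th means exactly $k-1$ of the others are above, so this reward is the Bernstein-weighted expression
\[
\tilde h(x,\bp) = \sum_{k=1}^n p_k \binom{n-1}{k-1} x^{n-k}(1-x)^{k-1}.
\]
Since $F$ is atomless, ties occur with probability zero, so tie-breaking is irrelevant. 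We set $h(x,\bp) = \tilde h(x,\bp) - p_n$. Because $\bp$ is nonincreasing, $\tilde h$ is nondecreasing in $x$; this is a standard property of Bernstein polynomials with monotone coefficients.

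\textbf{Equilibrium identification.} Every quality $q$ in the support of $F$ earns the same payoff $\tilde h(F(q),\bp) - q^\beta = u^*$. First I show that the infimum of the support is $0$. If it were some $\underline q > 0$, that quality is ranked last almost surely and earns $p_n - \underline q^\beta < p_n$, which is strictly beaten by deviating to $q = 0$. Hence $u^* = p_n$. Next I show the support is an interval: a gap $(a,b)$ would let a producer at $b$ move down to $a$, keeping the same rank distribution at strictly lower cost. Inverting the indifference condition then gives the quantile function
\[
q(x) = \bigl(\tilde h(x,\bp) - p_n\bigr)^{1/\beta} = h(x,\bp)^{1/\beta}.
\]
This is well defined and monotone because $h \ge 0$ and $h$ is nondecreasing.

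\textbf{Objectives.} Platform quality is $\cQ = \mathbb{E}[q_i] = \int_0^1 q(x)\,dx = \int_0^1 h^{1/\beta}\,dx$, by the change of variables $q = q(X)$ with $X$ uniform on $[0,1]$. For user welfare, $\cW = \mathbb{E}\bigl[\sum_k p_k q_{(k)}\bigr] = \sum_i \mathbb{E}[q_i \cdot \text{reward}_i]$. Conditioning on producer $i$'s quantile, the conditional reward is $\tilde h$, so
\[
\cW = n\int_0^1 h^{1/\beta}\,(h + p_n)\,dx.
\]
Imposing $p_n = 0$ reduces this to $n\int_0^1 h^{1+1/\beta}\,dx$. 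Combining with weights $\alpha$ and $1-\alpha$ yields the objective of (\Opt).

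\textbf{Constraints and the expected obstacle.} The feasible set is $\Delta$ together with $p_n = 0$, which is exactly the constraint set of (\Opt). The trivial policy is excluded automatically, since $p_n = 0$ and $\sum_i p_i = 1$ force $\bp$ to be non-constant. The main obstacle is making the equilibrium identification rigorous: no atoms, support starting at $0$, no gaps, and indifference holding everywhere on the support rather than almost everywhere. I would handle the last point using right-continuity of $F$ together with the continuity of $\tilde h$ in $x$.
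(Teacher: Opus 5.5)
You have a genuine gap at the start. You list ``at any optimum $p_n=0$'' as a fact established before this theorem, but it is not. In the paper it is proved inside the proof of this very theorem, and it is exactly what justifies the constraint $p_n=0$ in the reduced problem. The platform's original problem ranges over all nontrivial $\bp\in\Delta$. Your computation only shows that, on $\{p_n=0\}$, the objective equals $\alpha n\int_0^1 h^{1+1/\beta}\,dx+(1-\alpha)\int_0^1 h^{1/\beta}\,dx$. To conclude the two problems coincide you must show no policy with $p_n>0$ does better, and this is not automatic. If you shift mass $\delta$ from $p_n$ to $p_1$, your shifted function $h=\tilde h-p_n$ rises pointwise by $\delta(1+a_1(x)-a_n(x))\ge 0$, so $\cQ$ improves. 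But $\tilde h$ changes by $\delta(a_1(x)-a_n(x))$, which is negative near $x=0$. So monotonicity of $\cW=n\int_0^1 h^{1/\beta}\tilde h\,dx$ needs an extra idea.

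That idea is a correlation (Chebyshev/FKG) inequality, which is what the paper uses. With your formulas it goes through cleanly. Writing $h_\delta,\tilde h_\delta$ for the perturbed functions,
\[
\int_0^1 h_\delta^{1/\beta}\tilde h_\delta\,dx-\int_0^1 h_0^{1/\beta}\tilde h_0\,dx=\int_0^1\bigl(h_\delta^{1/\beta}-h_0^{1/\beta}\bigr)\tilde h_\delta\,dx+\delta\int_0^1 h_0^{1/\beta}(a_1-a_n)\,dx .
\]
The first term is nonnegative because $h_\delta\ge h_0$. The second is nonnegative because $a_1-a_n=x^{n-1}-(1-x)^{n-1}$ is increasing with $\int_0^1(a_1-a_n)\,dx=0$, while $h_0^{1/\beta}$ is nondecreasing. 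Taking $\delta=p_n$ removes $p_n$ without lowering either objective.

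Apart from this, your derivation is correct and more direct than the paper's. You get $\cW$ by conditioning on a producer's own quantile and using the quantile transform $q(x)=(\tilde h(x,\bp)-p_n)^{1/\beta}$. The paper instead writes $\cW=\sum_i p_i\mathbb{E}[q_{(i)}]$ via order-statistic CDFs, differentiates the indifference condition to get $\beta q^{\beta-1}=f(q)\psi(F(q),\bp)$ (which needs a.e.\ differentiability of $F$), and integrates by parts using $\frac{d\phi}{dx}=-nh$. Both arrive at $\cW=n\int_0^1\tilde h\,(\tilde h-p_n)^{1/\beta}\,dx$, but yours never touches the density of $F$. Also note that your $h$ differs from the theorem's $h$ by $p_n$, so the notations agree only once $p_n=0$ has been established.
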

Note  that $h(x,\bp) = \sum_{i=1}^n a_i(x) p_i$, and $a_i(x) = \binom{n-1}{i-1}x^{n-i}(1-x)^{i-1}$ are the $(n{-}1)$-dimensional Bernstein basis polynomials.\footnote{The standard $i$-th $n$-dimensional Bernstein basis polynomial is $b_{i,n}(x) = \binom{n}{i} x^i (1 - x)^{n - i}$.}
This reduction reveals a key insight: \emph{we can optimize over $\bp$ without explicitly solving for the equilibrium $\bmu$}, enabling a tractable analysis of the RS's design problem.
Further, this optimization result immediately implies an interesting corollary (Proposition~\ref{prop:price-hardmax}), stating that $\HM$ can be arbitrarily worse than the optimal policy as the number of producers $n$ increases.

\begin{figure}[t]
    \centering
    \subfigure[Optimal Policy with respect to $\alpha$]{
    \includegraphics[scale=0.52]{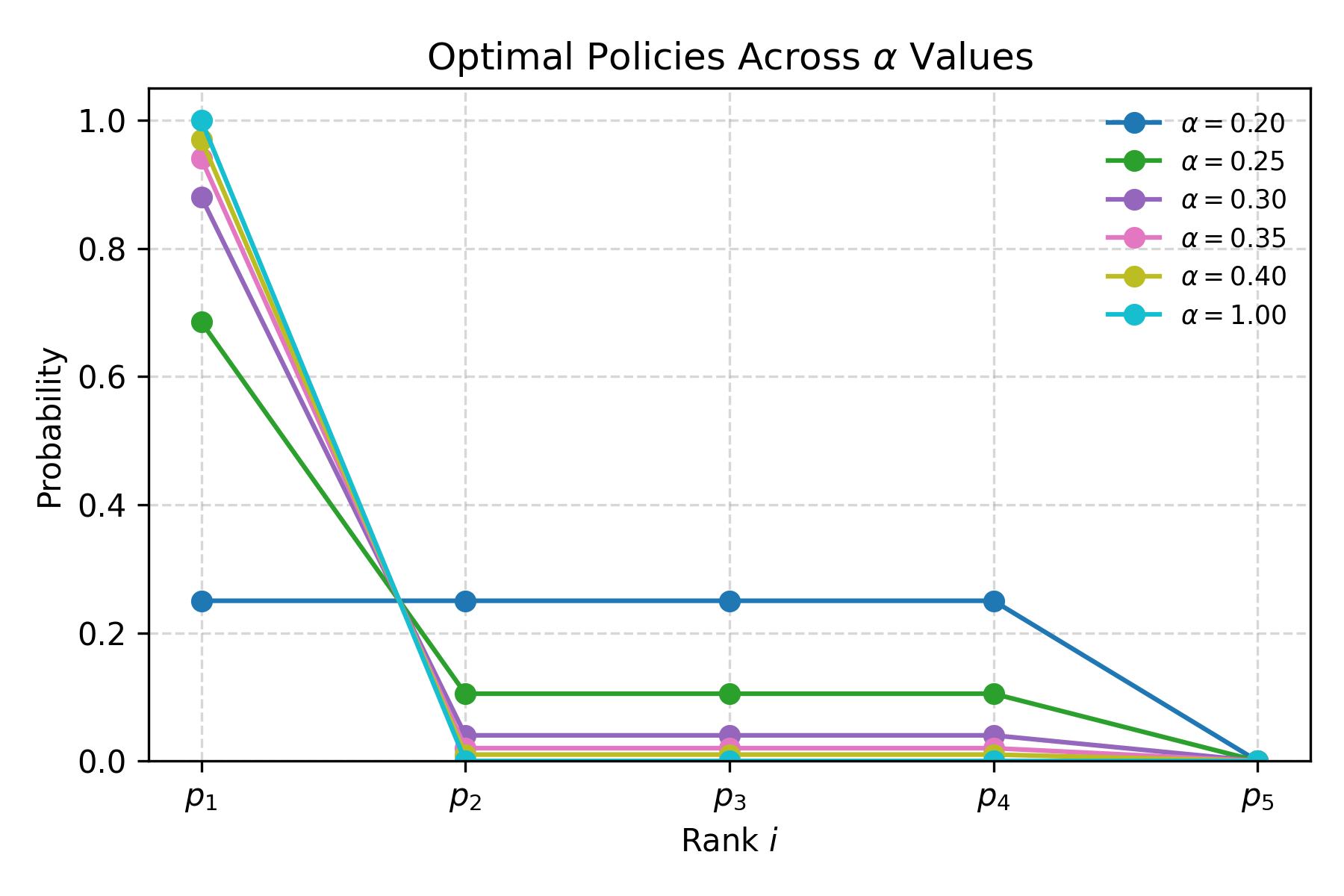}
    }  
    \subfigure[Optimal Policy with respect to $\beta$]{
    \includegraphics[scale=0.52]{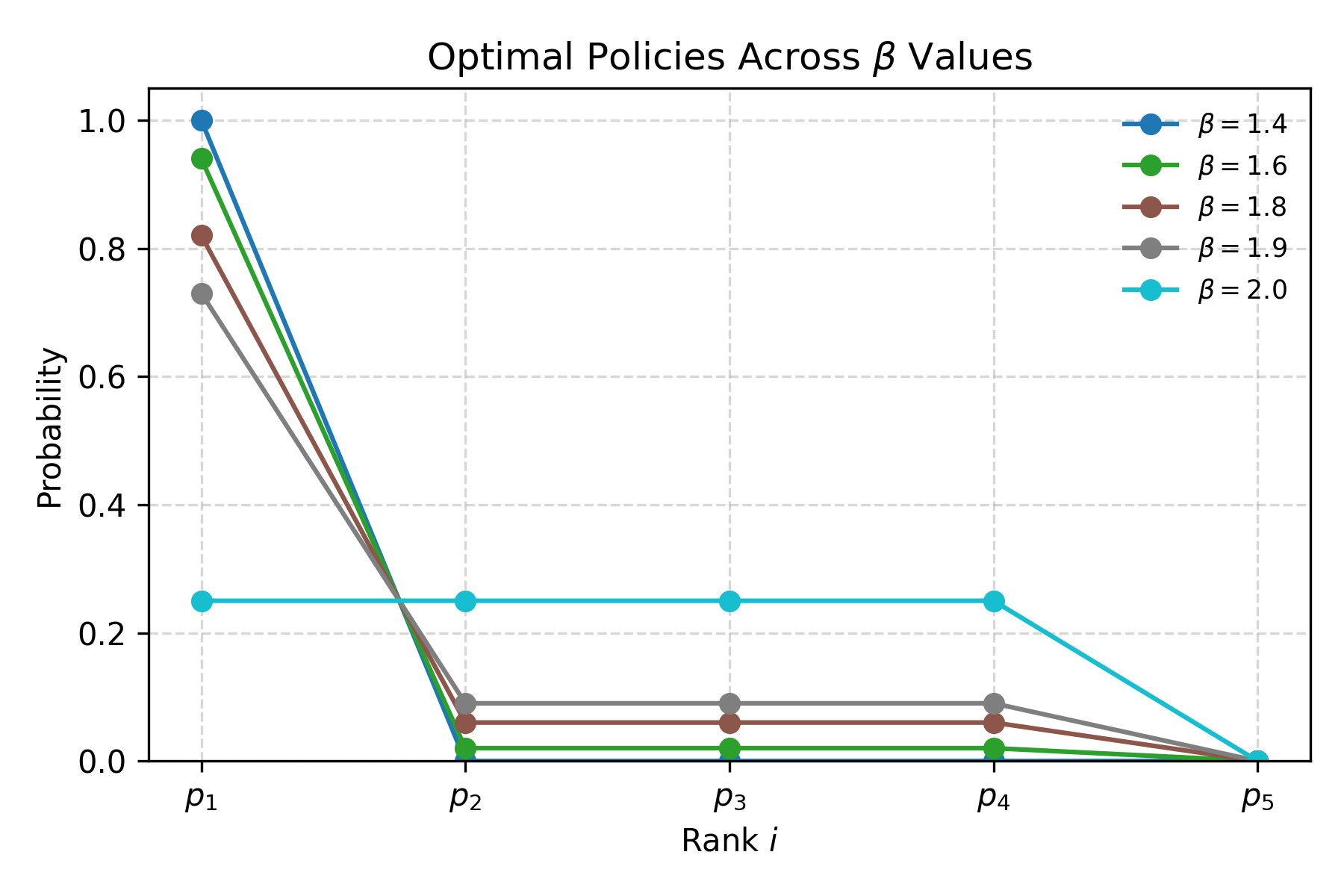}
    }
    \caption{Structures of optimal policy with respect to different $\alpha$ and $\beta$ values given $n = 5$ producers. We compute the optimal policy using a brute-force search over every possible policy after discretizing the policy space with granularity $0.005$, and using trapezoidal rule to approximate integral values over the grid $[0,0.001,0.002,\ldots, 1]$. For the left figure, we set $\beta = 2$ and for the right figure, we set $\alpha = 0.24$. This yields that our characterization is tight in a sense that the structures $p_1 > p_2 = \ldots = p_{n-1} \ge p_n = 0$ or $p_1 = p_2 = \ldots = p_{n-1} \ge p_n = 0$ indeed appear in the optimal policies.}
    \label{fig:opt-pol}
\end{figure} 

\subsubsection*{Single-minded case via Schur-convexity (Section~\ref{sec:opt-structure})}
Despite the simplified formulation, solving the optimization problem remains challenging, as it involves a sum of two integrals—each defined over a nontrivial function $h(x,\bp)$ raised to a power. A key analytical feature is that both integrands share the form $h(x,\bp)^r$, where $h(x,\bp)$ is the inner product between the Bernstein basis polynomials and the policy vector $\bp$.

To make progress, we first analyze the properties of $h(x,\bp)^r$ for $r \ge 0$. Interestingly, if we extend the domain of $h(x,\bp)$ to the full $n$-dimensional probability simplex—by sorting any input vector in decreasing order before applying $h$—the resulting symmetric function becomes \emph{Schur-convex} when $r \ge 1$ or $r \le 0$, and \emph{Schur-concave} otherwise.\footnote{A function $f:\mathbb{R}^n \to \mathbb{R}$ is Schur-convex (resp. Schur-concave) if $f(x) \ge f(y)$ (resp. $f(x) \le f(y)$) whenever $x$ majorizes $y$. See Section~\ref{sec:opt-structure} for details.}

This characterization immediately yields several results. When $\alpha = 1$, the RS's objective becomes Schur-convex, as $r = 1 + 1/\beta > 1$, which—together with the constraint $p_n = 0$—implies the optimality of the $\HM$ policy. When $\alpha = 0$, the objective is Schur-convex if $\beta \le 1$ and Schur-concave if $\beta \ge 1$, implying that $\HM$ is optimal under concave costs, and $\UNI$ under convex costs, \ie a phase transition occurs. Notably, when $\beta < 1$, both user welfare and platform quality are Schur-convex, reinforcing the optimality of $\HM$.
This is formalized in the following theorem.
\begin{theorem}\label{thm:opt-single}
Depending on the weight parameter $\alpha$ and the cost parameter $\beta$, the following holds:
\begin{enumerate}
    \item If $\alpha = 1$ or $\beta \le 1$, or $\alpha = 0$ and $\beta < 1$, then $\HM$ is optimal.
    \item If $\alpha = 0$ and $\beta > 1$, then $\UNI$ is optimal.
    \item If $\alpha = 0$ and $\beta = 1$, then any $\bp \in \Delta$ with $p_n = 0$ is optimal.
\end{enumerate}
\end{theorem}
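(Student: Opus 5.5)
We work directly from the reduced problem (\Opt) of Theorem~\ref{thm:prop-user}. Extend $h(x,\cdot)$ to the whole simplex by first sorting the input in decreasing order, and set $F_r(\bp)=\int_0^1 h(x,\bp)^r\,dx$. The objective is then $\alpha n F_{1+1/\beta}+(1-\alpha)F_{1/\beta}$.

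The key lemma is the one sketched in the text: $F_r$ is Schur-convex for $r\ge 1$ (or $r\le 0$) and Schur-concave for $0<r<1$. Since the feasible set of (\Opt) is $\Delta$ with $p_n=0$, we compare candidates through majorization order.
\begin{itemize}
\item The vector $(1,0,\ldots,0)$ majorizes every point of the simplex.
\item The vector $(1/(n-1),\ldots,1/(n-1),0)$ is majorized by every feasible point, because any $\bp$ with $p_n=0$ majorizes it.
\end{itemize}
So once Schur-convexity or Schur-concavity of the objective is established, $\HM$ or $\UNI$ follows, respectively.

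To prove the lemma, I would use the Schur--Ostrowski criterion on the sorted region $p_1\ge\cdots\ge p_n$. One needs $(p_i-p_{i+1})(\partial_i F_r-\partial_{i+1}F_r)\ge 0$ for convexity, and $\le 0$ for concavity. We have $\partial_i F_r = r\int_0^1 h^{r-1}a_i\,dx$. Hence the sign of $\partial_iF_r-\partial_{i+1}F_r$ is governed by $\int_0^1 h(x)^{r-1}\bigl(a_i(x)-a_{i+1}(x)\bigr)dx$.

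Two facts drive the sign.
\begin{itemize}
\item $h(x,\bp)$ is nondecreasing in $x$ when $\bp$ is decreasing. Indeed $h'(x)=(n-1)\sum_{j}(p_j-p_{j+1})\,b_{n-1-j,n-2}(x)\ge 0$ by the Bernstein derivative formula. So $h^{r-1}$ is nondecreasing for $r\ge1$ and nonincreasing for $r<1$.
\item $a_i-a_{i+1}$ has zero integral, since each $a_i$ integrates to $1/n$. It also has a single sign change from negative to positive: the ratio $a_i/a_{i+1}=\frac{n-i}{i}\cdot\frac{x}{1-x}$ is increasing in $x$.
\end{itemize}
A Chebyshev-type (single-crossing) integral inequality then gives $\int h^{r-1}(a_i-a_{i+1})\ge0$ when $h^{r-1}$ is increasing, and $\le0$ when it is decreasing. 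This yields convexity for $r\ge1$ and concavity for $0<r<1$, which is the case split we need.

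Assembling the three cases:
\begin{itemize}
\item If $\alpha=1$, the objective is $nF_{1+1/\beta}$ with exponent $1+1/\beta>1$. It is Schur-convex, so $\HM$ is optimal.
\item If $\beta\le1$, both exponents $1+1/\beta$ and $1/\beta$ are at least $1$. Both terms are Schur-convex, hence so is any nonnegative combination, and $\HM$ is optimal for every $\alpha$. This covers the clause "$\alpha=0$, $\beta<1$".
\item If $\alpha=0$ and $\beta>1$, the objective is $F_{1/\beta}$ with $0<1/\beta<1$. It is Schur-concave, so $\UNI$ is optimal among feasible points.
\item If $\alpha=0$ and $\beta=1$, $F_1(\bp)=\sum_i p_i\int a_i=1/n$ is constant. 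Every feasible $\bp$ with $p_n=0$ is therefore optimal.
\end{itemize}

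The main obstacle is making the Schur--Ostrowski argument rigorous for the sorted extension. Symmetry is automatic, but differentiability fails on ties $p_i=p_{i+1}$, and $h$ can vanish near $x=0$. The latter makes $h^{r-1}$ singular when $r<1$. I would handle this by verifying the criterion in the interior of the ordered cone and arguing by continuity, or by using the equivalent T-transform characterization: move mass $\epsilon$ from $p_{i+1}$ to $p_i$ and check monotonicity of $F_r$ via the same single-crossing integral. For integrability I would check that $h(x)\ge p_1x^{n-1}$, so $h^{r-1}$ is integrable against $a_i-a_{i+1}$ for $r>0$ after the usual care.
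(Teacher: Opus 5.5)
Your proposal is correct and follows the paper's proof. It uses the same reduction from Theorem~\ref{thm:prop-user}, the same Schur--Ostrowski check on adjacent indices, and the same majorization extremes $\HM$ and $\UNI$. The one difference is how you sign $\int_0^1 h^{r-1}(a_i-a_{i+1})\,dx$: you use a single-crossing (Chebyshev) argument, while the paper integrates by parts via $n(a_i-a_{i+1})=-\frac{d}{dx}a_{i,n}$ to get $\frac{r(r-1)}{n}\int_0^1 h^{r-2}h'\,a_{i,n}\,dx$. These are the same fact, since $-a_{i,n}/n$ is the antiderivative of $a_i-a_{i+1}$ that vanishes at both endpoints.

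One small slip: $a_i/a_{i+1}=\frac{i}{n-i}\cdot\frac{x}{1-x}$, not $\frac{n-i}{i}\cdot\frac{x}{1-x}$, though only its monotonicity in $x$ matters.
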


A similar result is often observed in the literature~\citep{glazer1988optimal,fang2020turning}, \ie either of these two extreme policies is optimal in specific scenarios.  
Our analysis via Schur-convexity provides clearer intuition on why these policies turn out to be optimal in these regimes, with broader generalizability due to its analytical clarity.  
For instance, this approach allows the extension to several generalized objective functions discussed in Section~\ref{sec:general}, combined with our techniques for the convex-minded case.



\subsubsection*{Convex-minded case via variation diminishing property (Section~\ref{sec:opt-structure-general})}
In the general case where $\alpha \in (0,1)$, the objective becomes a combination of a Schur-convex and a Schur-concave function. This poses a significant analytical challenge, as such objective functions typically do not admit a closed-form characterization of the optimal solution; \eg consider optimizing a function that is the sum of a convex and a concave function.

Despite this difficulty, we establish that the optimal policy admits a remarkably structured form: $p_1 \ge p_2 = \ldots = p_{n-1} \ge p_n = 0$. In other words, the optimal $\bp$ is a piecewise constant vector with at most two distinct values (possibly differing at the top and bottom ranks), which is formally stated as follows:
\begin{theorem}\label{thm:opt-general}
    For any $\alpha \in [0,1]$ and $\beta > 0$, any optimal policy has the structure:
    \[
        p_1 \ge p_2 = p_3 = \ldots = p_{n-1} \ge p_n = 0.
    \]
\end{theorem}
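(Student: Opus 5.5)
Starting from the reduced problem (\Opt) of Theorem~\ref{thm:prop-user}, I would change variables to the increments $d_i = p_i - p_{i+1} \ge 0$ for $i\in[n-1]$. Then $p_i = \sum_{j \ge i} d_j$, the budget becomes $\sum_{j} j\, d_j = 1$, and
\[
h(x,\bp) = \sum_{j=1}^{n-1} d_j\, A_j(x), \qquad A_j(x) = \sum_{i \le j} a_i(x) = \Pr[\mathrm{Bin}(n-1,1-x) \le j-1].
\]
With this, (\Opt) is the maximization of $F(\mathbf d)=\int_0^1 \phi(h(x))\,dx$, where $\phi(t) = \alpha n t^{1+1/\beta} + (1-\alpha) t^{1/\beta}$, over the scaled simplex $\{\mathbf d \ge 0, \sum_j j d_j = 1\}$. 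The claimed structure $p_1 \ge p_2 = \dots = p_{n-1} \ge p_n = 0$ says exactly that $d_2 = \dots = d_{n-2} = 0$, so only $d_1$ and $d_{n-1}$ can be positive. The job is therefore to show that no optimum puts mass on an interior index $j \in \{2,\dots,n-2\}$.

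At an optimum the KKT conditions give a multiplier $\lambda$ with
\[
g_j := \frac{1}{j}\int_0^1 \phi'(h(x))\, A_j(x)\,dx \le \lambda \quad \text{for all } j,
\]
with equality wherever $d_j > 0$. So it suffices to prove that the sequence $j \mapsto g_j$ can reach its maximum only at $j=1$ or $j=n-1$, unless it is constant on a range. Equivalently, $g_j - \lambda$ should have a quasiconvex (``$\cup$-shaped'') sign pattern: no interior strict local maximum.

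To control this sign pattern I would use total positivity. Write $g_j - \lambda = \int_0^1 \big(\phi'(h(x)) - \lambda'(x)\big)\,A_j(x)/j\,dx$ after absorbing the constant. Here I use $\int_0^1 A_j(x)\,dx = j/n$, which holds because each Bernstein polynomial integrates to $1/n$. This gives $g_j - \lambda = \frac1j\int_0^1 (\phi'(h(x)) - n\lambda)A_j(x)\,dx$.

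The kernel $(j,x)\mapsto a_j(x)$ is $TP$ (Bernstein basis), so by Karlin's variation diminishing property the number of sign changes of $j\mapsto \int (\phi'(h)-n\lambda)a_j$ is at most the number of sign changes of $x \mapsto \phi'(h(x)) - n\lambda$. Passing from $a_j$ to the cumulative $A_j$ and to the division by $j$ needs a discrete summation-by-parts step, and I expect that step also preserves the bound.

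Now $h$ is monotone in $x$: it is increasing, since $\bp$ is decreasing and the Bernstein weights shift toward lower ranks as $x$ grows. The function $\phi'(t) = \alpha n(1+1/\beta)t^{1/\beta} + (1-\alpha)\beta^{-1}t^{1/\beta-1}$ is a sum of two powers, so $\phi'(t) - n\lambda$ has at most two sign changes in $t$ (Descartes' rule for generalized polynomials). Hence $x\mapsto\phi'(h(x))-n\lambda$ has at most two sign changes. Its pattern is $+,-,+$ when $\beta>1$; when $\beta \le 1$ it is monotone with one sign change.

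Pulled back through the variation diminishing property, the sequence $(g_j - \lambda)_j$ has at most two sign changes with the same $+,-,+$ orientation. Since $g_j \le \lambda$ everywhere, the indices where equality holds must sit at the ends of the nonpositive run. That is, the support of $\mathbf d$ lies in $\{1, n-1\}$, unless $g_j \equiv \lambda$ on a block. I would rule out the degenerate block by analyticity: if $g_j=\lambda$ held on a block of interior indices, the kernel identity would force $\phi'(h(x))$ to be constant, which is impossible since $h$ is a nonconstant polynomial.

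The main obstacle is the variation diminishing step. Concretely, I have to check that the transformed kernel $(j,x)\mapsto A_j(x)/j$, or a suitable reformulation of it, is $TP_3$, and get the orientation of sign changes right so that the $+,-,+$ pattern transfers. If $A_j/j$ fails to be $TP_3$ directly, I would instead work with the original $p$-variables. There the constraint set is a polytope whose vertices are the vectors $(1/k,\dots,1/k,0,\dots,0)$. I would show that the one-dimensional restriction of $F$ to any edge between such vertices has derivative with at most two sign changes. Combining this with $p_n=0$ (established earlier) then collapses the optimum onto the edge between $\HM$ and $\UNI$.
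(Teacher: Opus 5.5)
Your route differs from the paper's at the KKT stage, and it can be completed. Three things need fixing, though. The step you flag as the main obstacle needs a stronger property than the one you name. Your sub-argument for excluding blocks is false. And one parameter pair must be excluded.

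\textbf{Comparison with the paper.} The paper stays in $\bp$-coordinates. It applies variation diminishing to the kernel $a_i(1-x)$ to show that $(\partial G/\partial p_i)_{i\le n-1}$ is quasiconvex with restricted ties (Lemma~\ref{lm:unimodal}). It then runs a case analysis on a KKT system with one multiplier $\nu_i$ per ordering constraint, and handles $\beta\le1$ separately via Schur-convexity.

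Your increments are barycentric coordinates on the ordered simplex, whose vertices are $v_k=(1/k,\dots,1/k,0,\dots,0)$. Your $g_j=\frac1j\sum_{i\le j}\partial G/\partial p_i$ is the directional derivative toward $v_j$. KKT then reads ``$g_j\le\lambda$, with equality on the support'', and a single sign argument on $(g_j-\lambda)_j$ replaces Scenarios 1a--2b. What your route buys:
\begin{itemize}
\item It treats all $(\alpha,\beta)$ at once. In the orientation $y=1-x$ the one-change patterns give $\HM$ when $\alpha=1$ or $\beta\le1$, and $\UNI$ when $\alpha=0$, $\beta>1$, matching Theorem~\ref{thm:opt-single}.
\item Completed as below, it settles $n=4$. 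There the paper's tie argument does not exclude $\partial G/\partial p_1=\partial G/\partial p_2=\partial G/\partial p_3$, because part 3 of Lemma~\ref{lm:unimodal} uses entries on both sides of a triple tie.
\end{itemize}
The paper's route, in exchange, isolates a gradient-shape lemma that it reuses for Corollary~\ref{thm:opt-generic} and Theorem~\ref{thm:general-obj}. That lemma already gives the shape of $(g_j)$, since running means of a quasiconvex sequence are quasiconvex; ties would still need the strict information.

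\textbf{The variation-diminishing step.} Because $g_j-\lambda\le0$ for every $j$, $S^-(g-\lambda)=0$ identically, so the ordinary VD bound says nothing. Everything rests on the strong form $S^+(g-\lambda)\le S^-(f)$ with matching sign arrangement, where $f(y)=\phi'(h(1-y))-n\lambda$.

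So $\mathrm{TP}_3$ is not what you need. You need the kernel $(j,y)\mapsto A_j(1-y)/j$ to be strictly totally positive on $(0,1)$, and it is:
\begin{itemize}
\item Write $A_j=\sum_i L_{ji}a_i$ with $L_{ji}=\mathbf{1}[i\le j]$, which is totally positive.
\item By Cauchy--Binet, every minor $\det\bigl(A_{j_s}(1-y_t)\bigr)_{s,t}$ is a sum of nonnegative terms.
\item One of those terms is $\det L[J,J]\cdot\det\bigl(a_{j_s}(1-y_t)\bigr)_{s,t}$. Here $\det L[J,J]=1$, and the second factor is positive by the paper's Vandermonde computation.
\item Dividing row $j$ by $j$ changes nothing.
\end{itemize}

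\textbf{The block argument.} Your analyticity argument is wrong. $g_j=\lambda$ on a block imposes only finitely many moment conditions $\int_0^1(\phi'(h)-n\lambda)a_i\,dx=0$, and these do not force $\phi'(h)$ to be constant.

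It is also unnecessary. With $S^+$ and the arrangement clause, a zero of $g-\lambda$ at any $j\in\{2,\dots,n-2\}$ forces either three sign changes or two in the order $-,+,-$. The only exception is $g\equiv\lambda$ with $n=4$, which falls to the standard Chebyshev argument:
\begin{itemize}
\item Suppose $f\not\equiv0$ changes sign exactly at $\xi_1<\dots<\xi_k$ with $k\le2$.
\item Let $u(y)$ be the determinant of the kernel on rows $j_1<\dots<j_{k+1}$ and points $y,\xi_1,\dots,\xi_k$.
\item It expands as $\sum_s c_s A_{j_s}(1-y)/j_s$ and has the sign pattern of $\pm f$, so $\int_0^1 fu\,dy\neq0$.
\item Yet that integral equals $\sum_s c_s(g_{j_s}-\lambda)=0$, a contradiction.
\end{itemize}

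\textbf{The degenerate pair.} The hypothesis $f\not\equiv0$ needed by Karlin's theorem fails at $(\alpha,\beta)=(0,1)$. There $\phi'\equiv1$ and $G\equiv1/n$ on the feasible set, so every policy with $p_n=0$ is optimal (Theorem~\ref{thm:opt-single}). The statement as written is therefore false for $n\ge4$. Your claim that constancy is impossible because $h$ is nonconstant overlooks this, since here $\phi'$ itself is constant. This pair has to be excluded, in your proof and in the paper's. For every other pair $\phi'$ is nonconstant and $h$ is strictly increasing, so the argument goes through.
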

To derive this result, we develop several technical tools involving Bernstein basis polynomials and their matrix representations, which may be of independent interest.

\begin{figure}[t]
    \centering
    \subfigure[$p_1$ of optimal policy with respect to $\alpha$ and $\beta$]{
    \includegraphics[scale=0.35]{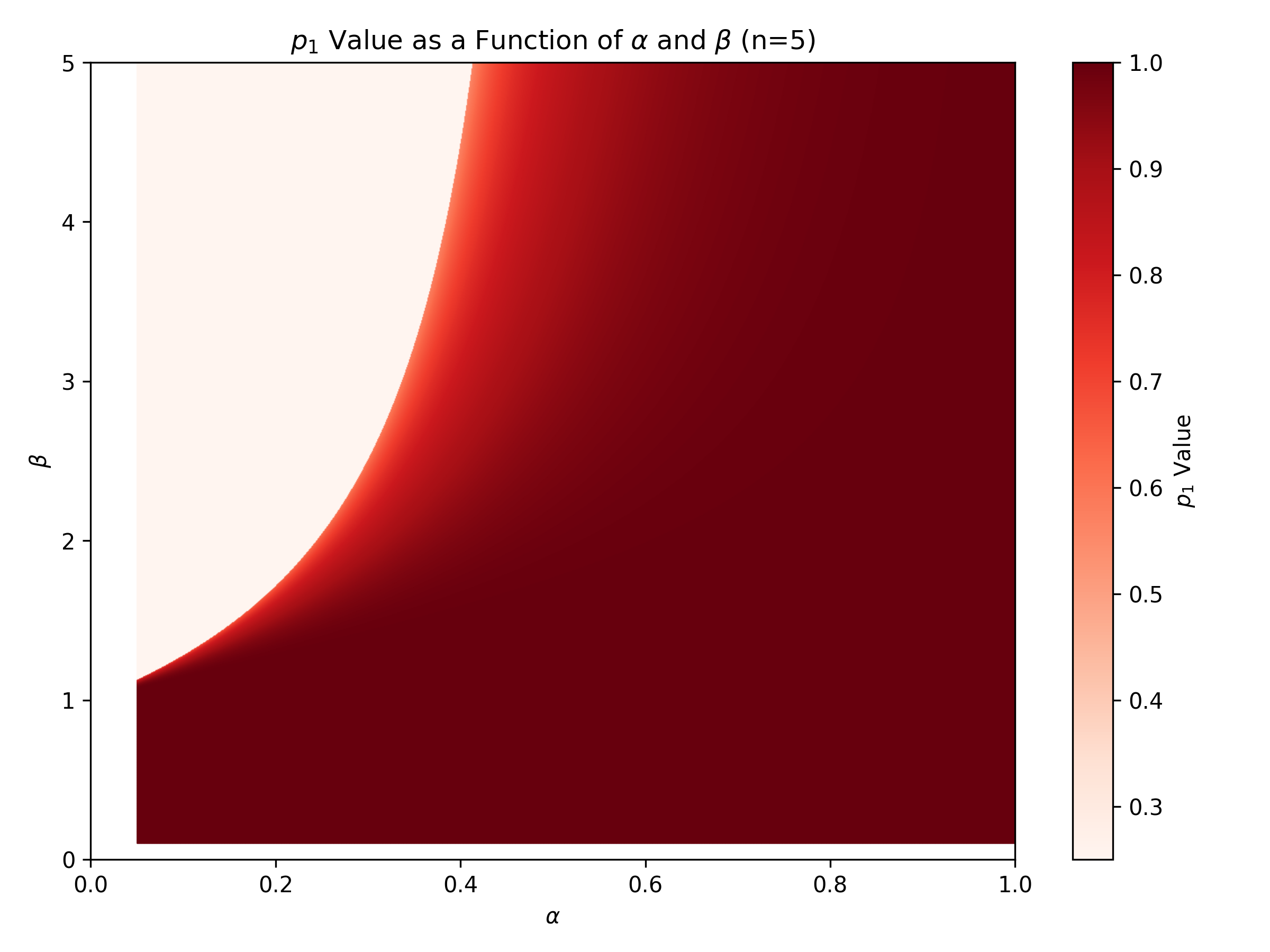}
    }
    \subfigure[$p_2$ of optimal policy with respect to $\alpha$ and $\beta$]{
    \includegraphics[scale=0.35]{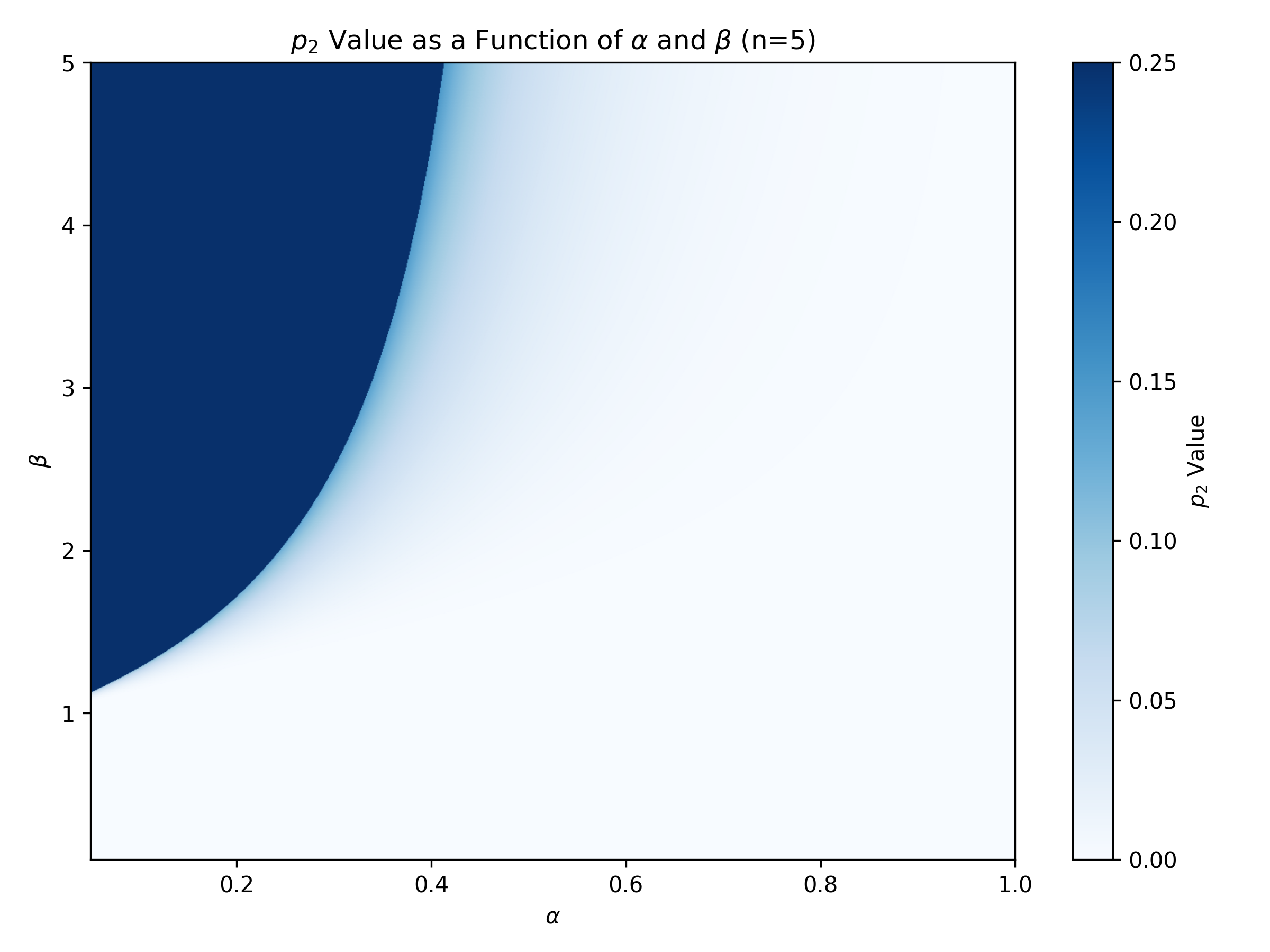}
    }
    \caption{Structures of optimal policy with respect to different $\alpha$ and $\beta$ values given $n=5$ producers. Darker red means larger $p_1$ values, and darker blue means larger $p_2 = p_3 = \ldots = p_{n-1}$ values. Each of the optimal policy is computed in a brute force manner after discretizing policy space with granularity $0.001$ and using trapezoidal rule with number of points $200$. Parameters $\alpha$ and $\beta$ are selected from the ranges $[0.05, 1]$ and $[0.1,5]$, respectively, with $1000$ points calculated for each.}
    \label{fig:opt-pol-two-dim}
\end{figure}

Formally, we show that for any real numbers \( 0 < x_1 < \ldots < x_k < 1 \) and integers \( 0 \le i_1 < \ldots < i_k \le n \), the matrix \( (a_{i_s}(1 - x_l))_{s,l \in [k]} \) exhibits a desirable property known as \emph{total positivity}, as introduced by~\cite{karlin1964total}—meaning that every minor of the matrix has a positive determinant. This is established by expressing each determinant as a generalized \emph{Vandermonde} matrix~\citep{yang2001generalization}.\footnote{See Section~\ref{sec:opt-structure-general} for technical details.}

We then leverage a seminal result from the theory of total positivity and its connection to the \emph{variation diminishing property}\footnote{The variation diminishing property refers to the phenomenon where the number of sign changes in a function is reduced when transformed through a totally positive operator. See Section~\ref{sec:opt-structure-general} for further explanation.}~\citep{karlin1964total,karp2024unimodality} to show that the partial derivatives of the objective function with respect to \( p_i \) for \( i \in [n-1] \) exhibit a \emph{quasiconvex} structure.\footnote{We use ``quasiconvex" to mean that the function first decreases and then increases with respect to the index \( i \). See Definition~\ref{def:quasiconv} for the formal definition.} That is, these partial derivatives decrease up to a certain index and then increase, as a consequence of bounding the number of sign changes in these sequences by those of a related quasiconvex function involving \( h(x, \bp) \).
This is the key ingredient of our main results, formally stated in Lemma~\ref{lm:unimodal}. In particular, it provides a sufficient condition on the objective function under which such properties hold, which will easily allow us to generalize the results, as will be discussed shortly.

Finally, we formulate the Lagrangian of the optimization problem and apply the \emph{Karush--Kuhn--Tucker (KKT)} conditions to derive the necessary optimality conditions, leveraging the geometric insights obtained from the structure of the partial derivatives.

As a final remark, Figure~\ref{fig:opt-pol-two-dim} illustrates how the values of $p_1$ (left) and $p_2$ (right) vary with $\alpha$ and $\beta$. Combined with the fact that $\HM$ can be arbitrarily worse than the optimal policy as $n$ increases, these results suggest that allocating nonzero probabilities to lower-ranked content is crucial for designing optimal recommendation policies in the presence of strategic content producers.

\ssedit{
\paragraph{Implication on high-dimensional nonconvex optimization problem}
In fact, our techniques for such structural characterization yields an interesting implication for a general optimization problem.
Specifically, given a monotone increasing nonnegative (possibly nonconvex) function $f$ and a constant $C > 0$, consider the following generic optimization problem:
\begin{align*}
    \begin{aligned}
        (\text{\Gen})\quad& \underset{\bx \in \R^n}{\text{maximize}}
        & & f(\bx)\\
        & \text{subject to}
        & & x_1\ge x_2 \ge \ldots \ge x_n \ge 0
        \\
        & & & \sum_{i=1}^n x_i = C
        \\
        & & & 
    \end{aligned}
\end{align*}
Then, a general version of our structural result can be written as the following: 

\begin{corollary}\label{thm:opt-generic}
    Consider the constrained optimization problem (\Gen).
    Suppose that the gradient of $f$ with respect to each $x_i$ can be written as the following:
    \begin{align*}
        \frac{\partial f(\bp)}{\partial x_i} = \int_D w(y,\bp)g_i(y) dy,
    \end{align*}
    for a continuous bounded interval $D$, a function $g_i:D \to \R_{\ge 0}$ for $i \in [n]$, and another function $w(y,\bp): D \times \R^n \to \R$.
    Assume further that $g_i(y)$ is $\STP_{n}$,\footnote{We refer to Definition~\ref{def:tp} for more details.}, \ie strongly totally positive of order $n$, and $w(y,\bp)$ is quasiconvex over $y \in D$, \ie it decreases then increases, given any $\bp$.
    Then, it follows that the optimum of $f$ satisfies $x_1 \ge x_2 = \ldots = x_n$.\footnote{Indeed, for our objective function presented in (\text{\Opt}), the proof of Lemma~\ref{lm:unimodal} asserts that this subsumes our result by setting $w(x,\bp) = n\alpha (1+1/\beta)h(x,\bp)^{1/\beta}+(1-\alpha)/\beta h(x,\bp)^{1/\beta - 1}$ and $g_i(x) = a_i(x)$.} \NGcomment{Can we say in a footnote, what is $w$ and $g_i$ in our orginal setting?} \sscomment{Added a footnote.}
\end{corollary}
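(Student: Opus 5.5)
We follow the template of the proof of Theorem~\ref{thm:opt-general}, replacing the Bernstein-specific facts with the two abstract hypotheses.

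\textbf{Step 1: the gradient sequence changes trend at most once.} Fix an optimum $\bx^*$ and write $d_i = \partial f(\bx^*)/\partial x_i = \int_D w(y,\bx^*) g_i(y)\,dy$. We claim $(d_i)_{i\in[n]}$ is quasiconvex in $i$, meaning it decreases and then increases. To see this, take any constant $\lambda$ and consider
\begin{align*}
d_i - \lambda \int_D g_i(y)\,dy = \int_D \bigl(w(y,\bx^*) - \lambda\bigr) g_i(y)\,dy .
\end{align*}
Since $w(\cdot,\bx^*)$ is quasiconvex in $y$, the function $w(y,\bx^*)-\lambda$ has at most two sign changes, with pattern $+,-,+$. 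The kernel $g_i(y)$ is $\STP_n$, so the variation diminishing property~\citep{karlin1964total} gives two conclusions:
\begin{itemize}
\item the sequence above has at most two sign changes;
\item when it has exactly two, they occur in the same order $+,-,+$.
\end{itemize}

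To get a clean statement about $d_i$ itself, we compare against a level. If the normalization $\int_D g_i$ is constant in $i$ (as for Bernstein bases on $[0,1]$, each integrating to $1/n$), then $d_i-\lambda'$ has sign pattern contained in $+,-,+$ for every $\lambda'$. That is exactly quasiconvexity of $(d_i)$. Otherwise we absorb $\int_D g_i$ into the weights by working with the normalized sequence $d_i/\int_D g_i$; checking that the final KKT argument tolerates this normalization is a point we must verify.

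\textbf{Step 2: KKT.} We reparametrize the order constraints by $x_i = \sum_{j\ge i} z_j$ with $z_j\ge 0$, so the budget becomes $\sum_j j z_j = C$. The KKT conditions at the optimum, with multiplier $\lambda$, read
\begin{align*}
\sum_{i\le j} d_i \le \lambda j \quad \text{for all } j, \qquad \text{with equality whenever } z_j>0 .
\end{align*}
Set $S_j = \sum_{i\le j}(d_i-\lambda)$, so $S_0=0$, $S_j\le 0$ for all $j$, and $S_j=0$ at every active breakpoint $j$.

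Because $(d_i-\lambda)$ changes sign in the pattern $+,-,+$ (or a sub-pattern), the partial sums $S_j$ are unimodal-then-convex-like: they increase, then decrease, then increase. Combined with $S_0=0$ and $S_j\le 0$, this forces the following.
\begin{itemize}
\item The initial increasing phase is trivial unless $d_1=\lambda$, so $S$ is first nonincreasing.
\item Zeros of $S$ at $j\ge 1$ can then occur only at $j=1$, at $j=n$, or along a stretch where $d_i\equiv\lambda$.
\end{itemize}

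Hence the active breakpoints $z_j>0$ lie only in $\{1,n\}$, apart from the degenerate constant stretch. This gives $x_1\ge x_2=\dots=x_n$, where $x_n$ may be positive; here there is no $p_n=0$ constraint, unlike in (\Opt).

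\textbf{Main obstacle.} The hard part is the degenerate case where several consecutive $d_i$ equal $\lambda$. There a breakpoint could sit strictly inside without contradicting KKT, so the necessary conditions alone do not exclude it.

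We would handle this with the strictness supplied by \emph{strong} total positivity. Variation diminishing with $\STP$ kernels counts zeros with multiplicity, so the sequence $d_i-\lambda$ cannot vanish on two or more consecutive indices unless $w(\cdot,\bx^*)$ is identically equal to $\lambda$ on $D$. In that remaining case $f$ is locally linear along the budget direction. We would then either argue via a perturbation that some two-level point is also optimal, or note that this case is excluded by the strictness of $w$'s quasiconvexity. We expect to need to state precisely which of these two routes is intended.
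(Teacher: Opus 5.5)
\textbf{The normalization branch is a genuine gap, and the corollary is false without an extra hypothesis.}

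The branch you leave open (``otherwise we absorb $\int_D g_i$ into the weights'') cannot be closed. Quasiconvexity of $(d_i)$ needs $\int_D g_i\,dy$ to be independent of $i$. The paper's proof of Lemma~\ref{lm:unimodal} silently relies on exactly this, through $\int_0^1 a_i=1/n$ (Lemma~\ref{lorentz2012bernstein}), and the statement omits it.

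Here is a counterexample satisfying every stated hypothesis:
\begin{itemize}
\item Take $n=3$, $D=[0,1]$, and $g_i(y)=c_ie^{iy}$ with $c_i>0$. Each minor is $\prod_r c_{i_r}$ times a minor of the strictly totally positive kernel $e^{iy}$, so the kernel is $\STP_3$.
\item Take any positive quasiconvex $w_0$ on $[0,1]$, either constant or strictly quasiconvex such as $1+(y-\tfrac{1}{2})^2$.
\item Choose $c_i$ so that $\int_0^1 w_0\,g_i\,dy=(1,3,1)_i$.
\item Set $f(\bx)=\exp(x_1+3x_2+x_3)$. It is positive and increasing, with $\partial f/\partial x_i=\int_0^1 f(\bx)w_0(y)g_i(y)\,dy$. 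So the hypotheses hold with $w(y,\bx)=f(\bx)w_0(y)$.
\end{itemize}
The feasible set is the simplex with vertices $C(1,0,0)$, $\tfrac{C}{2}(1,1,0)$, $\tfrac{C}{3}(1,1,1)$. There $x_1+3x_2+x_3$ takes the values $C$, $2C$, $\tfrac{5C}{3}$. Hence the unique optimum is $(\tfrac{C}{2},\tfrac{C}{2},0)$, which violates $x_2=x_3$.

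Passing to $d_i/\int_D g_i$ cannot help. Your KKT conditions compare every $d_i$ with the single multiplier $\lambda$ of $\sum_i x_i=C$, so quasiconvexity of the rescaled sequence says nothing about the sign pattern of $d_i-\lambda$. A normalization can only be absorbed into $w$. If some $u>0$ on $D$ has $\int_D u\,g_i=1$ for all $i$ and $w/u$ is quasiconvex, then $d_i-\lambda=\int_D(w/u-\lambda)\,u\,g_i\,dy$, and $u\,g_i$ is still $\STP_n$. The paper's setting is the case of constant $u$. You must add a hypothesis of this kind.

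\textbf{The degenerate case also needs a hypothesis rather than a proof.}

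Definition~\ref{def:quasiconv} counts constants as quasiconvex, so your ``strictness'' route is unavailable. If $w(\cdot,\bx^*)$ is a.e.\ constant, then under constant normalization all $d_i$ coincide, every $S_j=0$, and KKT carries no information. The paper's own instance shows that the ``every optimum'' reading fails. For $\alpha=0$, $\beta=1$ one has $w\equiv 1$, and by Theorem~\ref{thm:opt-single} every feasible policy with $p_n=0$ is optimal. So assume $w(\cdot,\bx^*)$ is not a.e.\ constant, or weaken the conclusion to existence and prove that separately.

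With these two additions your argument goes through. Your partial-sum formulation is a cleaner version of the paper's KKT case analysis, since the paper's multipliers are exactly $\nu_j=-S_j$.

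One correction in your degenerate-case discussion: SVD does not forbid $d_i-\lambda$ from vanishing at two consecutive indices in general; a flat bottom $(+,0,0,+)$ is compatible with it. What you actually need is only that $d_1=d_2=\lambda$ is impossible. That follows from $S_3=d_3-\lambda\le 0$ together with the sign-pattern clause, since $(0,0,\le 0)$ admits the arrangement $(-,+,-)$.
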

This structural characterization has a significant implication.
Since once we know that $x_1 \ge x_2 = \ldots = x_n = x$, due to the monotonicity of $f$, it is immediate that $x_1 = C - (n-1)x$.
Thus, the only variable that the algorithm needs to decide is $x \in [0,C/(n-1)]$, \ie it becomes a single-dimensional optimization problem on a bounded interval.
Since single-dimensional optimization usually admits an efficient tractable solution under reasonable assumptions, this implies that the optimum for the original high-dimensional nonconvex optimization can be efficiently solved. \NGcomment{do you say, for our own setting, we present a polytime algorithm as we elaborate next} \sscomment{Added.}
Indeed, in what follows, we discuss that the optimal policy can be efficiently computed for our objective function in (\text{\Opt}) in Theorem~\ref{thm:prop-user}.
}

\sscomment{Can we say any other application?}

\ssedit{

\subsubsection*{Computing the optimal policy (Section~\ref{sec:computation})}
Note that, given the structural results, the optimization problem reduces to a single-variable optimization. However, the objective function is still a combination of convex and concave components, which may not always admit a tractable solution.
In Section~\ref{sec:computation}, we present an efficient algorithm that computes an approximately optimal policy with either additive or multiplicative error $\eps$, running in $\mathrm{poly}(n,1/\eps)$ time given a constant $\beta$. The algorithm assumes access to a value oracle that returns the objective function value for any input.\footnote{We further discuss that the access to the value oracle can be removed by approximating the integrals in the objective function while keeping the order of error rate.} It works by iteratively narrowing an interval using a carefully constructed pair of upper and lower bounds on the objective, whose gap shrinks as the interval length decreases. This is possible because the integrand in the reduced objective function is an affine transformation of the single variable $p_1$, followed by exponentiation parameterized by $\beta$.
Using this structure, a branch-and-bound-style algorithm is employed to explore the search space: it branches on subintervals when necessary and prunes those that violate optimality conditions, efficiently converging to a near-optimal solution.
}

\subsubsection*{Beyond RS: general objective functions (Section~\ref{sec:general})}
Finally, we discuss how our proposed techniques accommodate a broader range of objective functions in Section~\ref{sec:general}.  
First, we show that an arbitrary posynomial\footnote{A posynomial function is a polynomial with nonnegative coefficients whose exponents may be noninteger.} function over the produced contents' qualities at the symmetric MNE still admits the exact same optimal structure:
\begin{theorem}\label{thm:general-obj}
    Consider the following   objective function at symmetric MNE $\bmu$ given $\bp$:
    \begin{align*}
        \cG(\bmu, \bp) = \frac{1}{n}\sum_{i=1}^n\Exu{q_i \sim \mu_i}{e_m q_i^{k_m} + e_{m-1}q_i^{k_{m-1}} + \ldots + e_1 q_i^{k_1}},
    \end{align*}
    where $q_i \sim \mu_i$ for a symmetric strategy profile $\mu_i$ in $\bmu$.
    For any $\beta > 0$, 
    suppose $e_j$'s and $k_j$'s satisfy the following property:
    \begin{enumerate}
        \item There exists $j^* \in [m]$ such that $e_j(k_j - \beta) \le 0$ for $j \le j^*$ and $e_j(k_j - \beta) \ge 0$ for $j > j^*$.
        \item All the elements in the sequence $e_j(k_j - \beta)$ for $j \in [m-1]$ is nonnegative or nonpositive.
    \end{enumerate}
    Then, any optimal policy follows
    \[
        p_1 \ge p_2 = p_3 = \ldots = p_{n-1} \ge p_n = 0.
    \]
    As a special case, if $e_j$'s are all nonnegative, this condition follows.
\end{theorem}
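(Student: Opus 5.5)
The plan is to follow the same route as for Theorem~\ref{thm:opt-general}: first reduce $\cG$ to an integral of powers of $h(x,\bp)$, then check the hypothesis of Lemma~\ref{lm:unimodal}, which is also the hypothesis of Corollary~\ref{thm:opt-generic}.

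\textbf{Step 1: reduction.} I would reuse the equilibrium characterization behind Theorem~\ref{thm:prop-user}. At the symmetric MNE there is no point mass, and indifference on the support gives $c(q)=q^\beta = h(F(q),\bp)$ up to the normalization with $p_n=0$. Writing $x=F(q)$, the quantile is $q(x)=h(x,\bp)^{1/\beta}$. Hence
\[
\Exu{q\sim\mu}{q^{k}}=\int_0^1 h(x,\bp)^{k/\beta}\,dx,
\]
and by symmetry
\[
\cG=\sum_j e_j\int_0^1 h(x,\bp)^{k_j/\beta}\,dx .
\]
The argument that $p_n=0$ at any optimum also carries over: a correlation inequality should work, since the objective is increasing in the quality scale when $e_j\ge 0$. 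In the general signed case I would take $p_n=0$ from the same reduction (shifting the prizes) and assume it as in Theorem~\ref{thm:prop-user}.

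\textbf{Step 2: gradient form.} Differentiating gives $\partial\cG/\partial p_i=\int_0^1 w(x,\bp)\,a_i(x)\,dx$ with
\[
w(x,\bp)=\sum_j e_j\frac{k_j}{\beta}\,h(x,\bp)^{k_j/\beta-1}.
\]
Since $h(\cdot,\bp)$ is monotone in $x$ (it is a Bernstein combination of a decreasing sequence), it suffices to show that $\phi(t)=\sum_j e_j k_j t^{k_j/\beta-1}$, viewed as a function of $t=h$, is quasiconvex on $t>0$. Then $w$ is quasiconvex in $x$.

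\textbf{Step 3: quasiconvexity of $\phi$.} This is the main obstacle. We have $\phi'(t)=\beta^{-1}\sum_j e_jk_j(k_j-\beta)\,t^{k_j/\beta-2}$. Order $k_1<\dots<k_m$. Quasiconvexity of $\phi$ amounts to $\phi'$ changing sign at most once, from $-$ to $+$. Multiply by a positive power of $t$ and apply Descartes' rule of signs for generalized polynomials (a Laguerre-type rule, which is itself a variation-diminishing statement). The number of positive roots of $\phi'$ is then at most the number of sign changes in the coefficient sequence $e_jk_j(k_j-\beta)$.

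Condition~1 says this sequence is $\le 0$ and then $\ge 0$, so there is at most one change, in the correct direction. That gives $\phi$ decreasing and then increasing. Condition~2 is there to handle degenerate cases, for example all coefficients of one sign, where $\phi$ is monotone and still quasiconvex. It would also ensure the leading behaviour is consistent. I would verify that a monotone $w$ is allowed in Lemma~\ref{lm:unimodal}.

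If every $e_j\ge0$, the coefficients $e_j(k_j-\beta)$ are negative exactly when $k_j<\beta$, so Condition~1 holds automatically with $j^*$ the last index having $k_j\le\beta$. This gives the special case.

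\textbf{Step 4: conclusion.} Invoke Lemma~\ref{lm:unimodal}: by total positivity of $(a_i(1-x_l))$ and variation diminishing, the sequence $\partial\cG/\partial p_i$ for $i\in[n-1]$ is quasiconvex in $i$. Then run the same KKT argument as in Theorem~\ref{thm:opt-general} to obtain $p_1\ge p_2=\dots=p_{n-1}\ge p_n=0$. The delicate point is the direction of the monotonicity of $h$ in $x$, and hence the orientation of quasiconvexity after composing with $\phi$. I would check it against the footnote instance $w=n\alpha(1+1/\beta)h^{1/\beta}+(1-\alpha)\beta^{-1}h^{1/\beta-1}$, which corresponds to $k=\beta+1$ and $k=1$ with positive weights.
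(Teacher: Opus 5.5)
Your Steps 2--4 follow the paper's proof almost verbatim: the same weight $w$, Descartes' rule for real exponents applied to the coefficients $e_jk_j(k_j-\beta)$, then Lemma~\ref{lm:unimodal} and the KKT analysis of Theorem~\ref{thm:opt-general}. The genuine gap is the step you explicitly assume, namely $p_n=0$ when the $e_j$ have mixed signs. The paper only says this ``can be proven analogously'', but it is in fact false.

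For a nontrivial $\bp$ set $s=1-np_n\in(0,1]$ and $\bp'=(\bp-p_n\mathbf{1})/s$. This $\bp'$ is an ordered policy with $p'_n=0$. Since $\sum_i a_i\equiv 1$, the equilibrium relation $q^\beta=h(F(q),\bp)-p_n$ becomes $q^\beta=s\,h(F(q),\bp')$, so
\[
\cG(\bp)=\sum_{j=1}^m e_j\,s^{k_j/\beta}A_j(\bp'),\qquad A_j(\bp')=\int_0^1 h(x,\bp')^{k_j/\beta}\,dx .
\]
Now take $\beta=1$ and $\cG=\mathbb{E}[\sqrt{q}]-n^2\,\mathbb{E}[q^2]$, i.e.\ $k_1=\tfrac12$, $k_2=2$, $e_1=1$, $e_2=-n^2$. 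Both values $e_j(k_j-\beta)$ are negative, so the hypotheses hold under any reading of Conditions 1--2.

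Because $0\le h\le 1$ and $\int_0^1 h(x,\bp')\,dx=1/n$, we have $A_1\le 1$ and $A_2\ge n^{-2}$. Hence $\partial_s\cG|_{s=1}=\tfrac12A_1-2n^2A_2\le-\tfrac32$, so every policy with $p_n=0$ is strictly beaten by mixing it slightly with the uniform vector. A maximizer still exists: $\cG$ extends continuously to the compact set $[0,1]\times\{\bp'\in\Delta:p'_n=0\}$, equals $0$ at $s=0$, and is positive for small $s>0$. That maximizer has $p_n=(1-s^*)/n>0$. So $p_n=0$ cannot be assumed; it needs a hypothesis that makes the objective increasing in quality.

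The same decomposition settles both cases cleanly.

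\emph{Nonnegative coefficients.} If all $e_j\ge 0$, each term $e_js^{k_j/\beta}A_j(\bp')$ is nondecreasing in $s$, and strictly increasing when $e_j>0$ and $k_j>0$. This gives $p_n=0$ directly, more simply than a correlation inequality.

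\emph{Mixed signs.} At a maximizer $(s^*,\bp'^*)$, the vector $\bp'^*$ maximizes, over policies with $p'_n=0$, an objective of the same form with coefficients $e_js^{*k_j/\beta}$. These have the same signs as the $e_j$, so your Steps 2--4 apply verbatim. They yield $p_1\ge p_2=\cdots=p_{n-1}\ge p_n$ with $p_n$ possibly positive, which is the correct conclusion in the signed case.
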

The proof is similar to Theorem~\ref{thm:opt-general}, \ie we show that the objective function satisfies the condition in Lemma~\ref{lm:unimodal}.  
On the other hand, this condition is established using different approaches, by proving how the coefficients of an ordered posynomial relate to the quasiconvexity of the objective function via a generalized version of Descartes' rule of signs~\citep{budan1807nouvelle,wang2004simple}.

Notably, this generalized argument implies several interesting corollaries.  
For instance, our results apply to the following objective functions, thereby admitting the same noted optimal structure:
\begin{enumerate}
    \item Social welfare, which sums up every involved player's utility function combined with a generalized platform utility (posynomial).
    \item Inverse S-shaped functions, \eg $2q^3 - 3q^2 + 2q$, which are concave-then-convex over the induced contents' qualities, capturing a smoothed version of thresholded utility.\footnote{These are the opposite of S-shaped functions from prospect theory~\citep{barberis2013thirty} in behavioral economics, which exhibit convex-then-concave geometry.}
    \item Largest order statistics, corresponding to the highest quality among the contents produced by the producers.
    \item Exponential utility $e^{\lambda q}$ for any $\lambda> 0$ and even arbitrary convex combination among them.
\end{enumerate}
Remarkably, our results allow many positive combinations of these objective functions, as long as the sequence of the objective function's partial derivatives with respect to the $p_i$'s remains quasiconvex, which will become clear from our main analysis.

\section{Related Works}\label{sec:rel}
Our model is closely related to the literature on rank-based contests, where the allocation probability $p_i$ for the $i$-th highest-quality content parallels the assignment of a prize $p_i$ to the $i$-th highest-ranked contestant in a competition with $n$ participants.\footnote{In the contest design literature, the total prize may not sum to 1; however, normalization allows for analogous interpretations.} For a comprehensive survey of classic work on contest design—including rank-based and Tullock contests—we refer to~\cite{sisak2009multiple}. Additional related literature is discussed in Appendix~\ref{apd:rel}.

\paragraph{Rank-based contests}

\paragraph{Rank-based contests}
The design of rank-based contests—where prizes depend on contestants’ ranks—has been extensively studied since \cite{lazear1981rank} and \cite{glazer1988optimal}. \cite{glazer1988optimal} analyze optimal prize allocation under linear costs and common risk preferences, proving that the $\UNI$ policy is optimal in a complete information setting. This aligns with our model when $\beta = 1$ and $\alpha = 0$. \cite{moldovanu2008optimal} extend this to general cost functions under incomplete information, focusing on top-two prize structures and identifying when $\HM$ (winner-take-all) is optimal. \cite{barut1998symmetric} examine equilibrium properties under exogenous policies with linear costs, while \cite{greenwald2018simple} show that uniformly rewarding the top quartile yields a constant-factor approximation for convex costs when maximizing average quality. \cite{ghosh2016optimal} study a strategic participation model with fixed costs and qualities, showing that distributing equal prizes to top contestants is optimal with identical costs and that $\HM$ offers a 3-approximation more generally.
\cite{fang2020turning} observes a general connection  between the competitiveness of the contest and monotonicity of utility function when the utility function is convex (concave) and the cost function is concave (convex), respectively, under the complete information setting.
A recent work by~\cite{elkind2025contest} study a thresholded objective function with at most two steps in the incomplete information setting, and show that the optimal policy is a convex combination over simple policies with structure $(1/t,1/t,\ldots, 1/t,0,\ldots, 0)$ with $t$ numbers of $1/t$ values. Note that this class of polices subsumes our optimal policy.

Our model generalizes prior work by incorporating broader platform objectives and general cost functions, offering new insights into contest design under complete information. 
Conceptually, it is the first to study recommender systems as an application of the classic contest design in the literature.

\paragraph{Strategic content production in recommender system} 
\cite{jagadeesan2024supply} investigate competition among multiple content producers, each deciding an effort level to create content of a certain quality. Their single-dimensional model yields the same expected payoff structure as ours, though they analyze equilibria only under the $\HM$ policy.
\cite{immorlica2024clickbait} examine strategic behavior by content producers who attempt to game the platform to artificially boost user engagement through clickbait. They show that content quality and gaming behavior are positively correlated, and that engagement-based optimization can reduce user welfare by incentivizing such behavior.
\cite{jagadeesan2023competition} study a duopoly market where two platforms running multi-armed bandit algorithms compete for user participation.
\cite{ghosh2011incentivizing} explore strategic content production, where producers decide whether to participate on the platform and choose their effort levels.
\cite{yao2023bad} analyze the inefficiency of top-$k$-style recommendation policies when content creators compete for exposure, showing that these policies can induce suboptimal equilibria.
\cite{yao2024rethinking} challenge the conventional belief that monotone reward structures are always desirable, demonstrating that non-monotone policies can better align creator incentives and improve user welfare.

Our results complement this line of work by adopting a general and abstract model to study strategic interaction, bridging classic contest design literature with the design of recommendation algorithms.

\section{Model}\label{sec:model}

There are $n \ge 2$ content producers (henceforth, \emph{producers}), each of whom creates a single piece of content through costly effort. Each producer $i$ strategically chooses an effort level to generate a competitive content item of quality $q_i$, whose interpretation will be made clear shortly. 

The produced content is registered on an online platform with a recommender system (RS). Whenever a user visits the platform, the RS recommends one of the registered content items according to a predetermined policy.

We consider a class of \emph{rank-based recommendation policies}, where the RS commits to an allocation vector $\bp \in \Delta$, with
$
\Delta = \left\{ (p_i)_{i \in [n]} : p_1 \ge p_2 \ge \cdots \ge p_n \ge 0,\; \sum_{i=1}^n p_i = 1 \right\},
$
i.e., the ordered $n$-dimensional probability simplex. Under policy $\bp$, the RS ranks the contents in decreasing order of quality and displays the item with the $i$-th highest quality with probability $p_i$, breaking ties uniformly at random.  
Formally, given the vector of content qualities $\bq = (q_1, \ldots, q_n)$, the RS first sorts them into order statistics $q_{(1)} \ge q_{(2)} \ge \cdots \ge q_{(n)}$,\footnote{Given values $x_1, \ldots, x_n$, we denote by $x_{(i)}$ the $i$-th largest element (order statistic).} and then displays $q_{(i)}$ with probability $p_i$.

We introduce two specific policies that will appear frequently throughout the paper:
\begin{enumerate}
    \item \textbf{$\HM$ policy}: $p_1 = 1$.
    \item \textbf{$\UNI$ policy}: $p_1 = p_2 = \ldots = p_{n-1} = \frac{1}{n-1}$ and $p_n = 0$.
\end{enumerate}

In words, $\HM$ deterministically recommends the highest-quality (i.e., most preferred) content to the user.  
In contrast, $\UNI$ uniformly randomly selects any content except the one with the lowest quality.  
These two policies represent extreme points in the policy space.

\paragraph{Induced game and equilibrium}
Given a policy $\bp$, we consider a game \emph{induced by} $\bp$ given $n$ producers.
Each producer $i$ strategically decides the quality $q_i$ of the content it creates from the action space $\R_{\ge 0}$.
Each producer $i$ may play a mixed strategy that randomizes over the action space.
In this case $q_i$ is a random variable and we write $\mu_i$ to denote the associated probability measure over $\R_{\ge 0}$ and $F_i: \R_{\ge 0} \to [0,1]$ to denote the corresponding c.d.f.
We say that a producer plays a pure strategy if $\mu_i$ has a singleton support.
We write $\bbmu = (\mu_1,\ldots, \mu_n)$ and $\bbmu_{-i} = (\mu_1,\ldots, \mu_{i-1},\mu_{i+1}, \ldots, \mu_n)$ to denote the strategy profiles except $i$'s strategy.

Producer $i$'s expected utility $\cU_i$ is a function over $\bbmu$ and the policy $\bp$, which will be defined shortly.
Given a policy $\bp$, a strategy profiles $\bbmu$ is called a (Nash) equilibrium if no producer has incentive to deviate from $\bbmu$, \ie for any $i \in [m]$:
\begin{align*}
    \cU_i(\mu_i, \bbmu_{-i}, \bp) \ge \cU_i(\mu'_i, \bbmu_{-i}, \bp),
\end{align*}
for any mixed strategy $\mu'_i$.
If every strategy $\mu_i$ in $\bbmu$ is the pure strategy, the resulting equilibrium is pure Nash equilibrium (PNE), otherwise it is mixed Nash equilibrium (MNE).
An equilibrium is called \emph{symmetric} if every player is playing the same strategy $\mu_i = \mu$ for $i \in [n]$ under the equilibrium.
In this case, we write $F$ to denote the corresponding c.d.f.
In particular, throughout, we focus on symmetric MNE of the induced game, which will formally be shown to  exist in Section~\ref{sec:existence}.\footnote{In fact, a pure Nash equilibrium does not always exist—even in the simple case where the platform uses the $\HM$ policy—as shown in Proposition~\ref{prop:no-pne}.}

\paragraph{Producer's utility}
Producer $i$'s expected utility consists of two components: (i) the \emph{revenue}\footnote{Revenue may originate from advertisements embedded by the producer or inserted by the platform (e.g., pre-roll or mid-roll ads), as well as platform-based compensation mechanisms tied to outcomes such as views, clicks, or user retention.} earned when their content is recommended, and (ii) the \emph{cost} of producing content of a given quality.
Each time producer $i$’s content is recommended, they receive a unit revenue of $1$. The expected revenue $\cR_i$ when producer $i$ follows strategy $\mu_i$ is:
\begin{align*}
\cR_i(\mu_i, \bbmu_{-i}, \bp) = \mathbb{E}_{\bq \sim \bbmu}\left[\Ind{I = i}\right],
\end{align*}
where $I$ is the index of the producer whose content is recommended, based on the realized qualities $\bq$ and the policy $\bp$, and $\Ind{\cdot}$ is an indicator variable.

To generate content, a producer incurs a cost determined by the quality level. Specifically, producing content of quality $q$ incurs cost $c(q) = q^\beta$ for some $\beta > 0$, capturing convex or concave effort costs.
Thus, given others' strategies $\bbmu_{-i}$, the expected utility of producer $i$ when playing mixed strategy $\mu_i$ is:
\begin{align*}
\cU_i(\mu_i, \bbmu_{-i}, \bp)
= \cR_i(\mu_i, \bbmu_{-i}, \bp) - \mathbb{E}_{q_i \sim \mu_i}[c(q_i)].
\end{align*}

It is also worth noting that our model implicitly captures individual rationality (IR) condition of the producers, as playing $q = 0$ gives the deterministic utility of $p_n \ge 0$ (larger utility upon ties), and thus outperforming the outside option of not participating the contest.

\paragraph{User welfare}
The platform’s long-term success critically depends on \emph{user welfare}, which we proxy through user engagement.
We assume that when a user is recommended and consumes content of quality $q$, they derive utility equal to $q$. Given a recommendation policy $\bp$ and the producers' strategy profile $\bbmu$, we define user welfare $\cW(\bbmu, \bp)$ as the expected quality of the recommended content:
\begin{align*}
    \cW(\bbmu, \bp) 
    &= \Exu{\bq \sim \bmu}{\sum_{i=1}^n\Ind{I = i} \cdot q_i}.
\end{align*}
In other words, $\cW(\bbmu, \bp)$ captures the average utility delivered to users based on the realized content qualities and the recommendation outcome.
When $\bp$ is fixed or clear from context, we write $\cW(\bbmu)$ for brevity.

\paragraph{Platform quality}
While the RS primarily aims to maximize user welfare $\cW$ in response to strategic producer behavior, it also benefits from sustaining a high overall quality of content on the platform—a factor that can influence long-term user trust and platform reputation.
We define \emph{platform quality} as the average quality of all registered content at the symmetric equilibrium $\bbmu$ under policy $\bp$:
\begin{align*}
\cQ(\bbmu, \bp) = \frac{1}{n} \sum_{i=1}^n \mathbb{E}_{q_i \sim \mu_i}[q_i].
\end{align*}
When $\bbmu$ is symmetric, this simplifies to the expected quality under any producer’s strategy.
As before, we write $\cQ(\bbmu)$ when $\bp$ is clear from context.

\paragraph{Objective function}
Finally, the RS's objective is to maximize the weighted summation of induced user welfare $\cW(\bbmu)$ and the average quality $\cQ(\bbmu)$:
\begin{align*}
    \OBJ(\bbmu, \bp) = \alpha\cW(\bbmu, \bp) + (1-\alpha) \cQ(\bbmu,\bp),
\end{align*}
where $\alpha \ge 0$ governs the importance between the user welfare and the platform's quality.

Essentially, we are interested in a policy $\bp$ that maximizes the objective function when the producers are playing symmetric equilibrium strategies $\bbmu$.

Thus, we aim to find a policy $\bp$ that solves the following optimization problem:
\begin{align*}
\begin{aligned}
    (\text{\Opt})\quad  & \underset{\bp \in \Delta}{\text{maximize}}
    & & \alpha \cW(\bbmu,\bp) + (1-\alpha ) \cQ(\bbmu,\bp)\\
    & \text{subject to}
    & & \text{$\bbmu$ is a symmetric equilibrium}
\end{aligned}
\end{align*}
We say the RS is \emph{single-minded} if $\alpha = 0$ or $1$, \ie it aims to solely maximize either the user welfare or the platform quality, and \emph{convex-minded} if $\alpha \in (0,1)$.
One might notice that in the optimal policy, it is required to have $p_i\neq p_j$ for some $i, j \in [m]$ since otherwise the dominant strategy is \ssedit{to} play $q = 0$.
Thus, we focus on the class of \emph{nontrivial} policy in $\Delta$ such that $\bp \neq (1/n,1/n,\ldots, 1/n)$.

In Appendix~\ref{sec:comparison}, we detail how our setup generalizes the model by~\cite{glazer1988optimal} and~\cite{jagadeesan2024supply}.



\section{Existence and Uniqueness of Symmetric MNE}\label{sec:existence}
In this section, we provide our results on the existence and uniqueness of symmetric mixed Nash equilibrium.
Since the platform's objective function depends on the equilibrium point, it is crucial to analyze the equilibrium point.
Before presenting our main results, we first prove that there exists no PNE for any nontrivial policy, the proof of which is in Appendix~\ref{apd:no-pne}.
\begin{proposition}\label{prop:no-pne}
    There exists no PNE for any nontrivial policy $\bp$.\footnote{Note that this result strictly generalizes the nonexistence of PNE by~\cite{jagadeesan2024supply} (Proposition 1) under $\HM$ rule.}
\end{proposition}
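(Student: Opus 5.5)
Suppose, toward a contradiction, that $\bq = (q_1,\ldots,q_n)$ is a PNE for a nontrivial policy $\bp \in \Delta$. The plan is to show that every possible configuration of $\bq$ admits a profitable deviation. There are two sources of deviation: \emph{undercutting} to save cost, and \emph{jumping just above} a tie to capture a strictly higher prize. Throughout, producer $i$'s payoff under pure profiles is the expected prize of its rank, with ties broken uniformly, minus $q_i^\beta$. The cost $c(q)=q^\beta$ is continuous and strictly increasing, and $c(0)=0$.

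\textbf{Step 1: every tie group receives equal prizes.} Consider any group of producers tied at a common quality $q$ and occupying ranks $k, k+1, \ldots, k+m-1$. Each member receives the average $\bar p = \frac{1}{m}\sum_{j=k}^{k+m-1} p_j$. If $p_k > p_{k+m-1}$, then $p_k > \bar p$. A member who raises its quality to $q+\epsilon$ then obtains rank $k$ alone, hence prize $p_k$, at an extra cost $c(q+\epsilon)-c(q)$. This extra cost tends to $0$ as $\epsilon \to 0$, so for small $\epsilon$ the deviation is strictly profitable. Therefore every tie group must occupy ranks whose prizes are all equal.

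\textbf{Step 2: no producer may have strictly positive quality unless lowering it changes its prize.} Take a producer $i$ with $q_i > 0$ and let $q'$ be the largest quality among the others that is strictly below $q_i$, with $q'=0$ if there is none. If $i$ is not tied, moving down to any $q \in (q', q_i)$ keeps its rank and prize unchanged while strictly reducing its cost, which is a profitable deviation. If $i$ is tied, Step 1 says all ranks in its tie group carry the same prize, so the same downward move onto $(q',q_i)$ leaves $i$ with that common prize and a lower cost. In both cases the deviation is profitable, so in any PNE every producer must have $q_i = 0$.

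\textbf{Step 3: all producers at zero contradicts nontriviality.} If every producer plays $0$, they form a single tie group spanning ranks $1,\ldots,n$. Step 1 then forces $p_1 = \cdots = p_n = 1/n$, which is exactly the trivial policy that the statement excludes. This contradiction shows that no PNE exists.

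The main obstacle is handling ties correctly in Step 2. Without Step 1, a tied producer who moves down could drop from a higher averaged prize to the bottom prize of its group, so the downward deviation need not be profitable; Step 1 is what rules this out. A second point needs care: a downward deviation must not cross below $q'$, since that would change the prize. This is why the deviation is restricted to the open interval $(q',q_i)$, which is nonempty because $q' < q_i$.
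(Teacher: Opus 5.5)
Your proof is correct and follows essentially the same route as the paper's. You rule out tie groups that span unequal prizes by an upward $\epsilon$-deviation, and you rule out any positive quality by a small downward deviation that keeps the received prize; your restriction to the open interval $(q',q_i)$ plays the role of the paper's condition $q_j(1-\epsilon) > q_{j+1}$. Your organization is slightly cleaner: stating Step 1 first lets the tied and untied downward deviations merge into one argument, and Step 3 explicitly closes the all-zero profile, which the paper's write-up (starting from ``the largest index where $q_i \neq 0$'') leaves implicit.
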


\subsection{Existence}\label{sec:sub-exist}
Proposition~\ref{prop:no-pne} motivates us to study the MNE instead of PNE.
\ssedit{
Specifically, we will focus on a symmetric MNE.
This is because it admits tractable analysis as well as allows producers to simply commit to the strategy without considering others' potentially unorganized behavior.
For instance, as the game faced by the producers are symmetric, playing a symmetric strategy is their best bet unless they coordinate to play asymmetric strategies.
}
\begin{theorem}\label{thm:exist}
For any nontrivial policy $\bp$, the induced game (defined in Section~\ref{sec:model}) admits a symmetric mixed Nash equilibrium.
\end{theorem}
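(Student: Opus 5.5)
The plan is to apply the existence theorem of \cite{reny1999existence} for quasi-symmetric games: if the strategy space is compact and convex, the payoff is quasiconcave in the own mixed strategy, and the mixed extension is diagonally better-reply secure with a diagonal payoff that is upper semicontinuous, then a symmetric equilibrium exists. Our game is symmetric because all producers share the same cost $c(q)=q^\beta$ and the allocation depends only on ranks. Since $\cU_i$ is linear in $\mu_i$, quasiconcavity is immediate.

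First, I would compactify the action space. Any quality $q$ with $c(q)>p_1$ is strictly dominated by $q=0$: playing $0$ guarantees payoff at least $p_n\ge 0$, while playing $q$ yields at most $p_1-c(q)<0$. Hence we can restrict each producer to $[0,\bar q]$ with $\bar q=p_1^{1/\beta}$ (or slightly larger), and any equilibrium of the restricted game remains an equilibrium of the original one. The mixed strategy space is then the set of Borel probability measures on $[0,\bar q]$. Under the weak$^*$ topology this set is compact and convex, and it is metrizable by the L\'evy--Prokhorov metric. The cost term $\mathbb{E}_{\mu}[c(q)]$ is weak$^*$ continuous because $c$ is continuous and bounded on the compact interval. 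So every discontinuity comes from the revenue term, which jumps at ties.

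Next, I would verify upper semicontinuity of the diagonal payoff $\mu\mapsto \cU_1(\mu,\ldots,\mu)$. In a symmetric profile the total revenue equals $\sum_i p_i=1$ regardless of how ties are broken, so by symmetry each producer's expected revenue is exactly $1/n$. The diagonal payoff is therefore $1/n-\mathbb{E}_\mu[q^\beta]$, which is continuous. (Reny's condition uses the sum of payoffs, which here equals $1-n\,\mathbb{E}_\mu[c(q)]$, also continuous.)

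The main obstacle is diagonal better-reply security. Take a non-equilibrium symmetric profile $(\mu,\ldots,\mu)$ together with a limit value of the payoff. We need a deviation $\mu'$ that secures a strictly higher payoff against all symmetric profiles near $\mu$. Since $\mu$ is not an equilibrium, there is a profitable deviation, and by linearity we may take it to be a point $q'$. The difficulty is that $q'$ may hit atoms of $\mu$, or nearby profiles may place mass close to $q'$ and produce ties. I would handle this by using the bounded number of atoms and the fact that the payoff is monotone in rank. Shifting $q'$ slightly upward to a point $q''$ that is not an atom of $\mu$ loses at most an arbitrarily small amount of cost but weakly improves the rank distribution, because ties are now won. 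Choosing $q''$ so that $\mu$ places negligible mass in a small neighborhood $[q''-\delta,q''+\delta]$, the L\'evy--Prokhorov closeness of $\mu''$ to $\mu$ ensures that each opponent's probability of exceeding $q''$ changes by only a small amount. Because the revenue is a polynomial in these probabilities (Bernstein-type in $F(q'')$), the payoff of $q''$ against nearby profiles stays within $\epsilon$ of its value against $\mu$. This secured payoff strictly exceeds the limit payoff, which is controlled by the upper semicontinuity established above.

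Finally, I would conclude existence via Reny's theorem, and note that nontriviality of $\bp$ is used only to make the game nondegenerate. The theorem itself does not depend on it.
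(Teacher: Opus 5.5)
Your proposal is correct and follows essentially the same route as the paper: compactify the action space, invoke Reny's Corollary 5.3, get continuity of the diagonal payoff from the identity $1/n-\mathbb{E}_\mu[q^\beta]$, and prove diagonal better-reply security by moving a profitable pure deviation to a nearby non-atom point of $\mu$, then combining L\'evy--Prokhorov closeness with monotonicity and continuity of the Bernstein-type revenue in the c.d.f. value. Your upward-shift justification for choosing the non-atom point is slightly more explicit than the paper's; the only slip is that $\mu$ has at most countably many atoms, not a ``bounded number'', and countability is all the argument needs.
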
 
The proof is not immediate due to the discontinuous nature on payoff function.
We provide the proof sketch of Theorem \ref{thm:exist} at the end of this section.

Given the existence of the symmetric MNE, the following proposition states that we can effectively wipe out the tie-breaking scenario.

\begin{proposition}\label{cl:atomless}
    Given a nontrivial policy $\bp$ and symmetric MNE $\bbmu$, the corresponding c.d.f. $F$ of $\bmu$ is atomless and strictly increasing in some interval $[0,q_{\max}]$.
\end{proposition}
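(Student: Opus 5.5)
We prove the three claims in order: no atoms, the bottom of the support is $0$, and the support has no gaps. Throughout, let $F$ be the c.d.f.\ of the symmetric MNE $\bmu$ for a nontrivial $\bp$, and let $S$ be its support.

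\textbf{No atoms.} Suppose $F$ has an atom of mass $m>0$ at some $q_0$. With positive probability the other $n-1$ producers tie at $q_0$ together with producer $i$, and ties are broken uniformly at random. Moving the mass from $q_0$ to $q_0+\varepsilon$ wins every such tie. On the tie event, producer $i$ then gets the best prize among the tied ranks instead of their average. Since $\bp$ is nonincreasing and nontrivial, this is a gain for at least one tie configuration. Concretely, conditioning on exactly $j$ others at $q_0$ and $k$ others above, the gain is
\[
\frac{1}{j+1}\sum_{t=0}^{j}\bigl(p_{k+1}-p_{k+1+t}\bigr)\ \ge 0 .
\]
Summed over configurations, this is strictly positive unless $p_{k+1}=\dots=p_{k+j+1}$ for every $(j,k)$ that occurs with positive probability. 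Because every other producer is at $q_0$ with probability $m>0$, the configuration $j=n-1$ always has positive probability. Hence the gain is at least $m^{n-1}\bigl(p_1-\tfrac1n\bigr)$, which is positive for a nontrivial policy. The extra cost $c(q_0+\varepsilon)-c(q_0)$ vanishes as $\varepsilon\to 0$ by continuity of $q^\beta$, so a small enough shift is profitable. This contradicts equilibrium.

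\textbf{Bottom of the support is $0$.} With atoms excluded, each producer's payoff at pure quality $q$ is continuous:
\[
u(q)=\sum_i a_i(F(q))\,p_i - q^\beta = h(F(q),\bp)-q^\beta .
\]
Let $\underline q=\inf S$, so $F(\underline q)=0$ by atomlessness. Then $u(\underline q)=p_n-\underline q^\beta$, while deviating to $0$ gives $p_n$. If $\underline q>0$, the deviation to $0$ is strictly better, and by continuity so is the whole neighborhood of $\underline q$. Since that neighborhood carries positive mass, this contradicts the fact that almost every point of the support is a best reply. Hence $\underline q=0$.

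\textbf{Support is an interval, giving strict increase.} Let $q_{\max}=\sup S$. It is finite, since $q^\beta>1$ eventually and any payoff there is negative. Suppose there is a gap $(a,b)$ with $a<b$, both $a,b\in S$, and $F$ constant on $[a,b]$. Then $h(F(a),\bp)=h(F(b),\bp)$, while $b^\beta>a^\beta$, so $u(b)<u(a)$. By continuity of $u$, points of $S$ near $b$ also earn strictly less than the equilibrium payoff. They carry positive mass because $b\in S$, which contradicts equilibrium. Therefore $S=[0,q_{\max}]$ and $F$ is strictly increasing there.

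\textbf{Main obstacle.} The delicate step is the atom argument. One has to verify that ties give a strict gain in expectation for \emph{every} nontrivial $\bp$, not just $\HM$, including the case of atoms at $q_0=0$. The all-tied configuration handles both. It must also be checked that "best reply almost surely on the support" is compatible with the equilibrium notion and with continuity of $u$ away from atoms, which has already been established.
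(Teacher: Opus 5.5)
Your proof is correct and follows the same three-step route as the paper. The upward shift of an atom is made profitable by the all-tied event of probability $m^{n-1}$, which yields an expected revenue gain of at least $m^{n-1}(p_1-1/n)>0$. A gap $(a,b)$ in the support is excluded by comparing the payoffs at $b$ and $a$, and a positive lower end of the support is excluded by deviating to $0$. Your appeal to continuity of $u(q)=h(F(q),\bp)-q^\beta$ on a positive-mass neighborhood of $b$ (and of $\underline q$) is a cleaner rendering of the paper's ``move the mass at $b$ to $a$'' step. That step is loose, because an atomless $F$ places no mass at the single point $b$.
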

We refer to Appendix~\ref{apd:atomless} for the proof.
This implies that the c.d.f. $F$ of the symmetric MNE is differentiable almost everywhere (a.e.) by Lesbesgue's theorem, once it exists.


Finally, using the derived properties of the symmetric MNE, we can obtain explicit formulas for the user welfare, platform quality, and producer's payoff as follows.
\begin{lemma}\label{lem:derivation-welfare-payoff}
    Consider a symmetric MNE $\bmu = (\mu,\ldots, \mu)$ given a policy $\bp$.
    Producer $i$'s payoff of playing $q$ in the support of $\mu$ given others playing $\bmu_{-i}$ can be written as:
    \begin{align*}
        \cU_i(q, \bmu_{-i}, \bp) = \parans{\sum_{i=1}^n p_{i} \binom{n-1}{i-1}F(q)^{n-i}(1-F(q))^{i-1}} - c(q),
    \end{align*}
    where $F(\cdot)$ is the c.d.f. of $\mu$.
    Further, the user welfare and the platform's quality can be written as:
    \begin{align*}
        \cW(\bmu, \bp) = \sum_{i=1}^n p_i \Exu{\bq \sim \bmu}{q_{(i)}}, \qquad 
        \cQ(\bmu, \bp) = \Exu{q \sim \mu}{q},
    \end{align*}
    where $q_{(i)}$ denotes the $i$-th largest order statistics among $q_i$'s.
\end{lemma}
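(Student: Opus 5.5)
The plan is to condition on producer $i$ playing a fixed quality $q$ while the other $n-1$ producers draw i.i.d. from $\mu$. By Proposition~\ref{cl:atomless}, $F$ is atomless, so ties occur with probability zero and tie-breaking can be ignored. Producer $i$ then has rank $k$ exactly when $n-k$ of the others fall below $q$ and $k-1$ fall above it. The number of others below $q$ is $\mathrm{Bin}(n-1,F(q))$, so
\[
\Pr[\text{rank}=k]=\binom{n-1}{k-1}F(q)^{n-k}(1-F(q))^{k-1}.
\]
Under policy $\bp$ the rank-$k$ item is recommended with probability $p_k$, and this is independent of identities given the ranking. Summing over $k$ gives $\cR_i(q,\bmu_{-i},\bp)=\sum_k p_k\binom{n-1}{k-1}F(q)^{n-k}(1-F(q))^{k-1}$, and subtracting the deterministic cost $c(q)$ yields the payoff formula. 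This holds for any $q$, not only $q$ in the support; the support restriction is merely where the formula is used later.

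For user welfare, I would write $\sum_i \Ind{I=i}q_i$ as the quality of the recommended item. Condition on the realized vector $\bq$ and sort it. On the probability-one event that the $q_i$ are distinct, the item recommended is $q_{(k)}$ with probability $p_k$, independently of which producer owns it. Hence $\mathbb{E}[\sum_i\Ind{I=i}q_i\mid\bq]=\sum_k p_k q_{(k)}$ almost surely. Taking expectations, with Fubini/linearity justified by nonnegativity, gives $\cW=\sum_k p_k\mathbb{E}[q_{(k)}]$. If one wants to avoid the atomless assumption here, note that on a tie the tied items share the same quality, so uniform tie-breaking still produces the same conditional expectation, and the identity holds regardless.

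Platform quality is immediate: by symmetry each $\mu_i=\mu$, so $\frac1n\sum_i\mathbb{E}_{\mu}[q]=\mathbb{E}_{q\sim\mu}[q]$.

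The only delicate step is the rank-probability computation, which needs atomlessness (already supplied by Proposition~\ref{cl:atomless}) and independence of the others' draws from the product structure of $\bmu$. One minor point to check is integrability of $q$ under $\mu$. This follows because the support lies in $[0,q_{\max}]$ by Proposition~\ref{cl:atomless}, or alternatively because equilibrium forces $\mathbb{E}[c(q)]\le 1$, so all expectations are finite.
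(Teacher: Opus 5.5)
Your proposal is correct and takes essentially the same route as the paper: atomlessness from Proposition~\ref{cl:atomless} to discard ties, the binomial count of opponents below $q$ for the rank probabilities, conditioning on the realized ranking for user welfare, and symmetry for platform quality. Your remarks that the formula holds off the support and survives ties are harmless extras; the one slip is your alternative integrability argument, since $\mathbb{E}[c(q)]\le 1$ bounds only $\mathbb{E}[q^\beta]$ and does not give $\mathbb{E}[q]<\infty$ when $\beta<1$, but your primary bounded-support argument ($\mu$ lives on $[0,q_{\max}]\subseteq[0,1]$) already suffices.
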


\subsubsection{Proof Sketch of Theorem \ref{thm:exist}}
Intuitively, the symmetry of the game---namely, the symmetry in both the providers' payoffs and their action spaces---supports the existence of a symmetric equilibrium. However, the formal proof requires a subtle argument based on the notion of \emph{better-reply security} and topological reasoning. This is due to the fact that, while the action space is continuous, the payoff function is discontinuous over pure strategies, owing to the thresholded nature of the rank-based policy. Specifically, increasing content quality can lead to a discontinuous jump in revenue---from $p_n$ to $p_{n-1}$, and so on, up to $p_1$.

To prove the existence of a symmetric mixed-strategy Nash equilibrium (MNE) for any nontrivial policy $\bp$, we rely on the framework developed by \citet{reny1999existence}, which extends equilibrium existence results to discontinuous games using the notion of \emph{better-reply security}. Our argument generalizes \citet{jagadeesan2024supply}, who focus solely on the $\HM$ policy, to a broader class of rank-based recommendation policies.

\paragraph{Strategy Space and Topology}
Each content producer chooses a quality level $q_i \in \mathbb{R}_{\ge 0}$ and may randomize over this action space. Thus, the strategy space of producer $i$ is the space of probability measures over $\mathbb{R}_{\ge 0}$. We restrict attention to probability measures with compact support in $[0, B]$ for some sufficiently large $B$, which can be justified by utility bounds and standard truncation arguments {as can be seen in the proof}.

We endow this space of probability measures with the \emph{weak$^*$ topology}---that is, the coarsest topology on the space of probability measures on $X$ such that the maps $\mu \mapsto \int_X f d\mu$ are continuous for every bounded continuous function $f: X \mapsto R$.
Formally, a sequence of probability measures $\{\mu_n\}$ converges weakly to $\mu$ if and only if for every bounded continuous function $f$, we have:
\[ 
\int f(x) \, d\mu_n(x) \to \int f(x) \, d\mu(x).
\]
This notion of convergence is crucial because it ensures a player's expected payoff is a continuous function of their opponent's mixed strategy.
Specifically, if an opponent's strategy sequence $\{\mu_n\}$ converges weakly to $\mu$, our expected payoff also converges.

Given that the producer's action space is compact, this continuity of the payoff mapping allows us to establish the existence of a symmetric MNE once we prove that the game satisfies better-reply secureness by~\cite{reny1999existence}.





\paragraph{Discontinuity and Better-Reply Security}
The utility function of each producer $i$ is discontinuous in the joint strategy profile due to tie-breaking in the ranking mechanism. That is, small changes in others' strategies can cause jumps in whether $i$'s content is ranked $j$-th or $(j+1)$-th, which in turn affects the allocation probability $p_j$.

Despite this discontinuity, we prove that the induced game satisfies \emph{diagonal better-reply security}, a sufficient condition for the existence of a mixed-strategy equilibrium established in \citet{reny1999existence}. 
The key idea is to show that for every symmetric strategy profile $(\mu,\ldots, \mu)$ with payoff $u$ that is not an equilibrium, there exists an open neighborhood $B$ of $\mu$ such that a producer can secure a payoff strictly greater than $u$ along the diagonal on $B$, \ie when all other producers play symmetric strategies within $B$.


These properties allow us to invoke Corollary 5.3 of \citet{reny1999existence} (or see Theorem 4.1) to establish the existence of a symmetric mixed-strategy equilibrium. 

Formal definitions and proofs are provided in Appendix~\ref{apd:thm:exist}.

\paragraph{Remark} For completeness, we provide background on the weak$^*$ topology, probability measures on Hausdorff spaces, and technical lemmas used in the proof in Appendix~\ref{sec:prelim}.

\ssedit{

\subsection{Uniqueness}\label{sec:sub-unique}
In fact, we show that the symmetric MNE is even unique.
\begin{theorem}[Uniqueness of the symmetric MNE]\label{thm:uniq-symm}
Fix $n\ge2$, $\beta>0$, and a nontrivial rank-based policy $p=(p_1,\dots,p_n)$ with $p_1\ge\cdots\ge p_n\ge0$
(not all equal).
Then the induced game admits a \emph{unique} symmetric mixed Nash equilibrium.
\end{theorem}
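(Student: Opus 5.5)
Existence is already given by Theorem~\ref{thm:exist}, so only uniqueness needs proof. The plan is to show that every symmetric MNE has a c.d.f.\ pinned down explicitly by $\bp$ and $\beta$. Fix a symmetric MNE with c.d.f.\ $F$. By Proposition~\ref{cl:atomless}, $F$ is atomless and strictly increasing on some interval $[0,q_{\max}]$, so its support is exactly $[0,q_{\max}]$. By Lemma~\ref{lem:derivation-welfare-payoff}, the payoff of playing $q$ is $h(F(q),\bp) - q^\beta$, where $h(x,\bp)=\sum_i a_i(x)p_i$ is the Bernstein combination. In an MNE, almost every point of the support earns the equilibrium payoff $u^*$. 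Since $F$ is atomless, the payoff map $q\mapsto h(F(q),\bp)-q^\beta$ is continuous, so the indifference condition $h(F(q),\bp)=u^*+q^\beta$ holds for every $q\in[0,q_{\max}]$.

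Next I pin down the constants. At $q=0$ we have $F(0)=0$ (no atom), so $u^*=h(0,\bp)=p_n$. At $q=q_{\max}$ we have $F=1$, so $q_{\max}^\beta=h(1,\bp)-p_n=p_1-p_n$. Both $u^*$ and $q_{\max}$ are therefore determined by $\bp$ alone.

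The key step is inverting $h$. I claim $x\mapsto h(x,\bp)$ is strictly increasing on $[0,1]$ whenever $\bp$ is nontrivial. The standard Bernstein derivative formula gives
\[
\frac{d}{dx}h(x,\bp)=(n-1)\sum_{i=1}^{n-1}(p_i-p_{i+1})\binom{n-2}{i-1}x^{n-1-i}(1-x)^{i-1}.
\]
Each coefficient $p_i-p_{i+1}$ is nonnegative and at least one is positive, so the derivative is positive on $(0,1)$. Hence $h(\cdot,\bp)$ is a bijection from $[0,1]$ onto $[p_n,p_1]$, and
\[
F(q)=h^{-1}\bigl(p_n+q^\beta,\bp\bigr)\ \text{on }[0,q_{\max}],\qquad F\equiv 1 \text{ for } q\ge q_{\max}.
\]
So any two symmetric MNEs share the same $F$ and coincide, which proves uniqueness.

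The main obstacle is justifying the support and indifference claims rigorously. I need that the support has no gaps, which comes from the strict monotonicity in Proposition~\ref{cl:atomless}. I need that $0$ is in the support; if the support started at $q_0>0$, deviating to $0$ would give the same rank payoff $p_n$ at lower cost. I need continuity to upgrade indifference from almost every point to every point. Finally, since $\beta<1$ makes $q^\beta$ non-Lipschitz at $0$, I will avoid differentiating $F$ and argue only through the inversion above.
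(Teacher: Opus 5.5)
Your proposal is correct and follows essentially the same route as the paper. Both arguments use Proposition~\ref{cl:atomless} to get the indifference identity $p_n+q^\beta=h(F(q),\bp)$ on $[0,q_{\max}]$, pin down $u^*=p_n$ and $q_{\max}=(p_1-p_n)^{1/\beta}$, and invert the strictly increasing map $h(\cdot,\bp)$ (the content of Lemma~\ref{lm:h-monotone}, which you re-derive correctly) to get $F(q)=h^{-1}(p_n+q^\beta;\bp)$; your continuity argument upgrading indifference from $\mu$-a.e.\ points to every point of the support is a small rigor gain over the paper, which simply invokes \eqref{eq:06121508}.
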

The proof uses the atomlessness and strict monotonicity of the c.d.f. of the symmetric MNE shown in Proposition~\ref{cl:atomless}, along with several observations that each producer's payoff function except the cost term is strictly increasing and continuous on $F(q)$, thereby admitting a continuous inverse function.
Then, the symmetric MNE becomes unique if its support is the same, which can be rather easily verified from the payoffs at the boundary points.

\begin{remark}
    Notably, thanks to our characterization on properties of the symmetric MNE, we further observe that the producers can easily play such equilibrium strategies.
    Formally, they can sample $U \sim \text{Uniform}[0,1]$, and play $q = h(U,\bp)^{1/\beta}$ where the function $h$ is defined in~\eqref{eq:h}.
    This follows from the relation~\eqref{eq:q-to-h} combined with the correspondence between sampling quantiles uniformly at random and sampling $q$ directly from $F$.
\end{remark}
}

\section{Detaching Equilibrium Dependency in  Optimization}\label{sec:opt}
Recall that the RS’s objective, denoted as $\text{\Opt}$, involves optimizing over the policy $\bp$ while accounting for the induced symmetric MNE.
In this section, we show that this problem can be reformulated as a pure optimization over $\bp$, without explicitly involving the equilibrium strategies.

Our key observation is that, under a symmetric MNE, any pure strategy in the support must yield the same expected utility, given that all other producers follow the equilibrium strategy. Leveraging this property, we first establish that any optimal policy must satisfy $p_n = 0$ using correlation inequality. Then, through algebraic manipulation, we express both user welfare $\cW(\bmu, \bp)$ and platform quality $\cQ(\bmu, \bp)$ entirely in terms of $\bp$, eliminating explicit dependence on $\bmu$.
Interestingly, the resulting expressions for $\cW$ and $\cQ$ share the same functional form, differing only in the exponent—highlighting a unified structure in the RS’s objective.

To present our main results, we introduce two notations for ease of exposition.
\begin{align}\label{eq:h}
    a_i(x) =  \binom{n-1}{i-1}x^{n-i}(1-x)^{i-1}, \qquad h(x,\bp) = \sum_{i=1}^n a_i(x) p_i.
\end{align}
Note that $a_i(x)$ is exactly the \emph{Bernstein basis polynomial} of degree $n-1$.  
The standard Bernstein polynomial $b_{\nu,n}$ is defined as:
$
    b_{\nu,n}(x) = \binom{n}{\nu}x^\nu(1-x)^{n-\nu},
$
for $\nu = 0,\ldots,n$.
Due to our construction of $a_i(x)$, observe that for each $i \in [n]$, we have: $a_i(x) = b_{n-i, n-1}(x).$

Now, our main theorem on the reduced optimization formulation can be stated as follows:

\begin{customthm}{\ref{thm:prop-user}}
    The platform's problem can be written as the following optimization problem:\footnote{As a side product, notice that one can directly re-obtain Proposition~\ref{thm:single-hm} without characterizing the equilibrium.}
    \begin{align*}
    \begin{aligned}
        (\text{\Opt})\quad  & \underset{\bp}{\text{maximize}}
        & & \alpha n\int_0^1 h(x,\bp)^{1 + 1/\beta}dx + (1-\alpha) \int_0^1 h(x,\bp)^{1/\beta} dx\\
        & \text{subject to}
        & & p_1 \ge p_2 \ge \ldots \ge p_n = 0
        \\
        & & & \sum_{i=1}^n p_i = 1
    \end{aligned}
    \end{align*}
    We write $G(\bp)$ to denote the objective function in the above optimization.
\end{customthm}
Essentially, we derive algebraic connections between the objective function and the Bernstein basis polynomial as well as its derivatives.
Then, we write a condition for a symmetric strategy profiles to become an equilibrium.
Manipulating the derived equation and carefully combining with the platform's original objective function, we derive the reduced optimization formulation without introducing the equilibrium itself.
The proof is deferred to Appendix~\ref{apd:proof:prop-user} along with other necessary technical lemmas.

\subsection{$\HM$ can be Arbitrarily Bad}
An immediate implication of Theorem~\ref{thm:prop-user} is that in the single-minded case of $\alpha = 0$ with convex cost function $\beta > 1$, $\HM$ can be arbitrarily worse than the optimal policy as $n$ increases.
\begin{proposition}[Price of $\HM$ policy]\label{prop:price-hardmax}
    Consider convex cost function of $\beta > 1$ and the RS maximizes average quality, \ie $\alpha  = 0$.
    Let $G(\bp)$ the objective function obtained in Theorem~\ref{thm:prop-user}, \ie if $\alpha = 0$ then $G(\bp) = \int_0^1 h(x,\bp)^{1/\beta}$.
    Then, for any $\eps > 0$, there exists a sufficiently larger $n$ such that $\HM$ policy becomes arbitrarily bad than the optimal policy, \ie  $\nicefrac{G(\HM)}{G(\OPT)} < \eps$.
\end{proposition}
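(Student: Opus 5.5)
Since $G(\OPT)$ is at least the value of any particular feasible policy, we compare $\HM$ against $\UNI$. It suffices to show $G(\HM)\to 0$ as $n\to\infty$, while $G(\UNI)$ stays bounded away from zero.

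\textbf{Step 1: $\HM$.} For $\HM$ we have $h(x,\bp)=a_1(x)=x^{n-1}$, so
\[
G(\HM)=\int_0^1 x^{(n-1)/\beta}\,dx=\frac{1}{(n-1)/\beta+1}=\frac{\beta}{n-1+\beta},
\]
which tends to $0$ at rate $\Theta(1/n)$ for fixed $\beta$.

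\textbf{Step 2: $\UNI$.} For $\UNI$,
\[
h(x,\bp)=\frac{1}{n-1}\sum_{i=1}^{n-1}a_i(x)=\frac{1-a_n(x)}{n-1}=\frac{1-(1-x)^{n-1}}{n-1},
\]
using that the Bernstein basis polynomials sum to one and $a_n(x)=(1-x)^{n-1}$. Then
\[
G(\UNI)=(n-1)^{-1/\beta}\int_0^1\bigl(1-(1-x)^{n-1}\bigr)^{1/\beta}dx .
\]
Here I expect a problem: this also decays, like $(n-1)^{-1/\beta}$. Since $\beta>1$, the exponent satisfies $1/\beta<1$, so this is much slower than the $1/n$ decay of $\HM$. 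The comparison should therefore be a ratio, not an absolute bound.

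To bound the integral from below, restrict to $x\in[1/2,1]$. There $(1-x)^{n-1}\le 2^{-(n-1)}\le 1/2$ for $n\ge2$, so the integral is at least $\tfrac12\cdot 2^{-1/\beta}$. Hence
\[
G(\OPT)\ge G(\UNI)\ge c_\beta\,(n-1)^{-1/\beta}
\]
for a constant $c_\beta>0$ depending only on $\beta$.

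\textbf{Step 3: ratio.} Combining the two steps,
\[
\frac{G(\HM)}{G(\OPT)}\le \frac{\beta}{c_\beta}\cdot\frac{(n-1)^{1/\beta}}{n-1+\beta}\le \frac{\beta}{c_\beta}(n-1)^{1/\beta-1}.
\]
Because $\beta>1$, the exponent $1/\beta-1$ is negative, so the right-hand side tends to $0$. Choosing $n$ large enough makes it smaller than $\eps$.

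The only genuinely delicate point is the one flagged in Step 2. The optimum itself vanishes as $n$ grows, so the argument must use the strict gap between the exponents $1$ and $1/\beta$, and this is exactly where $\beta>1$ enters. For $\beta\le1$, Theorem~\ref{thm:opt-single} says $\HM$ is optimal, so the claim indeed fails there.
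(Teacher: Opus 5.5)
Your proof is correct and follows the paper's argument essentially step for step. Like the paper, you compare against the feasible $\UNI$ policy, use $G(\HM)=\beta/(n-1+\beta)$ and the lower bound $G(\UNI)\ge c_\beta (n-1)^{-1/\beta}$ obtained by restricting the integral to $[1/2,1]$, and conclude from $(n-1)^{1/\beta-1}\to 0$ for $\beta>1$. The only blemish is your opening claim that $G(\UNI)$ stays bounded away from zero, which is false, as you notice yourself in Step 2; replace it with the ratio statement you actually prove.
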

In particular, the proof follows from comparing $\HM$ with $\UNI$, which is deferred to Appendix~\ref{proof:price-hardmax}.

\section{Optimal Policy Structure for Single-Minded RS}\label{sec:opt-structure}
We begin by presenting our main result, which characterizes the optimal recommendation policy in the single-minded setting—where the RS aims solely to maximize either user welfare or platform quality. Notably, the result shows that fairness, in the form of the uniform exposure policy $\UNI$, can emerge endogenously when content creation costs are convex and the platform prioritizes content quality ($\alpha = 0$).

\begin{customthm}{\ref{thm:opt-single}}[Optimal Policy in Boundary Regimes]
Depending on the weight parameter $\alpha$ and the cost parameter $\beta$, the following holds:
\begin{enumerate}
    \item If $\alpha = 1$ or $\beta \le 1$, then $\HM$ is optimal.
    \item If $\alpha = 0$ and $\beta < 1$, then $\HM$ is optimal.
    \item If $\alpha = 0$ and $\beta > 1$, then $\UNI$ is optimal.
    \item If $\alpha = 0$ and $\beta = 1$, then any $\bp \in \Delta$ with $p_n = 0$ is optimal.
\end{enumerate}
\end{customthm}

To understand this result, we analyze the RS’s objective in the single-minded case as a function of the policy vector $\bp$, and reveal key structural properties using tools from majorization theory.

We first introduce a symmetric extension of the function $h(x, \bp) = \sum_{i=1}^n a_i(x)p_i$, defined over the entire $n$-dimensional probability simplex $\Delta_n$, not necessarily ordered. Let $O(x, \bp) = \sum_{i=1}^n a_i(x)p_{(i)}$, where $p_{(i)}$ is the $i$-th largest component in $\bp$.

This construction allows us to study $O(x, \bp)^r$ as a symmetric function, which we relate to the concepts of majorization and Schur-convexity.

\begin{definition}[Symmetric function]
A function $f: X^n \to \R$ is symmetric if $f(x_1,\ldots, x_n) = f(x_{\sigma(1)}, \ldots, x_{\sigma(n)})$ for any permutation $\sigma$ and any $x \in X^n$.
\end{definition}

\begin{definition}[Majorization]
Given $x, y \in \R^n$, we say $x$ majorizes $y$ ($x \succ y$) if:
\begin{align*}
    \sum_{i=1}^k x_{(i)} &\ge \sum_{i=1}^k y_{(i)} \quad \text{for all } k = 1, \ldots, n-1, \\
    \sum_{i=1}^n x_i &= \sum_{i=1}^n y_i,
\end{align*}
where $x_{(i)}$ and $y_{(i)}$ are sorted in non-increasing order.
\end{definition}

\begin{definition}[Schur-convexity]
A symmetric function $f: \R^n \to \R$ is Schur-convex if $x \succ y$ implies $f(x) \ge f(y)$, and Schur-concave if $f(x) \le f(y)$.
\end{definition}

We use the following criterion to check Schur-convexity:

\begin{lemma}[Schur-Ostrowski Criterion~\citet{peajcariaac1992convex}]\label{thm:schur-ostrowski}
Let $f: \R^n \to \R$ be symmetric and continuously differentiable. Then $f$ is Schur-convex (concave) if and only if:
\begin{align*}
    (x_i - x_j)\left(\frac{\partial f}{\partial x_i} - \frac{\partial f}{\partial x_j}\right) \ge (\le)\; 0,
\end{align*}
for all $i, j$.
\end{lemma}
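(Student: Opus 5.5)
The plan is the classical reduction of majorization to elementary two-coordinate transfers, followed by a one-variable calculus argument along each transfer. I treat the Schur-convex case; the Schur-concave case follows by applying it to $-f$.

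\emph{Sufficiency.} Assume $(x_i-x_j)(\partial_i f-\partial_j f)\ge 0$ for all $i,j$ and every point $x$, and let $x\succ y$.

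First, I invoke the Hardy--Littlewood--P\'olya/Muirhead lemma: if $x\succ y$, then $y$ is obtained from $x$ by a permutation followed by finitely many (at most $n-1$) \emph{T-transforms}. A T-transform replaces a pair $(x_i,x_j)$ with $x_i\ge x_j$ by $(x_i-t,\,x_j+t)$ for some $0\le t\le (x_i-x_j)/2$, leaving all other coordinates fixed. Since $f$ is symmetric, permutations do not change $f$. So it suffices to show that one T-transform does not increase $f$.

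Second, fix such a pair and set $z(s)=x+s(e_j-e_i)$ for $s\in[0,t]$. By the chain rule, with $f$ being $C^1$,
\begin{align*}
\frac{d}{ds}f(z(s)) = \partial_j f(z(s))-\partial_i f(z(s)).
\end{align*}
For $s<(x_i-x_j)/2$ we have $z_i(s)-z_j(s)=x_i-x_j-2s>0$. The hypothesis at the point $z(s)$ then gives $\partial_i f\ge\partial_j f$, so the derivative above is $\le 0$. At the endpoint $s=(x_i-x_j)/2$ the condition is vacuous, but continuity handles this case. Hence $f(z(t))\le f(x)$. Chaining these inequalities over the finite sequence of T-transforms yields $f(y)\le f(x)$, which is Schur-convexity.

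\emph{Necessity.} Suppose $f$ is Schur-convex. Fix $x$ and indices $i,j$ with $x_i>x_j$; the case $x_i=x_j$ is trivial. For small $s>0$, the point $x+s(e_j-e_i)$ is a T-transform of $x$, so it is majorized by $x$. Schur-convexity then gives $f(x+s(e_j-e_i))\le f(x)$. Dividing by $s$ and letting $s\downarrow 0$ gives $\partial_j f(x)-\partial_i f(x)\le 0$. This is exactly $(x_i-x_j)(\partial_i f-\partial_j f)\ge 0$. Swapping the roles of $i$ and $j$ covers $x_i<x_j$.

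The main obstacle is the decomposition of majorization into T-transforms, since everything else is one-variable calculus. For a self-contained argument, I would prove it by induction on the number of coordinates where the sorted $x$ and $y$ differ. Take the largest index $k$ with $x_{(k)}>y_{(k)}$ and the smallest index $l>k$ with $x_{(l)}<y_{(l)}$. Then transfer $\min\{x_{(k)}-y_{(k)},\,y_{(l)}-x_{(l)}\}$ from coordinate $k$ to coordinate $l$. This transfer keeps the new vector majorizing $y$ and makes at least one more coordinate agree with $y$. The delicate point is checking that the partial-sum inequalities survive the transfer, and that the moved amount is at most half the gap, so that it is a valid T-transform.
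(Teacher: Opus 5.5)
Your proof is correct. The paper itself does not prove this lemma: it quotes it from \citet{peajcariaac1992convex} and uses it as a black box in the proof of Lemma~\ref{thm:schur}, so there is no in-paper argument to compare against.

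What you wrote is one of the two standard textbook proofs. It decomposes majorization into T-transforms (Muirhead/Hardy--Littlewood--P\'olya), applies a mean-value argument along each two-coordinate segment for sufficiency, and uses a one-sided directional derivative for necessity. The other standard proof, used for instance by Marshall and Olkin, avoids the decomposition altogether. Sort both vectors and move along the single straight segment $z(t)=y^{\downarrow}+t(x^{\downarrow}-y^{\downarrow})$, which stays in the ordered cone. Abel summation then gives $\frac{d}{dt}f(z(t))=\sum_{k=1}^{n-1}\bigl(\partial_k f-\partial_{k+1}f\bigr)(z(t))\,\bigl(S_k(x)-S_k(y)\bigr)$, where $S_k$ denotes the sorted partial sums. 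Every factor is nonnegative; at ties, symmetry of a $C^1$ function forces $\partial_k f=\partial_{k+1}f$. That version is shorter and needs no combinatorial lemma. Yours uses only pairwise transfers and makes the role of each pairwise condition transparent. Both are complete.

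If you write out the self-contained decomposition, make explicit three checks you left implicit.
\begin{itemize}
\item The transfer is a valid T-transform. With $\delta:=\min\{x_{(k)}-y_{(k)},\,y_{(l)}-x_{(l)}\}$ one has $\delta\le (x_{(k)}-x_{(l)})/2$, because $x_{(k)}-x_{(l)}=(x_{(k)}-y_{(k)})+(y_{(k)}-y_{(l)})+(y_{(l)}-x_{(l)})\ge 2\delta$.
\item The transferred vector stays sorted. By the choice of $k,l$ you have $x_{(i)}=y_{(i)}$ for $k<i<l$, and also $x_{(k)}-\delta\ge y_{(k)}\ge y_{(l)}\ge x_{(l)}+\delta$. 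This is what makes ``one more sorted coordinate agrees with $y$'' a valid termination measure.
\item The partial-sum inequalities survive. For $k\le m<l$, the check reduces to $\sum_{i<k}(x_{(i)}-y_{(i)})\ge 0$ together with $x_{(k)}-y_{(k)}\ge\delta$.
\end{itemize}
Finally, a permutation is needed at the end as well as at the start, to return from $y^{\downarrow}$ to $y$; this is harmless by symmetry.
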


Our main structural result is stated below.

\begin{lemma}[Schur Properties of Objective] \label{thm:schur}
For any $\bp \in \Delta_n$, the function $\int_0^1 O(x, \bp)^r dx$ is:
\begin{itemize}
    \item Schur-concave if $r \in [0,1]$;
    \item Schur-convex if $r \le 0$ or $r \ge 1$.
\end{itemize}
\end{lemma}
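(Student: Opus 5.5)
The plan is to apply the Schur--Ostrowski criterion (Lemma~\ref{thm:schur-ostrowski}) to $\Phi_r(\bp) := \int_0^1 O(x,\bp)^r\,dx$. Since $O$ is symmetric by construction, it suffices to check the criterion on the ordered cone $\{p_1 \ge \cdots \ge p_n\}$; there $O(x,\bp)=h(x,\bp)$, and the standard reduction of Schur--Ostrowski to adjacent coordinates applies. So the goal is to show that for ordered $\bp$ and each $i\in[n-1]$,
\begin{align*}
\frac{\partial \Phi_r}{\partial p_i}-\frac{\partial \Phi_r}{\partial p_{i+1}} = r\int_0^1 h(x,\bp)^{r-1}\bigl(a_i(x)-a_{i+1}(x)\bigr)\,dx
\end{align*}
is $\ge 0$ in the Schur-convex regime and $\le 0$ in the Schur-concave regime.

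The argument rests on three facts about the integrand.
\begin{enumerate}
\item Every degree-$(n-1)$ Bernstein basis polynomial integrates to $1/n$ on $[0,1]$, so $\int_0^1 (a_i-a_{i+1})\,dx = 0$.
\item The ratio $a_i(x)/a_{i+1}(x) = \frac{n-i}{i}\cdot\frac{x}{1-x}$ is strictly increasing in $x$. Hence $a_i-a_{i+1}$ has exactly one sign change on $(0,1)$, from negative to positive, at some $x_i^*$.
\item For ordered $\bp$, $h(\cdot,\bp)$ is nondecreasing in $x$. Using $b_{k,m}' = m(b_{k-1,m-1}-b_{k,m-1})$ and summation by parts, one gets $h'(x,\bp) = (n-1)\sum_{i=1}^{n-1}(p_i-p_{i+1})\,b_{n-i-1,n-2}(x)\ge 0$.
\end{enumerate}

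Now set $g(x)= r\,h(x,\bp)^{r-1}$. If $r\ge 1$, $g$ is nondecreasing. If $r\le 0$, then $h^{r-1}$ is nonincreasing and $r\le 0$, so $g$ is again nondecreasing. If $r\in[0,1]$, $g$ is nonincreasing. The standard single-crossing argument then finishes. Write
\[
\int_0^1 g\,(a_i-a_{i+1})\,dx=\int_0^1 \bigl(g(x)-g(x_i^*)\bigr)(a_i-a_{i+1})\,dx
\]
using the zero mean. The integrand on the right is pointwise $\ge 0$ when $g$ is nondecreasing and $\le 0$ when $g$ is nonincreasing. This gives Schur-convexity for $r\le 0$ or $r\ge1$ and Schur-concavity for $r\in[0,1]$; the cases $r=0,1$, where $\Phi_r$ is constant in $\bp$, are consistent with both.

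I expect the main obstacle to be regularity rather than the sign argument. First, $O$ is only piecewise smooth across ties $p_i=p_j$, so continuous differentiability on all of $\R^n$ can fail. I would handle this by using the version of Schur--Ostrowski stated on the ordered cone together with symmetry, or alternatively by invoking the equivalent characterization that Schur-convexity holds if $\Phi_r$ is monotone along Robin Hood transfers $p_i\downarrow$, $p_{i+1}\uparrow$ within the cone, and integrating the derivative bound along that segment. Second, when $r<1$ and $p_n=0$ we have $h(0,\bp)=0$, so $h^{r-1}$ blows up near $x=0$ and differentiation under the integral needs care; for $r\le 0$ even $\Phi_r$ itself may be infinite. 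My first attempt is to prove the inequalities for $\bp$ in the interior, where $p_n>0$ and so $h\ge p_n>0$ makes everything smooth and bounded, and then pass to the boundary by monotone convergence or continuity, interpreting values in $[0,\infty]$ when $r\le 0$.
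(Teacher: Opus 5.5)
Your argument is correct. It shares the paper's skeleton: Schur--Ostrowski reduced to adjacent coordinates on the ordered cone (where $O=h$), together with monotonicity of $h(\cdot,\bp)$ in $x$. You settle the key sign step by a different mechanism, though.

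The paper uses the Bernstein identity $n(a_i-a_{i+1})=-\frac{d}{dx}a_{i,n}$, where $a_{i,n}(x)=\binom{n}{i}x^{n-i}(1-x)^{i}$ vanishes at $x=0,1$ for $i\in[n-1]$. Integrating by parts, it obtains
\[
r\int_0^1 h^{r-1}(a_i-a_{i+1})\,dx=\frac{r(r-1)}{n}\int_0^1 h^{r-2}\,\psi(x,\bp)\,a_{i,n}(x)\,dx,\qquad \psi=\partial_x h\ge 0,
\]
whose sign is visibly that of $r(r-1)$. You instead pair the monotone weight $g=rh^{r-1}$ against a zero-mean kernel with one sign change from $-$ to $+$.

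These are two faces of the same fact. $-\frac{1}{n}a_{i,n}$ is exactly the antiderivative of $a_i-a_{i+1}$, and its being nonpositive with zero boundary values is what your zero-mean and single-crossing facts encode. The paper's route gives an explicit formula. Yours needs only monotonicity of $g$ (no differentiation of $h^{r-1}$ in $x$, no boundary terms), so it works verbatim for $\int_0^1\varphi(O(x,\bp))\,dx$ with any convex or concave $\varphi$.

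Your regularity caveats are well taken and go beyond the paper on two points.
\begin{itemize}
\item The paper invokes the $C^1$ form of Schur--Ostrowski, although $O$ has kinks at ties.
\item It drops $[O^{r-1}a_{i,n}]_0^1$ even when $p_{(n)}=0$ and $r<1$. There $O^{r-1}$ blows up at $x=0$ and the term need not vanish; for $r<0$ it diverges at $i=n-1$.
\end{itemize}
Your plan of proving the inequality where $p_n>0$ and then passing to the boundary fixes this. A convenient device is $\bp\mapsto\bp+\epsilon\mathbf{1}$: it preserves majorization and gives $O(x,\bp+\epsilon\mathbf{1})=O(x,\bp)+\epsilon$ since $\sum_i a_i\equiv 1$. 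Monotone convergence then applies for every $r$.

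One harmless slip: $a_i/a_{i+1}=\frac{i}{n-i}\cdot\frac{x}{1-x}$, not $\frac{n-i}{i}\cdot\frac{x}{1-x}$. So the crossing point is $x_i^*=(n-i)/n$.
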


Theorem~\ref{thm:opt-single} follows from applying Lemma~\ref{thm:schur} to the RS objective in the respective cases of $\alpha$ and $\beta$.
A full proof is provided in Appendix~\ref{apd:thm:opt-single}.

\section{Optimal Policy Structure for Convex-minded RS}\label{sec:opt-structure-general}


In this section, we characterize the optimal recommendation policy when the RS has a convex-minded objective, i.e., it balances user welfare and platform quality with a weight parameter $\alpha \in (0,1)$. Let $G(\bp)$ denote the objective function associated with a given $\alpha$ as defined in $\Opt$ (see Theorem~\ref{thm:prop-user}).

We restrict the domain of $G$ to the ordered simplex $\Delta$, as introduced in Section~\ref{sec:model}, to avoid confusion with its symmetric extension $O(x,\bp)$. From Theorem~\ref{thm:prop-user}, we know that for $n=2$, the optimal policy always satisfies $p_n = 0$, implying that the optimal policy in this case is $\HM$. Therefore, we focus on the case where $n \ge 3$.

We now state our main result:
\begin{customthm}{\ref{thm:opt-general}}[Optimal Structure for Convex-Minded RS]
    For any $\alpha \in [0,1]$ and $\beta > 0$, any optimal policy has the structure:
    \[
        p_1 \ge p_2 = p_3 = \ldots = p_{n-1} \ge p_n = 0.
    \]
\end{customthm}

\subsection{Proof Sketch of Theorem~\ref{thm:opt-general}}

We prove this theorem by analyzing the geometry of the partial derivatives $\partial G(\bp)/\partial p_i$ for $i \in [n-1]$ and applying the KKT conditions to characterize the structure of optimal solutions.

\subsubsection{Step 1: Quasiconvexity of the Gradient Sequence} 
Define the partial derivatives as:
\begin{align} \label{eq:d_i}
    d_i := \frac{\partial G(\bp)}{\partial p_i} =  n\alpha \left(1+ \frac{1}{\beta}\right) \int_0^1 h(x,\bp)^{1/\beta} a_i(x) dx + \frac{1-\alpha}{\beta} \int_0^1 h(x,\bp)^{1/\beta-1} a_i(x) dx,
\end{align}
for $i \in [n-1]$.
Factoring out common factors, this can be written as
\begin{align}
    d_i = \int_0^1 \paranl{n\alpha\parans{1 + \frac{1}{\beta} h(x,\bp)^{1/\beta} + \frac{1-\alpha}{\beta} h(x,\bp)^{1/\beta - 1}}} a_i(x)dx.\label{eq:d_i-factored}
\end{align}

In fact, notice that for any general objective function of the form
\begin{align*}
    H(\bp) = e_1 \int_0^1 h(x,\bp)^{r_1}dx + e_2 \int_0^1 h(x,\bp)^{r_2}dx,
\end{align*}
for $e_1, e_2,r_1,r_2 \in \R$,
we can write its partial derivatives with respect to $p_i$'s as:
\begin{align}\label{eq:part-deriv-general}
    \frac{\partial H(\bp)}{\partial p_i} = \int_0^1 q(x) a_i(x)dx,
\end{align}
for some $q(x)$ which involves terms $n, \alpha, \beta$ and $h(x,\bp)$.
Importantly, $q(x)$ in the integrand remains the same across $i \in [n-1]$, which will play a key role in our analysis.

Looking forward, we will show that the sequence $(d_i)_{i \in [n-1]}$ exhibits quasiconveixty (along with some further properties), defined below:

\begin{definition}[Quasiconvexity]\label{def:quasiconv}
    A sequence $x_1,\ldots, x_n$ is \emph{quasiconvex} if there exists $k \in [n]$ such that $x_1 \ge x_2 \ge \ldots \ge x_k \le x_{k+1} \le \ldots \le x_n$. 
    Similarly, a function $f: X \subseteq \R \to \R$ is quasiconvex if there exists $x^* \in \R$ such that $f(x) \ge f(y)$ for $x < y \le x^*$ and $f(x) \le f(y)$ for $x^* \le x < y$.
\end{definition}
In fact, we will prove a rather general statement such that if the partial derivatives of the objective function with respect to $p_i$'s can be written as~\eqref{eq:part-deriv-general}, and if the corresponding weight $q(x)$ behind each $a_i(x)$ is quasiconvex, then the resulting sequence of partial derivatives is highly structured.
This is formally captured by the following key structural lemma:
\begin{figure}[t]
    \centering
    \subfigure[$d_i$ under $\HM$ policy]{
    \includegraphics[scale=0.52]{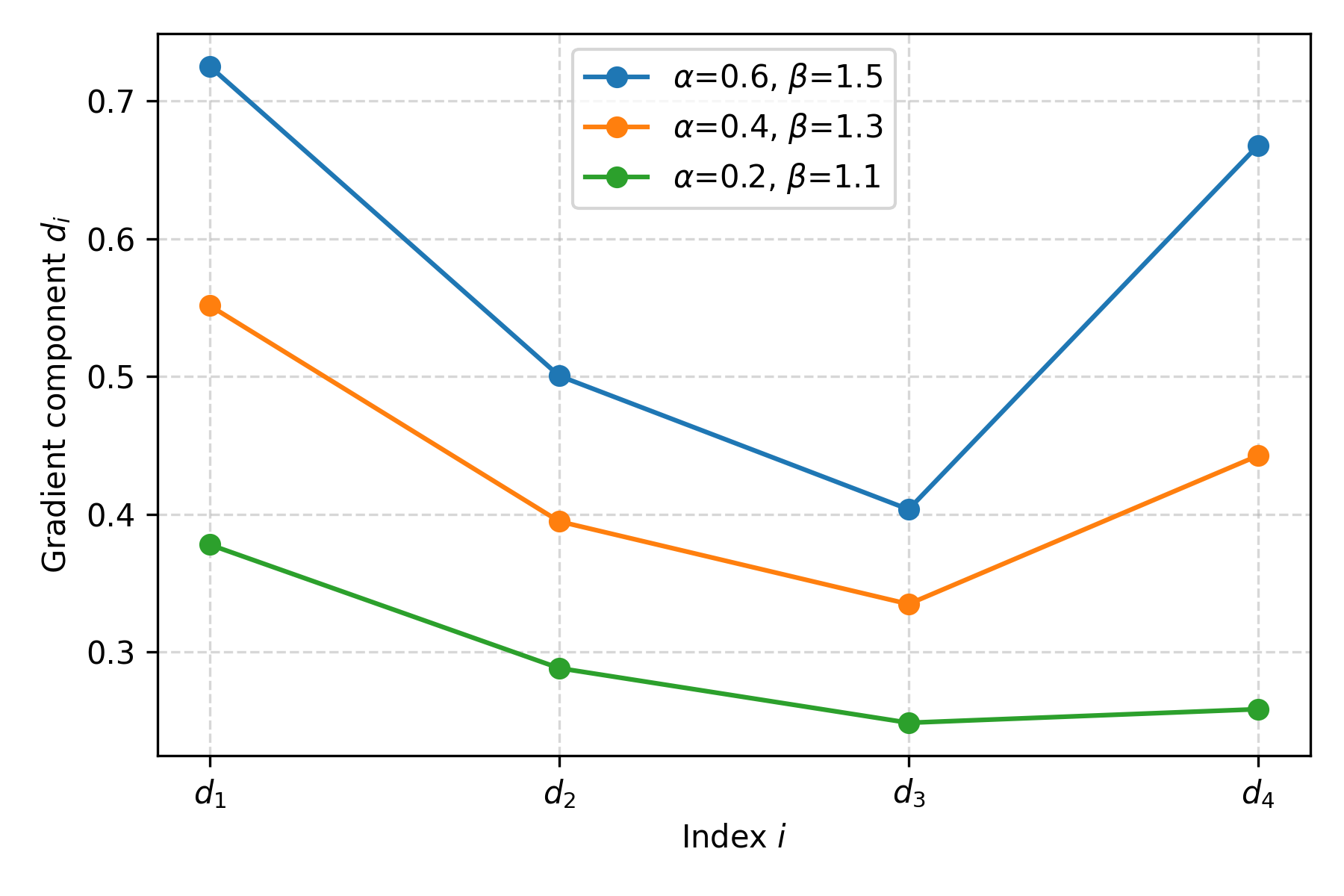}
    }  
    \subfigure[$d_i$ under $\UNI$ policy]{
    \includegraphics[scale=0.52]{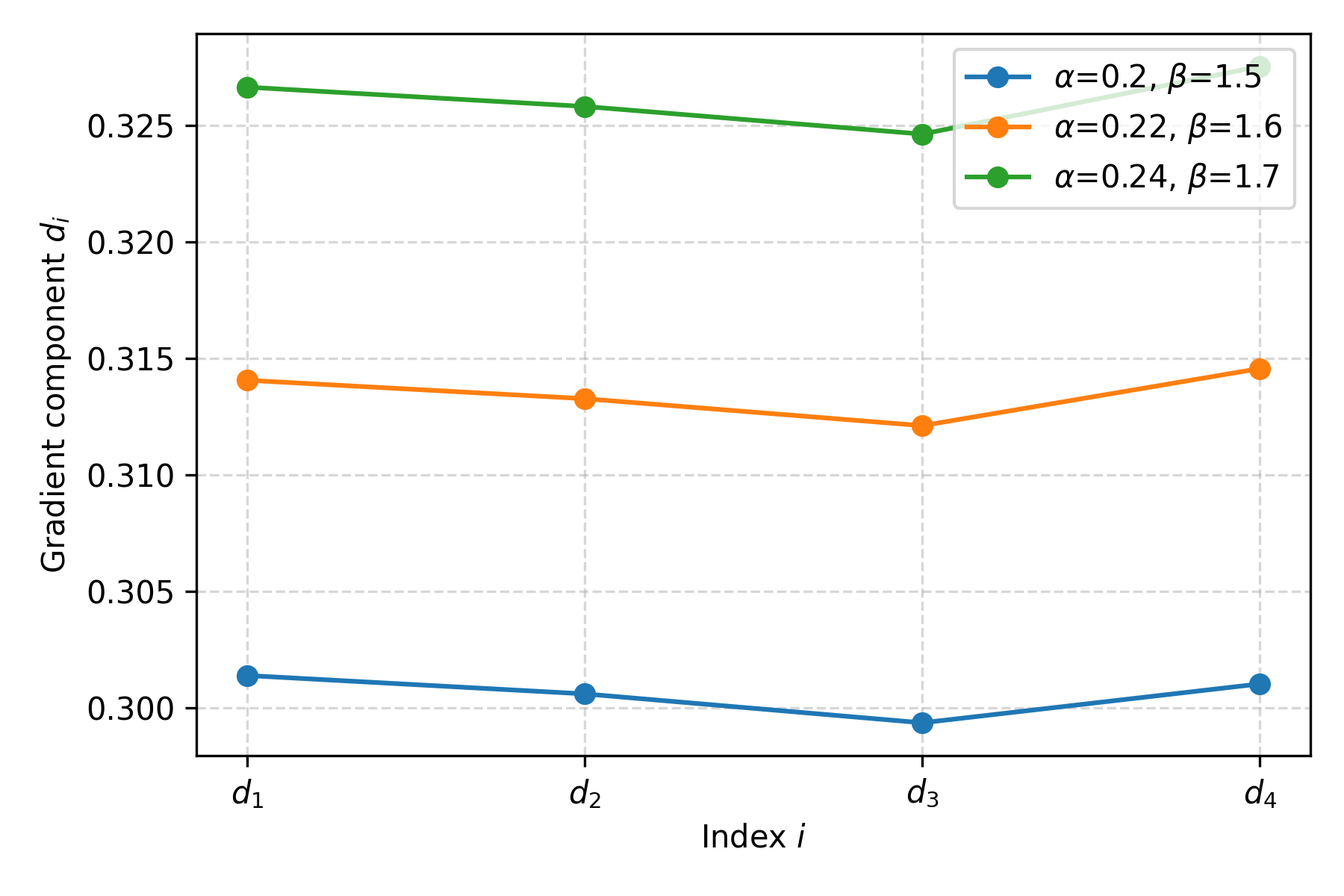}
    }
    \caption{Behavior of $d_i$ for $n = 5$. Since $p_5 = 0$, $d_5$ is omitted. Quasiconvexity of $d_i$ follows from the quasiconvexity of $q(x)$ shown in Figure~\ref{fig:quasi-conv-q}.}
    \label{fig:quasi-conv-d}
\end{figure}

\begin{lemma}[Gradient Quasiconvexity and Structure]\label{lm:unimodal}
Given any general objective function $H(\bp)$, suppose its partial derivatives with respect to $p_i$'s can be written as~\eqref{eq:part-deriv-general}.
If $q(x)$ in the integrand of~\eqref{eq:part-deriv-general} is quasiconvex on $x \in [0,1]$, then the sequence $d_1, d_2, \ldots, d_{n-1}$ satisfies:

\begin{enumerate}
    \item It is quasiconvex.
    \item $d_i = d_{i+1}$ can only happen in one of the following cases: (i) if $k$ is the slope-changing index where $p_{k-1} \ge p_k \le p_{k+1}$, then $d_i = d_{i+1}$ only if $i = k-1$ or $i = k$, (ii) if the entire sequence $d_i$'s for $i \in [n-1]$ is monotone decreasing, then $d_i = d_{i+1}$ only if $i = n-2$, and (iii) if the entire sequence $d_i$'s for $i \in [n-1]$ is monotone increasing, then $d_i = d_{i+1}$ only if $i = 1$ or the entire sequence is monotone decreasing, in which case $d_{n-2} = d_{n-1}$.
    
    
    \item If $d_k = d_{k+1}$, then $d_k \ne d_{k-1}$, and vice versa.
\end{enumerate}
\end{lemma}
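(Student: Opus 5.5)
The plan is to reduce everything to a sign-change count via the variation diminishing property of a totally positive kernel. The first observation is that each Bernstein basis polynomial integrates to the same constant:
\[
\int_0^1 a_i(x)\,dx = \binom{n-1}{i-1}B(n-i+1,i) = \frac{1}{n}
\quad\text{for every } i .
\]
Hence for any level $c\in\R$,
\[
d_i - \frac{c}{n} = \int_0^1 \bigl(q(x)-c\bigr)a_i(x)\,dx .
\]
So the sign pattern of the sequence $(d_i - c/n)_{i\in[n-1]}$ is the image of the sign pattern of the function $q(\cdot)-c$ under the integral kernel $K(i,x)=a_i(x)$. Since $q$ is quasiconvex, $q-c$ has at most two sign changes on $[0,1]$, and if it has two they occur in the order $+,-,+$.

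Second, I will establish that, after the change of variables $y=1-x$, the kernel is strictly totally positive. Write
\[
a_i(1-y)=\binom{n-1}{i-1}(1-y)^{n-1}\Bigl(\tfrac{y}{1-y}\Bigr)^{i-1}.
\]
Up to positive row and column factors, this is the kernel $t^{\,i-1}$ with $t=y/(1-y)$ strictly increasing in $y$. Its minors for $i_1<\dots<i_k$ and $y_1<\dots<y_k$ are generalized Vandermonde determinants, which are strictly positive (the \cite{yang2001generalization} fact mentioned in the introduction). The substitution $y=1-x$ reverses the orientation of $x$, but the pattern $+,-,+$ is palindromic, so $q(1-y)-c$ still has at most two sign changes in the order $+,-,+$. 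By Karlin's variation diminishing theorem \citep{karlin1964total}, we get $S^+(d_\cdot - c/n)\le S^-(q-c)\le 2$, provided $q-c\not\equiv 0$. Moreover, when equality holds the sign sequences have the same arrangement, and this also holds for the strong count $S^+$, in which zeros may be assigned either sign.

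Third, I derive the three claims from this single inequality applied at suitably chosen levels.
\begin{itemize}
\item \emph{Quasiconvexity (item 1).} If the sequence is not quasiconvex, there are $i<j<k$ with $d_j>\max(d_i,d_k)$. Choosing $c/n$ strictly between these values produces a sub-pattern $-,+,-$. This is either a forbidden orientation with two sign changes, or a sign-change count exceeding what $+,-,+$ allows; either way we reach a contradiction.
\item \emph{No three consecutive equal values (item 3).} If $d_{k-1}=d_k=d_{k+1}=c/n$, then assigning the three zeros the signs $-,+,-$ gives $S^+\ge 2$ with the wrong orientation, again a contradiction.
\item \emph{Location of flat pairs (item 2).} Suppose a flat pair $d_i=d_{i+1}$ sits strictly inside the decreasing part, so that $d_{i-1}>d_i=d_{i+1}>d_{i+2}$. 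Taking $c/n=d_i$ gives the pattern $+,0,0,-$, and assigning the zeros as $-,+$ yields three sign changes, exceeding the bound of two. The increasing part is handled symmetrically, and this leaves only the positions adjacent to the turning index or at the boundary, as listed in item 2.
\end{itemize}

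The main obstacle I expect is making the variation diminishing step fully rigorous in its strong $S^+$ form with the orientation clause. This means checking that \cite{karlin1964total}'s theorem applies to our discrete-index, continuous-variable kernel with Lebesgue measure, and that it covers a sign function $q-c$ which may vanish on intervals. A second issue is the degenerate case where $q$ is constant on $[0,1]$: then $q-c\equiv 0$ for $c$ equal to that constant, and indeed all $d_i$ coincide. I would exclude this case explicitly, noting that in the intended application $q(x)$ is a nonconstant function of $h(x,\bp)$, or treat it as trivially quasiconvex. The remaining work is bookkeeping of the boundary cases in item 2, which I would handle by the same level-choice argument at the endpoints $i=1$ and $i=n-2$.
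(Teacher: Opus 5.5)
Your proposal is correct and follows the paper's proof essentially step for step: strict total positivity of $a_i(1-y)$ via generalized Vandermonde minors, the shift $d_i - c/n = \int_0^1 (q(x)-c)\,a_i(x)\,dx$ using $\int_0^1 a_i = 1/n$, Karlin's strong variation-diminishing bound $S^+ \le 2$ with the $(+,-,+)$ orientation, and then level-choice sign-pattern checks for items 1--3. Your caveat about constant $q$ is a real point the paper omits: then all $d_i$ coincide and items 2--3 fail, which happens in the paper's own setting at $\alpha=0$, $\beta=1$ where $q\equiv 1$, so that case must be excluded explicitly.
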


We prove Lemma~\ref{lm:unimodal} using tools from total positivity and the variation diminishing property.
Essentially, totally positive operators ($a_i(x)$'s in our case) do not increase the number of sign changes, so the quasiconvexity in the operand $q(x)$ over $x$ translates to the quasiconvexity of $d_i$'s over $i$.

Finally, combined with the following lemma stating the quasiconvexity of $q(x)$ in our objective function $G(\bp)$, we conclude that $d_i$'s has the desired properties.
\begin{lemma}\label{lm:q-quasiconvex}
    The function
    \begin{align*}
        q(x) = \paranl{n\alpha\parans{1 + \frac{1}{\beta} h(x,\bp)^{1/\beta} + \frac{1-\alpha}{\beta} h(x,\bp)^{1/\beta - 1}}}
    \end{align*}
    in~\eqref{eq:d_i-factored} is quasiconvex on $x \in [0,1]$.
\end{lemma}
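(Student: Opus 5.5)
The plan is to write $q(x)$ as a composition $q(x)=\varphi(h(x,\bp))$ of a one-variable function $\varphi$ with $h(\cdot,\bp)$. Two facts then suffice: $\varphi$ is quasiconvex on $(0,\infty)$ in the sense of Definition~\ref{def:quasiconv}, and $x\mapsto h(x,\bp)$ is nondecreasing on $[0,1]$. Quasiconvexity (decrease-then-increase) is preserved under precomposition with a nondecreasing map. Indeed, if $t^*$ is the turning point of $\varphi$, then $q$ is nonincreasing on $\{x: h(x,\bp)\le t^*\}$ and nondecreasing on $\{x: h(x,\bp)\ge t^*\}$, and by monotonicity of $h$ these sets are an initial and a final subinterval of $[0,1]$. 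I read the (typographically garbled) expression in the statement as the integrand of~\eqref{eq:d_i}, namely
\[
\varphi(t)=A\,t^{1/\beta}+B\,t^{1/\beta-1},\qquad A=n\alpha\Bigl(1+\frac1\beta\Bigr)\ge 0,\quad B=\frac{1-\alpha}{\beta}\ge 0 .
\]

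\textbf{Step 1 (monotonicity of $h$).} Since $a_i(x)=b_{n-i,n-1}(x)$, $h(\cdot,\bp)$ is a Bernstein polynomial of degree $n-1$ whose coefficient on $b_{\nu,n-1}$ is $c_\nu=p_{n-\nu}$. Because $p_1\ge\cdots\ge p_n$, the sequence $c_0\le c_1\le\cdots\le c_{n-1}$ is nondecreasing. The standard derivative formula for Bernstein polynomials gives
\[
\frac{\partial h(x,\bp)}{\partial x}=(n-1)\sum_{\nu=0}^{n-2}(c_{\nu+1}-c_\nu)\,b_{\nu,n-2}(x)\ \ge 0 .
\]
Hence $h$ is nondecreasing on $[0,1]$, running from $h(0,\bp)=p_n=0$ to $h(1,\bp)=p_1$.

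\textbf{Step 2 (quasiconvexity of $\varphi$).} For $t>0$,
\[
\varphi'(t)=t^{1/\beta-2}\Bigl[\frac{A}{\beta}\,t+B\Bigl(\frac1\beta-1\Bigr)\Bigr].
\]
The sign of $\varphi'$ is the sign of the affine function $\ell(t)=\frac{A}{\beta}t+B(\frac1\beta-1)$, which is nondecreasing in $t$ since $A\ge 0$. So $\ell$ changes sign at most once, and only from negative to positive. Consequently $\varphi$ is nonincreasing up to some $t^*\in[0,\infty]$ and nondecreasing afterwards, which is exactly quasiconvexity. The cases $\beta\le1$ (where $\ell\ge 0$ and $\varphi$ is monotone) and $\alpha\in\{0,1\}$ are covered as degenerate instances.

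\textbf{Step 3 (composition and boundary).} Combining Steps 1 and 2 via the preservation argument in the first paragraph gives quasiconvexity of $q$ on $(0,1]$. The main point needing care is the endpoint $x=0$, where $h=0$ and the term $t^{1/\beta-1}$ is $+\infty$ when $\beta>1$ and $\alpha<1$. In that case $\varphi(t)\to+\infty$ as $t\downarrow 0$, which is consistent with the initial decreasing branch, so quasiconvexity holds with the extended-real value at $0$. Since this is a single point, it does not affect the integrals defining $d_i$ in Lemma~\ref{lm:unimodal}. If strictness is needed later (for items 2 and 3 of Lemma~\ref{lm:unimodal}), note that $h$ is strictly increasing whenever $\bp$ is nontrivial, since the $c_\nu$ are not all equal. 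Likewise, $\ell$ is strictly increasing when $\alpha>0$, so $q$ is strictly monotone on each side of its turning point except on null sets.
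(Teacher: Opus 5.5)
Your proposal is correct and is essentially the paper's own proof. The paper differentiates $q$ directly, writes $\frac{dq}{dx}=\psi(x,\bp)\,h(x,\bp)^{1/\beta-2}\bigl(w_1 h(x,\bp)+w_2\bigr)$ with $w_1=A/\beta\ge 0$ and $w_2=B(1/\beta-1)$, and concludes quasiconvexity from the monotonicity of $h$ (Lemma~\ref{lm:h-monotone}); this is exactly your $\varphi\circ h$ argument, and your version is slightly more careful in also covering $\beta\le 1$ and the singularity of $h^{1/\beta-1}$ at $x=0$, which the paper leaves implicit by assuming $\beta>1$.
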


\subsubsection*{Total Positivity and Variation Diminishing Properties}

We briefly review key definitions:

\begin{definition}[Total Positivity]\label{def:tp}
    Let $K(x,y)$ be a real-valued function over ordered sets $X$ and $Y$. It is totally positive of order $r$ ($\TP_r$) if for all $1 \le k \le r$ and increasing sequences $x_1 < \ldots < x_k$ in $X$, and $y_1 < \ldots < y_k$ in $Y$,
    \[
        \det \left(K(x_i, y_j)\right)_{i,j=1}^k \ge 0.
    \]
    If the inequality is strict, it is strongly totally positive of order $r$ ($\STP_r$).
\end{definition}
The notion of total positivity is closely related to how much sign change occurs in a sequence or a function.
The number of sign change is formally defined as follows:

\begin{definition}[Number of Sign Changes]
    For $x = (x_1, \ldots, x_n) \in \R^n$, let $S^-(x)$ be the number of strict sign changes, ignoring zeroes. $S^+(x)$ allows arbitrary signs for zero entries, maximizing the number of sign changes.
\end{definition}

The following notion of variation diminishing property formally defines the connection between the number of sign changes in a resulting function, which is calculated by a weighted summation over an original function given a two-dimensional weight kernel.
\begin{definition}[Variation Diminishing Property]\label{def:vd}
    Let $K(x,y)$ be an integrable kernel and $q(x)$ an integrable function. Define
    \[
        Q(y) = \int K(x,y)q(x)dx.
    \]
    Then $K$ has the variation diminishing (VD) property if $S^-(Q) \le S^-(q)$ and the strong variation diminishing (SVD) property if $S^+(Q) \le S^-(q)$. Further, if the equation holds, then the sign patterns should remain the same.
\end{definition}

Essentially, the following seminal result asserts that any totally positive kernel cannot change the number of sign changes dramatically, allowing the variation diminishing property:

\begin{theorem}[\cite{karlin1964total}, Chapter 5]\label{thm:karlin}
    If $K(x,y)$ is $\TP_r$, then it has VD for any $q$ with $S^-(q) \le r-1$. If $K$ is $\STP_r$, then it has SVD for any $q$ with $S^-(q) \le r-1$.
\end{theorem}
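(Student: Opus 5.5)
The plan is to reduce the integral statement to a finite-dimensional one in which the sign changes of $q$ are collapsed into a small number of aggregated kernels. Once that is done, the bound on sign changes of $Q$ should follow from a single determinant identity whose expansion has all terms of one sign.

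\textbf{Step 1 (aggregation).} Let $m := S^-(q) \le r-1$. Partition the domain into consecutive intervals $I_0 < I_1 < \cdots < I_m$ on each of which $q$ has constant sign, alternating as $\epsilon(-1)^j$ for some fixed $\epsilon\in\{\pm1\}$. Zero sets of $q$ are absorbed into neighbouring intervals. Then
\[
Q(y) = \sum_{j=0}^m \epsilon(-1)^j \psi_j(y), \qquad \psi_j(y) := \int_{I_j} K(x,y)\,|q(x)|\,dx .
\]

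\textbf{Step 2 (total positivity of the aggregated kernel).} I will show that the kernel $(j,y)\mapsto \psi_j(y)$ is $\TP_{m+1}$, and $\STP_{m+1}$ when $K$ is $\STP_r$ and $q$ is not a.e.\ zero on any $I_j$. This uses the continuous Cauchy--Binet (basic composition) formula. For $j_1<\cdots<j_k$ and $y_1<\cdots<y_k$,
\[
\det\big(\psi_{j_s}(y_l)\big) = \int_{x_1\in I_{j_1}}\!\!\cdots\int_{x_k\in I_{j_k}} \det\big(K(x_s,y_l)\big)\prod_s |q(x_s)|\,dx_1\cdots dx_k .
\]
Because the intervals are ordered, every point in the range of integration satisfies $x_1<\cdots<x_k$. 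The integrand is therefore $\ge 0$, and it is $>0$ on a set of positive measure in the strict case.

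\textbf{Step 3 (contradiction via a bordered determinant).} Suppose $S^-(Q)\ge m+1$, and pick $y_1<\cdots<y_{m+2}$ with $Q(y_l)$ nonzero and strictly alternating in sign. Form the $(m+2)\times(m+2)$ matrix whose first $m+1$ rows are $(\psi_j(y_l))_l$ and whose last row is $(Q(y_l))_l$. Its determinant vanishes, since the last row is a linear combination of the first $m+1$ rows. Expanding along the last row gives
\[
\sum_{l}(-1)^{l} Q(y_l)\,M_l = 0 ,
\]
where $M_l\ge0$ are $(m+1)$-minors of $\psi$ by Step 2. Since $(-1)^l Q(y_l)$ has constant sign, all terms share a sign, so every $M_l$ must vanish. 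In the $\STP$ case this is immediately a contradiction. For $S^+$ the same argument works with $y_l$ allowed to be zeros of $Q$ assigned signs adversarially: the terms with $Q(y_l)=0$ drop out, but the remaining terms still have common sign and positive minors. The equality case is handled with the same expansion on $m+1$ points. It pins the sign of $Q(y_1)$ to $\epsilon$, which shows that the sign patterns agree.

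\textbf{Main obstacle.} I expect the degenerate $\TP$ (non-strict) case to be the hard step, because there all $M_l$ may vanish. There I would first try to show that the vanishing of all $(m+1)$-minors of $\psi$ at the chosen points forces a rank deficiency. That would let me merge adjacent intervals and lower $m$, after which an induction on $m$ finishes. If this fails, the fallback is Karlin's standard device: approximate $K$ by $\STP$ kernels (e.g.\ convolving with a Gaussian kernel, which is $\STP_\infty$), apply the strict case, and pass to the limit. The limit preserves the bound for $S^-$ but not necessarily for $S^+$, which is exactly why the theorem only claims SVD under $\STP_r$.
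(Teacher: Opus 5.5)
The paper gives no proof of this statement and only cites Karlin, Ch.~5. Your Steps 1--3 follow Karlin's argument: aggregate $q$ into sign blocks, use the basic composition formula to get total positivity of $(j,y)\mapsto\psi_j(y)$, then expand a bordered determinant. For the strict (SVD) half this is essentially right, but there is one hole. In the $S^+$ argument you say the terms with $Q(y_l)=0$ drop out and the rest share a sign. Nothing yet excludes $Q(y_1)=\cdots=Q(y_{m+2})=0$. Such a string already has $S^+=m+1$, and the expansion then reduces to $0=0$. You need the extra remark that under $\STP$ the square matrix $\Psi=(\psi_j(y_l))_{0\le j\le m,\,1\le l\le m+1}$ is nonsingular. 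Hence $(Q(y_1),\dots,Q(y_{m+1}))=c^{\top}\Psi\neq 0$ for $c=(\epsilon(-1)^j)_j$. This is also where $q\not\equiv 0$ (a.e.) is needed, together with merging neighbouring blocks when $q$ vanishes a.e.\ on some $I_j$. SVD is false for $q\equiv0$, so that hypothesis has to be stated.

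The genuine gap is the non-strict $\TP_r$ half, which you leave open, and your fallback does not close it. Composing $K$ (or the finite matrix $(\psi_j(y_l))$) with Gaussian kernels has positive $k$-minors only if $K$ already has a $k$-minor that is not a.e.\ zero. So smoothing cannot remove a rank deficiency, and the degenerate case is precisely one: all maximal minors of the $(m+1)\times(m+2)$ matrix $(\psi_j(y_l))$ vanish. For instance, $K\equiv 1$ is $\TP_{\infty}$, yet every such smoothing still has rank one, so the strict case with $m\ge 1$ is never available. Passing to the limit also needs $Q_\sigma(y_l)\to Q(y_l)$ at the alternation points, which smoothing in $y$ does not give for discontinuous $Q$.

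Your first instinct (rank deficiency plus induction on $m$) is right, but ``merging adjacent intervals'' is not what rank deficiency gives you. A linear relation among the rows restricted to $m+2$ points need not make two neighbouring rows proportional. The missing step is to perturb the coefficient vector instead. Take $d\neq 0$ with $d^{\top}\Psi=0$, where $\Psi$ is now the $(m+1)\times(m+2)$ matrix, and choose $t$ so that $c'=c+td$ has a zero entry. Then $(Q(y_l))_l=c'^{\top}\Psi$ with $S^-(c')\le m-1$. Applying the statement for fewer sign changes to the $\TP$ matrix $\Psi$ (strong induction on $m$) gives $S^-\le m-1$, contradicting $m+1$ strict alternations. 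Alternatively, Whitney's theorem that $\STP$ matrices are dense in $\TP$ matrices gives a correct version of your approximation at the level of $\Psi$.

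The same device is needed for your sign-pattern argument in the $\TP$ case. It shows that the $(m+1)\times(m+1)$ matrix $(\psi_j(y_l))$ at $m+1$ alternation points is nonsingular when $S^-(Q)=m$. Without that, both sides of your expansion may vanish and the sign of $Q(y_1)$ is not pinned.

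The paper only invokes the $\STP$/SVD half, in Lemma~\ref{lm:unimodal}. So the fix in the first paragraph suffices for the paper's use, but as a proof of the full stated theorem the $\TP$ half is incomplete.
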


To relate this to quasiconvexity, we use the following lemma:

\begin{lemma}[\cite{karp2024unimodality}]\label{lm:unimodal-to-signchange}
A function $g: I \to \R$ is quasiconvex iff for every $\lambda \in \R$, the function $f_\lambda(x) := g(x) - \lambda$ has at most two strict sign changes, with a sign pattern $(+,-,+)$ when the bound is tight.
\end{lemma}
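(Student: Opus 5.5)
The plan is to prove both directions directly from Definition~\ref{def:quasiconv}, with $I$ an interval. Throughout, I read the sign pattern of $f_\lambda = g-\lambda$ as the left-to-right sequence of signs of $f_\lambda(x)$, with zeros discarded, exactly as in the definition of $S^-$.

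\emph{Quasiconvex $\Rightarrow$ sign pattern.} Let $x^*$ be the turning point, so $g$ is nonincreasing on $I\cap(-\infty,x^*]$ and nondecreasing on $I\cap[x^*,\infty)$. Fix $\lambda$. On the left piece the set $\{g\le\lambda\}$ is an up-set, i.e.\ an interval whose right end is $x^*$, or it is empty. Symmetrically, on the right piece it is an interval whose left end is $x^*$, or empty. The two pieces share the point $x^*$. Hence if both pieces are nonempty they meet at $x^*$, and their union is a single interval $J$. If only one is nonempty, $J$ is that piece, which is again an interval.

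So every point where $f_\lambda<0$ or $f_\lambda=0$ lies in $J$, and $f_\lambda>0$ on $I\setminus J$. Reading left to right, the signs are therefore $+$ (possibly absent), then $-$ or $0$ (possibly absent), then $+$ (possibly absent). This gives $S^-(f_\lambda)\le 2$, and when equality holds the pattern is $(+,-,+)$.

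\emph{Sign pattern $\Rightarrow$ quasiconvex.} The first step is a no-strict-interior-peak condition: for all $x<y<z$ in $I$,
\[
g(y)\le\max\{g(x),g(z)\}.
\]
If this failed, pick $\lambda$ strictly between $\max\{g(x),g(z)\}$ and $g(y)$. Then $f_\lambda$ takes signs $-,+,-$ at $x,y,z$. That is either at least two strict sign changes with the forbidden pattern, or more than two sign changes, contradicting the hypothesis.

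The next step is a propagation fact, which follows from the triple inequality: if $x<y$ and $g(x)<g(y)$, then $g(z)\ge g(y)$ for every $z>y$. Indeed, applying the triple inequality to $x<y<z$ gives $g(y)\le\max\{g(x),g(z)\}$, and since $g(y)>g(x)$ the maximum must be $g(z)$.

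Now set $x^*=\inf\{x\in I:\ \exists\, w>x \text{ with } g(w)>g(x)\}$, with $x^*=\sup I$ if this set is empty.
\begin{itemize}
\item Left of $x^*$: if $y<y'<x^*$ and $g(y)<g(y')$, then $y$ belongs to the set whose infimum is $x^*$, which contradicts $y<x^*$. So $g$ is nonincreasing there.
\item Right of $x^*$: take $x^*<y<y'$ and suppose $g(y)>g(y')$. Choose a point $x\in(x^*,y)$ together with a witness $w>x$ satisfying $g(w)>g(x)$.
\begin{itemize}
\item If $g(x)<g(y)$, propagation from the pair $x<y$ forces $g(y')\ge g(y)$, a contradiction.
\item Otherwise $g(x)\ge g(y)$. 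If $w\le y'$, I combine the triple inequality on $x<w<y$ or $x<y<w$ with propagation from the pair $x<w$. If $w>y'$, I use the triples $x<y<w$ and $y<y'<w$, together with a second choice of $\lambda$, to produce a forbidden $-,+,-$ or $+,-,+,-$ pattern.
\end{itemize}
\end{itemize}
Finally, the value at $x^*$ itself (when $x^*\in I$) is handled by taking one-sided limits of the same monotonicity statements, together with the triple inequality.

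The main obstacle I expect is the right-of-$x^*$ case. The witness $w$ for an increase may lie beyond $y'$, and the simple triples then do not immediately close. If the case analysis above becomes messy, I would switch to a cleaner route: show from the triple inequality that every strict sublevel set $\{g<\lambda\}$ is an interval $J_\lambda$, and that these intervals are nested. Then take $x^*\in\bigcap_{\lambda>\inf g}\overline{J_\lambda}$; this intersection is nonempty when $I$ is compact, and in our application $I=[0,1]$. Monotonicity on each side of $x^*$ would then follow from nestedness: a strict decrease to the right of $x^*$ would place a point of some $J_\lambda$ beyond a point outside $J_\lambda$ while $x^*$ lies on the other side, contradicting convexity of $J_\lambda$.
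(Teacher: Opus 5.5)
Your forward direction is correct. So are the triple inequality $g(y)\le\max\{g(x),g(z)\}$ and the propagation fact. In fact the triple inequality is equivalent to the sign-change hypothesis, since any violation of the hypothesis contains a $-,+,-$ triple. Everything therefore hinges on getting from the triple inequality to Definition~\ref{def:quasiconv}, and that is where the proposal breaks.

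In your main route, $x^*=\inf\{x:\exists\,w>x,\ g(w)>g(x)\}$ is not a turning point. For $g(x)=|x-1|$ on $[0,2]$, every $x\in(0,2)$ has the witness $w=2$, so $x^*=0$, yet $g$ decreases on $(0,1)$. Hence the ``right of $x^*$'' claim is false, not merely messy. The subcase you postponed ($g(x)\ge g(y)>g(y')$ with $w>y'$) cannot be closed by exhibiting a forbidden pattern. It is realized by this quasiconvex $g$ with $x=0.25$, $y=0.5$, $y'=0.8$, $w=2$.

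Your fallback uses the nested intervals $J_\lambda=\{g<\lambda\}$ and picks $x^*\in\bigcap_{\lambda>\inf g}\overline{J_\lambda}$. For compact $I$ this correctly gives monotonicity on the open pieces $I\cap(-\infty,x^*)$ and $I\cap(x^*,\infty)$. The last step, treating $g(x^*)$ ``by one-sided limits and the triple inequality'', cannot work. Definition~\ref{def:quasiconv} includes $x^*$ itself in both inequalities, and with that definition the reverse implication is false.

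Here is a counterexample. Take $I=[0,2]$, with $g(x)=1-x$ on $[0,1)$, $g(1)=\tfrac12$, and $g\equiv 2$ on $(1,2]$.
\begin{itemize}
\item Every triple $x<y<z$ satisfies the triple inequality. If $z>1$ then $g(z)=\max g$; if $z\le 1$ then $g(y)<g(x)$. So every $g-\lambda$ has the required sign pattern.
\item No $x^*$ works. Any $x^*\ge 1$ fails on the pair $0.9<1$, because $g(0.9)=0.1<\tfrac12=g(1)$. Any $x^*<1$ fails because $g$ strictly decreases on $I\cap[x^*,1)$.
\end{itemize}
Your fallback returns $x^*=1$ here. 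The obstruction is that $\inf g=0$ is not attained.

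The missing ingredient is attainment of the minimum. This is automatic for a finite sequence such as the paper's $(d_i)$, or for lower semicontinuous $g$ on compact $I$. If $x^*$ is a minimizer, the reverse direction takes two lines. Suppose $x<y\le x^*$ and $g(x)<g(y)$. Either $y=x^*$, which contradicts minimality, or the triple $x<y<x^*$ gives $g(y)\le\max\{g(x),g(x^*)\}=g(x)$, a contradiction. The right side is symmetric.

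The paper gives no proof of this lemma, only a citation. Its application needs just the forward direction for $q$ and the discrete version for $(d_i)$, and both survive this repair.
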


We also use the following result on generalized Vandermonde matrices:

\begin{theorem}[\cite{yang2001generalization}]\label{thm:vandermonde}
Let $x_1 < \ldots < x_n$ be positive real numbers and $a_1 > \ldots > a_n$ be integers. Then,
\[
    \det \left(x_i^{a_j}\right)_{i,j=1}^n > 0.
\]
\end{theorem}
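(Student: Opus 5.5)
The plan has three parts: show the matrix $(x_i^{a_j})_{i,j=1}^n$ is never singular on the configuration space, deduce its determinant has constant sign there, and read off that sign in a degenerate limit. I will carry out the final step in the orientation the paper needs for total positivity, and I flag at the outset that the orientation matters for the sign.

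For the first part, I fix the exponents and suppose $\sum_j c_j x_i^{a_j}=0$ for all $i$ with $c\neq 0$. Then the generalized polynomial $P(x)=\sum_{j=1}^n c_j x^{a_j}$ has $n$ distinct positive roots $x_1<\dots<x_n$. The exponents are distinct (they need not be positive, since $x>0$). By the generalized Descartes rule of signs, proved by induction on $n$, a nonzero such $P$ has at most $n-1$ positive roots. The induction divides by $x^{a_n}$, differentiates to remove one term, and uses Rolle's theorem so that each differentiation loses at most one root. Hence $c=0$ and the determinant is nonzero.

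For the second part, the domain $\{0<x_1<\dots<x_n\}$ is convex, hence connected, and the determinant is continuous in $x$. So its sign is constant on the domain. To evaluate it, take $x_i=t^{i}$ with $t\to\infty$. In the Leibniz expansion, the term for $\sigma$ is $\pm t^{\sum_i i\,a_{\sigma(i)}}$. By the rearrangement inequality there is a unique permutation maximizing the exponent: it pairs the largest $x_i$ with the largest $a_j$, which here is the reversal $\sigma(i)=n+1-i$. That term dominates all others, so the determinant has the sign of the reversal, namely $(-1)^{\lfloor n/2\rfloor}$.

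This is where the main obstacle lies, and it is a sign-convention issue rather than a technical one. As literally written, with $x$ increasing down the rows and $a_j$ decreasing across the columns, the case $n=2$ gives
\[
x_1^{a_2}x_2^{a_2}\bigl(x_1^{a_1-a_2}-x_2^{a_1-a_2}\bigr)<0 .
\]
So strict positivity holds exactly when the column order makes the dominant pairing lie on the main diagonal. This happens if the exponents are listed in increasing order, or, equivalently, if the $x_i$ are listed in decreasing order. In that orientation the identity permutation dominates and the determinant is $>0$. I would therefore prove the statement in the consistently ordered form: the determinant is nonzero for all admissible data and strictly positive once rows and columns are co-ordered. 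That form is what the paper needs for the total positivity of $(a_{i_s}(1-x_l))_{s,l}$. Any reversal of the index order contributes only the global factor $(-1)^{n(n-1)/2}$, which the reader can track through the subsequent application to the Bernstein minors.
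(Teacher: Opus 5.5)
Your argument is correct, and so is your sign objection. As printed (rows $x_1<\dots<x_n$, columns $a_1>\dots>a_n$), the determinant has sign $(-1)^{n(n-1)/2}$, which equals your $(-1)^{\lfloor n/2\rfloor}$. For instance, $n=2$, $x=(1,2)$, $a=(1,0)$ gives $-1$. The hypothesis should read $a_1<\dots<a_n$.

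The misprint does not affect how the paper uses the result. In the proof of Lemma~\ref{lm:unimodal}, the matrix after factoring is $(y_s^{i_r-1})$, and both $y_s$ and the exponents $i_r-1$ increase. That is exactly your co-ordered form. One caveat: that proof needs $0<x_1<\dots<x_k<1$ so that $y_s\in(0,\infty)$. The closed endpoints $x=0,1$ it allows are not covered by the theorem.

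On the route, the paper gives no proof and simply cites \cite{yang2001generalization}. Your three steps form a genuine, self-contained replacement:
\begin{itemize}
\item the Rolle-based bound of $n-1$ positive roots for an $n$-term generalized polynomial gives nonsingularity;
\item connectedness of the convex open cone $\{0<x_1<\dots<x_n\}$ makes the sign constant;
\item the unique rearrangement-maximizing term along $x_i=t^i$, $t\to\infty$, identifies that sign.
\end{itemize}
This gains two things over the citation. It holds verbatim for arbitrary distinct real exponents, not just integers. And it derives the orientation-dependent sign rather than asserting it, which is exactly what exposes the misprint.

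An integer-specific alternative is to shift the exponents to be nonnegative and use the bialternant identity $\det(x_i^{a_j})=s_\lambda(x)\prod_{i<j}(x_i-x_j)$ with $\lambda_j=a_j-(n-j)$. The Schur polynomial is positive on positive arguments, so this gives the same sign $(-1)^{n(n-1)/2}$ more directly, but only for integer exponents.
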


We defer the full proof of Lemma~\ref{lm:unimodal} and Lemma~\ref{lm:q-quasiconvex} to Appendix~\ref{apd:lm:unimodal}.
Intuitively, the quasiconvexity of $q(x)$ in Lemma~\ref{lm:q-quasiconvex} translates to the quasiconvexity of $d_i$'s, thanks to the total positivity of our kernel function $a_i(x)$.
This is illustrated in Figure~\ref{fig:quasi-conv-q}.

\begin{figure}[t]
    \centering
    \subfigure[$q(x)$ under $\HM$ policy]{
    \includegraphics[scale=0.52]{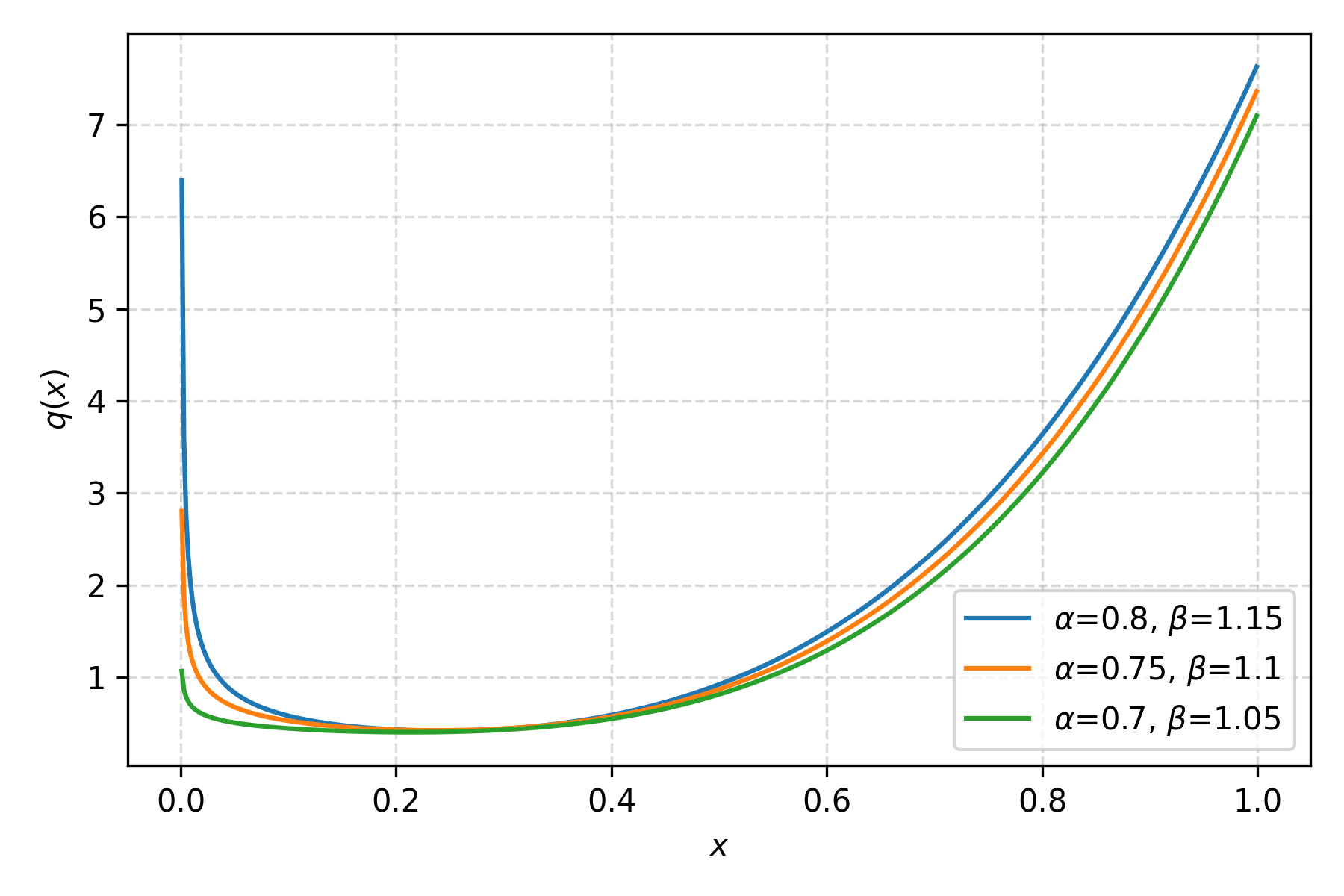}
    }  
    \subfigure[$q(x)$ under $\UNI$ policy]{
    \includegraphics[scale=0.52]{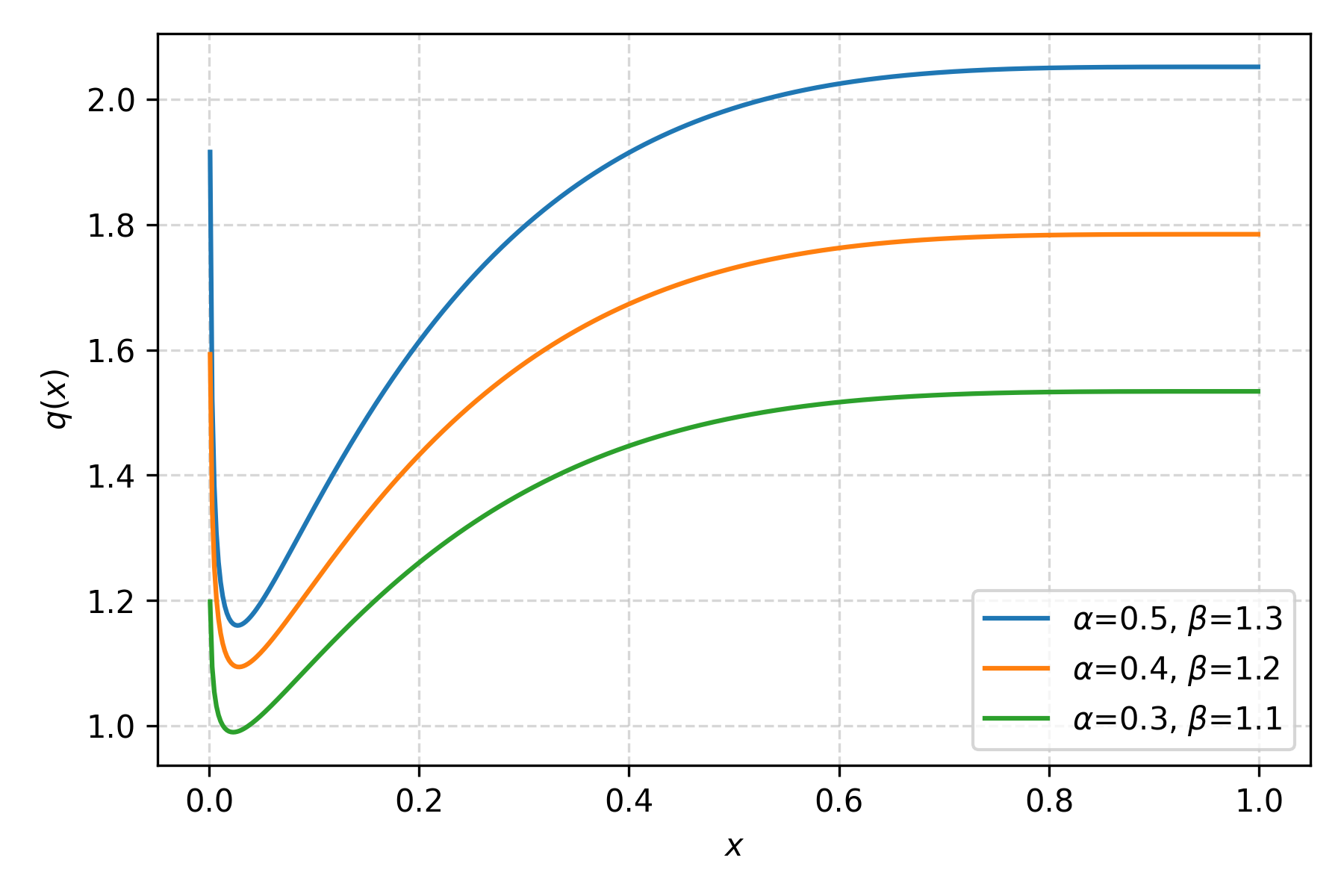}
    }
    \caption{Behavior of $q(x)$ for $n = 5$ (see Appendix~\ref{apd:thm:opt-general} for definition). The quasiconvexity of $q(x)$ transfers to $d_i$ due to variation diminishing properties of the kernel defined by $a_i(x)$.}
    \label{fig:quasi-conv-q}
\end{figure} 

\subsubsection{Step 2: Applying KKT Conditions}

With Lemma~\ref{lm:unimodal} in hand, we analyze the KKT conditions for the optimization problem $\Opt$. These involve the derivatives $d_i$, the probabilities $p_i$, and associated dual variables. Using the structure guaranteed by Lemma~\ref{lm:unimodal}, we show via a case analysis that any deviation from the structure in Theorem~\ref{thm:opt-general} would violate the KKT conditions. In particular, the transition in probability weights must occur only at indices $1$ and $n$.

The full proof is provided in Appendix~\ref{apd:thm:opt-general}.

\subsection{Implication on High-dimensional Nonconvex Optimization Problem}
Recall that Lemma~\ref{lm:unimodal} holds for any objective function whose partial derivatives are of the form~\eqref{eq:part-deriv-general}, and its sequence is quasiconvex.
This immediately implies an interesting corollary on a high-dimensional nonconvex optimization.
Specifically, consider a monotone increasing nonnegative function $f:\R^n \to \R_{\ge 0}$ that may possibly be nonconvex, a constant $C > 0$, and the following constrained optimization problem:
\begin{align*}
    \begin{aligned}
        (\text{\Gen})\quad& \underset{\bx \in \R^n}{\text{maximize}}
        & & f(\bx)\\
        & \text{subject to}
        & & x_1\ge x_2 \ge \ldots \ge x_n \ge 0
        \\
        & & & \sum_{i=1}^n x_i = C
        \\
        & & & 
    \end{aligned}
\end{align*}
Then, by following the proof steps of Theorem~\ref{thm:opt-general}, one can obtain the following general result.
\begin{customcor}{\ref{thm:opt-generic}}
    Consider the constrained optimization problem \Gen.
    Suppose that the gradient of $f$ with respect to each $x_i$ can be written as the following:
    \begin{align*}
        \frac{\partial f(\bp)}{\partial x_i} = \int_D w(y,\bp)g_i(y) dy,
    \end{align*}
    for a continuous bounded interval $D$, a function $g_i:D \to \R_{\ge 0}$ for $i \in [n]$, and another function $w(y,\bp): D \times \R^n \to \R$.
    Assume further that $g_i(y)$ is $\STP_{n}$,\footnote{We refer to Definition~\ref{def:tp} for more details.}, \ie strongly totally positive of order $n$, and $w(y,\bp)$ is quasiconvex over $y \in D$, \ie it decreases then increases, given any $\bp$.
    Then, it follows that the optimum of $f$ satisfies $x_1 \ge x_2 = \ldots = x_n$.
\end{customcor}
We omit the proof as it follows from the exactly same proof structure of Theorem~\ref{thm:opt-general}, equipped with required lemmas.

The implication of this structural characterization is significant, since once we know that $x_1 \ge x_2 = \ldots = x_n = x$, due to the monotonicity of $f$, it is immediate that $x_1 = C - (n-1)x$.
Thus, the optimization can be reduced to the following.
\begin{align*}
    \begin{aligned}
        (\text{\Gen'})\quad& \underset{x \in \R}{\text{maximize}}
        & & f(C-(n-1)x, x, x, \ldots, x)\\
        & \text{subject to}
        & & 0 \le x \le C/(n-1)
        \\
        & & & 
    \end{aligned}
\end{align*}
Notably, this is a single-dimensional optimization problem over $x$ on a bounded interval, which typically admits efficient algorithms when $f$ exhibits desirable properties such as smoothness or Lipschitzness.

\section{Computing Optimal Policy}\label{sec:computation}
Now we know that any optimal policy satisfies $p_1\ge p_2=\cdots=p_{n-1}\ge p_n=0$, we have $p_1=p_2=\cdots=p_{n-1}=\frac{1-p_1}{n-2}$ and $p_n=0$, so the computation of the optimal policy reduces to find $p_1\in\Big[\tfrac{1}{n-1},1\Big]$.
Let us define $\cD = [1/(n-1),1]$, and write $h(x,p_1)$ to denote the corresponding $h(x,\bp)$ given the structural relation.
Recall that $a_i(x)=\binom{n-1}{i-1}x^{n-i}(1-x)^{i-1}$, $h(x,p_1)=\sum_{i=1}^{n-1}a_i(x)p_i$, and then the objective function becomes
\begin{align*}
    G(p_1)= \alpha n\int_0^1 h(x,p_1)^{1+1/\beta}dx+(1-\alpha)\int_0^1 h(x,p_1)^{1/\beta}\,dx,
\end{align*}
where $h(x,p_1)=c_0(x)+c_1(x)p_1$ with
\[
c_0(x)=\frac{1-a_1(x)-a_n(x)}{n-2},\qquad
c_1(x)=a_1(x)-\frac{1-a_1(x)-a_n(x)}{n-2}.
\]

We will show that the resulting optimization over a single variable $p_1$ can be efficiently solved using branch-and-bound algorithms, despite its nonconvexity.
To this end, we first identify a lower bound and upper bound for our objective functions given an interval.

For an interval $I=[\ell,u]\subseteq\mathcal{D}$ define the lower and upper bounds on the
\emph{objective value} over $I$ by
\begin{align*}
L(I)&:=\max\{G(\ell),\,G(u)\},\\
U(I)&:=\alpha n \int_0^1\max\{h(x,\ell)^{1+1/\beta},h(x,u)^{1+1/\beta}\}\,dx
\\&\qquad\qquad+(1-\alpha)\int_0^1\max\{h(x,\ell)^{1/\beta},h(x,u)^{1/\beta}\}\,dx.
\end{align*}
We then first show that these constitute upper and lower bounds on the objective function.
\begin{lemma}\label{lem:computation-validity}
For every $I=[\ell,u]\subseteq\mathcal{D}$,
\[
L(I)\ \le\ \sup_{p_1\in I}G(p_1)\ \le\ U(I).
\]
\end{lemma}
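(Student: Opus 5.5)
The statement has two inequalities, and I will prove them separately.

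\textbf{Lower bound.} Both endpoints $\ell$ and $u$ lie in $I$. Hence $\sup_{p_1\in I}G(p_1)\ge G(\ell)$ and $\sup_{p_1\in I}G(p_1)\ge G(u)$. Taking the maximum of the two gives $L(I)\le \sup_I G$ with no further work.

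\textbf{Upper bound: reduction to a pointwise inequality.} The integrals are over a fixed domain with nonnegative weights ($\alpha n\ge 0$ and $1-\alpha\ge 0$ for $\alpha\in[0,1]$). So it suffices to show, for every fixed $x\in[0,1]$, every $p_1\in I$, and each exponent $r\in\{1+1/\beta,\,1/\beta\}$,
\[
h(x,p_1)^r\le \max\{h(x,\ell)^r,h(x,u)^r\}.
\]
Integrating this pointwise bound against the nonnegative weights gives $G(p_1)\le U(I)$ for every $p_1\in I$. Taking the supremum over $I$ then finishes the proof.

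\textbf{Proof of the pointwise inequality.} Write $h(x,p_1)=c_0(x)+c_1(x)p_1$, which is affine in $p_1$ for fixed $x$. Because $p_1\in[\ell,u]$, we can write $p_1=\lambda\ell+(1-\lambda)u$ with $\lambda\in[0,1]$. Then
\[
h(x,p_1)=\lambda h(x,\ell)+(1-\lambda)h(x,u),
\]
so $h(x,p_1)$ lies between $h(x,\ell)$ and $h(x,u)$. In particular $h(x,p_1)\le\max\{h(x,\ell),h(x,u)\}$. The map $t\mapsto t^r$ is nondecreasing on $[0,\infty)$ for $r>0$, and both exponents $1/\beta$ and $1+1/\beta$ are positive. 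Hence $h(x,p_1)^r\le \max\{h(x,\ell),h(x,u)\}^r=\max\{h(x,\ell)^r,h(x,u)^r\}$.

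\textbf{Nonnegativity of $h$.} The monotonicity step needs $h(x,\ell),h(x,u)\ge 0$, so that the powers are well defined and monotone. This holds because at any $p_1\in\mathcal D=[1/(n-1),1]$ the induced vector $(p_1,\frac{1-p_1}{n-2},\dots,\frac{1-p_1}{n-2},0)$ is a valid nonnegative policy. Since the Bernstein basis $a_i(x)$ is nonnegative, $h(x,p_1)\ge0$ throughout $\mathcal D$.

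\textbf{Technicalities.} When $1/\beta<1$, the map $t\mapsto t^{1/\beta}$ is concave, so convexity arguments would fail. Monotonicity is all that is used, however, so the argument is unaffected. Integrability is also not an issue: everything is bounded on $[0,1]$, since $h\le p_1\le1$.
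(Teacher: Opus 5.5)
Your proof is correct and follows the paper's route. The lower bound comes from evaluating $G$ at the endpoints, and the upper bound from two facts: $p_1\mapsto h(x,p_1)$ is affine and $t\mapsto t^r$ is monotone for $r>0$, integrated against the nonnegative weights $\alpha n$ and $1-\alpha$. Your explicit check that $h(x,p_1)\ge 0$ on $\mathcal{D}$ is a detail the paper leaves implicit.
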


The above bounds will be crucial in our algorithm.
Let us write $|I| := u - \ell$ to denote the length of the interval.
The following lemma derives a Lipschitz-style of bounds on the difference $U(I) - L(I)$.
\begin{lemma}\label{lem:gap}
For every interval $I=[\ell,u]\subseteq\mathcal D$,
\[
U(I)-L(I)\ \le\ C_1\,|I|\ +\ C_2\,|I|^{1/\beta},
\]
where
\[
C_1=\alpha n\Big(1+\tfrac{1}{\beta}\Big)\!\int_0^1 |c_1(x)|\,dx,
\qquad
C_2=(1-\alpha)\!\int_0^1 |c_1(x)|^{1/\beta}\,dx.
\]
\end{lemma}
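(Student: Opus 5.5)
We bound the gap pointwise in $x$ and then integrate. Since $G(\ell)$ and $G(u)$ are both at most $L(I)$, we have
\[
U(I)-L(I)\le U(I)-G(\ell)
\]
and likewise $U(I)-L(I)\le U(I)-G(u)$. We use the first inequality. For each fixed $x\in[0,1]$ and exponent $r>0$,
\[
\max\{h(x,\ell)^r,h(x,u)^r\}-h(x,\ell)^r\le |h(x,u)^r-h(x,\ell)^r|.
\]
Hence
\[
U(I)-L(I)\le \alpha n\int_0^1\bigl|h(x,u)^{1+1/\beta}-h(x,\ell)^{1+1/\beta}\bigr|dx+(1-\alpha)\int_0^1\bigl|h(x,u)^{1/\beta}-h(x,\ell)^{1/\beta}\bigr|dx.
\]
By the affine structure $h(x,p_1)=c_0(x)+c_1(x)p_1$, the two arguments differ by exactly $|h(x,u)-h(x,\ell)|=|c_1(x)|\,|I|$.

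The key auxiliary fact is that $h(x,p_1)\in[0,1]$ for every $p_1\in\mathcal D$ and $x\in[0,1]$. This holds because $h$ is a convex combination of $p_1,\ldots,p_{n-1},0$ with Bernstein weights summing to one. It is nonnegative since the induced policy has nonnegative entries for $p_1\ge 1/(n-1)$, so $(1-p_1)/(n-2)\ge0$, and it is at most $\max_i p_i\le1$.

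For the first term, the map $t\mapsto t^{1+1/\beta}$ on $[0,1]$ has derivative $(1+1/\beta)t^{1/\beta}\le 1+1/\beta$. By the mean value theorem the pointwise difference is at most $(1+1/\beta)|c_1(x)||I|$. Integrating gives $C_1|I|$.

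For the second term, the cases depend on $\beta$:
\begin{itemize}
\item If $\beta\ge1$, the map $t\mapsto t^{1/\beta}$ is $(1/\beta)$-Hölder with constant $1$ on $[0,\infty)$, by subadditivity of concave functions vanishing at $0$: $|a^{s}-b^{s}|\le|a-b|^{s}$ for $s\in(0,1]$. The pointwise difference is therefore at most $|c_1(x)|^{1/\beta}|I|^{1/\beta}$, and integrating gives $C_2|I|^{1/\beta}$.
\item If $\beta<1$, then $1/\beta>1$ and the bound $|a^s-b^s|\le|a-b|^s$ fails as stated. Instead, the mean value theorem gives a bound $(1/\beta)|c_1(x)||I|$.
\end{itemize}

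The main obstacle is reconciling the case $\beta<1$ with the claimed form $C_2|I|^{1/\beta}$. For $|I|\le1$ we have $|I|\ge|I|^{1/\beta}$, so the linear bound goes the wrong way. I would handle this either by absorbing the term into the linear part, enlarging the constant to $C_1+(1-\alpha)\beta^{-1}\int|c_1|$, or by reading the lemma as the Hölder estimate valid for $\beta\ge1$ together with the Lipschitz estimate for $\beta<1$. I would first check whether the intended statement uses $\min\{1,1/\beta\}$ as the exponent. Either way, the pointwise mean-value and Hölder bounds plus integration constitute the whole proof. One technical point is that $c_1$ is a bounded polynomial, so all integrals are finite.
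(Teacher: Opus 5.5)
For $\beta\ge 1$ your argument is correct and is essentially the paper's proof. Both bound $U(I)-L(I)$ by the gap to one endpoint, use $\max\{A,B\}-A\le|B-A|$ pointwise in $x$, and use the affine increment $h(x,u)-h(x,\ell)=c_1(x)|I|$. Both then apply the Lipschitz estimate for the exponent $1+1/\beta$ and the H\"older estimate for the exponent $1/\beta$, which is the paper's Lemma~\ref{lem:power}. You also justify $h\in[0,1]$, which the paper only asserts.

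Your hesitation about $\beta<1$ is justified, and it exposes a defect in the statement rather than in your proposal. The paper's proof applies the case $r\in(0,1]$ of Lemma~\ref{lem:power} with $r=1/\beta$ without checking that $\beta\ge1$. For $\beta<1$ the claimed inequality is in fact false:
\begin{itemize}
\item By Lemma~\ref{lorentz2012bernstein}, $\int_0^1 c_1(x)\,dx=\frac1n-\frac{1-2/n}{n-2}=0$, and $c_1\not\equiv 0$, so $c_1$ takes both signs.
\item When $\beta<1$, $t\mapsto t^{1/\beta}$ is convex, and $h(x,\cdot)\ge x^{n-1}p_1>0$ for $x>0$. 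Hence both $U(I)-G(\ell)$ and $U(I)-G(u)$ are of order $|I|$ as $|I|\to 0$.
\item For $\alpha=0$ the right-hand side is $C_2|I|^{1/\beta}=o(|I|)$, so the inequality fails for short intervals.
\end{itemize}
Concretely, take $n=3$, $\alpha=0$, $\beta=1/2$. Then $c_1(x)=3x^2-2x$, $C_1=0$ and $C_2=2/15$. For $I=[0.9,1]$ a direct computation gives $U(I)-L(I)=44/10125\approx 0.0043$, while $C_2|I|^{2}=1/750\approx 0.0013$.

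Your repair via the mean value theorem is the right correction. It gives
$U(I)-L(I)\le\bigl(C_1+(1-\alpha)\beta^{-1}\int_0^1|c_1(x)|\,dx\bigr)|I|$ for $\beta<1$, which amounts to using the exponent $\min\{1,1/\beta\}$. It costs nothing downstream. For $\beta\le 1$, Theorem~\ref{thm:opt-single} already makes $\HM$ optimal, so Lemma~\ref{lem:gap} and Theorem~\ref{thm:rate} are needed only for $\beta>1$, where your proof and the paper's are valid.
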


\begin{remark}
    Our algorithm and analysis will depend on parameters $C_1$ and $C_2$.
    In fact, instead of computing the exact values, one can replace them with rough upper bounds.
    This would only incur a constant factor blowup in the running time.
    Formally, using  $|c_1|\le a_1+\frac{1-a_1-a_n}{n-2}$, we obtain
\[
\int_0^1 |c_1(x)|\,dx\ \le\ \frac{1}{n}+\frac{1-2/n}{\,n-2\,}=\frac{2}{n}.
\]
Note that by $(x+y)^{\rho}\le x^{\rho}+y^{\rho}$ when $\rho\in(0,1]$ and $a_1^{1/\beta}$ integrable,
\[
\int_0^1 |c_1(x)|^{1/\beta}\,dx\ \le\ \frac{\beta}{\beta+n-1}+\frac{1}{(n-2)^{1/\beta}}.
\]
Thus one may take
\[
C_1' =  2\alpha\Big(1+\tfrac{1}{\beta}\Big),\qquad
C_2' = (1-\alpha)\!\left(\frac{\beta}{\beta+n-1}+\frac{1}{(n-2)^{1/\beta}}\right),
\]
and follow the analogous arguments.
\end{remark}

Our main algorithm is presented in Algorithm~\ref{alg-bnb}.
To analyze its correctness and time complexity, we first verify that the optimum is always bounded.
\begin{lemma}\label{prop:invariants}
At every iteration: (i) $L^\star\le \mathrm{OPT}:=\max_{p_1\in\mathcal D}G(p_1)$; (ii) $\max_{I\in\mathcal A}U(I)\ge\mathrm{OPT}$.
\end{lemma}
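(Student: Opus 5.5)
We will prove the two invariants by induction over the iterations of Algorithm~\ref{alg-bnb}. We assume the algorithm has the standard branch-and-bound form. It keeps an incumbent value $L^\star$, which is always the value $G(\cdot)$ at some evaluated point of $\mathcal D$, typically the endpoint of an interval. It also keeps an active family $\mathcal A$ of subintervals of $\mathcal D$. At each iteration it selects an interval, updates $L^\star \leftarrow \max\{L^\star, L(I)\}$, splits $I$ into two halves, and prunes any interval $I'$ with $U(I') \le L^\star$ (or $U(I') \le L^\star + \eps$, depending on the exact pruning rule). Initially $\mathcal A=\{\mathcal D\}$.

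\textbf{Invariant (i).} Every value ever assigned to $L^\star$ is either its initialization or some $L(I)=\max\{G(\ell),G(u)\}$ with $I=[\ell,u]\subseteq\mathcal D$. If the initialization is $-\infty$ or $G$ at a point of $\mathcal D$, it is trivially at most $\mathrm{OPT}$. Each later value $G(\ell)$ or $G(u)$ is an objective value at a feasible point of $\mathcal D$, so it is at most $\max_{\mathcal D}G=\mathrm{OPT}$. This maximum exists because $G$ is continuous on the compact set $\mathcal D$. Continuity holds since $h(x,p_1)$ is affine in $p_1$ and nonnegative, and $t\mapsto t^{r}$ is continuous on $[0,\infty)$ for $r>0$. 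Hence (i) holds at every iteration. The same reasoning gives the inequality $L(I)\le\sup_I G$ from Lemma~\ref{lem:computation-validity}.

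\textbf{Invariant (ii).} We first fix a maximizer $p_1^\star\in\mathcal D$ of $G$. We then maintain the stronger claim that either $p_1^\star$ lies in some $I\in\mathcal A$, or $L^\star\ge \mathrm{OPT}$ minus the pruning tolerance. In the exact version, the second alternative reads $L^\star = \mathrm{OPT}$.

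\emph{Base case.} Initially $p_1^\star\in\mathcal D\in\mathcal A$.

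\emph{Splitting.} Replacing $I$ by its two halves preserves membership, because the halves cover $I$.

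\emph{Pruning.} This is the only dangerous step. Suppose the interval $I$ containing $p_1^\star$ is discarded because $U(I)\le L^\star$. Lemma~\ref{lem:computation-validity} gives $\mathrm{OPT}=G(p_1^\star)\le\sup_I G\le U(I)\le L^\star$. Combined with (i), this forces $L^\star=\mathrm{OPT}$.

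\emph{Conclusion.} In the first case, Lemma~\ref{lem:computation-validity} gives $\max_{I\in\mathcal A}U(I)\ge U(I)\ge G(p_1^\star)=\mathrm{OPT}$. In the second case, the statement of (ii) should be read with the convention the algorithm uses once the optimum has been certified by the incumbent, or with $\max\{L^\star,\max_{\mathcal A}U\}$.

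The main obstacle is aligning (ii) with the exact pruning rule of Algorithm~\ref{alg-bnb}. If the algorithm prunes with a tolerance, for example $U(I')\le L^\star+\eps$ or a multiplicative $(1+\eps)$ factor, the pruned-optimum case gives $L^\star\ge\mathrm{OPT}-\eps$ rather than $\max_{\mathcal A}U\ge\mathrm{OPT}$. In that case I would state (ii) for the intervals that have not been pruned by strict dominance. Alternatively, I would note that pruning only removes $I'$ with $U(I')<L^\star$, strictly. Then the interval containing $p_1^\star$ can never be pruned, because $U(I')\ge\mathrm{OPT}\ge L^\star$ by Lemma~\ref{lem:computation-validity} and (i). This last form is the cleanest, and I would try it first: it makes the induction a direct chaining of Lemma~\ref{lem:computation-validity} with (i).
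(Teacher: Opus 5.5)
\textbf{Part (i)} is correct and is the paper's argument: every value of $L^\star$ is $G$ evaluated at a feasible point of $\mathcal D$, so $L^\star\le\mathrm{OPT}$.

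\textbf{Part (ii)} has a gap. You do not prove it for Algorithm~\ref{alg-bnb}. You analyze a generic branch-and-bound that may prune any interval with $U(I')\le L^\star$ or $U(I')\le L^\star+\varepsilon$. The tolerance rule is the one the algorithm actually uses. Under it, your invariant only gives $L^\star\ge\mathrm{OPT}-\varepsilon$ once the maximizer's interval has been discarded, and that is not $\max_{I\in\mathcal A}U(I)\ge\mathrm{OPT}$. Your fallbacks do not repair this. Reading (ii) as $\max\{L^\star,\max_{\mathcal A}U\}\ge\mathrm{OPT}$ changes the statement. Assuming strict pruning $U(I')<L^\star$ changes the algorithm.

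\textbf{The missing observation} comes from the structure of the loop, not from any bound. The prune test is applied only to the interval $I$ with the largest $U(I)$. It runs immediately after the loop guard has certified $\max_{I'\in\mathcal A}U(I')>L^\star+\varepsilon$, and $L^\star$ is not modified in between. Hence $U(I)=\max_{\mathcal A}U>L^\star+\varepsilon$, so the prune branch is never executed and every iteration bisects.

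With this, your induction closes without any tolerance loss:
\begin{itemize}
\item Each bisection replaces $[\ell,u]$ by $[\ell,m]$ and $[m,u]$, so the intervals in $\mathcal A$ always cover $\mathcal D$.
\item A fixed maximizer $p_1^\star$ therefore lies in some active $I$ at every iteration.
\item Lemma~\ref{lem:computation-validity} then gives $\max_{\mathcal A}U\ge U(I)\ge G(p_1^\star)=\mathrm{OPT}$.
\end{itemize}
The paper's one-line proof, which follows a descendant of $I_0$ that contains a maximizer and says it ``remains in $\mathcal A$ until pruned or split,'' relies on exactly this unreachability of the prune branch.
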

\begin{algorithm}[h]\label{alg-bnb}
\caption{Active-Set Branch-and-Bound}
\DontPrintSemicolon
\KwIn{Tolerance $\varepsilon > 0$}
\KwOut{$\eps$-optimal policy $\bp^\star$}
Initialize the active set $\mathcal A \leftarrow \{I_0\}$ with $I_0 = [1/(n-1),1]$\;
For each $I \in \mathcal A$, compute $L(I), U(I)$\;
Set $L^\star \leftarrow \max_{I \in \mathcal A} L(I)$ and record endpoint $p^\star$ attaining $L^\star$\;
\While{$\max_{I \in \mathcal A} U(I) > L^\star + \varepsilon$}{
    Pick $I = [\ell, u] \in \mathcal A$ with the largest $U(I)$\;
    \eIf{$U(I) \le L^\star + \varepsilon$}{
        Remove $I$ from $\mathcal A$ (prune)\;
    }{
        Bisect $I$ at midpoint $m = (\ell + u)/2$ into $I_1 = [\ell, m]$ and $I_2 = [m, u]$\;
        Compute $L(\cdot), U(\cdot)$ for $I_1, I_2$ and replace $I$ by $I_1, I_2$ in $\mathcal A$\;
        Update $L^\star$ and $p^\star$ if any child’s $L(\cdot)$ improves the incumbent\;
    }
}
\Return{$\bp^\star = (p^\star, (1-p^\star)/(n-1),\ldots, (1-p^\star)/(n-1), 0)$ }\;
\end{algorithm}
For $\eps$-additive error, we ultimately require the following number of bisecting intervals.
\begin{definition}\label{def:delta}
Given $\varepsilon>0$, define
\[
\delta(\varepsilon):=\min\Big\{\frac{\varepsilon}{C_1}\,,\ \Big(\frac{\varepsilon}{C_2}\Big)^{\!\beta}\Big\}.
\]
\end{definition}
Finally, we can show that the algorithm terminates in logarithmic over $1/\eps$.
\begin{theorem}\label{thm:rate}
Consider a value oracle that exactly computes our objective function given a policy. 
Define
\[
C_1=\alpha n\Big(1+\tfrac{1}{\beta}\Big)\!\int_0^1 |c_1(x)|\,dx,
\qquad
C_2=(1-\alpha)\!\int_0^1 |c_1(x)|^{1/\beta}\,dx,
\qquad 
D = 1-\frac{1}{n-1}.
\]
The algorithm halts and returns $p^\star$ with
$G(p^\star)\ge \mathrm{OPT}-\varepsilon$. Moreover, if an interval has undergone $k$ bisections then its
length is at most $D\,2^{-k}$, and it suffices that
\[
k\ \ge\ \max\!\left\{\log_2\!\frac{C_1D}{\varepsilon}\,,\ \beta\log_2\!\frac{C_2D^{1/\beta}}{\varepsilon}\right\}
\]
to guarantee that every active interval has length $\le\delta(\varepsilon)$.
Further, the total number of nodes explored is $O\!\Big(\frac{C_1}{\varepsilon}\Big)\ +\ O\!\Big(\frac{C_2}{\varepsilon}\Big)^{\!\beta}$, \ie the overall running time is polynomial in $1/\eps$.
\end{theorem}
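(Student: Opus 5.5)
The plan is the standard branch-and-bound analysis, built on the two invariants of Lemma~\ref{prop:invariants} and the gap bound of Lemma~\ref{lem:gap}.

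\textbf{Correctness.} Suppose the loop halts. Then $\max_{I\in\mathcal A}U(I)\le L^\star+\varepsilon$. By invariant (ii), $\mathrm{OPT}\le \max_{I\in\mathcal A}U(I)$. The incumbent $p^\star$ is an endpoint with $G(p^\star)=L^\star$. Combining,
\[
G(p^\star)=L^\star\ge \mathrm{OPT}-\varepsilon .
\]
This already gives the approximation guarantee. Validity of $L$ and $U$ comes from Lemma~\ref{lem:computation-validity}.

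\textbf{Interval lengths.} Bisection halves the length, so an interval obtained after $k$ bisections of $I_0$ has length $D2^{-k}$. If $|I|\le\delta(\varepsilon)$, Lemma~\ref{lem:gap} gives
\[
U(I)-L(I)\le C_1\delta+C_2\delta^{1/\beta}\le 2\varepsilon .
\]
To get exactly $\varepsilon$, I would redefine $\delta$ with $\varepsilon/2$, absorbing the factor $2$ into the constants. Since $L(I)\le L^\star$, such an interval satisfies $U(I)\le L^\star+\varepsilon$. It is therefore never bisected: either it is pruned or the loop terminates before it is selected. The stated $k$ is obtained by solving $D2^{-k}\le \varepsilon/C_1$ and $D2^{-k}\le (\varepsilon/C_2)^\beta$ for $k$, which gives the two logarithms.

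\textbf{Node count and termination.} Only intervals of length greater than $\delta$ are ever bisected. The bisected intervals at a fixed depth are disjoint and lie inside $I_0$. Each has length $D2^{-j}>\delta$, which forces $j\le k^*:=\lceil\log_2(D/\delta)\rceil$. The number of bisected nodes is therefore at most
\[
\sum_{j\le k^*}2^j\le 2^{k^*+1}=O(D/\delta).
\]
The total number of nodes is at most twice this. Substituting $1/\delta\le C_1/\varepsilon+(C_2/\varepsilon)^\beta$ and $D\le1$ gives
\[
O(C_1/\varepsilon)+O\big((C_2/\varepsilon)^\beta\big)
\]
nodes. Since the tree is finite, the loop terminates. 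Each node costs $O(1)$ oracle calls, plus the evaluation of $U$ (an integral of a pointwise max, computable to the needed accuracy), so the running time is polynomial in $1/\varepsilon$ for constant $\beta$. The remark's bounds on $C_1,C_2$ make this polynomial in $n$ as well.

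\textbf{Main obstacle.} The delicate step is the constant bookkeeping. Lemma~\ref{lem:gap} bounds the gap by a sum of two terms, so making each term at most $\varepsilon$ only yields $2\varepsilon$. I would handle this by halving $\varepsilon$ in $\delta$; this changes only the constants.

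A second point is the "logarithmic" versus "polynomial" claim. Depth is logarithmic in $1/\varepsilon$, but the node count can be polynomial in $1/\varepsilon$, because without further structure many subintervals may remain near-optimal. I would state polynomial as the guaranteed bound.
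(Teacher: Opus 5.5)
Your proposal is correct and follows the paper's route: the invariants of Lemma~\ref{prop:invariants} give correctness at exit, Lemma~\ref{lem:gap} controls short intervals, and an $O(D/\delta)$ count bounds the explored intervals. It is more careful than the paper's proof in two places. First, you rightly note that $\delta(\varepsilon)$ as defined only yields $U(I)-L(I)\le 2\varepsilon$, whereas the paper asserts $\le\varepsilon$. Second, your argument that intervals of length at most $\delta$ are never bisected (since $L(I)\le L^\star$ and $L^\star$ is nondecreasing) is what actually justifies the paper's ``uniform partition'' node count.
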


\sscomment{Defer this to the appendix}
\paragraph{Approximating integrals}
Note that our value oracle requires exactly calculating integrals, which necessarily entails numerical errors.
That being said, if each integral is approximated with absolute error at most $\delta$,
the returned policy $\bp^{\star}$ satisfies 
\[
G(\bp^\star)\ \ge\ \mathrm{OPT}-(\varepsilon+2\delta).
\]
Thus, setting $\delta\le \varepsilon/4$ and running the same algorithm with target $\varepsilon/2$ would keep the overall error less than $\eps$.

Thus, the natural question is whether we can efficiently compute the approximate integrals.

\begin{lemma}[Approximate integrals]\label{lem:riemann}
Fix $\beta>0$ and $p\in\Delta$.
Let $r\in\{1/\beta,\,1+1/\beta\}$ and define $f_r(x):=h(x,p)^r$.
Then $f_r:[0,1]\to[0,1]$ is monotone increasing\footnote{By Lemma~\ref{lm:h-monotone}, $h(\cdot,\bp)$ is strictly increasing for any nontrivial $\bp$, hence $f_r(\cdot)$ is increasing for $r>0$.} and for any integer $m\ge1$,
\[
\Bigg|\int_0^1 f_r(x)\,dx-\frac{1}{m}\sum_{j=1}^{m} f_r\!\Big(\tfrac{j}{m}\Big)\Bigg|
\;\le\;\frac{f_r(1)-f_r(0)}{m}\;\le\;\frac{1}{m}.
\]
In particular, choosing $m=\lceil1/\varepsilon\rceil$ yields a deterministic additive-$\varepsilon$ approximation to $\int_0^1 h(x,\bp)^r dx$ using $m$ evaluations of $h(x,\bp)^r$.\footnote{Note that for multiplicative approximation, one can stop when $\max_I U(I)\le \frac{1}{1-\eta}L^\star$ yields 
$G(\bp^{\star})\ge (1-\eta)\mathrm{OPT}$.}
\end{lemma}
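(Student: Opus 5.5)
The plan is to treat this as a standard right-endpoint Riemann-sum estimate for a monotone function, using only that $f_r$ is increasing and takes values in $[0,1]$.

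\textbf{Step 1: range and monotonicity.} By Lemma~\ref{lm:h-monotone}, $h(\cdot,\bp)$ is increasing on $[0,1]$. Its endpoint values are explicit:
\begin{itemize}
\item at $x=0$, only $a_n(0)=1$ survives, so $h(0,\bp)=p_n\ge 0$;
\item at $x=1$, only $a_1(1)=1$ survives, so $h(1,\bp)=p_1\le 1$.
\end{itemize}
Hence $h(x,\bp)\in[0,1]$ for all $x$. Since $t\mapsto t^r$ is increasing on $[0,1]$ for $r>0$, $f_r=h^r$ is increasing and maps $[0,1]$ into $[0,1]$. Both choices $r\in\{1/\beta,1+1/\beta\}$ are positive, and $f_r$ is continuous, hence integrable.

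\textbf{Step 2: bracket each subinterval.} Split $[0,1]$ into $[\tfrac{j-1}{m},\tfrac jm]$ for $j=1,\dots,m$. On the $j$-th piece, monotonicity gives
$$\tfrac1m f_r\!\left(\tfrac{j-1}{m}\right)\le \int_{(j-1)/m}^{j/m} f_r(x)\,dx\le \tfrac1m f_r\!\left(\tfrac jm\right).$$

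\textbf{Step 3: sum and telescope.} Write $R_m=\frac1m\sum_{j=1}^m f_r(j/m)$ for the right sum. Summing the bracket over $j$ places the integral between the left sum and $R_m$. The two sums differ by
$$\tfrac1m\sum_{j=1}^m\Bigl(f_r(j/m)-f_r((j-1)/m)\Bigr)=\tfrac1m\bigl(f_r(1)-f_r(0)\bigr).$$
Therefore $0\le R_m-\int_0^1 f_r\le \frac{f_r(1)-f_r(0)}{m}\le \frac1m$, the last step using Step 1.

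\textbf{Step 4: complexity claim.} Take $m=\lceil 1/\varepsilon\rceil$, so the error is at most $\varepsilon$. Computing $R_m$ needs exactly $m$ evaluations of $h(\cdot,\bp)^r$.

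No step is a real obstacle. The only point needing care is checking that $h$ stays in $[0,1]$, so that $f_r(1)-f_r(0)\le 1$ holds even when $r<1$ or when $\bp$ is trivial. In the trivial case $h$ is constant, $f_r$ is constant, and the bound holds with zero error.
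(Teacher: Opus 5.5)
Your proposal is correct and follows essentially the same argument as the paper: bracket the integral between the left and right Riemann sums of the monotone $f_r$, and telescope their difference to $(f_r(1)-f_r(0))/m$. The paper does not check the range $f_r\subseteq[0,1]$, which the final $\le 1/m$ relies on; your computation $h(0,\bp)=p_n$ and $h(1,\bp)=p_1$ supplies that step, and your remark on the trivial policy covers that case.
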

Noting that computation of $h(x,\bp)^r = h(x,p_1)^r = (c_0(x) +c_1(x)p_1)^r$ with $r$ as a function of $\beta$ only requires standard arithmetic operations running in polynomial over $n$ given a fixed $\beta$, the overall running time is polynomial in $n$ and $1/\eps$.

\sscomment{Defer this to the appendix}
\section{Beyond RS: General Objective Functions}\label{sec:general}
In hindsight, we notice that the proof of Lemma~\ref{lm:unimodal} only requires the platform's objective function to have certain property, where our objective function $G(\bp)$ as stated in Theorem~\ref{thm:prop-user} turns out to meet this condition.
This inspires us to derive a significantly more general condition under which our analysis on the characterization of optimal policies carries over.
Interestingly, we observe that our analysis holds for any posynomial objective function with nonnegative exponents.\footnote{Posynomial refers to a function which is a positive combination over monomials. With a single variable, it can be viewed as a polynomial with nonnegative coefficients whose exponent can possibly be noninteger.}
\begin{customthm}{\ref{thm:general-obj}}[Posynomial objective function]
	Given $k_m > k_{m-1} > \ldots > k_1$,  consider the following generalized objective function at symmetric MNE $\bmu$ given $\bp$:
    \begin{align*}
        \cG(\bmu, \bp) = \frac{1}{n}\sum_{i=1}^n\Exu{q_i \sim \mu_i}{e_m q_i^{k_m} + e_{m-1}q_i^{k_{m-1}} + \ldots + e_1 q_i^{k_1}},
    \end{align*}
    where $q_i \sim \mu_i$ for a symmetric strategy profile $\mu_i$ in $\bmu$.
    For any $\beta > 0$, 
    suppose $e_j$'s and $k_j$'s satisfy either of the following property:
    \begin{enumerate}
        \item There exists $j^* \in [n-1]$ such that $e_j(k_j - \beta) \le 0$ for $j \le j^*$ and $e_j(k_j - \beta) \ge 0$ for $j > j^*$.
        \item All the elements in the sequence $e_j(k_j - \beta)$ for $j \in [n-1]$ is nonnegative or nonpositive.
    \end{enumerate}
    Then, the Lemma~\ref{lm:unimodal} holds, \ie  any optimal policy has the same structure as Theorem~\ref{thm:opt-general}, \ie
    \[
        p_1 \ge p_2 = p_3 = \ldots = p_{n-1} \ge p_n = 0.
    \]
    As a special case, if $e_j$'s are all nonnegative, this condition follows.
\end{customthm}
The proof is based on the observation that given the posynomial function, the weights (except $a_i(x)$) in the integrand of its partial derivative with respect to $p_i$'s exhibits quasiconvexity over $x \in [0,1]$, which is precisely the condition for Lemma~\ref{lm:unimodal}.
We defer the proof to Appendix~\ref{apd:general-obj}.


\paragraph{Social welfare}
Our characterization under the general objective function in Theorem~\ref{thm:general-obj} implies several interesting applications.
For instance, let us consider \emph{social welfare} of the contest, which is the summation of the RS's utility and the producers' utilities.
Let $\cV(\bmu, \bp)$ be the platform's utility function at symmetric MNE induced by $\bp$, \eg average quality or a function of it.
Then, at a symmetric MNE $\bmu = (\mu, \ldots, \mu)$, the social welfare can be written as:
\begin{align}
    \cS(\bmu, \bp) &= \cW(\bmu, \bp) + \cV(\bmu, \bp) + \sum_{i\in [n]}\cU_i(\mu, \bmu_{-i},\bp).\label{eq:social-welfare}
\end{align}
Then, Theorem~\ref{thm:general-obj} immediately implies the following corollary:
\begin{corollary}\label{cor:social-welfare}
    Assume the platform's utility function is given by
    \begin{align*}
        \cV(\bmu, \bp) = \Exu{q \in \mu}{e_m q^m + \ldots + e_1 q},
    \end{align*}
    where $e_j \ge 0$ for all $j \in [m]$.
    Suppose the RS aims to maximize the social welfare as written in~\eqref{eq:social-welfare}.
    Then, the optimal policy has the structure $p_1 \ge p_2 = \ldots = p_{n-1} \ge p_n = 0$ at the symmetric MNE.
\end{corollary}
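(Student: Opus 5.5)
To prove Corollary~\ref{cor:social-welfare}, I would reduce the social welfare at the symmetric MNE to a posynomial objective of the form in Theorem~\ref{thm:general-obj}, plus terms that are either constant or have the same shape, and then check the sign conditions on the coefficients.

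\textbf{Step 1: rewrite the producers' utilities.} At a symmetric MNE every $q$ in the support gives the same payoff, by Lemma~\ref{lem:derivation-welfare-payoff}. Taking $q=0$ (which is in the support by Proposition~\ref{cl:atomless}) gives payoff $p_n=0$. Hence $\sum_i \cU_i = 0$ at any candidate optimum with $p_n=0$. Alternatively, summing revenues gives $1$ and summing costs gives $n\,\Exu{q\sim\mu}{q^\beta}$, so $\sum_i\cU_i = 1 - n\Exu{q\sim\mu}{q^\beta}$. This is a monomial with exponent $k=\beta$, so its coefficient contributes $e(k-\beta)=0$ and does not affect the sign pattern. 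Either way, the producer term is harmless.

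\textbf{Step 2: handle user welfare.} By Theorem~\ref{thm:prop-user}, $\cW$ becomes $n\int_0^1 h^{1+1/\beta}\,dx$. Using the quantile representation $q=h(U,\bp)^{1/\beta}$ from the remark after Theorem~\ref{thm:uniq-symm}, every posynomial term $\Exu{q\sim\mu}{q^{k}}$ equals $\int_0^1 h(x,\bp)^{k/\beta}\,dx$. The welfare term is therefore also an integral of a power of $h$, and its partial derivatives $\int_0^1 (\cdot)\,a_i(x)\,dx$ have a weight of the same form, $\propto h^{1/\beta}$. The total objective then becomes
\[
\sum_j e_j'\int_0^1 h(x,\bp)^{k_j/\beta}\,dx,
\]
where the exponents $k_j$ run over $\{1,\dots,m\}$ and the welfare contribution enters at exponent $1+1/\beta$ relative to $h$, carrying a positive weight $n$. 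All coefficients are nonnegative.

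\textbf{Step 3: apply Theorem~\ref{thm:general-obj}.} Its ``special case'' clause covers the situation where all $e_j$ are nonnegative. Alternatively, one can verify directly that the derivative weight
\[
q(x)=\sum_j e_j'\frac{k_j}{\beta}\,h^{k_j/\beta-1}
\]
is quasiconvex in $x$. As a function of $t=h(x,\bp)$, which is increasing in $x$ by Lemma~\ref{lm:h-monotone}, its derivative has coefficients of sign $e_j'(k_j-\beta)$. This sequence is nonpositive and then nonnegative, so the generalized Descartes rule gives at most one sign change, which yields quasiconvexity. Lemma~\ref{lm:unimodal} and the KKT argument of Theorem~\ref{thm:opt-general} then give the claimed structure.

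\textbf{Main obstacle.} The delicate point is the welfare term. It is not literally of the form $\frac1n\sum_i\mathbb E[q_i^{k}]$, so one must confirm that its power $1+1/\beta$ of $h$, equivalently $q^{\beta+1}$, fits the monotone sign pattern alongside the platform terms. Since it has a positive coefficient and $k-\beta=1>0$, it sits in the nonnegative block, and I expect the check to go through.
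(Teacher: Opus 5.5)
Your reduction is the one the paper uses: discard the producers' term as a constant, write $\cW$ and $\cV$ as integrals of powers of $h$ with nonnegative weights, and invoke Theorem~\ref{thm:general-obj}. There is, however, a genuine gap exactly where you call the producer term harmless: nothing justifies $p_n=0$.

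At the symmetric MNE each producer earns $u_0=p_n$ (evaluate \eqref{eq:06121508} at $q=0$). Equivalently, total revenue is $1$ and $\mathbb{E}_{q\sim\mu}[q^\beta]=\int_0^1 h(x,\bp)\,dx-p_n=\tfrac1n-p_n$, so $\sum_i\cU_i=np_n$.

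Your first version (``$\sum_i\cU_i=0$ at any candidate optimum with $p_n=0$'') presupposes the conclusion. The identities $\mathbb{E}[q^k]=\int_0^1 h^{k/\beta}dx$ and $q=h(U,\bp)^{1/\beta}$ in Step~2 also hold only when $p_n=0$; in general $q=(h(U,\bp)-p_n)^{1/\beta}$.

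Your second version treats the term as $-n\,\mathbb{E}[q^\beta]$ and notes $e(k-\beta)=0$. That only shows the term adds a constant to the derivative weight for $i\le n-1$, so it does not disturb the structure of $(p_1,\dots,p_{n-1})$ once $p_n=0$ is known. The claim $p_n=0$ itself comes from the perturbation argument of Theorem~\ref{thm:prop-user}: shifting $\delta$ from $p_n$ to $p_1$ decreases no term. That argument needs nonnegative coefficients, and here the producer term drops by exactly $n\delta$. So your Step~3 assertion that all coefficients are nonnegative is false for the full social welfare, and the ``all $e_j\ge 0$'' clause of Theorem~\ref{thm:general-obj} does not apply.

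This cannot be patched, because the $p_n=0$ conclusion fails in this generality. Take $e_j=0$ for all $j$, which the hypothesis allows.
\begin{itemize}
\item If $p_n=0$, then $0<h(x,\bp)\le p_1(1-(1-x)^{n-1})<1$ on $(0,1)$. Hence $\cS=n\int_0^1h^{1+1/\beta}dx<n\int_0^1 h\,dx=1$ by Lemma~\ref{lorentz2012bernstein}.
\item By compactness of $\{\bp\in\Delta:p_n=0\}$, this gives $\max_{p_n=0}\cS<1$; for example, $\cS(\HM)=\tfrac{\beta n}{\beta n+n-1}$.
\item The nontrivial policy $(\tfrac1n+\epsilon,\tfrac1n,\dots,\tfrac1n,\tfrac1n-\epsilon)$ has $\cS\ge np_n=1-n\epsilon$, which is larger for small $\epsilon$.
\item Since $\cV\le\sum_j e_j$ (as $q\le1$), the same happens whenever $\sum_j e_j<1-\max_{p_n=0}\cW$.
\end{itemize}

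The paper's own proof has the same hole. It writes $\cU=\tfrac1n-\mathbb{E}[q^\beta]$ and disposes of it via $\int_0^1h\,dx=\tfrac1n$, silently setting $p_n=0$.

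What your argument (and the paper's) does establish is the restricted statement: among policies with $p_n=0$, a social-welfare maximizer satisfies $p_1\ge p_2=\cdots=p_{n-1}$. On that set $\sum_i\cU_i=0$, and $\cW=n\,\mathbb{E}[q^{\beta+1}]$ fits the sign pattern exactly as you say. The welfare term was never the real obstacle; the producers' utilities are.
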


\paragraph{Inverse S-shaped function}
On the other hand, the RS might be interested in a \emph{superstar} outcome of the contest where it realizes a concave utility up to certain quality threshold, and then it realizes a convex utility of increasing marginals.

For instance, assume the convex cost $\beta =2$, and the following utility function:
\begin{align*}
    \Exu{q \sim \mu}{2q^3 -3q^2 + 2q}.
\end{align*}
One can easily verify that this is monotone in $q$, and also satisfies the condition in Theorem~\ref{thm:general-obj}.
Notably, this function is inversely s-shape, \ie  strongly concave-then-convex over $q$.
In words, the RS's utility does not significantly change up to certain threshold quality level, but then it dramatically increases afterwards.
This is in a similar vein to the S-shaped function in prospect theory~\cite{barberis2013thirty} in behavioral economics, but in an inverse manner as standard S-shaped function requires the utility function to be convex-then-concave.
In practice, many research contests that requires innovative outcome would satisfy this property.
Further, one can imagine of a economies of scale, \eg if the outcome of the contest exceeds certain threshold, then the company (contest designer) can monopolize the market to bring increasingly larger marginal revenues.

\paragraph{Order statistics}
Finally, one might be interested in the highest quality contents among the producers, \ie $\cV(\bmu, \bp) = \Ex{q_{(1)}}$ where $q_{(1)}$ is the largest order statistics among $q_1,\ldots, q_n$.
The following corollary suggests that the structure of the optimal policies remains the same for such objective.
\begin{corollary}\label{cor:order-statistics}
    Assume the platform aims to maximize the expected value of the highest quality outcome, \ie $\Exu{q_i \sim \mu_i, i \in [n]}{q_{(1)}}$, where $q_{(1)}$ denotes the largest order statistics.
    Then, the optimal policy has the structure $p_1 \ge p_2 = \ldots = p_{n-1} \ge p_n = 0$ at the symmetric MNE.
\end{corollary}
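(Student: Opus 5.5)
Let $q_{(1)}=\max_i q_i$. Since the symmetric MNE has an atomless c.d.f.\ $F$ by Proposition~\ref{cl:atomless}, the maximum has c.d.f.\ $F^n$, so
\[
\Ex{q_{(1)}}=\int_0^{q_{\max}} q\, dF(q)^n .
\]
I would change variables to the quantile $x=F(q)$. By the Remark after Theorem~\ref{thm:uniq-symm} (relation \eqref{eq:q-to-h}), on the support we have $q=h(x,\bp)^{1/\beta}$, and Lemma~\ref{lm:h-monotone} makes this a well-defined increasing map. Hence
\[
\Ex{q_{(1)}}=n\int_0^1 x^{n-1}h(x,\bp)^{1/\beta}\,dx=n\int_0^1 a_1(x)\,h(x,\bp)^{1/\beta}\,dx ,
\]
since $a_1(x)=x^{n-1}$. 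This is the objective to which I would apply Lemma~\ref{lm:unimodal} and then the KKT argument of Theorem~\ref{thm:opt-general}. As in Theorem~\ref{thm:prop-user}, we may first restrict to $p_n=0$, since the correlation-inequality argument there should carry over verbatim.

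The partial derivatives are
\[
\frac{\partial}{\partial p_i}=\int_0^1 q(x)\,a_i(x)\,dx,\qquad q(x)=\frac{n}{\beta}\,x^{n-1}h(x,\bp)^{1/\beta-1}.
\]
This has exactly the form \eqref{eq:part-deriv-general}, with a weight $q$ independent of $i$. So it remains to show that $q$ is quasiconvex on $[0,1]$.

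\textbf{Case $\beta\le1$.} Both factors $x^{n-1}$ and $h^{1/\beta-1}$ are nondecreasing in $x$ because $h$ is increasing. So $q$ is monotone and hence quasiconvex.

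\textbf{Case $\beta>1$.} This is the main obstacle: $x^{n-1}$ increases while $h^{1/\beta-1}$ decreases. I would argue via Lemma~\ref{lm:unimodal-to-signchange}, bounding the sign changes of $q-\lambda$. Take logarithms: $\log q=\text{const}+(n-1)\log x-(1-1/\beta)\log h(x,\bp)$. Near $0$, $h(0,\bp)=p_n=0$ and $h\approx p_{n-1}(n-1)x$ to leading order, so $q$ behaves like a positive power of $x$.

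If that fails, the fallback is to relate $q$ to the posynomial machinery of Theorem~\ref{thm:general-obj}. There, $\Ex{q_{(1)}}$ is an expectation of a monotone functional of $q$ under $\mu$, and the weight's sign changes are controlled by a Descartes-type rule in the quantile variable. I would try to show that $x^{n-1}$ plays the role of a positive-coefficient term in that rule.

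Finally, I expect the end of the proof to be that $q$ is in fact unimodal-or-monotone in the favorable direction, i.e.\ increasing, or decreasing-then-increasing. The goal is to show that the derivative of $\log q$, namely $(n-1)/x-(1-1/\beta)h'/h$, changes sign at most once, from $-$ to $+$. The hard step is controlling $xh'(x)/h(x)$, which runs from $1$ at $x=0$ to at most $n-1$ at $x=1$, so this needs a careful argument.

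Once quasiconvexity holds, Lemma~\ref{lm:unimodal} gives the structured gradient sequence. The KKT case analysis of Theorem~\ref{thm:opt-general} then yields $p_1\ge p_2=\dots=p_{n-1}\ge p_n=0$.
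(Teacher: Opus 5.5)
\textbf{The gap: the case $\beta>1$ is never proved.} Your reduction is right, and cleaner than the paper's: $\mathbb{E}[q_{(1)}]=\int q\,dF(q)^n=n\int_0^1 x^{n-1}h(x,\bp)^{1/\beta}dx$. Your weight $q(x)=\frac{n}{\beta}x^{n-1}h(x,\bp)^{1/\beta-1}$ is also the correct one. But for $\beta>1$ you never show that $q$ is quasiconvex, and that is the only nontrivial step.

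None of your three suggested routes closes it:
\begin{itemize}
\item The expansion at $x=0$ is purely local.
\item The endpoint values of $xh'/h$ say nothing about the interior.
\item The posynomial fallback does not apply. Theorem~\ref{thm:general-obj} handles weights that are functions of $h$ alone, $\sum_j\frac{k_j}{\beta}e_jh^{k_j/\beta-1}$. The factor $x^{n-1}$ is not a function of $h$, so Descartes' rule in $z=h^{1/\beta}$ cannot see it.
\end{itemize}

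\textbf{The missing idea is a uniform elasticity bound.} For $x\in(0,1]$,
\[
\frac{h(x,\bp)}{x^{n-1}}=\sum_{i=1}^{n}p_i\binom{n-1}{i-1}\Bigl(\frac{1-x}{x}\Bigr)^{i-1}
\]
is nonincreasing term by term and bounded below by $p_1>0$. Hence $x\,h'(x,\bp)\le(n-1)\,h(x,\bp)$ on all of $(0,1]$, not just at the endpoints. For $\beta>1$ this gives
\[
\frac{d}{dx}\log q=\frac{n-1}{x}-\Bigl(1-\frac1\beta\Bigr)\frac{h'}{h}\ge\frac{n-1}{\beta x}>0 .
\]
Equivalently, $q=\frac{n}{\beta}x^{(n-1)/\beta}\bigl(h/x^{n-1}\bigr)^{1/\beta-1}$ is a product of positive nondecreasing factors.

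So $q$ is strictly increasing for every $\beta>0$. The log-derivative never changes sign, so you do not need the ``at most once'' version, and Lemma~\ref{lm:unimodal} and the KKT argument then apply. In fact more is true. Strict monotonicity of $q$ gives $d_1>\dots>d_{n-1}$. Summing the stationarity equations then gives $\nu_i=\sum_{j\le i}(\lambda-d_j)\ge\sum_{j\le i}(d_1-d_j)>0$ for $i\ge2$. So the only KKT point is $p_1=1$, which is a special case of the claimed structure.

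\textbf{Comparison with the paper.} The paper reaches the same objective, via the tail formula and an integration by parts. It then takes the weight to be $nx^{n-1}h(x,\bp)^{1/\beta}$, which is trivially increasing. That weight omits the chain-rule factor $\frac1\beta h^{1/\beta-1}$ coming from $\partial_{p_i}h^{1/\beta}$. Your weight is the correct one, so the paper's one-line monotonicity check does not actually cover $\beta>1$. The elasticity bound above is what closes either argument.

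\textbf{A minor point on $p_n=0$.} This step needs no correlation inequality here. With general $p_n$ the objective is $n\int_0^1x^{n-1}(h-p_n)^{1/\beta}dx$. Shifting mass from $p_n$ to $p_1$ raises $h-p_n$ pointwise at rate $1+a_1(x)-a_n(x)$, which is $\ge0$ and strictly positive on $(0,1]$.
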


\paragraph{Exponential utility}
\ssedit{
Even more extreme version of the inverse S-shaped function can be represented as an exponential utility where the utility function is written as $\Exu{q \sim \mu}{e^{\lambda q}}$.
Notice that Theorem~\ref{thm:general-obj} accommodates the exponential utility as shown in the following.
\begin{corollary}\label{cor:exp-utility}
    Assume the platform aims to maximize the exponential utility, \ie $\Exu{q_i \sim \mu_i, i \in [n]}{e^{\lambda q_i}}$, for $\lambda >0$.
    Then, the optimal policy has the structure $p_1 \ge p_2 = \ldots = p_{n-1} \ge p_n = 0$ at the symmetric MNE.
    Further, the same structure holds if the platform aims to maximize a convex combination of exponential utilities $\Exu{q_i \sim \mu_i, i \in [n]}{\sum_{j=1}^m e^{\lambda_j q_i}}$.
\end{corollary}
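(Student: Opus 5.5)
The plan is to go back to the integral representation behind Theorem~\ref{thm:general-obj}, rather than expanding $e^{\lambda q}=\sum_k \lambda^k q^k/k!$. That expansion is an infinite posynomial, so the theorem does not apply verbatim, and passing optimal structures through a limit would be delicate. By the Remark after Theorem~\ref{thm:uniq-symm}, at the symmetric MNE a producer's quality is distributed as $h(U,\bp)^{1/\beta}$ with $U\sim\mathrm{Uniform}[0,1]$. Therefore
\[
\Exu{q\sim\mu}{e^{\lambda q}}=\int_0^1 \exp\bigl(\lambda h(x,\bp)^{1/\beta}\bigr)\,dx ,
\]
and differentiating under the integral gives
\[
\frac{\partial}{\partial p_i}=\int_0^1 w(x)\,a_i(x)\,dx,\qquad w(x)=\tfrac{\lambda}{\beta}\,\phi\bigl(h(x,\bp)\bigr),\quad \phi(t)=e^{\lambda t^{1/\beta}}\,t^{1/\beta-1}.
\]
This is exactly the form~\eqref{eq:part-deriv-general}. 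Hence, once $w$ is shown to be quasiconvex, the remaining steps go through as before: Lemma~\ref{lm:unimodal} applies, and the KKT case analysis from the proof of Theorem~\ref{thm:opt-general} yields $p_1\ge p_2=\dots=p_{n-1}$. For $p_n=0$, I would reuse the correlation-inequality argument used to obtain $p_n=0$ in Section~\ref{sec:opt}, which relies only on the objective being increasing in quality.

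\textbf{Single exponential.} By Lemma~\ref{lm:h-monotone}, $h(\cdot,\bp)$ is strictly increasing, so it suffices to show that $\phi$ is quasiconvex in $t\in(0,1]$; composing with an increasing map preserves quasiconvexity. Compute
\[
\frac{d}{dt}\log\phi(t)=\tfrac{\lambda}{\beta}t^{1/\beta-1}+\bigl(\tfrac1\beta-1\bigr)t^{-1}=t^{-1}\Bigl(\tfrac{\lambda}{\beta}t^{1/\beta}+\tfrac1\beta-1\Bigr).
\]
The bracket is strictly increasing in $t$, so it changes sign at most once, from $-$ to $+$. Thus $\phi$ first decreases and then increases, i.e.\ it is quasiconvex. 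This settles the first claim.

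\textbf{Convex combination.} Here
\[
w\propto \sum_j \lambda_j e^{\lambda_j s}\, t^{1/\beta-1},\qquad s=t^{1/\beta},
\]
and a sum of quasiconvex functions need not be quasiconvex, so I expect this to be the main obstacle. If $\beta\le1$, every term is increasing and we are done. If $\beta>1$, set $\kappa=1-1/\beta>0$. Up to a positive factor $t^{1/\beta-2}/\beta$, the derivative is
\[
g(s)=\sum_j \lambda_j e^{\lambda_j s}\bigl(\lambda_j s/\beta-\kappa\bigr).
\]
I would show that $g$ crosses zero only upward, which gives at most one $-\to+$ sign change and hence quasiconvexity. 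Compute
\[
g'(s)=\sum_j \lambda_j e^{\lambda_j s}\Bigl[\lambda_j\bigl(\lambda_j s/\beta-\kappa\bigr)+\lambda_j/\beta\Bigr].
\]
Let $\lambda^*=\beta\kappa/s$. The term $\lambda_j(\lambda_j s/\beta-\kappa)$ is positive exactly when $\lambda_j>\lambda^*$ and negative exactly when $\lambda_j<\lambda^*$. Multiplying each bracket of $g$ by $\lambda_j$ instead of $\lambda^*$ therefore can only increase the sum (a Chebyshev-type reweighting), so
\[
\sum_j \lambda_j e^{\lambda_j s}\,\lambda_j\bigl(\lambda_j s/\beta-\kappa\bigr)\ \ge\ \lambda^* g(s).
\]
At any root of $g$ the right-hand side vanishes, and the extra term $\sum_j\lambda_j^2 e^{\lambda_j s}/\beta$ is strictly positive, so $g'>0$ there. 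Hence $g$ has at most one zero, crossed from $-$ to $+$. Thus $w$ is quasiconvex in $x$, and the structure follows from Lemma~\ref{lm:unimodal} and the KKT argument as above.
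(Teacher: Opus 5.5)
Your proof is correct, but it takes a different route from the paper's.

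\textbf{The paper's route.} The paper never works with $e^{\lambda q}$ directly. It truncates the Taylor series to the posynomial $\sum_{j\le M}\lambda^j q^j/j!$ and applies Theorem~\ref{thm:general-obj} to get a structured maximizer $\bp^{(M)}$ for each truncation. It then passes to the limit. Since $q\in[0,1]$, the truncation error is at most $e^{\lambda}\lambda^{M+1}/(M+1)!$ uniformly. By compactness of $\Delta$, a subsequence of $\bp^{(M)}$ converges; the limit is optimal for the exponential objective and inherits the structure, because the structured set is closed.

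\textbf{Your route.} You verify the hypothesis of Lemma~\ref{lm:unimodal} for the exponential objective itself. In effect, you replace the Descartes'-rule step in the proof of Theorem~\ref{thm:general-obj} with an explicit sign analysis. For a single exponential this is the log-derivative of $e^{\lambda t^{1/\beta}}t^{1/\beta-1}$. For sums it is the identity $g'(s)-\lambda^\ast g(s)=\sum_j\lambda_j e^{\lambda_j s}\bigl[\tfrac{s}{\beta}(\lambda_j-\lambda^\ast)^2+\tfrac{\lambda_j}{\beta}\bigr]>0$, which forces every zero of $g$ to be a strict up-crossing. The same computation goes through unchanged with positive weights on the exponentials.

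\textbf{Trade-offs.} The paper's route is more economical for sums: that case comes for free, since the Taylor coefficients $\sum_j\lambda_j^k/k!$ are nonnegative, whereas you need the Chebyshev-type step. Your route gives a stronger conclusion. Because the Lemma~\ref{lm:unimodal}/KKT argument applies to an arbitrary optimum of the exponential objective, every optimal policy has the structure. The limiting argument only exhibits one structured optimal policy and cannot rule out unstructured ones.

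\textbf{One small precision.} For $p_n=0$ you do not need a correlation inequality here. Moving mass $\delta$ from $p_n$ to $p_1$ raises $h(x,\bp)-p_n$ pointwise by $\delta\bigl(1+a_1(x)-a_n(x)\bigr)\ge0$. This is exactly the paper's argument for the platform-quality term.
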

This follows from uniformly approximating the exponential function with its Taylor series, the coefficients of which satisfy the condition of Theorem~\ref{thm:general-obj}.
}

\section{Conclusion}
We study the optimal contest problem with nonconvex objective functions, motivated by a strategic competition between content producers in online platforms.
We characterize the optimal rank-based recommendation policy to optimally incentivize the strategic content production of producers.
We observe a stark phase transition between $\HM$ that always recommend the highest quality content and $\UNI$ that uniformly randomizes among every content but the lowest quality one when the platform is single-minded to either optimize user welfare or platform quality.
In the convex-minded case when the platform tries to maximize the convex combination over user welfare and platform quality, we reveal that the optimal policy is still highly structured: potentially high probability to the highest quality content, zero probability to the lowest quality content, and equal probabilities to the intermediate quality contents.
We finally discuss that this result holds even more general objective functions, with broader applications beyond recommender systems.

A major open problem is to characterize the structure of the optimal policy when the objective function exhibits more complicated pattern in its gradient sequence beyond quasiconvexity.
Further, it remains also questionable if the structural characterization carries over beyond the complete information setting with homogeneous agents, \eg incomplete information setting with ex-post heterogeneous but ex-interim homogeneous agents~\cite{greenwald2018simple}.



\sscomment{Future direction: (1) more general function with more number of sign changes, and number of value changes in optimal structure. (2) revenue-sharing contest. (3) number of sign changes in coefficients, and approximation theory.}

\section*{Acknowledgements}
This work is partially supported by DARPA expMath, ONR MURI 2024 award on Algorithms, Learning, and Game Theory, Army-Research Laboratory (ARL) grant W911NF2410052, NSF AF:Small grants 2218678, 2114269, 2347322, and MIT Junior faculty research assistance grant.

\bibliographystyle{plainnat}
\bibliography{ref}

\newpage
\appendix

\section{Further Related Works}\label{apd:rel}
\paragraph{Recent works on contest design}
Building on classic rank-based prize design, recent literature has explored a range of extensions, examining how prize structures influence contestant behavior beyond mere effort exertion. For instance, \cite{fang2024winning} show that a convex prize structure can incentivize contestants to select riskier quality distributions. \cite{kirkegaard2023contest} analyze the optimal number of winners to reward with fixed prizes when contestant actions lead to stochastic quality outcomes. 

Taking a different approach, \cite{chawla2019optimal} model crowdsourcing contests with endogenous rewards. \cite{liu2025optimal} study two-stage contests involving initial ability screening followed by rank-based effort decisions. {~\cite{bimpikis2019designing} consider a dynamic setting in which contestants gradually exert effort over time, designing both an efficient reward structure and an information disclosure policy. This work builds on earlier efforts such as~\cite{ridlon2013favoring}, which investigate two-period competitions and prize structures under contestant homogeneity. \cite{acemoglu2014managing} study a matching framework for allocating strategic workers to tasks on a crowdsourcing platform.}

Complementing design, the Tullock contest literature analyzes equilibrium behavior under specific success functions, where each player's probability of winning is proportional to their effort raised to a power—a formulation first introduced by~\cite{tullock2008efficient}. Recent works further analyze its computational complexity~\citep{he2024complexity}.

\paragraph{Stylized models on strategic recommender system}
There are an increasing attention from various communities to study stylzed models to investigate outcomes of strategic behavior of content producers.
\cite{yao2024unveiling} investigate the inherent trade-off between maximizing user satisfaction and maintaining creator productivity, highlighting that aggressive optimization for user metrics can suppress content supply.
\cite{yao2024user} propose a Stackelberg framework to optimize user welfare in the presence of strategic creators, and design efficient algorithms for policy optimization under such endogenous response.
\cite{shin2022multi,shin2025replication} analyze a quantity-competition scenario where content producers may adversarially replicate their own content to abuse the recommendation algorithm, and devise an algorithm that disincentivizes such behavior.
Many other recent works~\citep{ben2018game,ghosh2013learning,zhu2023online,esmaeili2025robust} introduced stylized models, such as learning algorithm with repeated interactions, to study strategic interaction in online platforms, but we will not discuss the details due to the significant difference to our model.
{
In contrast, we consider a simple yet standard problem setup, motivated by the seminal contest design literature, and provide an exact characterization of the optimal policies.
}

\paragraph{Fairness in recommender system} 
{
There are significant interests in studying algorithmic fairness for recommender systems~\citep{chen2024interpolating} as well as for general online platforms~\citep{barocas2016big,abebe2020roles,kasy2021fairness} such as resource allocation~\citep{hooker2012combining}, assortment optimization~\citep{chen2022fair}, online advertising~\citep{deng2024individual} and combinatorial optimization~\citep{golrezaei2024online} in the past few years.
We refer to surveys by~\cite{li2023fairness,wang2023survey,deldjoo2024fairness} for more details in fairness-aware recommender systems.
}
\cite{chen2024interpolating} study the two-sided fairness for both the item and user.
\cite{golrezaei2024online} propose a general framework to incorporate fairness into sequential decision making problem.
\cite{chen2022fair} investigate a fair assortment planning problem to retrieve a set of items with fixed cardinality to maximize expected revenue while satisfying  pairwise fairness constraints.
Our results on the optimality beyond the $\HM$ complement these works by showing that fairness can \emph{emerge endogenously} from optimizing the platform’s own objectives, without needing to be imposed explicitly.

\subsection{Comparison with~\cite{glazer1988optimal}\label{sec:comparison} and~\cite{jagadeesan2024supply}} 
As we have briefly mentioned, our model is closely related to those by~\cite{glazer1988optimal} and~\cite{jagadeesan2024supply}.  
We discuss several technical differences and explain how our model generalizes and subsumes their respective settings.

\paragraph{Comparison with~\cite{glazer1988optimal}}  
Our model generalizes~\cite{glazer1988optimal} by setting $\alpha = 0$.\footnote{More precisely, they consider a utility function over the $p_i$ such that each contestant derives utility $u(p_i)$. This feature is not captured in our model since $p_i$ represents a probability rather than a prize.}  
They consider a rank-based policy to maximize platform quality $\cQ(\bmu, \bp)$, assuming the existence of a symmetric MNE. In particular, they prove the following result for linear costs:

\begin{theorem}[\cite{glazer1988optimal}]
    Consider a linear cost function ($c(q) = q$) and $\alpha = 0$, i.e., the objective is $\cQ(\bmu, \bp)$.  
    Then the optimal policy is $\UNI$.
\end{theorem}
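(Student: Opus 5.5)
The plan is to derive this as the special case $\beta = 1$, $\alpha = 0$ of the reduced problem in Theorem~\ref{thm:prop-user}. With $\beta=1$ and $\alpha=0$, the objective $G(\bp)$ becomes $\int_0^1 h(x,\bp)\,dx = \sum_{i=1}^n p_i \int_0^1 a_i(x)\,dx$. Each $a_i$ is a degree-$(n-1)$ Bernstein basis polynomial, so the Beta integral gives $\int_0^1 a_i(x)\,dx = \binom{n-1}{i-1} B(n-i+1,i) = 1/n$ for every $i$. Hence $G(\bp) = \frac1n\sum_i p_i = \frac1n$ on the whole feasible set of $(\text{\Opt})$, which already imposes $p_n = 0$. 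In particular, $\UNI$ attains the optimum (though not uniquely, consistent with item~4 of Theorem~\ref{thm:opt-single}).

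The only non-routine point is that the restriction $p_n = 0$ is built into Theorem~\ref{thm:prop-user}. To make the argument self-contained I would also give a direct equilibrium computation that handles policies with $p_n > 0$.

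Fix a nontrivial $\bp$ and its symmetric MNE $\bmu$, which exists and is unique by Theorems~\ref{thm:exist} and~\ref{thm:uniq-symm}. By Proposition~\ref{cl:atomless}, its c.d.f.\ $F$ is atomless and strictly increasing on $[0,q_{\max}]$, so $0$ lies in the support. By Lemma~\ref{lem:derivation-welfare-payoff} and the indifference condition, every producer's equilibrium payoff then equals the payoff at $q=0$, which is $h(0,\bp) - 0 = p_n$. Summing over producers, the expected revenues add up to exactly $1$, since some content is always recommended. With linear cost this gives
\[
n p_n = 1 - n\,\Exu{q\sim\mu}{q}, \qquad\text{so}\qquad \cQ(\bmu,\bp) = \frac{1}{n} - p_n .
\]
This quantity is maximized exactly when $p_n = 0$, where it equals $1/n$. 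Since $\UNI$ has $p_n=0$ (and is nontrivial), it is optimal. The trivial policy yields zero effort, so it is dominated.

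I expect the main obstacle to be justifying that the lower endpoint of the support is $0$ and that the payoff there is exactly $p_n$. This rests on atomlessness: with no atom at $0$, ties at $0$ occur with probability zero, so a producer playing $0$ is ranked last almost surely. I would cite Proposition~\ref{cl:atomless} for this rather than reprove it.
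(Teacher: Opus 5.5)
Your proposal is correct. Its first paragraph is exactly how the paper recovers this result. It is item~4 of Theorem~\ref{thm:opt-single}: specialize the reduced objective of Theorem~\ref{thm:prop-user} to $\beta=1$, $\alpha=0$, and apply Lemma~\ref{lorentz2012bernstein} to get $G(\bp)=1/n$ on the whole feasible set. So $\UNI$ is optimal, though not uniquely.

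Your supplementary argument takes a genuinely different route. The paper computes platform quality through the change of variables $x=F(q)$ in the indifference identity $q^\beta=h(F(q),\bp)-p_n$, which gives $\int_0^1 (h(x,\bp)-p_n)^{1/\beta}\,dx$. It then disposes of $p_n>0$ by a separate perturbation argument.

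You instead use pure accounting:
\begin{itemize}
\item the total prize is $1$, so each producer earns $1/n$ by symmetry;
\item indifference at the bottom of the support makes the equilibrium payoff $p_n$;
\item hence the expected cost is $1/n-p_n$.
\end{itemize}
Linearity of $c$ turns this into $\cQ(\bmu,\bp)=1/n-p_n$, which agrees with the paper's formula at $\beta=1$. This gives an exact closed form for every nontrivial $\bp$. It shows that $p_n=0$ is necessary and not merely sufficient, and it needs no quantile change of variables and no Bernstein integrals.

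What your route cannot do is generalize. The accounting identity pins down $\mathbb{E}_{q\sim\mu}[q^\beta]$, not $\mathbb{E}_{q\sim\mu}[q]$. So it says nothing for $\beta\neq 1$ or for user welfare, which is where the paper's quantile representation is indispensable.

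Your step $u_0=p_n$ is sound, but make one point explicit. Since $F$ is atomless, the deviation payoff $q\mapsto h(F(q),\bp)-q$ is continuous. So the set where it equals $u_0$ is closed and has full $\mu$-measure, and therefore contains the support point $0$.
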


Their proof relies on the following key observations:  
(i) $p_n$ should always be zero to avoid free-riding,  
(ii) in the linear cost case, the objective can be rewritten as a concave function purely in terms of $\bp$, and  
(iii) the $\UNI$ policy satisfies the first-order condition, implying its optimality.  

However, their approach breaks down under nonlinear cost functions, where the objective becomes non-concave and the first-order condition no longer guarantees optimality.

In contrast, we derive the optimal policy for any cost function of the form $c(x) = x^\beta$ for $\beta > 0$, thereby generalizing the linear-cost result of~\cite{glazer1988optimal}.  
Importantly, our analysis does not rely on first-order conditions. Instead, we characterize the structural properties of the objective that lead to optimality. See Section~\ref{sec:opt-structure}.

Finally, note that~\cite{glazer1988optimal} \emph{assume} the existence of a symmetric MNE for any policy, without formally proving it. In contrast, we rigorously prove the existence of a symmetric MNE in Theorem~\ref{thm:exist}, filling this gap in the earlier work.



\paragraph{Comparison with \cite{jagadeesan2024supply}}
Our model, albeit from a different perspective, also subsumes the work of~\cite{jagadeesan2024supply}, who characterize the equilibrium under the specific $\HM$ policy.  
We emphasize that their focus is on analyzing the equilibrium behavior under a given policy, rather than on designing policies to induce desirable outcomes for the platform.

Notably,~\cite{jagadeesan2024supply} begin by quantifying the equilibrium outcome under the $\HM$ policy as follows.\footnote{They also generalize this result to a multi-dimensional setting, which presents an intriguing direction for contest design as well.}

\begin{theorem}[\cite{jagadeesan2024supply}]\label{thm:single-equi-hm}
Under the $\HM$ policy, there exists a unique symmetric equilibrium $\bbmu$ whose cumulative distribution function (c.d.f.) is $F(q) = \parans{q^\beta}^{1/(n-1)}$.
\end{theorem}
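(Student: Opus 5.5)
The plan is to combine the general equilibrium results of Section~\ref{sec:existence} with the explicit payoff formula of Lemma~\ref{lem:derivation-welfare-payoff}, specialized to $\bp = (1,0,\ldots,0)$. Under $\HM$ only the $i=1$ term of $h$ survives, since $a_1(x) = x^{n-1}$. So for any symmetric MNE with c.d.f.\ $F$, a producer playing $q$ earns
\[
\cU_i(q,\bmu_{-i},\bp) = F(q)^{n-1} - q^{\beta}.
\]
By Theorem~\ref{thm:exist} a symmetric MNE exists, since $\HM$ is nontrivial for $n\ge 2$. By Proposition~\ref{cl:atomless} its c.d.f.\ $F$ is atomless and strictly increasing on some interval $[0,q_{\max}]$, with $F(q_{\max}) = 1$.

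Next I will pin down the equilibrium payoff level. Every point of the support must yield the same payoff, and the support is $[0,q_{\max}]$. Because $F$ is atomless, $F$ is continuous, so the payoff $F(q)^{n-1} - q^\beta$ is continuous in $q$. This lets me extend the indifference to the endpoints, including $q=0$, without worrying about measure-zero exceptions. At $q=0$ we have $F(0)=0$ (no atom), so the common payoff equals $0$. Hence
\[
F(q)^{n-1} = q^\beta \quad \text{for all } q\in[0,q_{\max}],
\]
which gives $F(q) = (q^\beta)^{1/(n-1)}$. Evaluating at $q_{\max}$ and using $F(q_{\max})=1$ yields $q_{\max}=1$. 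Outside the support, $F \equiv 1$ for $q\ge 1$.

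As a consistency check, I will verify directly that this $F$ is indeed an equilibrium. On $[0,1]$ every pure deviation earns exactly $0$. For $q>1$ a deviation earns $1-q^\beta<0$. Therefore no mixed deviation does better than $0$. Uniqueness then follows immediately from Theorem~\ref{thm:uniq-symm}; it is also visible from the derivation, since any symmetric MNE was forced to equal this $F$.

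The only delicate step is justifying that the indifference holds pointwise on the whole interval, down to $q=0$, rather than merely $\mu$-a.e. I will handle this using continuity of $F$ and the strict monotonicity from Proposition~\ref{cl:atomless}. Together these make every point of $[0,q_{\max}]$ a limit of support points at which the payoff equals the equilibrium value.
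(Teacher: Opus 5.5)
Your proposal is correct and follows essentially the paper's route: the paper defers to the cited work, which it describes as reducing the equilibrium indifference condition to a first-degree equation in $F$ and solving it, and your argument is exactly the specialization $h(x,\mathbf{p})=x^{n-1}$, $u_0=p_n=0$, $q_{\max}=(p_1-p_n)^{1/\beta}=1$ of the paper's own uniqueness argument $F(q)=h^{-1}(p_n+q^\beta;\mathbf{p})$. Two of your steps are sound: the continuity argument that extends indifference from $\mu$-a.e.\ points to all of $[0,q_{\max}]$ (including $q=0$), and the direct verification that this $F$ is an equilibrium, which also makes your appeal to the general existence theorem redundant.
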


The proof of Theorem~\ref{thm:single-equi-hm} relies on specific properties of the $\HM$ policy, which simplify the equilibrium conditions to a first-degree equation in the cumulative distribution function $F(\cdot)$.  
Solving this equation directly yields the desired result.

We further observe that the user welfare induced by the $\HM$ policy is given by the following closed-form expression.

\begin{proposition}\label{thm:single-hm}
    Under the $\HM$ policy, consider the unique symmetric mixed equilibrium $\bbmu$ characterized by Theorem~\ref{thm:single-equi-hm}. Then,
    \begin{align}
        \OBJ(\bbmu; \HM) 
        = \alpha \frac{\beta n}{\beta n + n-1} +  (1-\alpha)\frac{\beta}{\beta + n-1}.\label{eq:sw-hm}
    \end{align}
\end{proposition}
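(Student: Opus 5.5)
The plan is to compute the two terms of $\OBJ(\bbmu;\HM)$ directly from the closed-form equilibrium c.d.f. in Theorem~\ref{thm:single-equi-hm}, using Lemma~\ref{lem:derivation-welfare-payoff}. Afterwards I will cross-check the result against the reduced formulation of Theorem~\ref{thm:prop-user}.

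Under $\HM$ we have $p_1=1$ and $p_i=0$ for $i\ge2$. Lemma~\ref{lem:derivation-welfare-payoff} then gives $\cW(\bbmu,\HM)=\Exu{\bq\sim\bbmu}{q_{(1)}}$ and $\cQ(\bbmu,\HM)=\Exu{q\sim\mu}{q}$. By Theorem~\ref{thm:single-equi-hm}, $F(q)=q^{\beta/(n-1)}$ on $[0,1]$. This is a valid atomless c.d.f. reaching $1$ at $q=1$, which is consistent with Proposition~\ref{cl:atomless} with $q_{\max}=1$.

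For platform quality I use the tail formula for a nonnegative random variable supported on $[0,1]$:
\[
\cQ=\int_0^1 \bigl(1-F(q)\bigr)\,dq = 1-\frac{n-1}{\beta+n-1}=\frac{\beta}{\beta+n-1}.
\]
For user welfare, the $n$ qualities are i.i.d., so the maximum $q_{(1)}$ has c.d.f. $F(q)^n=q^{n\beta/(n-1)}$. The same tail formula gives
\[
\cW=1-\frac{n-1}{n\beta+n-1}=\frac{n\beta}{n\beta+n-1}.
\]
Taking the combination $\alpha\cW+(1-\alpha)\cQ$ yields exactly~\eqref{eq:sw-hm}.

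As a cross-check that does not use the explicit equilibrium (this is the "side product" mentioned in the footnote to Theorem~\ref{thm:prop-user}), I will evaluate $G(\HM)$. Under $\HM$, $h(x,\bp)=a_1(x)=x^{n-1}$. The first term is
\[
n\int_0^1 x^{(n-1)(1+1/\beta)}dx=\frac{n}{(n-1)(\beta+1)/\beta+1}=\frac{n\beta}{n\beta+n-1},
\]
and the second term is
\[
\int_0^1 x^{(n-1)/\beta}dx=\frac{\beta}{\beta+n-1}.
\]
These match the direct computation.

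The only real care point is confirming that the equilibrium support is exactly $[0,1]$, so that the tail integrals are taken over the correct range. This follows from $F(1)=1$, and also from the indifference condition: the payoff at $q=0$ is $p_n=0$, which forces $c(q_{\max})=p_1=1$. Everything else is a routine power integral.
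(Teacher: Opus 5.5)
Your proposal is correct and follows the paper's proof essentially verbatim: both read off $F(q)=q^{\beta/(n-1)}$ from Theorem~\ref{thm:single-equi-hm} and evaluate the tail integrals of $F$ for platform quality and of $F^n$ for the maximum order statistic (user welfare). Your additional check via the reduced objective with $h(x,\bp)=x^{n-1}$ is consistent with the footnote to Theorem~\ref{thm:prop-user}, but it is not needed for the proof.
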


\begin{proof}
    By Theorem~\ref{thm:single-equi-hm}, we have $F(q) = q^{\beta/(n-1)}$.
    Let $F_{\max}$ be the c.d.f. of the largest order statistics over the random contents.
    Then, observe that the induced user welfare of $\HM$ is equivalent to the expectation of the largest order statistics among $q_1,\ldots, q_n$.
    Thus, we can write
    \begin{align*}
        \OBJ(\bbmu; \HM) 
        &= \alpha \Exu{\bq \sim \bmu}{\max_{j \in [n]}q_j} + (1-\alpha) \Exu{q \sim F}{q}
        \\
        &= \alpha\int_{0}^1 (1-F_{\max}(x))dx + (1-\alpha)\int_0^1 (1-F(x))dx.
    \end{align*}
    To compute $\int_{0}^1 (1-F_{\max}(x))dx$, notice that
    \begin{align*}
        F_{\max}(x) = \Pr{q_j \le x, \forall j \in [n]} = \parans{x^\beta}^{ n/(n-1)},
    \end{align*}
    concluding
    \begin{align*}
        \int_{0}^{1} (1-F_{\max}(x))dx 
        &= \int_{0}^{1} \parans{1-\parans{x^\beta}^{n/(n-1)}} dx
        \\
        &=
        1 - \frac{n-1}{\beta n +  n - 1} 
        \\
        &=
         \frac{\beta n}{\beta n + n-1},
    \end{align*}
    For $\int_0^1 1-F(x) dx$, we have
    \begin{align*}
        \int_0^1 (1-F(x))dx 
        &= \int_0^1 1 - x^{\beta/(n-1)}dx
        \\
        &= 1 - \frac{1}{\beta/(n-1) + 1} [x^{\beta/(n-1) + 1}]\Big|^1_0
        \\
        &= \frac{\beta}{n-1 + \beta},
    \end{align*}
    and combining the two terms finishes the proof.
\end{proof}



On the other hand, if one considers an arbitrary policy beyond $\HM$, characterizing the exact equilibrium becomes significantly more challenging. The equilibrium condition turns into an equation of degree $\Theta(n)$ in both $F(q)$ and $q$, making it difficult to analyze precisely, as obtaining exact solutions for high-degree polynomial equations is generally impossible.

Nonetheless, in our analysis, we show that the value of the objective function can be characterized purely as a function of $\bp$, \emph{without explicitly characterizing} the equilibrium.  
This allows us to re-derive Proposition~\ref{thm:single-hm} without relying on the specific form of $F(\cdot)$, as demonstrated in Section~\ref{sec:opt}.

\section{Preliminaries for Section~\ref{sec:existence}}

\subsection{Preliminaries on Topology}\label{sec:topology}

\subsubsection{Basic of Topology}
\paragraph{Topological space}
A topology on $X$ is a collection (topology) $\cT$ of subsets of $X$ satisfying the following:
\begin{enumerate}
    \item $\emptyset \in \cT$ and $X \in \cT$.
    \item $\cT$ is closed under any (possibly infinite) union of members of $\cT$.
    \item $\cT$ is closed under any finite intersection of members of $\cT$.
\end{enumerate}

\paragraph{Open sets} The members of $\cT$, \ie each $U \in \cT$, are called \emph{open sets}.
{A representative example is the standard topology with $X = \R$, where $\cT$ is any union of open intervals in $\R$ with respect to one-dimensional Euclidean metric.}

The pair $(X,\cT)$ is called a \emph{topological space}, where we often write $X$ as a topological space if underlying topology $\cT$ is clear from the context.
A subset $U \subset X$ is closed in $\cT$ if $X \setminus U \in \cT$.

\paragraph{Neighborhood, Hausdorff space, and continuous function}
{
Given a topological space $X$ and $x \in X$, a subset $V \subseteq X$ is a neighborhood of $x$ if there exists an open set $U \in \cT$ such that $x \in U$ and $U \subset V$.
}
We say two points $x,y \in X$  can be separated by neighborhoods if there exists a neighborhood $U$ of $x$ and $V$ of $y$ such that $U \cap V = \emptyset$.
$X$ is \emph{Hausdorff space} if any two distinct points $x,y \in X$ are separated by neighborhoods.

A function $f: X \to Y $ between topological spaces is called \emph{continuous} if for every point $x \in X$, and for every neighborhood $V$ of $f(x)$ in $Y$, there exists a neighborhood  $U$ of $x$ in $X$ such that $f(U) \subseteq V$. 

In our problem setup, the action space $X$ is an Euclidean metric space, which is a well-known to be a {Hausdorff space~\citep{kelley2017general}.
Looking forward, we will use the existential results with a so-called Hausdorff game (to be defined) by~\cite{reny1999existence}, which requires the action space to be the Hausdorff space.}

\paragraph{Basis of topology}{
Given a topological space $(X,\cT)$, a collection $\cB$ of open sets from $\cT$ is called a basis of $\cT$ if every open set $U \in \cT$ can be expressed as a union of elements from $\cB$.
Equivalently, for any open set $U \in \cT$ and any point $x \in U$, there exists $B \in \cB$ such that $x \in B \subseteq U$.}



For a natural number $n$, let $X_i$ be a topological space for $i \in [n]$. Let $X := \prod_{i \in [n]} X_i$. The product topology on $X$ is the topology whose basis consists of all sets of the form $U_1 \times U_2 \times \ldots \times U_n$, where $U_i$ is an open set in $X_i$ for $i \in [n]$.


\subsubsection{Dual Space and Weak$^*$ Topology}
Our analysis relies on the \emph{metrizability} of the space of mixed strategies, viewed as a topological space equipped with the L\'evy-Prokhorov metric (to be defined shortly). Intuitively, it can be technically challenging to reason directly about open neighborhoods of mixed strategies (i.e., probability measures over the pure strategy space) using only the abstract topology. A more tractable approach is to metrize this topology using a suitable metric—such as the L\'evy-Prokhorov metric—which induces a natural notion of distance derived from the topology on the pure strategy space. This metric-based perspective enables a more concrete analysis of convergence and continuity in the space of mixed strategies.

To formalize this, we begin with the classical fact that if \( X \) is a compact Hausdorff space, then by the Riesz Representation Theorem, every positive bounded linear functional on \( L(X) \) can be uniquely represented by a finite Borel measure on \( X \), where 
 the space \( L(X) \) is defined as:
\begin{align}\label{eq:C}
L(X) = \{f: X \to \mathbb{R} \mid f \text{ is continuous and linear} \}.
\end{align}
Then, \( L(X) \) is a normed vector space\footnote{A normed vector space is a vector space on which a norm is defined.} under the supremum norm:
\[
\|f\| = \sup_{x \in X} |f(x)|.
\]
Since \( X \) is compact, this norm is finite, and \( L(X) \) is complete—making it a Banach space (see, e.g.,~\citep{parthasarathy2005probability}).



We define the dual space \( L(X)^* \) as the set of all real-valued continuous linear functionals on \( L(X) \), that is,
{
\[
    L(X)^* = \{ \phi: L(X) \to \R \big| \forall f,g \in L(X), \forall \alpha, \beta \in \R, \phi(\alpha f + \beta g) = \alpha \phi(f) + \beta \phi(G) \text{ and $\phi$ is continuous} \} 
\]
}
The dual norm on \( L(X)^* \) is given by
\[
\|\phi\| = \sup \{ |\phi(f)| : f \in L(X), \|f\| \le 1 \}, \qquad \phi \in  L(X)^*\,.
\]
{
In general, a dual space of of a normed vector space $X$ is the set of all continuous linear functionals from $X$ to $\R$ similarly defined as above.
}

\paragraph{Dual space and the space of probability measure}
By the Riesz representation theorem, any \( \phi \in L(X)^* \) corresponds to integration against a signed regular Borel measure on \( X \), where the Riesz representation theorem can be written as follows:
\begin{theorem}[Riesz representation theorem,~\cite{parthasarathy2005probability}]\label{thm:riesz}
    Let $X$ be a compact Hausdorff space and $L(X)$ be the real vector space of all real-valued continuous bounded linear functionals on $X$, as defined in Equation \eqref{eq:C}. 
    If $\phi \in L(X)^*$ is a positive linear functional such that $\phi(f) \ge 0$ for all $f \in L(X)$ with $f(x) \ge 0$ for every $x \in X$ and $||\phi|| = 1$ with the supremum norm, then there exists a unique Borel probability measure $\mu$ on $X$ such that
    \begin{align*}
    \phi(f) = \int_X f d\mu,\qquad 	\forall f \in L(X).
    \end{align*}
\end{theorem}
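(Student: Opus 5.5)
The statement is the Riesz representation theorem, a classical result the paper quotes from \cite{parthasarathy2005probability}. The plan is the standard one: build the measure from the functional through its values on open and compact sets, then show that integration against this measure reproduces $\phi$. Throughout, $L(X)$ is read as the space $C(X)$ of continuous real-valued functions on the compact Hausdorff space $X$, which is the intended meaning.

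\textbf{Constructing the measure.} For each open $U\subseteq X$ define
\[
\mu(U)=\sup\{\phi(f): f\in L(X),\ 0\le f\le 1,\ \operatorname{supp} f\subseteq U\},
\]
and for an arbitrary set $E$ put $\mu^*(E)=\inf\{\mu(U): U\supseteq E \text{ open}\}$.

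The first check is that $\mu^*$ is an outer measure. This needs countable subadditivity on open sets. Given $f\le 1$ supported in $\bigcup_k U_k$, compactness of $\operatorname{supp} f$ lets us pass to a finite subcover. A partition of unity subordinate to that finite cover, which Urysohn's lemma provides on a compact Hausdorff space, splits $f=\sum_k f h_k$ with each piece supported in a single $U_k$. Positivity and linearity of $\phi$ then give $\phi(f)\le\sum_k\mu(U_k)$.

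The second check is that every Borel set is Carathéodory measurable. It suffices to do this for open sets $U$. Take a test set $E$ and an open $V\supseteq E$. Choose $f$ supported in $V\cap U$ with $\phi(f)$ close to $\mu(V\cap U)$. Then choose $g$ supported in $V\setminus\operatorname{supp} f$ with $\phi(g)$ close to $\mu(V\setminus \operatorname{supp} f)$. Since $f+g\le 1$ is supported in $V$, we get
\[
\mu(V)\ge\phi(f)+\phi(g)\ge \mu(V\cap U)+\mu^*(E\setminus U)-\varepsilon .
\]
Restricting $\mu^*$ to the Borel sets then gives a Borel measure. It is outer regular by construction, and inner regularity on open sets follows from the definition through compact supports. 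Finally, $\mu(X)=\phi(1)=\|\phi\|=1$, using positivity and $\|\phi\|=1$. So $\mu$ is a probability measure.

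\textbf{Showing $\phi(f)=\int f\,d\mu$.} This is the main obstacle. By linearity, and because $f+\|f\|$ is nonnegative, it suffices to treat $0\le f\le 1$. First record that $\mu(K)=\inf\{\phi(g): g\ge \mathbf 1_K\}$ for compact $K$; this follows from Urysohn's lemma together with outer regularity.

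Next slice the range of $f$ into the levels $t_j=j/N$ and set $f_j=\min(\max(f-t_{j-1},0),1/N)$, so that $f=\sum_{j=1}^N f_j$. Let $K_j=\{f\ge t_j\}$. Each $N f_j$ lies between $\mathbf 1_{K_j}$ and $\mathbf 1_{K_{j-1}}$. Applying the compact-set formula and the open-set definition to these indicators gives
\[
\frac1N\mu(K_j)\le \phi(f_j)\le \frac1N\mu(K_{j-1}),
\]
and the same bounds hold for $\int f_j\,d\mu$. Summing over $j$ shows that $\phi(f)$ and $\int f\,d\mu$ differ by at most $\mu(X)/N$. Letting $N\to\infty$ gives equality.

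\textbf{Uniqueness.} Suppose two regular Borel probability measures integrate every continuous function to the same value. Approximating $\mathbf 1_K$ from above by Urysohn functions shows they agree on every compact set, and regularity then makes them agree on all Borel sets. Since $X$ is compact, every finite Borel measure on it is automatically regular, so this covers all candidates.
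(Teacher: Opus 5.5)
The gap is in the last sentence of your uniqueness step. Your existence argument is fine: subadditivity via a finite partition of unity, measurability of open sets via $f,g$ with disjoint supports, and the slicing $\mathbf 1_{K_j}\le Nf_j\le\mathbf 1_{K_{j-1}}$, which traps both $\phi(f)$ and $\int f\,d\mu$ in an interval of length $\mu(X)/N$. (One small point there: $\mu(K)\le\phi(g)$ for $g\ge\mathbf 1_K$ does not come from Urysohn. It comes from $K\subseteq\{g>1-\delta\}$ and $h\le g/(1-\delta)$ for any $0\le h\le 1$ supported in that set.)

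The problem is the claim that every finite Borel measure on a compact Hausdorff space is regular. That is false, and uniqueness among \emph{all} Borel probability measures actually fails in that generality. Take the compact Hausdorff space $X=[0,\omega_1]$ of ordinals up to the first uncountable one, with the order topology. Let $\mu(B)=1$ if $B\cap[0,\omega_1)$ contains a closed unbounded subset of $[0,\omega_1)$, and $\mu(B)=0$ otherwise; this is the Dieudonn\'e measure. Every continuous $f$ on $X$ equals $f(\omega_1)$ on some tail $(\alpha,\omega_1)$, and that tail has $\mu$-measure $1$. Hence $\int f\,d\mu=f(\omega_1)=\int f\,d\delta_{\omega_1}$, yet $\mu(\{\omega_1\})=0$. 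So the statement as written cannot be proved. What your Urysohn-plus-regularity argument correctly establishes is uniqueness among \emph{regular} Borel probability measures.

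To get what the paper needs, you have two options. You can add \emph{regular} to the uniqueness claim. Or you can assume $X$ is metrizable, which is the setting of the cited Parthasarathy book and the only case the paper uses ($X=[0,1]$ or $[0,B]$). In the metric case no regularity is needed. For closed $F$, the continuous functions $\max\{0,1-n\,d(x,F)\}$ decrease to $\mathbf 1_F$, so by dominated convergence two representing measures agree on every closed set. Closed sets form a $\pi$-system generating the Borel $\sigma$-algebra, so the $\pi$--$\lambda$ theorem gives agreement on all Borel sets. Equivalently, every finite Borel measure on a metric space is regular, which makes your closing sentence true in that setting.
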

Let $M$ denote the set of (Borel) probability measures over a compact Hausdorff space $X$, \ie the space of mixed strategies.
Then, the Riesz representation theorem asserts the one-to-one correspondence\footnote{Note that the reverse direction is immediate as given a Borel probability measure $\mu$ on $X$, once we write $\phi(f) = \int_X f d\mu$ for $f \in L(X)$, this is a positive linear functional on $L(X)^*$.} between the space of positive linear functional $\phi$ on $L(X)$ with $\norm{\phi} = 1$ and $M$, \ie the space of mixed strategy.
Such equivalence will be helpful in endowing the space of mixed strategies with a proper topology:
\begin{definition}[Weak$^*$ topology]\label{def:weak-star-topo}
    {
        Let $X$ be a normed vector space over $\R$ and $X^*$ be its dual space, \ie the space of all linear continuous functionals $f: X \to \R$.
        For each $x \in X$, define a mapping $g_x: X^* \to \R$ as follows: $g_x(f) = f(x)$ for every $f \in X^*$.
        The weak$^*$ topology on $X^*$ is defined as the coarsest topology on $X^*$ (with the fewest number of open sets) such that all mappings $g_x$ for every $x \in X$ are continuous.
    }
    
\end{definition}
A seminal consequence of the weak$^*$ topology is that a sequence of functionals $(f_i)$ in $X^*$ converges to a functional $f \in X^*$ in the weak$^*$ topology if $f_i(x)$ converges to $f(x)$ for every $x \in X$ in a pointwise manner~\citep{kelley2017general}.

Throughout we consider the set of probability measures endowed with the weak$^*$ topology.

\paragraph{Metrizability}
A topological space $(X,\cT)$ is called \emph{metrizable} if its topology $\cT$ is induced by a metric.
Formally, if there exists a metric $d$ on the set such that the open sets defined by $d$\footnote{{Given a metric space $(X,d)$, a subset $U \subseteq X$ is called open if for all $x \in U$, there exists $\eps > 0$ such that $B_d(x,\eps) \subseteq U$ where $B_d(x,\eps) = \{y \in X: d(x,y) < \eps\}$.}} 
If $X$ is a metric space, it is known that $L(X)^*$ is metrizable with a proper metric, stated as follows. 
\begin{theorem}[\cite{narici2010topological}]\label{thm:equivalence-weak-prok}
If \( X \) is a metric space and \( M \) is a bounded subset of its dual space \( L(X)^* \) (the space of bounded linear continuous functionals), then \( M \), when endowed with the weak\(^*\) topology, is a metrizable topological space. That is, it is homeomorphic\footnote{{A function $h:X \to Y$ between two topological spaces (with some topologies endowed) if it is a bijection, continuous, and also its inverse is continuous. If such a function exists, we say two topological spaces are homeomorphic.}} to a metric space under the L\'evy-Prokhorov metric induced by the metric on \( X \).

\end{theorem}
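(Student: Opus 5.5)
The statement to prove is Theorem~\ref{thm:equivalence-weak-prok}: a bounded subset $M$ of $L(X)^*$, with the weak$^*$ topology, is metrizable, and the L\'evy-Prokhorov metric induced by the metric on $X$ realizes this topology. In our application $X=[0,B]$ is compact and $M$ is the set of Borel probability measures, identified with positive norm-one functionals via Theorem~\ref{thm:riesz}. I will therefore prove the statement in that setting, where $X$ is a compact (hence separable) metric space. The argument has two parts. First, weak$^*$ topologies on bounded sets are metrizable whenever the predual is separable. Second, on probability measures the L\'evy-Prokhorov metric generates the same convergence, and hence the same topology.

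\textbf{Step 1 (abstract metrizability).} Since $X$ is compact metric, the space of continuous functions on $X$ with the sup norm is separable (Stone--Weierstrass with polynomials in distance functions to a countable dense set). Fix a dense sequence $(f_k)$ in the unit ball and define
\[
\rho(\phi,\psi)=\sum_k 2^{-k}\,|\phi(f_k)-\psi(f_k)|,
\]
which is finite because $M$ is bounded by some $R$. I will show that the $\rho$-topology coincides with the weak$^*$ topology on $M$.
\begin{itemize}
\item Each map $\phi\mapsto\phi(f_k)$ is weak$^*$ continuous and the series converges uniformly on $M$, so $\rho$-balls are weak$^*$ open.
\item Conversely, a basic weak$^*$ neighbourhood $\{\psi:|\psi(g_j)-\phi(g_j)|<\varepsilon,\ j\le m\}$ contains a $\rho$-ball. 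Approximate each $g_j$ by some $f_{k_j}$ within $\varepsilon/(3R)$; the uniform bound on $M$ absorbs the approximation error, and a small $\rho$-ball controls the finitely many $|\psi(f_{k_j})-\phi(f_{k_j})|$.
\end{itemize}
Separation of points follows from density of $(f_k)$, so $\rho$ is a genuine metric.

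\textbf{Step 2 (identification with L\'evy-Prokhorov).} Let $d_{LP}$ denote the L\'evy-Prokhorov metric on $M$. I will show that for probability measures, $d_{LP}(\mu_n,\mu)\to0$ if and only if $\int f\,d\mu_n\to\int f\,d\mu$ for all continuous $f$; this is the portmanteau/Prokhorov argument.
\begin{itemize}
\item \emph{Weak implies LP.} Use the portmanteau inequality $\limsup_n \mu_n(F)\le\mu(F)$ for closed $F$. Cover $X$ by finitely many sets of diameter less than $\varepsilon$ (compactness), and apply the inequality to the closed $\varepsilon$-enlargements of the finitely many unions of cells. This yields $\mu(A)\le\mu_n(A^\varepsilon)+\varepsilon$ uniformly in $A$ for large $n$.
\item \emph{LP implies weak.} For Lipschitz $f$, the bound $|\int f\,d\mu_n-\int f\,d\mu|\lesssim (\mathrm{Lip}(f)+\|f\|_\infty)\,d_{LP}$ follows by integrating over level sets $\{f>t\}$ together with their enlargements. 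Lipschitz functions are dense in the continuous functions.
\end{itemize}
Since both topologies are metrizable by Step 1 and by the definition of $d_{LP}$, sequential equivalence gives equality of topologies, so the identity map is a homeomorphism.

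The main obstacle is the uniformity over all Borel sets $A$ in the direction from weak convergence to LP convergence. Compactness of $X$, via the finite cover, is what reduces this to finitely many closed sets, so I expect that step to need the most care.
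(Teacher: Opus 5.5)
The paper gives no proof of this statement. It is imported by citation to Narici--Beckenstein, so your argument is a self-contained substitute rather than a reconstruction of the paper's reasoning.

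Your two-step decomposition is the standard one and is sound. Step 1 is the functional-analytic fact the citation points to: bounded subsets of the dual of a separable normed space are weak$^*$-metrizable, via your $\rho$. Step 2 is the separate classical Prokhorov/Dudley fact that the L\'evy--Prokhorov metric metrizes weak convergence of probability measures. The citation by itself does not deliver Step 2.

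Beyond self-containment, your route yields a statement that is actually true. Restricting to compact $X$ and to $M$ the Borel probability measures is not mere convenience:
\begin{itemize}
\item As literally written (arbitrary metric $X$, arbitrary bounded $M$), the claim fails. For $X=\mathbb{R}$ the bounded continuous functions form a non-separable space, so the unit ball of the dual is not weak$^*$-metrizable.
\item The L\'evy--Prokhorov metric is only defined on probability measures in the first place.
\end{itemize}
Your version covers the paper's only use: $X=[0,1]$, with L\'evy--Prokhorov balls serving as weak$^*$ neighbourhoods in Lemma~\ref{cl:diag}. The uniformity over Borel $A$ that you flag as the main obstacle is correctly handled by the finite cover.

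One detail to tidy in Step 2. The closed-set inequality $\limsup_n \mu_n(F)\le\mu(F)$, run through your finite cover, gives $\mu_n(A)\le\mu(A^{c\varepsilon})+\varepsilon$ for a fixed constant $c$. That bounds $\mu_n$ by $\mu$, which is not the inequality $\mu(A)\le\mu_n(A^{\varepsilon})+\varepsilon$ you wrote, and the paper's definition of the metric requires both directions. There are two easy fixes:
\begin{itemize}
\item Rerun the same argument with the open-set form $\liminf_n\mu_n(G)\ge\mu(G)$.
\item Use the fact that, for probability measures, one inequality at radius $\delta$ for all Borel sets implies the other at the same radius. Apply it to $B=X\setminus A^{\delta}$ and use $B^{\delta}\subseteq X\setminus A$.
\end{itemize}
The growth of the enlargement radius to a fixed multiple of $\varepsilon$ is harmless.
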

Note that the L\'evy-Prokhorov metric is defined as follows.
\begin{definition}[L\'evy-Prokhorov metric]\label{def:levy}
    Let $(X,d)$ be a metric space {given a distance function (metric) $d$} with its Borel sigma algebra $\cB(X)$.  Let $\cP(X)$ denote the collection of all Borel probability measures on the measurable space $(M, \cB(X))$.
    For a subset $A \subseteq X$, define the $\eps$-neighborhood of $A$ by
    \begin{align*}
        A^\eps = \{p \in X : \exists q \in A, d(p,q) < \eps\} = \cup_{p \in A} B_\eps(p),
    \end{align*}
    where $B_\eps(p)$  is the open ball of radius $\eps$ at $p$, {\ie $B_{\eps}(p) = \{q \in X: d(p,q) < \eps\}$}.
    Then, the L\'evy-Prokhorov metric $\pi: \cP(M)^2 \to [0,\infty]$ is defined by setting the distance between two probability measures $\mu$ and $\nu$ to be

    \[
        \pi(\mu, \nu) := \inf \left\{ \epsilon > 0 :
        \begin{array}{l}
        \mu(A) \le \nu(A^\epsilon) + \epsilon \\
        \nu(A) \le \mu(A^\epsilon) + \epsilon
        \end{array}
        \quad \text{for all } A \in \mathcal{B}(X) \right\}.
    \]

Intuitively, \(\pi(\mu, \nu)\) is the smallest \(\epsilon\) such that for every measurable set \(A\), the measure \(\mu\) of \(A\) is close to the measure \(\nu\) of the \(\epsilon\)-expanded version of \(A\), and vice versa. This captures how much the two probability distributions differ in terms of how they spread mass over points that are close together in the metric \(d\). 
\end{definition}
Thus, Theorem~\ref{thm:equivalence-weak-prok} immediately asserts that L\'evy-Prokhorov metric metrizes the weak$^*$ topology on the space of bounded linear functionals, which in fact is the space of mixed strategies by Theorem~\ref{thm:riesz}, \ie any open ball with respect to the L\'evy-Prokhorov metric is an open set in the weak$^*$ topology.
{
The following proposition summarizes the preliminaries we require to prove the existential result.
}
\begin{proposition}\label{prop:summary-topo}
   The family of mixed strategies $X^*$ over an action space $X$, where $X$ is a metric space, can be endowed with the weak$^*$ topology, which can be metrized by the L\'evy–Prokhorov metric. 
    In particular, if $\phi : X \to \mathbb{R}$ is bounded and continuous, then the map $h : \mu \mapsto \int \phi \, d\mu$ is continuous with respect to the weak$^*$ topology on $X^*$.

\end{proposition}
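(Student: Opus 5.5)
The plan is to reduce the proposition to two facts already recorded above: the Riesz representation theorem (Theorem~\ref{thm:riesz}) and the metrizability result (Theorem~\ref{thm:equivalence-weak-prok}).

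First, I restrict to the setting used for the existence proof: the action space is a compact interval $X=[0,B]\subset\R_{\ge0}$ with the Euclidean metric. This is a compact metric space, hence compact Hausdorff. I read $L(X)$ as the Banach space of continuous real-valued functions on $X$ under the sup norm, which is what the Riesz theorem requires.

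Theorem~\ref{thm:riesz} then identifies the set $M$ of Borel probability measures on $X$ with the set of positive functionals $\phi\in L(X)^*$ with $\|\phi\|=1$, via $\phi_\mu(f)=\int_X f\,d\mu$. The converse direction is immediate: integration against a probability measure is positive, linear, and has norm $\phi_\mu(\mathbf 1)=1$. Under this identification, $M$ is a subset of the unit sphere of $L(X)^*$, so it is bounded. I endow it with the subspace weak$^*$ topology of Definition~\ref{def:weak-star-topo}.

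\emph{Continuity of integration maps.} Let $\varphi:X\to\R$ be bounded and continuous, so $\varphi\in L(X)$. The map $h:\mu\mapsto\int\varphi\,d\mu$ is precisely the evaluation map $g_\varphi:\phi\mapsto\phi(\varphi)$, restricted to $M$. By definition, the weak$^*$ topology is the coarsest topology making every such $g_\varphi$ continuous. Restriction to a subspace preserves continuity, so $h$ is continuous. This part is essentially definitional.

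\emph{Metrizability.} Since $M$ is a bounded subset of $L(X)^*$ and $X$ is a metric space, Theorem~\ref{thm:equivalence-weak-prok} says that $M$ with the weak$^*$ topology is metrizable by the L\'evy--Prokhorov metric $\pi$ induced by the Euclidean metric on $X$ (Definition~\ref{def:levy}). To make this concrete I would add two remarks.
\begin{itemize}
\item Bounded subsets of a dual space are weak$^*$-metrizable once the predual is separable. Here $L(X)=C([0,B])$ is separable, since polynomials with rational coefficients are dense by Weierstrass.
\item On $M$, weak$^*$ convergence coincides with weak (narrow) convergence of probability measures, $\int f\,d\mu_k\to\int f\,d\mu$ for all bounded continuous $f$. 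On a separable metric space this is exactly convergence in $\pi$, by the standard Prokhorov equivalence. Hence the two topologies on $M$ agree, not merely their convergent sequences.
\end{itemize}

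I expect the main obstacle to be careful bookkeeping rather than a genuine mathematical difficulty. The full action space $\R_{\ge0}$ is not compact, so Riesz in the stated form needs the truncation to $[0,B]$. That truncation is justified by utility bounds: any quality $q$ with $q^\beta>1$ is strictly dominated by $q=0$, so $B=1$ suffices, and the existence proof can invoke this. One also has to check that the weak$^*$ topology restricted to $M$ is the topology metrized by $\pi$, and not merely one with the same convergent sequences. I would handle this point by the separability argument above, which ensures sequential arguments determine the topology on bounded sets.
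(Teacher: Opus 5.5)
Your proposal is correct and follows essentially the same route as the paper, which supports this proposition only by assembling its preliminaries: the Riesz representation theorem (Theorem~\ref{thm:riesz}) on the compact action space identifies mixed strategies with norm-one positive functionals, the definition of the weak$^*$ topology gives continuity of $\mu\mapsto\int\phi\,d\mu$, and the cited metrizability result (Theorem~\ref{thm:equivalence-weak-prok}) handles bounded subsets. Your extra remarks supply hypotheses that the paper's statement for a general metric space leaves implicit: you restrict to the compact interval $[0,1]$, and you note that separability of $C([0,B])$ is what makes bounded sets weak$^*$-metrizable and lets the L\'evy--Prokhorov metric metrize narrow convergence; these are exactly the conditions the paper actually uses.
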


\subsection{Preliminaries on Games and Existential Results}\label{sec:prelim}

Technically speaking, to argue the existence of a symmetric MNE, we need to analyze the expected payoff of mixed strategies that lie within an open neighborhood of a symmetric strategy in order to apply the argument by~\cite{reny1999existence}.
To this end, we rigorously define the notion of an open neighborhood around a mixed strategy (i.e., a probability measure on a compact set) using the weak$^*$ topology, as formalized in Definition~\ref{def:weak-star-topo}.

Since reasoning directly over the topological space can be challenging, we exploit the fact that the space of probability measures on a compact Hausdorff space is metrizable under the L'evy-Prokhorov metric. This enables us to treat the topological space as a metric space—namely, we analyze the topology of probability measures on $X$ (endowed with the weak$^*$ topology) using a metric induced from the original topological (metric) space $X$.

For readers unfamiliar with basic topological concepts, we refer to Section~\ref{sec:topology} for background material.

\subsubsection{Game and Mixed Extension}
A game $G = (X_i, u_i)_{i \in [n]}$ consists of $n$ players, a set of pure strategy $X_i$ and payoff functions $u_i: X \to \R$ where $X = \times_{i \in [n]} X_i$ for $i \in [n]$.
We assume throughout that $X_i$ is a compact Hausdorff space, and we say the game is \emph{Hausdorff} game in such case.
Throughout, the product of any number of sets is endowed with the product topology.
Let $M_i$ denote the set of (Borel) probability measures\footnote{A Borel measure on a topological space is a measure that is defined on all open sets.} on $X_i$ for $i \in [n]$, \ie the set of mixed strategies over $X_i$.

Given a game $G = (X_i, u_i)_{i \in [n]}$ and the set of mixed strategies $M_i$  defined above, we extend each $u_i$ to have its domain as $M = \times_{i \in [n]} M_i$ by defining
\begin{align*}
    u_i(\mu) = \int_X u_i(x) d\mu,
\end{align*}
for all $\mu \in M$.
We denote by $\bar{G} = (M_i, u_i)_{i \in [n]}$ the mixed extension of $G$.

\paragraph{Quasi-symmetric game}
Consider a \emph{game} $G = (X_i, u_i)_{i \in [n]}$ (which is possibly a mixed extension of another original game $G$').
We say that $G$ is a \emph{quasi-symmetric} game if $X = X_1 = \ldots X_n$ and for any $x, y \in X$, it follows that
\begin{align*}
    u_1(x,y,\ldots, y) = u_2(y,x,y,\ldots, y) = \ldots = u_n(y, \ldots, y, x).
\end{align*}
Define a quasi-symmetric game's \emph{diagonal} payoff function $v:X \to \R$ as \[v(x) = u_1(x,\ldots, x) = u_n(x, \ldots, x)\] for every $x \in X$.

\subsubsection{Better-reply Secureness}
In proving the existence of a symmetric MNE, the notion of better-reply secure is crucial.
To that end, we first introduce a notion of securing a payoff.

\begin{definition}[Secure a payoff]\label{def:secure}
    Given a quasi-symmetric game $G = (X, u_i)_{i \in [n]}$, player $i$ can \emph{secure a payoff of $\alpha \in \R$ along the diagonal} at $(x,\ldots, x)$ if there exists $\bar{x} \in X$ such that \[u_i(x', \ldots, \bar{x}, \ldots, x') \ge \alpha\] for every $x'$ in some open neighborhood of $x \in X$.\footnote{When dealing with $G$ as a mixed extension of another original game, the open neighborhood refers to the open neighborhood with respect to the underlying topology over the space of mixed strategies.}
\end{definition}

The following diagonally better-reply secureness guarantees that any player could secure a strictly larger payoff along the diagonal, given a symmetric strategy profile which is not an equilibrium.

\begin{definition}[Diagonally better-reply secure]\label{def:better-reply}
    Given a quasi-symmetric game $G = (X, u_i)_{i \in [n]}$, let $(x, v(x)) \in X \times \R$ be the pair of an action $x \in X$ and diagonal payoff function $v(x)$.
    Then, $G$ is \emph{diagonally better-reply secure} if for any $(x, v(x))$ with $x \in X$ such that $(x,\ldots, x)$ is not an equilibrium, there exists a player $i$ who can secure a payoff strictly above $v(x)$ along the diagonal at $(x,\ldots, x)$ per  Definition \ref{def:secure}.  
\end{definition}


The seminal result by~\cite{reny1999existence} prove that any quasi-symmetric game possesses a symmetric MNE under the diagonally better-reply secureness and semi-continuity of the payoff function.

\begin{theorem}[Corollary 5.3,~\cite{reny1999existence}]\label{thm:reny}
    Suppose $G = (X_i, u_i)_{i \in [n]}$ is a quasi-symmetric game with compact Hausdorff action space.
    Then, $G$ has a symmetric MNE if its mixed extension $\bar{G}$ is diagonally better-reply secure and each $u_i(\mu, \ldots, \mu)$ is upper semi-continuous over $\mu$. 
\end{theorem}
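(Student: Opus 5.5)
Since this is Corollary~5.3 of \cite{reny1999existence}, we could simply cite it. To keep the argument self-contained, we would reprove it directly on the mixed extension $\bar G$, arguing by contradiction. Suppose $\bar G$ has no symmetric equilibrium. The common strategy space $M$ is the set of Borel probability measures on the compact Hausdorff space $X$, endowed with the weak$^*$ topology. It is convex, and it is compact by Banach--Alaoglu together with Theorem~\ref{thm:riesz}. It sits inside the locally convex space $L(X)^*$ with its weak$^*$ topology.

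First, for each $\mu\in M$, the profile $(\mu,\ldots,\mu)$ is not an equilibrium, so diagonal better-reply security (Definition~\ref{def:better-reply}) gives three objects:
\begin{itemize}
\item a deviation $\bar\mu_\mu\in M$;
\item a payoff level $\alpha_\mu>v(\mu)$;
\item an open neighborhood $W_\mu$ of $\mu$ with $u_1(\bar\mu_\mu,\mu',\ldots,\mu')\ge\alpha_\mu$ for all $\mu'\in W_\mu$.
\end{itemize}
Next we shrink the neighborhood using upper semi-continuity of the diagonal payoff $v(\mu)=u_1(\mu,\ldots,\mu)$. The set $\{\mu' : v(\mu')<\alpha_\mu\}$ is open and contains $\mu$. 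Setting $U_\mu := W_\mu\cap\{v<\alpha_\mu\}$ therefore gives an open neighborhood of $\mu$ on which simultaneously $v(\mu')<\alpha_\mu$ and the deviation $\bar\mu_\mu$ secures at least $\alpha_\mu$. This is exactly where upper semi-continuity enters: it replaces Reny's condition on the closure of the graph of $v$.

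By compactness, finitely many $U_{\mu_1},\ldots,U_{\mu_K}$ cover $M$. Compact Hausdorff spaces are normal, so we may take a continuous partition of unity $\psi_1,\ldots,\psi_K$ subordinate to this cover. We then define
\[
f(\mu):=\sum_{j=1}^K \psi_j(\mu)\,\bar\mu_{\mu_j}.
\]
This map $f:M\to M$ is continuous in the weak$^*$ topology, because finite convex combinations with continuous weights are continuous in a topological vector space. By the Schauder--Tychonoff fixed point theorem for compact convex subsets of locally convex spaces, $f$ has a fixed point $\mu^*=f(\mu^*)$.

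Finally, we use that $u_1$ in the mixed extension is affine in player~1's own measure, since it is an integral against the product measure. This gives
\[
v(\mu^*)=u_1\Big(\sum_j\psi_j(\mu^*)\bar\mu_{\mu_j},\mu^*,\ldots,\mu^*\Big)=\sum_j\psi_j(\mu^*)\,u_1(\bar\mu_{\mu_j},\mu^*,\ldots,\mu^*).
\]
Every $j$ with $\psi_j(\mu^*)>0$ has $\mu^*\in U_{\mu_j}$, so $u_1(\bar\mu_{\mu_j},\mu^*,\ldots,\mu^*)\ge\alpha_{\mu_j}>v(\mu^*)$. Hence $v(\mu^*)>v(\mu^*)$, a contradiction. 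Quasi-symmetry ensures that using player~1 is without loss of generality, since the player who secures the payoff in the definition can be relabeled as player~1.

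The main obstacle is the infinite-dimensional fixed point step. It requires verifying that $M$ is a compact convex subset of a Hausdorff locally convex space, and that $f$ is continuous there. We would handle this through the weak$^*$ topology on $L(X)^*$ via Theorem~\ref{thm:riesz}. Where $X$ is metric, as in our application, we could instead work through the L\'evy--Prokhorov metrization of Theorem~\ref{thm:equivalence-weak-prok}. A minor subtlety is measurability of the affine extension in the own strategy, which is immediate from Fubini for bounded measurable $u_i$.
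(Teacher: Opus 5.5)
Your argument is correct, but it takes a different route from the paper. The paper does not prove this statement at all; it imports it from \cite{reny1999existence}, where it follows by applying Reny's symmetric pure-strategy existence theorem to the mixed extension. That extension is compact in the weak$^*$ topology and quasiconcave, since payoffs are linear in the own measure. Reny assumes no upper semi-continuity. Instead, his diagonal better-reply security is imposed at every point $(\mu,\alpha)$ in the closure of the graph of $v$, and his proof has to cope with such limit points. The paper's formulation asks for security only at graph points $(\mu,v(\mu))$ but adds upper semi-continuity of $v$. Under that hypothesis every closure point has $\alpha\le v(\mu)$, so the paper's version is a special case of Reny's.

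Your proof uses exactly this extra hypothesis. Intersecting the securing neighborhood with the open set $\{v<\alpha_\mu\}$ gives, on a single open set, both a uniform secured payoff and a uniform upper bound on $v$. The partition-of-unity selection and a fixed point then give the contradiction in a few lines, with linearity of the mixed payoff in the own measure replacing Reny's quasiconcavity. Reny's route buys generality; yours buys a short, self-contained argument in which the role of each hypothesis is transparent.

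Two small remarks. First, relabeling the securing player as player~$1$, and even defining $v$, uses quasi-symmetry of $\bar G$ itself, not just of $G$. Pure quasi-symmetry only constrains profiles where all opponents play the same pure action, and for $n\ge 3$ this need not transfer to mixed profiles. This is implicit in the hypothesis that $\bar G$ is diagonally better-reply secure, and it holds trivially in the paper's fully symmetric application.

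Second, the infinite-dimensional fixed-point step you flag as the main obstacle can be avoided. Your map $f$ takes values in the convex hull of the finitely many measures $\bar\mu_{\mu_1},\ldots,\bar\mu_{\mu_K}$. Apply Brouwer's theorem to $t\mapsto(\psi_1(\sum_k t_k\bar\mu_{\mu_k}),\ldots,\psi_K(\sum_k t_k\bar\mu_{\mu_k}))$ on the $(K-1)$-simplex. This already produces $\mu^*=\sum_k t_k^*\bar\mu_{\mu_k}$ with $f(\mu^*)=\mu^*$. It needs only continuity of $t\mapsto\sum_k t_k\bar\mu_{\mu_k}$ into $M$, not Schauder--Tychonoff or local convexity.
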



\section{Omitted Proofs in Section~\ref{sec:existence}}
\subsection{Proof of Theorem~\ref{thm:exist}}\label{apd:thm:exist}
We begin with the following result, establishing that the induced game is both quasi-symmetric and a Hausdorff game (as defined in Section~\ref{sec:prelim}).
\begin{lemma}\label{cl:sym}
    The induced game given a nontrivial policy  is both quasi-symmetric and a Hausdorff game (as defined in Section~\ref{sec:prelim}).
\end{lemma}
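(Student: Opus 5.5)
The plan is to verify the two claims of Lemma~\ref{cl:sym} separately: first the Hausdorff (compactness) property of the action spaces, then quasi-symmetry of the payoffs. This covers both the pure game and its mixed extension, as Theorem~\ref{thm:reny} requires.

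\textbf{Hausdorff game.} The action space is $\R_{\ge 0}$, so we first compactify it. Any $q$ with $c(q)=q^\beta>p_1-p_n$ is strictly dominated by $q=0$. Indeed, playing $0$ guarantees revenue at least $p_n$, while $q$ yields revenue at most $p_1$ minus a cost exceeding $p_1-p_n$. Hence, with $B:=1$ (note $p_1-p_n\le 1$, so $B=(p_1-p_n)^{1/\beta}\le 1$ suffices), we may restrict each player to $X=[0,B]$ without losing or creating equilibria. Any equilibrium of the truncated game is an equilibrium of the original game, since deviations above $B$ are unprofitable. $[0,B]$ with the Euclidean metric is a compact metric space, hence compact Hausdorff. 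The mixed strategy space $M$ of Borel probability measures on $[0,B]$ is, by Prokhorov's theorem and Proposition~\ref{prop:summary-topo}, compact and metrizable under the L\'evy--Prokhorov metric in the weak$^*$ topology, hence also compact Hausdorff.

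\textbf{Quasi-symmetry.} For pure profiles, producer $i$'s payoff is $u_i(\bq)=\cR_i(\bq)-q_i^\beta$. Under uniform tie-breaking, $\cR_i(\bq)$ is the average of $p_k$ over the ranks $k$ that $i$ can occupy among its tie class. This depends only on $q_i$ and the multiset $\{q_j\}_{j\ne i}$. Consequently, for any permutation $\sigma$ we have $u_{\sigma(i)}(\bq\circ\sigma^{-1})=u_i(\bq)$. In particular, $u_1(x,y,\ldots,y)=u_2(y,x,y,\ldots,y)=\cdots=u_n(y,\ldots,y,x)$. For mixed strategies, integrate against the product measure: $u_i(\mu_1,\ldots,\mu_n)=\int u_i\,d(\mu_1\otimes\cdots\otimes\mu_n)$. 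Fubini together with the permutation invariance of product measures transfers the identity to $u_1(\nu,\mu,\ldots,\mu)=\cdots=u_n(\mu,\ldots,\mu,\nu)$. Measurability of $u_i$ is immediate, since it is a finite combination of indicators of Borel sets $\{q_i>q_j\}$ and $\{q_i=q_j\}$ plus a continuous cost.

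\textbf{Main obstacle.} The only delicate step is writing the tie-breaking revenue explicitly enough to see its permutation invariance. If $i$ has $a$ strictly better opponents and $t$ tied opponents, then
\[
\cR_i = \tfrac{1}{t+1}\sum_{k=a+1}^{a+t+1} p_k .
\]
This expression is symmetric in the opponents. A secondary point is checking that the truncation to $[0,B]$ is harmless, which the dominance argument above handles.
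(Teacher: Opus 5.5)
Your proof is correct and follows the same route as the paper. Quasi-symmetry comes from the producers' identical cost functions together with anonymous uniform tie-breaking, and compactness comes from truncating the action space to $[0,1]$, because any $q>1$ does worse than $q=0$. Your strict-dominance argument (utility at least $p_n$ at $q=0$, versus strictly below $p_n$ whenever $q^\beta>p_1-p_n$) is somewhat more careful than the paper's, which argues via the support of a symmetric MNE: yours justifies the truncation for all profiles without presupposing an equilibrium, and you also make the tie-breaking revenue and the mixed-extension symmetry explicit.
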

\begin{proof}

    The quasi-symmetry immediately follows from the fact that every producer is homogeneous in a sense that they have the same cost function and thereby the same payoff as a function over the quality. This is true even if 
    the mixed strategy is not  atomless as we assume uniform random tie-breaking rule. 

    To prove that the game is Hausdorff, it suffices to show that its action space is compact.  
Given any policy, for any symmetric MNE, each provider must receive nonnegative expected utility for every action \( q \) in the support of the MNE, since playing \( q = 0 \) yields utility zero.  

On the other hand, if the support contains \( q > 1 \), the expected utility becomes strictly negative due to the cost function as $c(q) = q^\beta > 1$.   
Therefore, it is without loss of generality to restrict the action space to the interval \([0,1]\), which is compact.

\end{proof}

Next, we prove that the utility function of each producer, viewed as a function over mixed strategies, is upper semi-continuous at any symmetric strategy profile.
As discussed in Section~\ref{sec:topology}, to handle continuity with respect to the mixed strategy $\mu$—that is, a probability measure—we consider the space of mixed strategies endowed with the weak$^*$ topology. This topology is metrizable via the L'evy-Prokhorov metric, as noted in Proposition~\ref{prop:summary-topo}.
We refer the reader to Section~\ref{sec:topology} for additional background.

\begin{lemma}\label{cl:cont}
    For any nontrivial policy $\bp$, at any symmetric mixed strategy profiles $\bbmu = (\mu,\ldots, \mu)$ 
\end{lemma}
\begin{proof}
    Due to the symmetric of the game, for any symmetric mixed strategy, it is straightforward to see that the overall expected revenue of each producer is simply $1/n$.
    Thus, the expected utility can be written as
    \begin{align*}
        \cU_i(\mu; \bmu_{-i}) = \frac{1}{n} - \int_0^1 c(x)d\mu.
    \end{align*}
    As we endow the space of set of mixed strategies $\mu$ with the weak$^*$ topology, 
     the cost function $c(x)$ is continuous, and thus $\int_0^1 c(x) d\mu$ is continuous in $\mu$ {by Proposition~\ref{prop:summary-topo}}. 
    This concludes that the utility function $\cU_i$ is upper semi-continuous with respect to $\mu$ {as continuity implies upper semi-continuity.}. 

    
\end{proof}

Now, for ease of exposition, we define  
\begin{align}
    R(q; \bbmu_{-i}) = \sum_{i=1}^n p_i \binom{n-1}{i-1}F(q)^{n-i} (1-F(q))^{i-1},\label{eq:r-function}
\end{align}  
which represents the expected revenue from choosing quality level \( q \) when all other producers follow the mixed strategy described by \( F \).  
That is, as noted in Equation~\eqref{eq:rev},  
\begin{align*}
    \cU_i(q; \bbmu_{-i}) = R(q;\bbmu_{-i}) - c(q),
\end{align*}
as presented in Lemma~\ref{lem:derivation-welfare-payoff}, under the assumption that \( F \) has no point masses.  
We often write \( R(q) \) when the strategies of the other producers are clear from context.

The following lemma will be useful in proving diagonally better-reply secureness.

\begin{lemma}\label{cl:r-monotone}
    For any nontrivial policy $\bp$, given the other producers' symmetric mixed strategies $\bbmu_{-i}$ with corresponding c.d.f.\ $F$, the expected revenue $R(q; \bbmu_{-i})$ is monotone increasing in $q$.

\end{lemma}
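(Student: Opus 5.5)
The statement to prove is that $R(q;\bbmu_{-i})$ is monotone increasing in $q$. The plan is to factor $R$ through the composition $q \mapsto F(q) \mapsto \phi(F(q))$, where
\begin{align*}
\phi(t) = \sum_{i=1}^n p_i \binom{n-1}{i-1} t^{n-i}(1-t)^{i-1} = h(t,\bp).
\end{align*}
Since $F$ is a c.d.f., it is nondecreasing in $q$. It therefore suffices to show that $\phi$ is nondecreasing on $[0,1]$, and strictly increasing when $\bp$ is nontrivial. Monotonicity of the composition then gives the claim. If $F$ has atoms, the uniform tie-breaking revenue at an atom is a convex combination of the one-sided limits of $\phi(F)$, so it stays sandwiched between them and monotonicity persists.

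For $\phi$, I would use the probabilistic interpretation. Let $Y\sim\mathrm{Bin}(n-1,t)$ count the opponents with quality below $q$. The producer's rank is then $n-Y$, and
\begin{align*}
\phi(t)=\mathbb{E}[p_{n-Y}] .
\end{align*}
Because $p_1\ge\cdots\ge p_n$, the map $y\mapsto p_{n-y}$ is nondecreasing in $y$. Binomial distributions are stochastically increasing in $t$, since one can couple $Y_t\le Y_{t'}$ for $t<t'$ using uniform variables. Hence $\phi$ is nondecreasing.

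Alternatively, the same conclusion follows from the standard Bernstein derivative identity:
\begin{align*}
\phi'(t) = (n-1)\sum_{i=1}^{n-1}(p_i-p_{i+1})\binom{n-2}{i-1}t^{n-1-i}(1-t)^{i-1}.
\end{align*}
Every term is nonnegative. For strictness, nontriviality yields some $i$ with $p_i>p_{i+1}$, which makes $\phi'(t)>0$ on $(0,1)$.

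I expect no real obstacle. The only care needed is the tie-breaking, or atom, case, which I would dispatch via the convex-combination remark above. In addition, once Proposition~\ref{cl:atomless} is available, $F$ is continuous, so that case does not even arise at equilibrium.
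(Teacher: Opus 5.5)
Your proposal is correct and is essentially the paper's proof: the paper writes $R(q;\bbmu_{-i})=\mathbb{E}[p_{Y_q+1}]$ with $Y_q\sim\text{Binomial}(n-1,1-F(q))$ counting opponents above $q$ (the complement of your $Y$), and then concludes by first-order stochastic dominance together with $p_1\ge\cdots\ge p_n$, exactly as you do. Your Bernstein-derivative alternative is the formula the paper proves separately in Lemma~\ref{lm:h-monotone} for strict monotonicity, and your atom remark goes slightly beyond the paper, which simply defines $R$ by the formula $h(F(q),\bp)$.
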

\begin{proof}
    Fix $q$, and consider a random variable $Y_q \sim \text{Binomial}(n-1, 1-F(q))$.
    Then 
    \begin{align*}
        \Pr{Y_q = i-1} = \binom{n-1}{i-1}F(q)^{n-i} (1-F(q))^{i-1}.
    \end{align*}
    Therefore, by Equation \eqref{eq:r-function}, we have
    \begin{align*}
        R(q; \bbmu_{-i})
        &= \sum_{i=1}^n p_i \Pr{Y_q = i-1} 
        \\
        &= \sum_{i=1}^n p_i \Pr{i = Y_q + 1} 
        \\
        &= \Ex{p_{Y_q+1}}.
    \end{align*}
Now if $q \ge q'$, then $F(q) \ge F(q')$, and thus $Y_{q}$ is stochastically dominated by $Y_{q'}$.
    Hence, we have $\Ex{f(Y_q)} \ge \Ex{f(Y_{q'})}$ for any non-increasing function $f$, \eg see~\cite{mas1995microeconomic}. 
    Since $p_{Y+1}$ is non-increasing in $Y$ as $p_1 \ge p_2 \ge \ldots \ge p_n$, we have $\Ex{p_{Y_q + 1}} \ge \Ex{p_{Y_{q'} + 1}}$, which finishes the proof. 
\end{proof}

To prove the existence of the symmetric MNE, the final step is to prove the diagonally better-reply secureness by~\cite{reny1999existence}; see Theorem \ref{thm:reny}. 
\begin{lemma}\label{cl:diag}
    For any policy $\bp$ with $p_1 \ge p_2 \ge \ldots \ge p_n$, the induced game is diagonally better-reply secure per Definition \ref{def:better-reply}.
\end{lemma}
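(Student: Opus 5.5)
We verify Definition~\ref{def:better-reply} directly on the mixed extension, restricted to the compact action space $[0,1]$ from Lemma~\ref{cl:sym}. Fix a symmetric profile $(\mu,\ldots,\mu)$ that is not an equilibrium. By Lemma~\ref{cl:cont}, its diagonal payoff is $v(\mu)=1/n-\int c\,d\mu$. Since it is not an equilibrium, some pure deviation $\bar q\in[0,1]$ gives player $i$ a payoff $\cU_i(\bar q;\bmu_{-i})>v(\mu)+2\eta$ for some $\eta>0$. A mixed deviation can always be replaced by a point in its support that does at least as well. The task is to find a (possibly modified) deviation $\bar q'$ and a Lévy--Prokhorov ball $B$ around $\mu$ such that $\cU_i(\bar q';\bmu'_{-i})>v(\mu)+\eta$ for every $\mu'\in B$. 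We also need $\cU_i$ evaluated with uniform tie-breaking.

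\textbf{Step 1: reduce to a continuity point.} By Lemma~\ref{cl:r-monotone}, the revenue $R(q;\bmu_{-i})$ is nondecreasing in $q$. The cost $c(q)=q^\beta$ is continuous. Note that $\mu$ has at most countably many atoms, and that near an atom, with ties broken uniformly, nudging $q$ slightly upward weakly increases revenue: it converts ties into wins and so moves from the averaged prize to the higher prize. It only costs $c(\bar q+\delta)-c(\bar q)$, which is arbitrarily small. Hence we may pick $\bar q'\in(\bar q,\bar q+\delta)$ that is not an atom of $\mu$, with $\cU_i(\bar q';\bmu_{-i})>v(\mu)+\tfrac32\eta$. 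If $\bar q=1$ we instead nudge downward. At $q=1$ the payoff is at most $p_1-1\le 0$, and a symmetric profile always has positive diagonal payoff unless it is trivial, so this case either cannot arise or is handled by lowering $q$ to a continuity point just below, where revenue drops only slightly by left-continuity of $F$ off atoms.

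\textbf{Step 2: secure the payoff on a neighborhood.} At a non-atom $\bar q'$, the map $\mu'\mapsto F_{\mu'}(\bar q')$ is continuous in the weak$^*$ topology. This follows from the portmanteau theorem, since $\{\bar q'\}$ has $\mu$-measure zero, and can be seen concretely via the Lévy--Prokhorov metric: $|F_{\mu'}(\bar q')-F_\mu(\bar q')|\le \pi(\mu,\mu')+\mu([\bar q'-\epsilon,\bar q'+\epsilon])$. Against opponents playing $\mu'$, the revenue of the deviation $\bar q'$ is bounded below by the no-tie expression
\[
\sum_{k=1}^n p_k\binom{n-1}{k-1}F_{\mu'}(\bar q'^-)^{n-k}(1-F_{\mu'}(\bar q'^-))^{k-1},
\]
because ties only help relative to losing. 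This is a continuous polynomial in $F_{\mu'}(\bar q'^-)$, which is likewise close to $F_\mu(\bar q')$. Choosing the ball radius small enough makes this revenue exceed $R(\bar q';\bmu_{-i})-\eta/2$. The payoff is then at least $v(\mu)+\eta$ throughout $B$, establishing diagonal better-reply security.

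The main obstacle is Step 1's handling of atoms and tie-breaking. Weak$^*$ convergence does not control the CDF at atoms, so the argument genuinely needs the perturbation to a continuity point together with the monotonicity from Lemma~\ref{cl:r-monotone}. The boundary case $\bar q=1$ also needs a small separate check.
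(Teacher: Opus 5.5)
Your argument is essentially the paper's. You take a strictly profitable pure deviation, move it to a point that is not an atom of $\mu$, and then secure the payoff on a L\'evy--Prokhorov ball using the ``ties count as losses'' lower bound on revenue. The paper does the last step with an explicit one-sided estimate, $\mu'(\{q<q^{\mathrm{sec}}\})\ge F(q^{\mathrm{sec}}-\varepsilon)-\varepsilon$, combined with monotonicity of the revenue polynomial. You use portmanteau at a $\mu$-null boundary point instead; the two are interchangeable. Your justification of the upward nudge (tie-broken revenue is monotone in own quality) actually supplies a step the paper merely asserts from the countability of atoms.

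The one step that is wrong as written is the endpoint case $\bar q=1$. Write $U(q)$ for the tie-broken payoff of pure $q$ against $(\mu,\ldots,\mu)$.

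\begin{itemize}
\item Your claim that a symmetric profile has positive diagonal payoff is false: $v(\delta_1)=1/n-1<0$.
\item Lowering $q$ below an atom at $1$ does not lose revenue ``only slightly'': you forfeit the whole tie share.
\end{itemize}

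For example, take $n=2$, $(p_1,p_2)=(1,0)$ and $\mu=\tfrac12\delta_1+\tfrac12\delta_a$ with $\tfrac12<a^\beta<1$. Then $U(1)=\tfrac34-1=-\tfrac14>v(\mu)=-a^\beta/2$, yet $U(q)=\tfrac12-q^\beta\to-\tfrac12<v(\mu)$ as $q\uparrow 1$. So your proposed downward move fails exactly in the case it was meant to handle.

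The repair is easy and uses the bound you already wrote down. If $U(1)>v(\mu)$, then
\[
v(\mu)<U(1)\le p_1-1\le 0\le p_n\le U(0),
\]
so $0$ is itself a strictly profitable deviation. You can then nudge upward from $0$ to a non-atom exactly as in your interior case. Alternatively, work on $[0,B]$ with $B>1$, though the top endpoint then needs the same $U(0)$ argument. The paper's proof does not address this endpoint at all.
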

\begin{proof} 
Let $\mu$ be a mixed strategy that does \emph{not} constitute a symmetric MNE, and let $u = \cU(\mu; \mu, \ldots, \mu)$ denote the corresponding diagonal payoff. Per Definition~\ref{def:better-reply}, to establish diagonal better-reply security, it suffices to find another strategy $\mu'$ such that $\cU(\mu'; \mu'', \ldots, \mu'') > u$ for any $\mu''$ in an open neighborhood $B_\delta(\mu)$, for some $\delta > 0$, with respect to the L\'evy-Prokhorov metric. Here, $B_\delta(\mu) = \{\nu : d(\mu, \nu) < \delta\}$ where $d(\cdot, \cdot)$ denotes the L\'evy-Prokhorov metric. 
Since $\mu$ is not an equilibrium, there exists a pure strategy $q^{\text{dev}} \in \R$ such that
    \[
        \cU(q^{\text{dev}}; \mu, \ldots, \mu) > \cU(\mu; \mu, \ldots, \mu) = u.
    \]
   To see why \( q^{\text{dev}} \) must exist, suppose for the sake of contradiction that no pure strategy strictly improves the payoff. Then, for any mixed strategy \( \nu \), we have  
\[
\cU(\nu; \mu, \ldots, \mu) = \int_{q \in \supp(\nu)} \cU(q; \mu, \ldots, \mu) \, d\nu(q) \le \int_{q \in \supp(\nu)} u \, d\nu(q) = u,
\]
which implies that no mixed strategy can strictly increase the payoff---a contradiction. Therefore, such a pure strategy \( q^{\text{dev}} \) must exist.

To construct a secure deviation, we perturb   $q^{\text{dev}}$ to a nearby pure strategy $q^{\text{sec}}$ such that $\mu$ does not have an atom at $q^{\text{sec}}$ and
    \[
        \cU(q^{\text{sec}}; \mu, \ldots, \mu) > u.
    \]
    This is  possible since any probability distribution has at most countably many atoms.
    Let $F$ be the c.d.f. corresponding to $\mu$. 
    Since $q^{sec}$ does not have an atom, by~\eqref{eq:r-function}
    \[
        R(q^{\text{sec}}; \mu, \ldots, \mu) = \sum_{i=1}^n p_i \binom{n-1}{i-1} F(q^{\text{sec}})^{n-i} (1 - F(q^{\text{sec}}))^{i-1},
    \]
    and thus
    \[
        \cU(q^{\text{sec}}; \mu, \ldots, \mu) = R(q^{\text{sec}}; \mu, \ldots, \mu) - c(q^{\text{sec}}) > u.
    \]
    Due to the continuity of $F$, we can find a sufficiently small $\varepsilon > 0$ such that  $\mu$ does not have an atom at $q^{\text{sec}} - \varepsilon$ {and satisfies}
    \begin{align}
        \sum_{i=1}^n p_i \binom{n-1}{i-1} F(q^{\text{sec}} - \varepsilon)^{n-i} (1 - F(q^{\text{sec}} - \varepsilon))^{i-1} - c(q^{\text{sec}}) > u. \label{eq:eps-slack}
    \end{align}
    This holds because as \( \varepsilon \to 0 \), the left-hand side converges to \( \cU(q^{\text{sec}}; \mu, \ldots, \mu) > u \). Moreover, since any distribution has at most countably many atoms, there must exist such a sufficiently small \( \varepsilon \) for which \( \mu \) has no atom in the interval \( [q^{\text{sec}} - \varepsilon, q^{\text{sec}}] \).
    Note that such $\eps$ can be found arbitrarily small since if we consider any small neighborhood $[q^{sec}-\eps', q^{sec}]$ of $q^{sec}$, there exists at most countably many atoms on it, and thus we can find sufficiently small $\eps$ with no atom on it.
    We will prove that $q^{sec}$ is the desired strategy that secures payoff larger than $u$ along the diagonal at $\bmu$.

       Let $A := \{q': q' < q^{\text{sec}}\}$ and $A^\varepsilon := \{q': q' < q^{\text{sec}} - \varepsilon\}$. Then $F(q^{\text{sec}}) = \mu(A)$ and $F(q^{\text{sec}} - \varepsilon) = \mu(A^\varepsilon)$. Rewriting~\eqref{eq:eps-slack}, we get
    \[
        \Gamma := \sum_{i=1}^n p_i \binom{n-1}{i-1} \mu(A^\varepsilon)^{n-i} (1 - \mu(A^\varepsilon))^{i-1} - c(q^{\text{sec}}) > u.
    \]

    Let $A' := \{q': \exists q \in A^\varepsilon \text{ s.t. } |q' - q| < \varepsilon\}$. By the definition of the L\'evy-Prokhorov metric (see Definition \ref{def:levy}), for any $\mu' \in B_\varepsilon(\mu)$,
    \[
        \mu(A^\varepsilon) \le \mu'(A') + \varepsilon.
    \]
    Since any $q' \in A'$ satisfies $q' < q^{\text{sec}}$, we also have
    \[
        \mu'(A) \ge \mu'(A') \ge \mu(A^\varepsilon) - \varepsilon.
    \]

    Therefore, for any $\mu' \in B_\varepsilon(\mu)$, the payoff satisfies:
    \begin{align*}
        \cU(q^{\text{sec}}; \mu', \ldots, \mu') 
        &\ge \sum_{i=1}^n p_i \binom{n-1}{i-1} \mu'(A)^{n-i}(1 - \mu'(A))^{i-1} - c(q^{\text{sec}}) \\
        &\ge \sum_{i=1}^n p_i \binom{n-1}{i-1} (\mu(A^\varepsilon) - \varepsilon)^{n-i}(1 - \mu(A^\varepsilon) + \varepsilon)^{i-1} - c(q^{\text{sec}}) \\
        &= \Gamma + \Delta,
    \end{align*}
    where the second inequality follows from the monotonicity in Lemma~\ref{cl:r-monotone} and
    \[
        \Delta := \sum_{i=1}^n p_i \binom{n-1}{i-1} \left[ (\mu(A^\varepsilon) - \varepsilon)^{n-i}(1 - \mu(A^\varepsilon) + \varepsilon)^{i-1} - \mu(A^\varepsilon)^{n-i}(1 - \mu(A^\varepsilon))^{i-1} \right].
    \]
    Since $\Delta \to 0$ as $\varepsilon \to 0$, we can choose $\varepsilon$ small enough so that $\Delta > -(\Gamma - u)$, implying
    \[
        \cU(q^{\text{sec}}; \mu', \ldots, \mu') > u.
    \]
    This concludes the proof.

\end{proof}

\begin{proof}[Proof of Theorem~\ref{thm:exist}]
    Finally, the proof immediately follows from Theorem~\ref{thm:reny} combined with lemmas~\ref{cl:sym},~\ref{cl:r-monotone},~\ref{cl:cont}, and~\ref{cl:diag}.    
\end{proof}


\subsection{Proof of Proposition~\ref{prop:no-pne}}\label{apd:no-pne}
\begin{proof}
    Given a policy $\bp$, suppose there exists a PNE $q_1 \ge q_2\ge \ldots \ge q_n$.
    If $q_i \neq 0$ but $p_i = 0$, producer $i$ can effectively deviates to $q_i = 0$ while decreasing the cost, which is a contradiction.
    Thus, if $q_i \neq 0$ then we always have $p_i \neq 0$.
    Let $i$ be the largest index where $q_i \neq 0$.
    Suppose there exists no tie among producers $j \in [i-1]$, \ie $q_1>q_2>\ldots > q_i$.
    Any producer $j \in [i]$ can deviate to $q_j(1-\eps)$ for sufficiently small $\eps > 0$ while respecting $q_j(1-\eps) > q_{j+1}$, which only decreases the cost.
    Thus, this cannot be an equilibrium.
    Suppose now that there exists a tie at $j$ such that $q_j = q_{j+1}$.
    Suppose there exists $k \ge 2$ such agents where $j$ be the smallest index, \ie $q_j = q_{j+1} = \ldots = q_{j+k-1}$.
    Due to the uniformly random tie-breaking rule, all the producers from $j$ to $j+k-1$ obtain revenue of $(p_j + p_{j+1} + \ldots p_{j+k-1})/k$.
    If $p_j = p_{j+k-1}$, then producer $j$ can slightly decrease the quality while still receiving the revenue of $p_{j+k-1} = (p_j + p_{j+1} + \ldots p_{j+k-1})/k$ and decreasing the cost.
    Thus, $p_j \neq p_{j+k-1}$.
    In this case, producer $j$ can instead increase the quality from $q_j$ to $q_j + \eps$ and solely obtain $p_j > (p_j + p_{j+1} + \ldots p_{j+k-1})/k$ while only increasing cost function sufficiently small so that the overall utility increases due to the continuity of the payoff function.
    This finishes the proof.
\end{proof}

\subsection{Proof of Proposition~\ref{cl:atomless}}\label{apd:atomless}
\begin{proof}


   To prove that the strategy is atomless, fix an agent \( i \).  
Assume, for the sake of contradiction, that there exists a point \( p \) such that agent \( i \) plays \( p \) with positive probability \( \alpha > 0 \).  
Let \( p' = p + \varepsilon \) for some \( \varepsilon > 0 \).  
Let \( E \) be the event that all other producers \( j \in [n] \setminus \{i\} \) also play \( p \), \ie, \( \Pr{E} = \alpha^{n-1} \).

Conditioned on \( E \), playing \( p \) yields expected revenue \( \nicefrac{1}{n} \).  
On the other hand, playing \( p' \) yields expected revenue at least \( p_1 > \nicefrac{1}{n} \) for any \( \varepsilon > 0 \), since \( \bp \) is nontrivial.  
Thus, by the continuity of the cost function, agent \( i \) can deviate to \( p + \varepsilon \) for sufficiently small \( \varepsilon \) and strictly increase their utility under the event \( E \).

Since this deviation does not decrease the revenue under \( E^c \), one can find sufficiently small \( \varepsilon \) to strictly improve agent \( i \)'s utility overall—a contradiction.

To prove that the distribution function \( F \) is strictly increasing, suppose that \( (a,b) \subseteq [q_{\min}, q_{\max}] \) is the subinterval of maximal length such that \( F \) remains constant on \( (a,b) \), where
\[
q_{\max} = \sup\{y : F(y) < 1\}.
\]
Note that \( q_{\min} \geq 0 \) and \( q_{\max} \leq 1 \) because playing quality \( 1 \) yields nonpositive utility.

Clearly, any agent placing mass on \( b \) can move that mass to \( a \), thereby strictly increasing their utility—a contradiction.
Hence, \( F \) must be strictly increasing on its support.
Similarly, if $a > 0$, any agent can move its mass from $a$ to $0$ to strictly increase their utility, which is a contradiction.
So, $F$ is strictly increasing on $[0,q_{\max}]$.
\end{proof}

\subsection{Proof of Lemma~\ref{lem:derivation-welfare-payoff}}
\begin{proof}
    From Proposition~\ref{cl:atomless}, we know that $\mu$ is atomless, \ie tie-breaking happens with probability $0$.
    Thus, we can write the producer $i$'s payoff as follows:
    \begin{align}\notag
        \cU_i(q, \bbmu_{-i},\bp)
        &= \cR_i(q, \bbmu_{-i}, \bp)  - c(q)
        \\
        &= \Ex{\parans{\sum_{i \in [n]}\Ind{I = i}} - c(q)}\notag
        \\
        &= \parans{\sum_{i \in [n]}\Pr{\text{Provider $i$'s content is recommended to user $i$}}} - c(q) \notag
        \\
        &= \parans{\sum_{i=1}^n p_{i} \binom{n-1}{i-1}F(q)^{n-i}(1-F(q))^{i-1}} - c(q). \label{eq:rev}
    \end{align}
    For the user welfare, let $E_i$ be the random event such that the chosen content has the $i$-th highest quality among $[n]$ and $I$ be the  variable that indicates the index of the content chosen by the RS.
    Expanding using the definition of $\bp$, we obtain: 
    \begin{align*}
        \cW(\bbmu, \bp)  
        &= \Exu{\bq \sim \bmu}{\sum_{i=1}^n\Ind{I= i} \cdot q_i}
        \\
        &= \Exu{\bq \sim \bmu}{\sum_{i=1}^n\Ind{E_i} \cdot q_{(i)}}
        \\
        &= \sum_{i=1}^n p_{i} \Exu{\bq \sim \bmu}{q_{(i)}}.
    \end{align*}
    Finally, the equation for the platform quality immediately follows from the fact that $\bmu$ is symmetric strategy profiles.
    
\end{proof}

\ssedit{
\subsection{Proof of Theorem~\ref{thm:uniq-symm}}
\begin{proof}
Recall first that
\begin{align*}
    a_i(x) &= \binom{n-1}{i-1}x^{\,n-i}(1-x)^{i-1}
    \\
    h(x,p) &= \sum_{i=1}^n p_i\,a_i(x)\quad(x\in[0,1]).    
\end{align*}
By our existence result for symmetric MNE, a symmetric equilibrium $\mu$ exists for every nontrivial policy $p$.
By~\eqref{eq:06121508}, we have that
\begin{equation}\label{eq:10221350}
u_0 + q^\beta\ =\ \sum_{i=1}^n p_i \binom{n-1}{i-1}F(q)^{\,n-i}\bigl(1-F(q)\bigr)^{i-1}
\ =\ h\big(F(q),p\big)\quad,
\end{equation}
for all $q\in[q_{\min},q_{\max}]$ where $q_{\min}$ and $q_{\max}$ denote minimal and maximal value in the support of $F$.
Note that Proposition~\ref{cl:atomless} yields $q_{\min} = 0$.
Thus, plugging $q=0$ into \eqref{eq:10221350} and using $F(0)=0$ gives
$u_0=p_n$.

Recall that the c.d.f. $F$ is atomless and strictly increasing on some interval
$[0,q_{\max}]$ by Proposition~\ref{cl:atomless}.
Hence $h(\cdot,p)$ has a unique continuous inverse on $[h(0,p),h(1,p)]=[p_n,p_1]$. 

Thus, we have
\begin{align}
    F(q)\ =\ h^{-1}\big(p_n+q^\beta\,;p\big),\label{eq:10221356}
\end{align}
for all $q\in[0,q_{\max}]$.
Notice that this determines the c.d.f. $F$ uniquely once its support is given. 
Since $F(0)=0$ we must have $h(0,p)=p_n$,
and since $F(q_{\max})=1$ we must have $p_n+q_{\max}^\beta=h(1,p)=p_1$, i.e.,
$q_{\max}=(p_1-p_n)^{1/\beta}$.

Let us confirm again that the support is always $[0,q_{\max}]$ for $q_{\max} = (p_1 - p_n)^{1/\beta}$.
For any $q>q_{\max}$, all other players’ draws lie in
$[0,q_{\max}]$, so a deviation to $q$ surely earns the top rank and revenue $p_1$ due to atomlessness, while incurring cost $q^\beta$.
Since $q^\beta\ge q_{\max}^\beta=p_1-p_n$, the deviator’s payoff $p_1-q^\beta\le p_n=u_0$, hence there is no
profitable deviation above the support.
If the minimum value in the support of $F$ is $q_{\min} > 0$, then any play at $q_{\min}$ can deviate to $0$ by decreasing its cost and still ensure revenue $p_n$.
Thus, any symmetric equilibrium has the distribution $F$ on the support $[0, q_{\max}]$ for $q_{\max} = (p_1-p_n)^{1/\beta}$.

Finally, to conclude, suppose $\tilde{F}$ be a c.d.f. of another symmetric MNE.
Notice that it needs to satisfy~\eqref{eq:10221356} on the support $[0,q_{\max}]$.
Since $h(\cdot, p)$ is strictly increasing by Lemma~\ref{lm:h-monotone},~\eqref{eq:10221356} implies that $\tilde{F}(q)=h^{-1}(p_n+q^\beta;p)=F(q)$ for all $q$ in the support.
Thus the symmetric MNE is unique.
\end{proof}

}

\section{Omitted Proofs in Section~\ref{sec:opt}}
\subsection{Proof of Theorem~\ref{thm:prop-user}}\label{apd:proof:prop-user}
Before proving the results, we introduce further notations:
\begin{align}
    \psi(x, \bp)
    &= \sum_{i=2}^{n-1} p_i (n-1) \parans{\binom{n-2}{i-1}x^{n-i-1}(1-x)^{i-1} - \binom{n-2}{i-2}x^{n-i}(1-x)^{i-2}} \nonumber
    \\&\quad\quad+ p_1(n-1)x^{n-2} - p_n(n-1)(1-x)^{n-2},\label{eq:psi}
    \\
    \phi(x, \bp) 
    &=  \sum_{i=1}^{n}\parans{1 - \sum_{j=1}^i \binom{n}{j-1}(1-x)^{j-1}(x)^{n-j+1})}p_i \label{eq:phi}
    \\
    &= 1 - \sum_{i=1}^m \binom{n}{i-1}(1-x)^{i-1}x^{n-i+1}(\sum_{j=i}^n p_j), \label{eq:phi2}
\end{align}


Furthermore, we require several technical lemmas.  
We begin by establishing connections between the notations defined above and those introduced in Section~\ref{sec:opt}.

The following is a well-known characterization of the c.d.f. of order statistics for identically and independently distributed (i.i.d.) random variables.

\begin{lemma}[\cite{casella2024statistical}]\label{lem:cdf-orderstat}
    Given $n$ i.i.d. random variables $X_1,..,X_n$, let $F$ be its c.d.f. and $F_{(i)}$ be the c.d.f. of the $i$-th largest order statistic.
    Then, for any $i \in [n]$, 
    \begin{align*}
        F_{(i)}(x) = \sum_{j=1}^i \binom{n}{j-1}(1-F(x))^{j-1}F(x)^{n-j+1}.
    \end{align*}
\end{lemma}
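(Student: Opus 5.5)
The statement is the classical formula for the c.d.f. of the $i$-th largest order statistic. I would prove it by a direct counting argument. Fix $x$ and write $F = F(x)$. For each $k$, let $N(x) = \#\{j \in [n] : X_j > x\}$ count the variables exceeding $x$. Because the $X_j$ are i.i.d. and each satisfies $\Pr{X_j > x} = 1 - F$, we have $N(x) \sim \text{Binomial}(n, 1-F)$. This gives
\[
\Pr{N(x) = k} = \binom{n}{k}(1-F)^k F^{n-k}.
\]

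Next I would identify the event $\{X_{(i)} \le x\}$ with $\{N(x) \le i-1\}$. Since $X_{(i)}$ is the $i$-th largest value, it is at most $x$ exactly when fewer than $i$ of the variables exceed $x$. In one direction, if $X_{(i)} \le x$, then only $X_{(1)},\ldots,X_{(i-1)}$ can exceed $x$. Conversely, if at most $i-1$ variables exceed $x$, then the $i$-th largest cannot exceed $x$. Ties cause no trouble here, because the sorted list counts multiplicity.

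Summing the binomial probabilities over $k = 0,\ldots,i-1$ and reindexing with $j = k+1$ gives
\[
F_{(i)}(x) = \sum_{k=0}^{i-1}\binom{n}{k}(1-F)^{k}F^{n-k} = \sum_{j=1}^{i}\binom{n}{j-1}(1-F(x))^{j-1}F(x)^{n-j+1},
\]
which is the claimed formula. The only step that needs care is the event identity in the second paragraph, and in particular getting the convention right that $X_{(i)}$ denotes the $i$-th \emph{largest} value. The rest is routine binomial bookkeeping.
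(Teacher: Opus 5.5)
Your proof is correct and follows the standard textbook argument that the paper defers to; the paper omits the proof as routine and cites the textbook. The steps are the identity $\{X_{(i)} \le x\} = \{N(x) \le i-1\}$, the fact that $N(x) \sim \mathrm{Binomial}(n, 1-F(x))$, and the reindexing $j = k+1$; ties are harmless as you note, and since you only use that each $X_j$ exceeds $x$ with probability $1-F(x)$, no continuity of $F$ is needed.
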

We omit the proof, which is routine and covered in standard texts.

\begin{lemma}\label{lm:psi-to-h}
The function $\psi(x, \bp)$, defined in Equation~\eqref{eq:psi}, represents the derivative of $h(x, \bp)$ with respect to $x$, where $h(x, \bp) = \sum_{i=1}^n p_i \binom{n-1}{i-1}x^{n-i}(1-x)^{i-1}$ is given in Equation~\eqref{eq:h}.

\end{lemma}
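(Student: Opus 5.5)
We will prove Lemma~\ref{lm:psi-to-h} by differentiating $h(x,\bp)=\sum_{i=1}^n p_i a_i(x)$ term by term and matching the result with~\eqref{eq:psi}. Since $\psi$ is linear in $\bp$, as is $h$, it suffices to compute $a_i'(x)$ for each $i$ and check that the coefficient of $p_i$ in $\psi$ is exactly $a_i'(x)$.

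For an interior index $2\le i\le n-1$, the product rule gives
\[
a_i'(x)=\binom{n-1}{i-1}\Big[(n-i)x^{n-i-1}(1-x)^{i-1}-(i-1)x^{n-i}(1-x)^{i-2}\Big].
\]
We then apply the two binomial identities
\[
\binom{n-1}{i-1}(n-i)=(n-1)\binom{n-2}{i-1},\qquad \binom{n-1}{i-1}(i-1)=(n-1)\binom{n-2}{i-2}.
\]
Both identities follow by writing the binomial coefficients as factorials. With them, $a_i'(x)$ becomes $(n-1)\big[\binom{n-2}{i-1}x^{n-i-1}(1-x)^{i-1}-\binom{n-2}{i-2}x^{n-i}(1-x)^{i-2}\big]$, which is precisely the summand in~\eqref{eq:psi}. (This is the standard Bernstein derivative formula $b_{\nu,m}'=m(b_{\nu-1,m-1}-b_{\nu,m-1})$.)

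The boundary indices are handled separately, since there one of the two terms vanishes.
\begin{itemize}
\item For $i=1$, $a_1(x)=x^{n-1}$, so $a_1'(x)=(n-1)x^{n-2}$.
\item For $i=n$, $a_n(x)=(1-x)^{n-1}$, so $a_n'(x)=-(n-1)(1-x)^{n-2}$.
\end{itemize}
These match the terms $p_1(n-1)x^{n-2}$ and $-p_n(n-1)(1-x)^{n-2}$ in~\eqref{eq:psi}. Summing over all $i$ gives $\partial_x h(x,\bp)=\psi(x,\bp)$.

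The only real obstacle is keeping track of the boundary cases and the index conventions. In particular, $\binom{n-2}{i-1}=0$ at $i=n$ and $\binom{n-2}{i-2}=0$ at $i=1$, and this is what justifies treating those two terms separately in~\eqref{eq:psi}. Everything else is routine algebra.
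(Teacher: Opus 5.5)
Your proof is correct and follows essentially the same route as the paper. Both differentiate $h(x,\bp)$ term by term, rewrite the interior derivatives $a_i'(x)$ for $2\le i\le n-1$ via the factorial identities to obtain the summand in~\eqref{eq:psi}, and treat $a_1'(x)=(n-1)x^{n-2}$ and $a_n'(x)=-(n-1)(1-x)^{n-2}$ separately.
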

\begin{proof}
    Note that
    \begin{align}\label{eq:05031722}
        \frac{dh(x,\bp)}{dx} = \frac{d \sum_{i=1}^n p_i a_i(x)}{dx}.
    \end{align}
    Recall that $ a_i(x) =  \binom{n-1}{i-1}x^{n-i}(1-x)^{i-1}$.  If $i=2,\ldots, n-1$, we have
    \begin{align*}
        \frac{d  a_i(x)}{dx} 
        &= \binom{n-1}{i-1}(n-i) x^{n-i-1}(1-x)^{i-1} - \binom{n-1}{i-1}(i-1) x^{n-i}(1-x)^{i-2}
        \\
        &= \frac{(n-1)!}{(i-1)!(n-i-1)!}x^{n-i-1}(1-x)^{i-1} - \frac{(n-1)!}{(i-2)!(n-i)!}x^{n-i}(1-x)^{i-2}
        \\
        &= (n-1) \parans{\binom{n-2}{i-1}x^{n-i-1}(1-x)^{i-1} - \binom{n-2}{i-2}x^{n-i}(1-x)^{i-2}}.
    \end{align*}
    For $a_1(x)$ and $a_n(x)$, we have
    \begin{align*}
        \frac{d  a_1(x)}{dx} = (n-1)x^{n-2}\,,
    \qquad 
        \frac{d  a_n(x)}{dx} = -(n-1)(1-x)^{n-2}.
    \end{align*}
    Plugging into~\eqref{eq:05031722} finishes the proof.
\end{proof}

The following is a well-known property of Bernstein polynomials, whose proof is provided for completeness.
\begin{lemma}[\cite{lorentz2012bernstein}]\label{lorentz2012bernstein}
    $\int_0^1 a_i(x) dx = 1/n$ for any $i \in [n]$, where  $a_i(x)$ is defined in Equation \eqref{eq:h}.
\end{lemma}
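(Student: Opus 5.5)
The plan is to compute the integral directly via the Beta function. Fix $i \in [n]$. By the definition in~\eqref{eq:h},
\[
a_i(x) = \binom{n-1}{i-1} x^{n-i}(1-x)^{i-1},
\]
so the integral over $[0,1]$ is $\binom{n-1}{i-1}$ times a Beta integral with parameters $n-i+1$ and $i$. We use the standard identity
\[
\int_0^1 x^{a}(1-x)^{b}\,dx = \frac{a!\,b!}{(a+b+1)!}
\]
for nonnegative integers $a,b$. It follows by induction on $b$ via integration by parts: differentiate $(1-x)^b$ and integrate $x^a$; the boundary terms vanish, and the exponent shifts from $(a,b)$ to $(a+1,b-1)$.

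Take $a=n-i$ and $b=i-1$, so that $a+b+1=n$. This gives
\[
\int_0^1 a_i(x)\,dx = \frac{(n-1)!}{(i-1)!\,(n-i)!}\cdot\frac{(n-i)!\,(i-1)!}{n!} = \frac{1}{n}.
\]

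As an alternative sanity check there is a probabilistic route. The quantity $a_i(x)$ is the probability that, among $n-1$ i.i.d.\ uniforms, exactly $n-i$ fall below $x$. Integrating over $x\sim U[0,1]$ therefore gives the probability that an $n$-th independent uniform has rank $i$ among all $n$ uniforms. By exchangeability this probability is $1/n$.

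Either route is routine. The only step needing care is the Beta-integral identity, and the induction above handles it directly.
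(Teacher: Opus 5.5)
Your proof is correct and takes the same route as the paper: both reduce the integral to $\binom{n-1}{i-1}$ times the Beta integral $B(n-i+1,i)=\frac{(n-i)!\,(i-1)!}{n!}$ and cancel factorials to get $1/n$. The paper quotes the Beta identity through the Gamma function, whereas you derive it by integration by parts; your exchangeability argument is a valid independent check.
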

\begin{proof}
Substitute the definition of $a_i(x)$:
\begin{align*}
\int_0^1 a_i(x) dx &= \int_0^1 \binom{n-1}{i-1} x^{n-i}(1-x)^{i-1} dx \\
&= \binom{n-1}{i-1} \int_0^1 x^{n-i}(1-x)^{i-1} dx
\end{align*}
The integral part is a Beta function, which is defined as:
\begin{align*}
    B(p, q) = \int_0^1 t^{p-1}(1-t)^{q-1} dt = \frac{\Gamma(p)\Gamma(q)}{\Gamma(p+q)},
\end{align*}
where $\Gamma(k) = \int_0 t^{k-1} e^{-t}dt$ which is $(k-1)!$ for positive integer $k$.
Thus for positive integers $p$ and $q$, this can be written as $B(p, q) = \frac{(p-1)!(q-1)!}{(p+q-1)!}$.

Comparing the integral $\int_0^1 x^{n-i}(1-x)^{i-1} dx$ with the Beta function definition, notice that the integral part can be written as $B(n-i+1, i)$, which can be written as:
$$B(n-i+1, i) = \frac{\Gamma(n-i+1)\Gamma(i)}{\Gamma((n-i+1)+i)} = \frac{(n-i)!(i-1)!}{(n)!}$$

Now, we also expand the binomial coefficient $\binom{n-1}{i-1}$:
$$\binom{n-1}{i-1} = \frac{(n-1)!}{(i-1)!(n-1-(i-1))!} = \frac{(n-1)!}{(i-1)!(n-i)!}$$

Substitute these expressions back into the integral:
\begin{align*}
\int_0^1 a_i(x) dx &= \frac{(n-1)!}{(i-1)!(n-i)!} \cdot \frac{(n-i)!(i-1)!}{n!} \\
&= \frac{(n-1)!}{n!} \\
&= \frac{(n-1)!}{n \cdot (n-1)!} \\
&= \frac{1}{n},
\end{align*}
which finishes the proof.
\end{proof}

Finally, we require the following relation between $\phi(x,\bp)$ and $h(x,\bp)$.
\begin{lemma}\label{lem:diff-phi}
It follows that
\[
\frac{d\phi(x,\bp)}{dx} = -n h(x,\bp),
\]
where $\phi(x,\bp)$ is defined in Equation~\eqref{eq:phi2} and $h(x,\bp)$ is defined in Equation~\eqref{eq:h}.
\end{lemma}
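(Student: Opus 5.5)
The plan is to work from the first form \eqref{eq:phi} of $\phi$, namely $\phi(x,\bp)=\sum_{i=1}^n p_i\bigl(1-G_i(x)\bigr)$ with
\[
G_i(x):=\sum_{j=1}^i \binom{n}{j-1}(1-x)^{j-1}x^{n-j+1}.
\]
Here $G_i$ is exactly the expression of Lemma~\ref{lem:cdf-orderstat} with $F(x)$ replaced by $x$. By linearity it then suffices to prove the single-index identity
\[
\frac{dG_i(x)}{dx}=n\,a_i(x)=n\binom{n-1}{i-1}x^{n-i}(1-x)^{i-1}\qquad(i\in[n]).
\]
Granting this, $\frac{d\phi}{dx}=-\sum_i p_i\,\frac{dG_i}{dx}=-n\sum_i p_i a_i(x)=-n\,h(x,\bp)$, which is the claim. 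The second form \eqref{eq:phi2} is just a regrouping of \eqref{eq:phi} by exchanging the order of summation and using $\sum_i p_i=1$. I read its upper limit $m$ as $n$, so nothing separate is needed for it.

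To prove the single-index identity, reindex with $k=n-j+1$. Then $G_i(x)=\sum_{k=n-i+1}^{n} b_{k,n}(x)$, where $b_{k,n}(x)=\binom{n}{k}x^k(1-x)^{n-k}$ is the standard Bernstein basis. Probabilistically, $G_i(x)=\Pr[\mathrm{Bin}(n,x)\ge n-i+1]$, which is natural since $G_i(F(q))$ is the c.d.f.\ of the $i$-th largest order statistic. I will use the classical derivative formula
\[
b_{k,n}'(x)=n\bigl(b_{k-1,n-1}(x)-b_{k,n-1}(x)\bigr),
\]
with the conventions $b_{-1,n-1}=b_{n,n-1}=0$. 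It follows from the product rule together with $k\binom{n}{k}=n\binom{n-1}{k-1}$ and $(n-k)\binom{n}{k}=n\binom{n-1}{k}$, the same kind of manipulation as in Lemma~\ref{lm:psi-to-h}.

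Summing the derivative formula over $k=n-i+1,\dots,n$ telescopes. Only the first positive term $n\,b_{n-i,n-1}(x)$ survives, because the last negative term $b_{n,n-1}$ vanishes. Finally, $b_{n-i,n-1}(x)=\binom{n-1}{n-i}x^{n-i}(1-x)^{i-1}=a_i(x)$, as already observed in Section~\ref{sec:opt}. This gives $G_i'=n\,a_i$.

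The only place needing care is the index bookkeeping: matching $j\leftrightarrow k=n-j+1$, checking the boundary cases $i=1$ (a single term $x^n$, with derivative $nx^{n-1}=n\,a_1$) and $i=n$ (where $G_n\equiv 1$ is constant in $x$, yet the telescoping leaves $n\,b_{0,n-1}=n(1-x)^{n-1}=n\,a_n\neq 0$), and confirming consistency with the stated formula \eqref{eq:phi}. I expect no real obstacle beyond this routine verification. Concretely, for $i=n$ the sum over $j=1,\dots,n$ omits $j=n+1$, so $G_n=1-(1-x)^n$ and $G_n'=n(1-x)^{n-1}=n\,a_n$, consistent with the telescoping result.
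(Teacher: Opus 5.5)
Your proof is correct, and it organizes the computation differently from the paper. The paper differentiates the tail-sum form \eqref{eq:phi2} directly. The $i=1$ term gives $-nx^{n-1}$, and the product rule splits the remaining terms into two sums. One of these is shifted by an index so that both multiply $a_i(x)$, and the coefficient of $a_i$ then collapses to $n(\sum_{j\ge i+1}p_j-\sum_{j\ge i}p_j)=-np_i$. The endpoint indices are treated separately, and $\sum_i p_i=1$ is used to cancel the leftover $nx^{n-1}$.

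So the paper telescopes in the tail sums of $\bp$ (summation by parts), while you telescope in the Bernstein index for each fixed $p_i$. The two computations are related by exchanging the order of summation.

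Your route reduces the lemma to a single reusable fact, $G_i'=n\,a_i$. Equivalently, $1-G_i$ is the survival function of the $i$-th largest of $n$ i.i.d.\ uniforms, whose density is $n\,a_i$. It needs no separate bookkeeping at $i=1$ and $i=n$ beyond the conventions $b_{-1,n-1}=b_{n,n-1}=0$. It also uses the same derivative identity the paper records as Lemma~\ref{lem:bernstein-diff-deriv}.

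Your route also does not need normalization. The forms \eqref{eq:phi} and \eqref{eq:phi2} differ only by the $x$-independent constant $1-\sum_i p_i$, so your argument gives the derivative of \eqref{eq:phi2} for every $\bp$. The paper's version has the modest advantage of working literally with the form in which the lemma is stated.

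One cosmetic fix: the parenthetical claim that $G_n\equiv 1$ is false and should be deleted. As you note yourself, the $j$-sum stops at $j=n$, so $G_n=1-(1-x)^n$.
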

\begin{proof}
    We have
    \begin{align*}
        \frac{d\phi(x,\bp)}{dx} 
        = -nx^{n-1} 
        &+ \underbrace{\sum_{i=2}^n\binom{n}{i-1}(i-1)(1-x)^{i-2}x^{n-i+1}\sum_{j=i}^n p_j}_{(A)} 
        \\
        &- \underbrace{\sum_{i=2}^n \binom{n}{i-1}(n-i+1) (1-x)^{i-1}x^{n-i}\sum_{j=i}^n p_j}_{(B)}. 
    \end{align*}
    Note that
    \begin{align*}
        (A)
        &= \sum_{i=2}^n \frac{n!}{(i-1)!(n-i+1)!}(i-1)(1-x)^{i-2}x^{n-i+1}\sum_{j=i}^n p_j
        \\
        &= \sum_{i=2}^n \frac{n!}{(i-2)!(n-i+1)!}(1-x)^{i-2}x^{n-i+1}\sum_{j=i}^n p_j
        \\
        &= n\sum_{i=2}^n \frac{(n-1)!}{(i-2)!(n-i+1)!}(1-x)^{i-2}x^{n-i+1}\sum_{j=i}^n p_j
        \\
        &= n\sum_{i=1}^{n-1} \frac{(n-1)!}{(i-1)!(n-i)!}(1-x)^{i-1}x^{n-i}\sum_{j=i+1}^n p_j
    \end{align*}
    On the other hand,
    \begin{align*}
        (B)
        &= \sum_{i=2}^n \frac{n!}{(i-1)!(n-i+1)!}(n-i+1) (1-x)^{i-1}x^{n-i}\sum_{j=i}^n p_j
        \\
        &= \sum_{i=2}^n \frac{n!}{(i-1)!(n-i)!} (1-x)^{i-1}x^{n-i}\sum_{j=i}^n p_j
        \\
        &= n\sum_{i=2}^n \frac{(n-1)!}{(i-1)!(n-i)!} (1-x)^{i-1}x^{n-i}\sum_{j=i}^n p_j.
    \end{align*}
    Hence, we have 
    \begin{align*}
        (A) - (B) 
        &= nx^{n-1}\sum_{j=2}^np_j - n(1-x)^{n-1}p_n + n\sum_{i=2}^{n-1} \frac{(n-1)!}{(i-1)!(n-i)!}(1-x)^{i-1}x^{n-i} \parans{\sum_{j=i+1}^n p_j - \sum_{j=i}^n p_j}
        \\
        &=
        nx^{n-1}\sum_{j=2}^np_j - n(1-x)^{n-1}p_n - n\sum_{i=2}^{n-1} \frac{(n-1)!}{(i-1)!(n-i)!}(1-x)^{i-1}x^{n-i}p_i
        \\
        &=
        nx^{n-1}\sum_{j=1}^n p_j - nx^{n-1} p_1 - n(1-x)^{n-1}p_n - n\sum_{i=2}^{n-1} \frac{(n-1)!}{(i-1)!(n-i)!}(1-x)^{i-1}x^{n-i}p_i
        \\
        &=
        nx^{n-1}\sum_{i=1}^n p_i - n\sum_{i=1}^{n} \frac{(n-1)!}{(i-1)!(n-i)!}(1-x)^{i-1}x^{n-i}p_i
    \end{align*}
    Thus, we have
    \begin{align*}
        (A) - (B) = nx^{n-1} - n\sum_{i=1}^n a_i(x)p_i,
    \end{align*}
    concluding that
    \begin{align}
        \frac{d\phi(x,\bp)}{dx} = -n\sum_{i=1}^n a_i(x)p_i = -nh(x,\bp)\,.\label{eq:01150641}
    \end{align}
\end{proof}

We finally need the following correlation type inequality.
\begin{lemma}[\cite{alon2016probabilistic}]\label{lm:FKG}
    Let $\nu_1,\ldots, \nu_n$ be a probability measure on $\R$.
    If functions $f,g:\R^n \mapsto [0,1]$ are monotone increasing functions, then $\Exu{\nu}{f\cdot g} \ge \Exu{\nu}{f} \cdot \Exu{\nu}{g}$, where $\nu$ is a product measure over $\nu_1,\ldots, \nu_n$.
    The same inequality holds when both $f,g$ are monotone decreasing functions.
\end{lemma}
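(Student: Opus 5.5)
The plan is the standard Harris--Kleitman induction on the number of coordinates $n$. Throughout, ``monotone increasing'' means nondecreasing in each coordinate with the others fixed, that is, with respect to the coordinatewise partial order on $\R^n$. Since $f,g$ take values in $[0,1]$, they are bounded, so every expectation below is finite and Fubini's theorem applies to the product measure $\nu=\nu_1\times\cdots\times\nu_n$ (assuming, as implicitly required, that $f,g$ are Borel measurable).

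\emph{Base case $n=1$.} Let $X,Y$ be independent with common law $\nu_1$. Because $f$ and $g$ are both nondecreasing on $\R$, the quantities $f(X)-f(Y)$ and $g(X)-g(Y)$ always have the same sign, so $(f(X)-f(Y))(g(X)-g(Y))\ge 0$ pointwise. Taking expectations and expanding, independence and equal distribution give
\[
2\,\Exu{\nu_1}{fg}-2\,\Exu{\nu_1}{f}\,\Exu{\nu_1}{g}\ \ge\ 0,
\]
which is the claim for $n=1$ (Chebyshev's sum inequality).

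\emph{Inductive step.} Assume the statement holds for $n-1$ coordinates. Fix $x_n$ and consider the sections $f_{x_n}(x_1,\ldots,x_{n-1}) = f(x_1,\ldots,x_n)$ and, likewise, $g_{x_n}$. Both are increasing functions into $[0,1]$, so the inductive hypothesis applied to $\nu_1\times\cdots\times\nu_{n-1}$ yields
\[
\Ex{fg\mid x_n}\ \ge\ F(x_n)\,G(x_n),
\]
where $F(x_n)=\Ex{f\mid x_n}$ and $G(x_n)=\Ex{g\mid x_n}$ are these conditional integrals, measurable in $x_n$ by Fubini. The key observation is that $F$ and $G$ are nondecreasing in $x_n$: if $x_n\le x_n'$, then $f(\cdot,x_n)\le f(\cdot,x_n')$ pointwise, and integrating against the same measure preserves this inequality. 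They also take values in $[0,1]$. Integrating over $x_n\sim\nu_n$ and applying the base case to $F,G$ then gives
\[
\Ex{fg}\ \ge\ \Exu{\nu_n}{FG}\ \ge\ \Exu{\nu_n}{F}\,\Exu{\nu_n}{G}=\Ex{f}\,\Ex{g}.
\]
The last equality is Fubini again.

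\emph{Decreasing case.} If $f,g$ are both decreasing, apply the result to the increasing functions $1-f$ and $1-g$, which still map into $[0,1]$. Expanding
\[
\Ex{(1-f)(1-g)}\ \ge\ \Ex{1-f}\,\Ex{1-g},
\]
the terms linear in $f$ and $g$ cancel on both sides, leaving $\Ex{fg}\ge\Ex{f}\,\Ex{g}$.

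There is no substantive obstacle here. The only points needing care are the measurability and Fubini bookkeeping in the conditioning step, and the check that the averaged functions $F,G$ inherit monotonicity. Both are handled by boundedness and the pointwise monotonicity of the sections.
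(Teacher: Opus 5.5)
Your proof is correct. It is Harris's original induction on the number of coordinates: Chebyshev's sum inequality via two independent copies as the base case, monotonicity of the partial integrals $F,G$ in $x_n$ as the key step, and $f\mapsto 1-f$, $g\mapsto 1-g$ for the decreasing case. Your Borel-measurability caveat is a genuine hypothesis rather than a formality, since coordinatewise monotone functions on $\R^n$ with $n\ge 2$ can fail to be Borel.

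The paper gives no proof of its own; it only cites Alon--Spencer. There the inequality comes from the Ahlswede--Daykin four functions theorem, via the FKG inequality on finite distributive lattices, specialized to product measures on $2^{[n]}$. That route is more general in the measure, since it needs only log-supermodularity rather than product structure. But it lives on finite lattices, so the version stated here, for arbitrary probability measures $\nu_1,\ldots,\nu_n$ on $\R$, would still need an approximation argument. Your induction handles arbitrary product measures on $\R^n$ directly, using nothing beyond boundedness and Fubini.

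Your approach also fits how the lemma is actually used. In the proof of Theorem~\ref{thm:prop-user} it is applied in a single variable, with the uniform measure on $[0,1]$, to $x\mapsto a_1(x)-a_n(x)=x^{n-1}-(1-x)^{n-1}$. That function takes values in $[-1,1]$, not $[0,1]$. Your base case never uses the range restriction, because $(f(X)-f(Y))(g(X)-g(Y))\ge 0$ holds for any bounded nondecreasing $f,g$. So it covers that application verbatim, whereas the lemma as stated would first need an affine rescaling.
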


Now we can proceed to the proof of the main theorem.
\begin{proof}[Proof of Theorem~\ref{thm:prop-user}]
    Let $\bbmu$ be the symmetric MNE and $F$ be corresponding c.d.f. of the induced game given a nontrivial policy $\bp$.
    For $\bbmu$ to be the equilibrium, the following condition must hold:
    \begin{align*}
        u_0 + c(q) = \sum_{i=1}^n p_i\binom{n-1}{i-1}F(q)^{n-i}(1-F(q))^{i-1},
    \end{align*}
    for some $u_0 \ge 0$ and for any $q \in \supp(F)$, {as there exists no atom in $\bmu$ by Proposition~\ref{cl:atomless}}. 
    Note that $u_0$ is the expected payoff at the equilibrium.

    Let {$q_{(i)}$} 
    denote the $i$-th largest quality among $(q_i)_{i \in [n]}$ where $q_i$ is random quality realized for each agent $i \in [n]$ with respect to the mixed strategy $\mu$.
    Let $F_{(i)}$ be the c.d.f. of $q_{(i)}$.
    Using Lemma~\ref{lem:derivation-welfare-payoff}, the RS's problem to maximize the objective function can be written as the following optimization problem:
    \begin{align*}
    \begin{aligned}
        (\Opt_1)\quad & \underset{\bp \in \Delta_m}{\text{maximize}}
        & & \alpha\sum_{i =1}^n p_i \Ex{q_{(i)}} + (1-\alpha)\Ex{q_i} \\
        & \text{subject to}
        & & u_0+c(q) = \sum_{i=1}^n p_i\binom{n-1}{i-1}F(q)^{n-i}(1-F(q))^{i-1} \quad \text{ $\forall q \in \supp(F)$}
        \\
        & & & p_1 \ge p_2 \ge \ldots \ge p_n, \; \exists i,j: p_i \neq p_j
        \\
        & & & \sum_{i=1}^n p_i = 1
        \\
        & & & \text{$F$ is c.d.f. of symmetric MNE given $\bp$}
    \end{aligned}
    \end{align*}

    Recall that $0 \in \supp(F)$ by Proposition~\ref{cl:atomless}.
    Thus, $F(0) = 0$, implying $u_0 = p_n$ by plugging in $q = 0$.
    Plugging $c(q) = q^\beta$  back in the constraint, we obtain
    \begin{align}
        p_n + q^\beta = \sum_{i=1}^n p_i\binom{n-1}{i-1}F(q)^{n-i}(1-F(q))^{i-1},\label{eq:06121508}
    \end{align}
    for every $q \in \supp(F)$.
    Thus, we have 
    \begin{align}
        q = \parans{\sum_{i=1}^n p_i\binom{n-1}{i-1}F(q)^{n-i}(1-F(q))^{i-1} - p_n }^{1/\beta}.\label{eq:03092207}
    \end{align}
    Define
    \begin{align}
        g(F(q), \bp) = \sum_{i=1}^n p_i\binom{n-1}{i-1}F(q)^{n-i}(1-F(q))^{i-1} - p_n.\label{eq:g}
    \end{align}
    Notice that $g(x,\bp) = h(x,\bp) - p_n$.
    Then we have $q = g(F(q),\bp)^{1/\beta}$.
    From Lemma~\ref{lem:derivation-welfare-payoff}, we have
    \begin{align*}
        \cW(\bbmu, \bp) 
        &= \sum_{i=1}^n p_i \Ex{q_{(i)}}
        = \sum_{i=1}^n \int_{q \in \supp(F)} (1-F_{(i)}(x)) dq.
    \end{align*}
    Let $f(q)$ be the derivative\footnote{Remark that since $F$ is not guaranteed to be absolute continuous, $f$ does not necessarily coincide with the density function. However, our analysis does not require $f$ to be the density function.} of $F(q)$ with respect to $q$.\footnote{Since $F$ is strictly increasing and atomless, it is differentiable almost everywhere by Lebesgue's differentiation theorem, implying that the derivative of $F$ exists almost everywhere.  
Thus, since sets of measure zero have no effect on the value of the integral—as the objective function is always finite—it is without loss of generality to assume the differentiability of $F$.  
Alternatively, one can restrict the integrals and the corresponding constraints to the set of differentiable points, which would yield the same objective value as the original formulation.
}

    Thus using Lemma~\ref{lem:cdf-orderstat}, we can rewrite the user welfare function as:\footnote{Note that $f(q) \neq 0 $ with probability $1$ by Proposition~\ref{cl:atomless}.}
    \begin{align*}
        \cW(\bbmu, \bp) 
        &= \int_{q \in \supp(F)} \sum_{i=1}^{n}\parans{1 - \sum_{j=1}^i \binom{n}{j-1}(1-F(q))^{j-1}F(q)^{n-j+1}}p_i dq
        \\
        &=
        \int_{q \in \supp(F)} \sum_{i=1}^{n}\parans{1 - \sum_{j=1}^i \binom{n}{j-1}(1-F(q))^{j-1}F(q)^{n-j+1}}p_i \frac{dq}{d F(q)} dF(q)
        \\
        &=  \int_{q \in \supp(F)} \phi(F(q), \bp) \frac{1}{f(q)}  dF(q) \tag{Equation~\eqref{eq:phi}}
    \end{align*}
    From the constraints on the condition for being equilibrium in Equation~\eqref{eq:06121508}, we have $ p_n + q^\beta = \sum_{i=1}^n p_i\binom{n-1}{i-1}F(q)^{n-i}(1-F(q))^{i-1}$. By differentiating over $q$, we have
    \begin{align}
        \beta q^{\beta- 1} = f(q)\psi(F(q), \bp),\label{eq:02060134}
    \end{align}
    since $\psi(x, \bp)$ is a derivative of $\parans{\sum_{i=1}^n p_i\binom{n-1}{i-1}x^{n-i}(1-x)^{i-1}}$ with respect to $x$ due to Lemma~\ref{lm:psi-to-h}.

    Finally, plugging $f(q)$ yields
    \begin{align*}
        \cW(\bbmu, \bp) 
        &= \int_{q \in \supp(F)} \phi(F(q),\bp)\psi(F(q), \bp) \frac{1}{\beta q^{\beta-1}} dF(q)
        \\
        &=
        \frac{1}{\beta}\int_{q \in \supp(F)} \phi(F(q),\bp)\psi(F(q), \bp) q^{-\beta + 1} dF(q)
        \\
        &=
        \frac{1}{\beta}\int_{q \in \supp(F)} \phi(F(q),\bp)\psi(F(q), \bp) g(F(q),\bp)^{- 1+1/\beta} dF(q)
        \\
        &=
        \frac{1}{\beta}\int_{0}^1 \phi(x,\bp)\psi(x, \bp) g(x,\bp)^{1/\beta -1} dx
    \end{align*}
    Thus, this is entirely a function of $\bp$ whose integrand does not explicitly depend on the cumulative distribution function $F(q)$.
    
    Note further that
    \begin{align}
        \frac{dg(x,\bp)^{1/\beta}}{dx} 
        &= \frac{dg(x,\bp)^{1/\beta}}{dg(x,\bp)} \frac{dg(x,\bp)}{dx} \nonumber
        \\
        &= \frac{1}{\beta}g(x,\bp)^{1/\beta - 1} \frac{dh(x,\bp)}{dx} \nonumber\\
        &= \frac{1}{\beta}g(x,\bp)^{1/\beta - 1} \psi(x,\bp)\label{eq:der-h}
    \end{align}

    Hence,  we have 
    \begin{align*}
    \cW(\bbmu, \bp) 
    &= \int_0^1 \phi(x, \bp) \cdot \frac{d}{dx} g(x, \bp)^{1/\beta} \, dx \\
    &= \left[ \phi(x, \bp) \cdot g(x, \bp)^{1/\beta} \right]_0^1 - \int_0^1 g(x, \bp)^{1/\beta} \cdot \frac{d}{dx} \phi(x, \bp) \, dx.
    \end{align*} 
    Since we have
    \begin{align*}
        g(0,\bp) &= p_n - p_n = 0
        \\
        g(1,\bp) &= p_1 - p_n
        \\
        \phi(1,\bp) &= 0
        \\
        \phi(0,\bp) &= 1
    \end{align*}
    we obtain
    \begin{align*}
        \beta \cdot \cW(\bbmu, \bp) 
        &= \parans{\beta \phi(x,\bp)g(x,\bp)^{1/\beta}\Bigg|_0^1 - \beta \int_0^1\frac{d\phi(x,\bp)}{dx}g(x,\bp)^{1/\beta}dx}
        \\
        &=
        \beta\int_0^1 \frac{-d\phi(x,\bp)}{dx} g(x,\bp)^{1/\beta} dx
        \\
        &=
        \beta\int_0^1 \frac{-d\phi(x,\bp)}{dx} (h(x,\bp)-p_n)^{1/\beta} dx.\label{eq:w-middle}
    \end{align*}
    Finally by Lemma~\ref{lem:diff-phi}, we obtain
    \begin{align}
        \cW(\bbmu, \bp) = n\int_0^1 h(x,\bp)g(x,\bp)^{1/\beta}dx.
    \end{align}
    Now, consider the perturbation on $\bp$ that moves small $0 < \delta < p_n$ from $p_n$ to $p_1$.
    Writing the resulting vector $\bp^{(\delta)}$, for any $x \in [0,1]$ we have
    \begin{align*}
        \frac{d}{d\delta}h(x,\bp^{(\delta)}) = a_1(x) - a_n(x),
    \end{align*}
    and
    \begin{align*}
        \frac{d}{d\delta}(h(x,\bp^{(\delta)})-p_n^{(\delta)})= 1 + a_1(x) - a_n(x).
    \end{align*}
    Thus, we have
    \begin{align*}
        \frac{d}{d \delta}\parans{h(x,\bp)(h(x,\bp) -p_n)^{1/\beta}}
        &= (a_1(x) - a_n(x))(h(x,\bp) - p_n)^{1/\beta} +  \frac{h(x,\bp)}{\beta}(h(x,\bp) - p_n)^{1/\beta - 1}(1 + a_1(x) - a_n(x))
        \\
        &\ge (a_1(x) - a_n(x))(h(x,\bp) - p_n)^{1/\beta} + \frac{(h(x,\bp) -p_n)^{1/\beta}}{\beta} (1+a_1(x) - a_n(x)).
    \end{align*}
    The second term is apparently nonnegative as $a_n(x) \in [0,1]$, and is even strictly positive if $x \in (0,1)$.
    
    Thus, let us focus on the first term.
    In differentiating~\ref{eq:w-middle} with respect to $\delta$, the first term will appear as an integrand as follows
    \begin{align*}
        \int_0^1 (a_1(x) - a_n(x))(h(x,\bp) - p_n)^{1/\beta} dx.
    \end{align*}
    Notice that $a_1(x) - a_n(x)$ is increasing on $x$.
    Since $h(x,\bp)$ is also increasing on $x$ by Lemma~\ref{lm:h-monotone}, by FKG inequality (Lemma~\ref{lm:FKG}), we have
    \begin{align*}
        \int_0^1 (a_1(x) - a_n(x))(h(x,\bp) - p_n)^{1/\beta} dx 
        &\ge 
        \int_0^1 (a_1(x) - a_n(x))dx \int_0^1(h(x,\bp) - p_n)^{1/\beta} dx \\
        &\ge 0.
    \end{align*}
    Combining, $\cW(\bbmu, \bp)$ only increases by such operation, implying that optimal policy should have $p_n = 0$.
    Note further that this implies
    \begin{align}
        q^\beta = \sum_{i=1}^n p_i\binom{n-1}{i-1}F(q)^{n-i}(1-F(q))^{i-1} = h(F(q), \bp),\label{eq:q-to-h}
    \end{align}

    Similarly, for the quality, we have
    \begin{align*}
        \cQ(\bbmu, \bp) 
        &= \Exu{q \sim \mu_i}{q}
        \\
        &=  \int_{q \in \supp(F)} qdF(q)
        \\
        &=  \int_{q \in \supp(F)} \parans{\sum_{i=1}^n p_i\binom{n-1}{i-1}F(q)^{n-i}(1-F(q))^{i-1} - p_n}^{1/\beta} dF(q)
        \\
        &=  \int_0^1 g(x,\bp)^{1/\beta} dx.
    \end{align*}
    Recalling the definition of $g(x,\bp) =  \sum_{i=1}^n p_i\binom{n-1}{i-1}F(q)^{n-i}(1-F(q))^{i-1} - p_n$ from Equation~\eqref{eq:g}, observe that the coefficient of $p_n$ is negative as $p_n$'s coefficient is $(1-F(q))^{n-1} - 1$, whereas all the other indices have positive coefficient.

    Combining, both the user welfare and average quality increases by decreasing $p_n$, so an optimal policy would have $p_n = 0$.\footnote{There could be boundary regimes where it remains the same by decreasing $p_n$, \eg by taking $\alpha = 0$ and $\beta = 1$. In this case, one can easily check that any policy would trivially optimal by Lemma~\ref{lorentz2012bernstein}.}

    Therefore, we have $g(x,\bp) = h(x,\bp)$, and thus the average quality and user welfare can finally be written as
    \begin{align}
        \cW(\bbmu, \bp) &= n\int_0^1 h(x,\bp)^{1+1/\beta}dx,\label{eq:final-welfare}
        \\
        \cQ(\bbmu, \bp) &= \int_0^1 h(x,\bp)^{1/\beta}dx.\label{eq:final-quality}
    \end{align}

    This concludes that our optimization problem can be written as  
    \begin{align*}
    \begin{aligned}
        (\Opt_2)\quad & \underset{\bp \in \Delta_m}{\text{maximize}}
        & & \alpha n\int_0^1 h(x,\bp)^{1+ 1/\beta}dx + (1-\alpha) \int_0^1 h(x,\bp)^{1/\beta}dx  \\
        & \text{subject to}
        & & c(q) = \sum_{i=1}^n p_i\binom{n-1}{i-1}F(q)^{n-i}(1-F(q))^{i-1} \quad \text{ $\forall q \in \supp(F)$}
        \\
        & & & p_1 \ge p_2 \ge \ldots \ge p_n, \; \exists i,j: p_i \neq p_j
        \\
        & & & \sum_{i=1}^n p_i = 1
        \\
        & & & \text{$F$ is c.d.f. of symmetric MNE given $\bp$}
    \end{aligned}
    \end{align*}
    Since the objective function is purely a function over $\bp$, we can simply ignore the first and the last constraints involving $F$ since for any $\bp$, we know that there exists an induced symmetric MNE $\bmu$ and its c.d.f. $F$ that satisfies these constraints.
    Thus, our optimization becomes
    \begin{align*}
    \begin{aligned}
        (\Opt)\quad & \underset{\bp \in \Delta_m}{\text{maximize}}
        & & \alpha n\int_0^1 h(x,\bp)^{1+ 1/\beta} + (1-\alpha) \int_0^1 h(x,\bp)^{1/\beta}  \\
        & \text{subject to}
        & & p_1 \ge p_2 \ge \ldots \ge p_n, \; \exists i,j: p_i \neq p_j
        \\
        & & & \sum_{i=1}^n p_i = 1
    \end{aligned}
    \end{align*}
    and it finishes the proof.
\end{proof}

\subsection{Proof of Proposition~\ref{prop:price-hardmax}}\label{proof:price-hardmax}

\begin{proof}
    If $\alpha = 0$, recall that
    \begin{align*}
        G(\bp) =\int_0^1 h(x,\bp)^{1/\beta} dx.
    \end{align*}
    We will lower bound $G(\bp)$ of $\UNI$ policy and argue that it can be significantly larger than $G(\HM)$ for sufficiently large $n$.

    Recall from Proposition~\ref{thm:single-hm} that
    \begin{align*}
        G(\HM) = \alpha \frac{\beta n}{\beta n + n-1} +  (1-\alpha)\frac{\beta}{\beta + n-1} = \frac{\beta}{\beta + n-1}.
    \end{align*}

    For $\UNI$ policy, observe that 
    \begin{align*}
        h(x,\UNI) &= \sum_{i=1}^n a_i(x)p_i = \sum_{i=1}^{n-1}\frac{a_i(x)}{n-1}.
    \end{align*}
    Since $\sum_{i=1}^n a_i(x) = 1$ by Binomial theorem, we have
    \begin{align*}
        h(x,\UNI) = \frac{1}{n-1} (1-(1-x)^{n-1}).
    \end{align*}
    Thus, we have
    \begin{align*}
        \int_0^1 h(x, \UNI)^{1/\beta} dx
        &= \int_0^1 \left(\frac{1 - (1-x)^{n-1}}{n-1}\right)^{1/\beta} dx
        \\
        &= \int_0^1 \frac{\left(1 - (1-x)^{n-1}\right)^{1/\beta}}{(n-1)^{1/\beta}}dx
        \\
        &\ge \int_{1/2}^{1} \frac{\left(1 - (1-x)^{n-1}\right)^{1/\beta}}{(n-1)^{1/\beta}}dx
        \\
        &\ge \int_{1/2}^{1} \frac{\left(1 - 1/2^{n-1}\right)^{1/\beta}}{(n-1)^{1/\beta}}dx
        \\
        &\ge \frac{1}{2} \cdot \frac{1}{2} \cdot \frac{1}{(n-1)^{1/\beta}}.
    \end{align*}
    As $\beta > 1$, we conclude that there exists a sufficiently large $n$ such that 
    \begin{align*}
        G(\HM) / G(\UNI) < \eps
    \end{align*} for any $\eps > 0$.
\end{proof}

\section{Omitted Proofs in Section~\ref{sec:opt-structure}}
\subsection{Proof of Lemma~\ref{thm:schur}}\label{apd:thm:schur}
We require the following property on the Bernstein basis polynomials:

\begin{lemma}[\citet{doha2011derivatives}]\label{lem:bernstein-diff-deriv} 
    Define a Bernstein basis polynomial \[a_{i,n}(x) = \binom{n}{i}x^{n-i}(1-x)^{i}\,,\]and hence $a_i(x) = a_{i-1,n-1}(x)$.
    Then, for any nonnegative integer $n$ and $i=0,\ldots, n$, it follows that
    \begin{align*}
        (a_i(x) - a_{i+1}(x)) = -\frac{1}{n}\frac{d a_{i,n}(x)}{dx}.
    \end{align*}
\end{lemma}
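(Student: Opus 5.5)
The identity is a one-line differentiation, so I will prove it by differentiating $a_{i,n}$ directly and matching coefficients. First I fix the indexing, since the statement mixes two conventions. In the paper's notation $a_j(x)=a_{j-1,n-1}(x)=\binom{n-1}{j-1}x^{n-j}(1-x)^{j-1}$, so the left-hand side $a_i(x)-a_{i+1}(x)$ equals $a_{i-1,n-1}(x)-a_{i,n-1}(x)$. I adopt the convention $a_{-1,n-1}\equiv a_{n,n-1}\equiv 0$, which covers the endpoint cases $i=0$ and $i=n$.

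Next I apply the product rule to $a_{i,n}(x)=\binom{n}{i}x^{n-i}(1-x)^{i}$:
\begin{align*}
\frac{d a_{i,n}(x)}{dx} = \binom{n}{i}(n-i)\,x^{n-i-1}(1-x)^{i} - \binom{n}{i}\,i\,x^{n-i}(1-x)^{i-1}.
\end{align*}
For $i=0$ or $i=n$ one of the two terms is absent because its factor $i$ or $n-i$ vanishes, which matches the zero convention above.

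I then rewrite the coefficients with the standard binomial identities
\begin{align*}
\binom{n}{i}(n-i)=n\binom{n-1}{i}, \qquad \binom{n}{i}\,i=n\binom{n-1}{i-1}.
\end{align*}
With these, the first term becomes $n\binom{n-1}{i}x^{(n-1)-i}(1-x)^{i}=n\,a_{i,n-1}(x)$. The second term becomes $n\binom{n-1}{i-1}x^{(n-1)-(i-1)}(1-x)^{i-1}=n\,a_{i-1,n-1}(x)$.

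Combining the two terms gives
\begin{align*}
\frac{d a_{i,n}(x)}{dx}=n\bigl(a_{i,n-1}(x)-a_{i-1,n-1}(x)\bigr).
\end{align*}
Multiplying by $-1/n$ and translating back to the paper's indexing yields $-\frac{1}{n}\frac{d a_{i,n}(x)}{dx}=a_{i-1,n-1}(x)-a_{i,n-1}(x)=a_i(x)-a_{i+1}(x)$, as claimed.

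No step is a real obstacle; the only care needed is the index bookkeeping between $a_i$ and $a_{i,n}$, together with the boundary conventions at $i=0$ and $i=n$.
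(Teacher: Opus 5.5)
Your proof is correct and is essentially the paper's argument: a direct differentiation of $a_{i,n}$ combined with the identities $\binom{n}{i}\,i=n\binom{n-1}{i-1}$ and $\binom{n}{i}(n-i)=n\binom{n-1}{i}$. The only differences are that you read the derivative directly as $n\bigl(a_{i,n-1}-a_{i-1,n-1}\bigr)$, whereas the paper expands both sides to the common factored form $\frac{1}{n}\binom{n}{i}x^{n-i-1}(1-x)^{i-1}[nx-(n-i)]$, and that your zero convention at $i=0$ and $i=n$ is slightly cleaner than the paper's factoring, which introduces negative powers of $x$ or $1-x$ at those endpoints (like the paper, you implicitly need $n\ge 1$ for the $1/n$).
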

{
Although the proof can be found in many textbooks~\citep{lorentz2012bernstein}, we provide the proof for completeness.
\begin{proof}
    We will compute the left-hand side (LHS) and right-hand side (RHS) individually to argue its equality.

    First, we evaluate the LHS:
    \begin{align*}
        a_i(x) - a_{i+1}(x) &= \binom{n-1}{i-1}x^{n-i}(1-x)^{i-1} - \binom{n-1}{i}x^{n-i-1}(1-x)^{i} \\
        &= x^{n-i-1}(1-x)^{i-1} \left[ \binom{n-1}{i-1}x - \binom{n-1}{i}(1-x) \right]
    \end{align*}
    Using the identities $\binom{n-1}{i-1} = \frac{i}{n}\binom{n}{i}$ and $\binom{n-1}{i} = \frac{n-i}{n}\binom{n}{i}$, we substitute the coefficients:
    \begin{align}
        a_i(x) - a_{i+1}(x) &= x^{n-i-1}(1-x)^{i-1} \left[ \frac{i}{n}\binom{n}{i}x - \frac{n-i}{n}\binom{n}{i}(1-x) \right] \nonumber \\
        &= \frac{1}{n}\binom{n}{i}x^{n-i-1}(1-x)^{i-1} \left[ ix - (n-i)(1-x) \right] \nonumber\\
        &= \frac{1}{n}\binom{n}{i}x^{n-i-1}(1-x)^{i-1} [ix - (n - nx - i + ix)] \nonumber\\
        &= \frac{1}{n}\binom{n}{i}x^{n-i-1}(1-x)^{i-1} [nx - (n-i)]\,.\label{eq:07011600}
    \end{align}
    
    Next, we evaluate the RHS by taking the derivative of $a_{i,n}(x)$:
    \begin{align*}
        -\frac{1}{n}\frac{d}{dx} a_{i,n}(x) &= -\frac{1}{n}\frac{d}{dx} \left[ \binom{n}{i}x^{n-i}(1-x)^i \right] \\
        &= -\frac{1}{n}\binom{n}{i} \left[ (n-i)x^{n-i-1}(1-x)^i - i x^{n-i}(1-x)^{i-1} \right] \\
        &= -\frac{1}{n}\binom{n}{i}x^{n-i-1}(1-x)^{i-1} \left[ (n-i)(1-x) - ix \right] \\
        &= -\frac{1}{n}\binom{n}{i}x^{n-i-1}(1-x)^{i-1} [n - nx - i + ix - ix] \\
        &= -\frac{1}{n}\binom{n}{i}x^{n-i-1}(1-x)^{i-1} [n-i-nx] \\
        &= \frac{1}{n}\binom{n}{i}x^{n-i-1}(1-x)^{i-1} [nx - (n-i)],
    \end{align*}
    which coincides with Equation~\eqref{eq:07011600}.
\end{proof}
}

Now we are ready to prove the main theorem.
\begin{proof}[Proof of Theorem~\ref{thm:schur}]
    Consider any probability vector $\bp$ in $n$-dimensional probability simplex, which is possibly not sorted.
     Recall a generalized notion of $a_i(x)$ as follows: $a_{i,n}(x) := \binom{n}{i}x^{n-i}(1-x)^{i}$ and 
  note that $a_i(x) = a_{i-1,n-1}(x)$.
    Then, we will use the following well-known fact from Bernstein basis polynomials~\citet{lorentz2012bernstein}, presented in Lemma \ref{lem:bernstein-diff-deriv}:
    \begin{align}
        n(a_i(x) - a_{i+1}(x)) = -\frac{d a_{i,n}(x)}{dx}.\label{eq:diff-to-deriv}
    \end{align}
    {Recall that $O(x, \bp) = \sum_{i=1}^n a_i(x)p_{(i)}$.}
    Let $\sigma(\cdot)$ be the permutation associated with $p_i$'s such that $p_i$ is the $\sigma(i)$-th largest one among $p_1,\ldots, p_n$.
    Consider two indices $i,j$ such that $p_i \ge p_j$, \ie $\sigma(i) < \sigma(j)$.
    Since $p_i \ge p_j$, by Theorem~\ref{thm:schur-ostrowski}, it suffices to prove that
    \begin{align*}
        \frac{\partial\int_0^1  O(x,\bp)^r dx}{\partial p_i} - \frac{\partial\int_0^1  O(x,\bp)^r dx}{\partial p_j} \ge \text{($\le$) } 0,
    \end{align*}
    depending on the parameter $r$.
    Notice that
    \begin{align*}
        \frac{\partial\int_0^1  O(x,\bp)^r dx}{\partial p_i} = \int_0^1 rO(x,\bp)^{r-1}a_{\sigma(i)}(x) dx.
    \end{align*}
    Thus, it suffices to prove that for any index $i,j$ with $\sigma(i) < \sigma(j)$:
    \begin{align*}
        \int_0^1 rO(x,\bp)^{r-1}(a_{\sigma(i)}(x)- a_{\sigma(j)}(x)) dx \ge \text{($\le$) } 0.
    \end{align*}
    We will prove a stronger statement such that for any index $i=1,2,\ldots, n-1$, it follows that
    \begin{align*}
        \int_0^1 rO(x,\bp)^{r-1}(a_i(x)- a_{i+1}(x)) dx \ge \text{($\le$) } 0.
    \end{align*}
    Note that this immediately finishes the proof. Recalling that  $n(a_i(x) - a_{i+1}(x)) = -\frac{d a_{i,n}(x)}{dx}$, the left hand side can be written as 
\[
    \int_0^1 r\, O(x,\bp)^{r-1} (a_i(x) - a_{i+1}(x)) \, dx 
    = -\frac{r}{n} \int_0^1 O(x,\bp)^{r-1} \frac{d}{dx} a_{i,n}(x) \, dx.
\]
Now, apply integration by parts:
\[
    \int_0^1 O(x,\bp)^{r-1} \frac{d}{dx} a_{i,n}(x) \, dx
    = \left[ O(x,\bp)^{r-1} a_{i,n}(x) \right]_0^1 
    - \int_0^1 a_{i+1,n}(x) \frac{d}{dx} O(x,\bp)^{r-1} \, dx.
\]  
    for $i \in [n-1]$.
    Note that $a_{i,n}(x)$ is always zero for $i \in [n-1]$ if $x = 0$ or $x = 1$.
    Further, we have
    \begin{align*}
        \frac{\partial  O(x,\bp)^{r-1}}{\partial x} = (r-1)O(x,\bp)^{r-2}\psi(x,\bp) \ge 0,
    \end{align*}
    by Lemma~\ref{lm:psi-to-h} and $\psi(x,\bp) \ge 0$ due to Lemma~\ref{lm:h-monotone}.
    Thus, $ \int_0^1 rO(x,\bp)^{r-1}(a_i(x)- a_{i+1}(x)) dx$ becomes equivalent to
    \begin{align*}
        \frac{r(r-1)}{n} \int_0^1 O(x,\bp)^{r-2}\psi(x,\bp) a_{i,n}(x) dx,
    \end{align*}
    which is $\ge 0$ if $r \ge 1$ or $r \le 0$, and otherwise $\le 0$ as all the terms $O(x,\bp), \psi(x,\bp)$ and $a_{i,n}(x)$ are nonnegative.
\end{proof}

\subsection{Proof of Theorem~\ref{thm:opt-single}}\label{apd:thm:opt-single}
\begin{proof}
    Note first that the $\Opt$ in Theorem~\ref{thm:prop-user} enforces $p_n = 0$.
    To prove the first condition, if $\alpha =1$, then the objective function becomes $\cW(\bbmu, \bp)$.
   
    From Equation~\eqref{eq:final-welfare}, recall that the user welfare is given by
\[
\cW(\bbmu, \bp) = n \int_0^1 h(x, \bp)^{1 + 1/\beta} dx.
\]
Since \( 1 + 1/\beta \ge 1 \) for any \( \beta > 0 \), the integrand is Schur-convex in \( \bp \) by Theorem~\ref{thm:schur}. Moreover, since \( \HM \) majorizes any \( \bp \in \Delta_n \) with \( p_n = 0 \), this implies that \( \HM \) maximizes welfare among such strategies.

Similarly, from Equation \eqref{eq:final-quality} user quality is given by
\[
\cQ(\bbmu, \bp) = \int_0^1 h(x, \bp)^{1/\beta} dx.
\]
When \( \beta \le 1 \), we again have \( 1/\beta \ge 1 \), and so the integrand is Schur-convex in \( \bp \).  Therefore, when \( \beta \le 1 \), the overall objective—being a sum of two Schur-convex functions—is also Schur-convex. It follows that \( \HM \) is the optimal allocation in this case.



   For the remaining conditions, observe that the objective function reduces to $\cQ(\bbmu, \bp)$ when $\alpha = 0$. If $\beta > 1$, then $1/\beta < 1$, and the integrand becomes Schur-concave in $\bp$. In this case, the uniform policy $\UNI$ is majorized by every $\bp \in \Delta_n$ with $p_n = 0$, and thus $\UNI$ is optimal by the definition of Schur-concave functions.

    
    Finally, if $\alpha = 0$ and $\beta = 1$, then the objective function simply reduces to (up to constant)
    \begin{align*}
        \int_0^1 h(x,\bp) dx =\int_0^1 a_i(x) p_i dx.
    \end{align*}
    Note that 
    \begin{align*}
        \int_0^1 a_i(x) dx = 1/n,
    \end{align*}
    by Lemma~\ref{lorentz2012bernstein} and hence the objective function is always a constant.
    Since the sufficient condition to write the optimization problem is to have {a nontrivial policy such that $p_i \neq p_j$ for some $i,j \in [n]$ and $p_n = 0$}, we finish the proof.
\end{proof}

\section{Omitted Proofs in Section~\ref{sec:opt-structure-general}}
\subsection{Proof of Lemma~\ref{lm:unimodal}}\label{apd:lm:unimodal}
Before proving the main lemma, we require the following property technical property on the monotonicity of $h(x,\bp)$ with respect to $x$.
\begin{lemma}\label{lm:h-monotone}
    $h(x,\bp)$ is strictly increasing in $x \in [0,1]$ for any nontrivial policy $\bp$.
\end{lemma}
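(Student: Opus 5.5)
We prove Lemma~\ref{lm:h-monotone} by differentiating in $x$ and rewriting the derivative as a nonnegative combination of the consecutive differences $p_i - p_{i+1}$. The natural tool is the identity of Lemma~\ref{lem:bernstein-diff-deriv}, or equivalently the standard derivative formula for Bernstein polynomials. Concretely, by Lemma~\ref{lm:psi-to-h} we have $\frac{d}{dx}h(x,\bp) = \psi(x,\bp)$. We regroup $\psi$ by collecting the two contributions of each basis polynomial of degree $n-2$. The term $(n-1)\binom{n-2}{i-1}x^{n-i-1}(1-x)^{i-1}$ appears with coefficient $p_i$ from $a_i'$ and with coefficient $-p_{i+1}$ from $a_{i+1}'$. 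This gives
\begin{align*}
    \frac{d h(x,\bp)}{dx} = (n-1)\sum_{i=1}^{n-1} (p_i - p_{i+1}) \binom{n-2}{i-1} x^{n-1-i}(1-x)^{i-1}.
\end{align*}
We will check the boundary terms $i=1$ and $i=n-1$ against the explicit formulas for $a_1'$ and $a_n'$ in the proof of Lemma~\ref{lm:psi-to-h}; this bookkeeping is the only place an index slip could occur.

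Since $\bp\in\Delta$, every coefficient $p_i - p_{i+1}$ is nonnegative. Each degree-$(n-2)$ Bernstein basis polynomial $\binom{n-2}{i-1}x^{n-1-i}(1-x)^{i-1}$ is strictly positive on the open interval $(0,1)$. Nontriviality, i.e.\ $\bp\neq(1/n,\dots,1/n)$, together with $\sum_i p_i = 1$ and monotonicity, implies that $p_1 > p_n$. Hence some index $i^*$ has $p_{i^*} - p_{i^*+1} > 0$, and $h'(x,\bp) > 0$ for every $x\in(0,1)$.

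Strict positivity of the derivative on the open interval, together with continuity of the polynomial $h$ on $[0,1]$, gives strict monotonicity on the closed interval via the mean value theorem. The endpoints cause no difficulty even though $h'$ may vanish there.

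We expect no substantive obstacle beyond the correct reindexing of the telescoped derivative. An alternative route with the same conclusion is the binomial representation from Lemma~\ref{cl:r-monotone}: $h(x,\bp)=\Ex{p_{Y+1}}$ with $Y\sim\text{Binomial}(n-1,1-x)$. For $x<x'$, $Y$ under $x'$ is strictly stochastically smaller in the sense that $\Pr{Y\ge i^*}$ strictly decreases. Writing $\Ex{p_{Y+1}} = p_n + \sum_i (p_i-p_{i+1})\Pr{Y\le i-1}$ then yields strictness. We would present the derivative argument as the main proof.
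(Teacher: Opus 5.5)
Your proposal is correct and follows essentially the same route as the paper. The paper uses Abel summation by parts and telescopes the Bernstein derivative identity to reach exactly your expression $\frac{d}{dx}h(x,\bp)=(n-1)\sum_{k=1}^{n-1}(p_k-p_{k+1})\binom{n-2}{k-1}x^{n-1-k}(1-x)^{k-1}$ (its displayed version drops the harmless factor $n-1$), then concludes from some $p_k>p_{k+1}$; your direct regrouping of $\psi$ and your explicit mean-value-theorem step for the closed interval are cleaner bookkeeping of the same argument.
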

\begin{proof}
    Observe that
    $\frac{dh(x,\bp)}{dx} = \sum_{i=1}^n \frac{d a_i(x)}{dx} p_i$.
    We will use the following Abel's transformation formula~\citep{chu2007abel}:
    \begin{align*}
        \sum_{k=1}^n f_k g_k = F_ng_n - \sum_{k=1}^{n-1} F_k(g_{k+1} - g_k),
    \end{align*}
    where $F_k = \sum_{i=1}^k f_i$ for two real sequences $f_k$'s and $g_k$'s.
    Plugging in $f_k = da_k(x)/dx$ and $g_k = p_k$, we obtain
    \begin{align*}
        \sum_{i=1}^n \frac{d a_i(x)}{dx} p_i &= 
        A_n'(x)p_n  - \sum_{k=1}^{n-1} A_k'(x)(p_{k+1} - p_k)
        \\
        &= A_n'(x)p_n + \sum_{k=1}^{n-1} A_k'(x)(p_{k} - p_{k+1}),
    \end{align*}
    for $A_k'(x) = \sum_{i=1}^k a_i'(x)$
    where the derivative is with respect to $x$.
    Noting that $\sum_{i=1}^n a_i(x) = 1$ by binomial theorem, we have $A_n'(x) = 0$.

    To compute $A_k'(x)$, recall the generalized notation $a_{i,n}(x)$ defined in Lemma~\ref{lem:bernstein-diff-deriv}:
    \begin{align}
        a_{i,n}(x) = \binom{n}{i} x^{n-i}(1-x)^{i},\label{eq:temp-equiv}
    \end{align}
    and the equivalence:
    \begin{align*}
        a_{i-1,n-1}(x) - a_{i,n-1}(x) = -\frac{1}{n} \frac{d a_{i,n}(x)}{dx}.
    \end{align*}
    Thus, we have
    \begin{align*}
        a_i'(x) = a_{i-1,n-1}'(x) = (n-1)(a_{i-1,n-2}(x) - a_{i-1,n-2}(x)).
    \end{align*}
    Let us define $a_{-1,k}(x) = 0$ for nonnegative integer $k$, for ease of exposition.
    Note, then, that equivalence~\eqref{eq:temp-equiv} holds for $i=1$ as well.
    By telescoping, this implies:
    \begin{align*}
        A_k'(x) 
        &= \sum_{i=1}^k a_i'(x) 
        \\
        &= \sum_{i=1}^k (n-1)(a_{i-1,n-2}(x) - a_{i-1,n-2}(x))
        \\
        &= (n-1)(a_{k-1,n-2}(x)  - a_{-1,n-2}(x) = a_{k-1,n-2}(x).
    \end{align*}
    Thus, we finally have
    \begin{align*}
        \frac{dh(x,\bp)}{dx} =  \sum_{k=1}^{n-1} a_{k-1,n-2}(x)(p_k - p_{k+1}).
    \end{align*}
    Note that this is strictly positive for $x \in (0,1)$ as we have at least one $k$ such that $p_k > p_{k+1}$ for any nontrivial policy $\bp$.
    This finishes the proof.
\end{proof}

Now the main lemma can be proven in what follows.
\begin{proof}[Proof of Lemma~\ref{lm:unimodal}]
    
    {If $\beta \le 1$, then the proof immediately follows from the fact that $\HM$ is optimal from Theorem~\ref{thm:opt-single}. Thus, we assume $\beta > 1$.}
    
    We will first prove that $K(x,i) = a_i(1-x)$ is $\STP_{n-1}$, where we recall that $ a_i(x) =  \binom{n-1}{i-1}x^{n-i}(1-x)^{i-1}$.\footnote{This is in fact proven by~\cite{schoenberg1959variation}, but we provide here the proof for completeness.}
    That is, for any $k \in [n-1]$, by Definition \ref{def:tp}, it suffices to prove that
    \begin{align*}
        \det 
        \left(\begin{matrix}
            a_{i_1}(1-x_1) & \ldots &a_{i_1}(1-x_k)\\
            \vdots & & \vdots \\
            a_{i_k}(1-x_1) & \ldots &a_{i_k}(1-x_k)
        \end{matrix}
        \right) > 0,
    \end{align*}
    for all $1 \le i_1 < \ldots < i_k \le n-1$ and $0 \le x_1 <\ldots < x_k \le 1$.
    Denote the matrix inside the determinant by $M$ for simplicity.
    
    Consider $r,s \in [k]$.
    Note first that
    \begin{align*}
        a_{i_r}(1-x_s) 
        &= \binom{n-1}{i_r-1} (1-x_s)^{n-1}\parans{\frac{x_s}{1-x_s}}^{i_r-1}.
    \end{align*}
    Define $y_s = x_s/(1-x_s)$, then
    \begin{align*}
        a_{i_r}(1-x_s)  = \binom{n-1}{i_r-1} (1-x_s)^{n-1} y_s^{i_r-1}.
    \end{align*}
    Factoring out the row constants, we have
    \begin{align*}
        \det (M) = \parans{\prod_{r=1}^k \binom{n-1}{i_r-1}}
        \det
        \left(\begin{matrix}
            (1-x_1)^{n-1}y_1^{i_1-1} & \ldots &(1-x_k)^{n-1}y_k^{i_1-1}\\
            \vdots & & \vdots \\
            (1-x_1)^{n-1}y_1^{i_k-1} & \ldots &(1-x_k)^{n-1}y_k^{i_k-1}
        \end{matrix}
        \right)
    \end{align*}
    Factoring out the column constants, we have
    \begin{align*}
        \det(M) = \parans{\prod_{r=1}^k \binom{n-1}{i_r-1}}
        \parans{\prod_{s=1}^k (1-x_s)^{n-1}} 
        \det
        \left(\begin{matrix}
            y_1^{i_1-1} & \ldots &y_k^{i_1-1}\\
            \vdots & & \vdots \\
            y_1^{i_k-1} & \ldots &y_k^{i_k-1}
        \end{matrix}
        \right)
    \end{align*}
    By taking transpose in the remaining matrix and applying Theorem~\ref{thm:vandermonde}, we conclude that $\det(M) > 0$.
    Since this holds for arbitrary choice of $x_1,\ldots, x_k$ and $i_1,\ldots, i_k$, this concludes that $K(x,i)$ is $\STP_{n-1}$.

    {Now we will prove that the diminishing property of totally positive matrix $K(x,i)$ allows us to show the quasi-convexity of the sequence of partial derivatives.}
    {Let us define}  
    $q(x) = f(x) + g(x)$, where
    \begin{align*}
        f(x) &= n\alpha(1+1/\beta) h(x,\bp)^{1/\beta}
        \\
        g(x) &= (1-\alpha)/\beta h(x,\bp)^{1/\beta - 1}.
    \end{align*}
    Then, by Equation \eqref{eq:d_i}, we have 
    \begin{align*}
        d_i &= \int_0^1 q(x)a_i(x) dx
        \\
        &= \int_0^1 q(1-y)a_i(1-y) dy.
    \end{align*}

    By Lemma~\ref{lm:q-quasiconvex}, we know that $q(x)$ is quasiconvex over $x \in [0,1]$.
    This implies that $q(1-y)$ is also quasiconvex on $y \in [0,1]$ because quasiconvexity on a single dimension is preserved under affine transformations of the input variable.

    Thus, by Lemma~\ref{lm:unimodal-to-signchange}, $q(1-y)- \lambda$ has at most two sign changes for every $\lambda \in \R$, and whenever it has exactly two sign changes, it must be $(+,-,+)$.
    
    Recalling the definition of $d_i$ and $K(x,i) = a_i(1-x)$:
    \begin{align*}
        d_i&= \int_0^1 q(1-y)a_i(1-y)dy
        = \int_0^1 K(y,i) q(1-y) dy.
    \end{align*}
    Subtracting $\lambda' \in \R$:
    \begin{align*}
        d_i - \lambda' 
    = - \lambda' + \int_0^1 K(y,i) q(1-y) dy
    = \int_0^1 K(y,i)(q(1-y) -n\lambda')dy,
    \end{align*}
    where the second equality follows from Lemma~\ref{lorentz2012bernstein}, as we show $\int_0^1 a_i(x) dx = 1/n$ for any $i \in [n]$ and $K(x,i) = a_i(1-x)$.
    
    Taking $\lambda = n\lambda'$, we know that $q(1-y) - \lambda$ has at most two sign changes, and the pattern is exactly $(+,-,+)$ if it has exactly two sign changes.
    Further, since we know that $K(y,i)$ is $\STP_{n-1}$, we can apply Theorem~\ref{thm:karlin} as $S^-(q(1-y)) \le 2$ if $n \ge 3$ and conclude that $S^+(d_i-\lambda') \le 2$ for any $\lambda' \in \R$.
    This concludes that $d_i - \lambda'$ always have at most two sign changes assigning zeroes arbitrary signs, and the pattern is $(+,-,+)$ if it has exactly two sign changes.
    
   We will first prove that $d_i = d_{i+1}$ can only happen at the slope-changing index, to be defined shortly, or for $i= n-2$ if $d_i$'s are monotone decreasing without slope-changing index.
   
   Suppose first that there exists an index $k$ such that the slope changes, \ie monotonicity pattern changes.
   Let \( k \) denote the index at which the monotonicity pattern changes, i.e.,
    \[
    d_1 \ge d_2 \ge \ldots \ge d_k \le d_{k+1} \le \ldots \le d_{n-1}.
    \]
    Suppose there exists an index \( i \le k - 2 \) or \( i \ge k + 1 \) such that \( d_i = d_{i+1} \). Then, by choosing \( \lambda' = d_i \), and assigning the signs appropriately,\footnote{Formally, one can assign $(+,-,+,-)$ for $i \le k-2$, and $(-,+,-,+)$ for $i \ge k+1$.} we obtain at least four sign changes in the resulting sequence, implying \( S^+(\lambda') \ge 3 \), which contradicts our assumptions.
    On the other hand, suppose $d_1 \ge d_2 \ge \ldots \ge d_{n-1}$, \ie slope does not change.
    Similarly, if $d_i = d_{i+1}$ for $1 < i < n-1$, then by choosing $\lambda' = d_i$ and similarly assigning signs for zeroes as above, we obtain $S^+(d_i - \lambda') \ge 3$.
    If $i = 1$, then by setting $\lambda' = d_1$, sign changes become $(\cdot, \cdot, -)$, which contradicts as the sign pattern should be $(+,-,+)$ in case of exact two changes.
    Similarly, if the entire sequence is monotone increasing, then $d_i = d_{i+1}$ can only occur if $i= 1$.
    Therefore, \( d_i = d_{i+1} \) can only occur for \( i = k - 1 \) or \( i = k \), or $i = n-1$ if it is monotone decreasing sequence, or $i = 1$ if it is monotone increasing sequence.

	\sscomment{CHECK: What if $n \le 4$? In this case, $n-2 \le 2$, so constant gradients are allowed. This leads to potentially $p_1 > p_2 > p_3$. So, $n=4$ needs to be separately handled.}
    Finally, we will prove that if $k$ is the slope-changing index, then exactly one of $d_{k-1} = d_k$ or $d_{k} = d_{k+1}$ happens.
    Suppose there exists a slope-changing index $k$, and assume \( d_{k-1} = d_k = d_{k+1} \). Setting \( \lambda' = d_k \), we can assign the signs of the zeros at \( d_{k-1} - \lambda' \), \( d_k - \lambda' \), and \( d_{k+1} - \lambda' \) as \( (-, +, -) \), resulting in a sign pattern of the form \( (+, -, +, -, +) \), which again leads to a contradiction.
    Thus, there can be at most two consecutive equal values among \( (d_{k-1}, d_k) \) or \( (d_k, d_{k+1}) \), and the rest of the sequence must be strictly monotonic on either side. This completes the proof.

    \sscomment{In the full version, include a figure for such case study.}

\end{proof}

\subsection{Proof of Lemma~\ref{lm:q-quasiconvex}}
\begin{proof}[Proof of Lemma~\ref{lm:q-quasiconvex}]
    {
    Recall that $\psi(x,\bp)$ is a derivative of $h(x,\bp)$ with respect to $x$ by Lemma~\ref{lm:psi-to-h}.
    }
    Thus, $\psi(x,\bp) = dh(x,\bp)/dx \ge 0$ from Lemma~\ref{lm:h-monotone}. 
    Differentiating $q(x)$ over $x$, we obtain 
    \begin{align*}
        \frac{d q(x)}{dx} 
        &= n\alpha(1+1/\beta)(1/\beta)h(x,\bp)^{1/\beta - 1} \psi(x,\bp) + (1-\alpha)(\frac{1}{\beta} - 1)\frac{1}{\beta}h(x,\bp)^{1/\beta - 2} \psi(x,\bp)
        \\
        &=
        \psi(x,\bp)h(x,\bp)^{1/\beta - 2} \parans{w_1 h(x,\bp) + w_2},
    \end{align*}
    where we write $w_1 =  n\alpha(1+1/\beta)(1/\beta)$ and $w_2 = (1-\alpha)(\frac{1}{\beta} - 1)\frac{1}{\beta}$.
    
    Note that $w_1  \ge 0$ and $w_2 \le 0$ as we have $\beta > 1$.
    Thus $dq(x)/dx$ is nonnegative if and only if $w_1h(x,\bp) + w_2 \ge 0$.
    Since $h(x,\bp)$ is increasing function over $x$ {by Lemma~\ref{lm:h-monotone}}, this concludes that $q(x)$ is either increasing/decreasing function, or decreasing then increasing function, \ie it is quasiconvex.
\end{proof}

\subsection{Proof of Theorem~\ref{thm:opt-general}}\label{apd:thm:opt-general}

\begin{proof}[Proof of Theorem~\ref{thm:opt-general}]
    For a policy $\bp$ to be optimal, the Karush-Kuhn-Tucker (KKT) condition should hold as the constraints are linear.
    We will ignore $p_n$ as it is always zero in the optimum.
    First, the Lagrangian can be written as:
    \begin{align*}
        L(\bp, \nu, \lambda) = G(\bp) + \lambda(1 - \sum_{i=1}^{n-1} p_i) + \sum_{i=1}^{n-2}\nu_i (p_i - p_{i+1}) + \nu_{n-1}p_{n-1},
    \end{align*}
    where $\nu_i \ge 0$ for $i \in [n-1]$ and $\lambda \in \R$.
    Writing down KKT condition, we obtain
    \begin{align*}
        d_1 = \frac{\partial G(\bp)}{\partial p_1} 
        &= \lambda - \nu_1
        \\
        d_i = \frac{\partial G(\bp)}{\partial p_i} 
        &= \lambda + \nu_{i-1} - \nu_i,\;  i=2,3,\ldots, n-1
        \\
        \nu_i(p_i - p_{i+1}) &= 0,\; i=1,2,\ldots, n-2
        \\
        \nu_{n-1}p_{n-1} &= 0
        \\
        \lambda(p_1+\ldots + p_n - 1) &= 0
        \\
        p_1 \ge p_2 \ge \ldots \ge p_{n-1} &\ge 0
        \\
        \nu_i &\ge 0,\; i=1,2,\ldots, n-1
        \\
    \end{align*}

    According to Lemma~\ref{lm:unimodal}, the sequence $(d_1,\ldots, d_{n-1})$  can only have either of the following structures for some $k \in [n-1]$:
    \begin{enumerate}
        \item $d_1 > d_2 > \ldots > d_k \le d_{k+1} < \ldots < d_{n-1}$
        \item $d_1 > d_2 > \ldots > d_{n-2} \ge d_{n-1}$
        \item $d_1 \le d_2 < d_3 < \ldots, d_{n-1}$.
    \end{enumerate}
    Let $k$ be the slope-changing index in the former two cases.

    {
    Overall, given the properties of $d_i$'s above, we will consider two scenarios.
    First we will show that if $p_1 \neq p_2$, then we cannot have $i$ such that $p_i \neq p_{i+1}$ for $i=2,\ldots, n-2$, concluding that $p_1 > p_2=\ldots = p_{n-1}$.
    Then, we will show that if $p_1 = p_2$, then we always have $p_2 = \ldots = p_{n-1}$.
    }

    \textbf{Scenario  1: $\mathbf{p_1 \neq p_2}$.}
    Assume first that $p_1 \neq p_2$.
    Then, we have $\nu_1 = 0$.
    Thus, we have
    \begin{align*}
        d_1 &= \lambda
        \\
        d_2 &= \lambda - \nu_2.
    \end{align*}
    {We will deal with two cases where $p_2 = p_3$ or $p_i = p_{i+1}$ for $i=3,\ldots, n-2$ separately.}
   
  \noindent\textbf{Scenario  1a: $\mathbf {p_2 \neq p_3}$.} Suppose, for the sake of contradiction, that $p_2 \neq p_3$. Then, we must have $\nu_2 = 0$, which implies
\begin{align*}
    d_1 &= \lambda, \\
    d_2 &= \lambda, \\
    d_3 &= \lambda - \nu_3.
\end{align*}
This implies that the slope-changing index is $k = 1$. However, this leads to a contradiction: we have {$d_3 = \lambda - \nu_3 \le \lambda = d_2$} since $\nu_3 \ge 0$, but since $k=1$ is the slope-changing index, it should have been $d_2 < d_3$. Therefore, we conclude that if $p_1 \neq p_2$, then it must be that $p_2 = p_3$. 

\noindent\textbf{Scenario  1b: $\mathbf {p_i \neq p_{i+1}}$ for $\mathbf{i\ge  3}$.} Now, again for the sake of contradiction, suppose that the first index $i$ such that $p_i\ne p_{i+1}$ is greater than or equal  to $3$. That is,
    \begin{align*}
        p_1 \neq p_2 = p_3 = \ldots = p_i \neq p_{i+1},
    \end{align*}
    for some $i \ge 3$.
    Then $\nu_i = 0$ and we have the following   series of equations:
    \begin{align*}
        d_1 &= \lambda
        \\
        d_2 &= \lambda - \nu_2
        \\
        d_3 &= \lambda + \nu_2 - \nu_3
        \\
        \vdots
        \\
        d_i &= \lambda + \nu_{i-1}
        \\
        d_{i+1} &= \lambda - \nu_{i+1}.
    \end{align*}
    If $\nu_{i+1} \neq 0$ and $\nu_{i-1} \neq 0$, then it is a contradiction as the $d$ sequence increases then decreases.
    Thus, either of $\nu_{i-1}$ or $\nu_{i+1}$ is zero.
    
    If $\nu_{i+1} = 0$, then we should again have $\nu_{i-1} = 0$ as otherwise the $d$ sequence increases from {$d_1 = \lambda$ to $d_i = \lambda + \nu_{i-1}$, then decreases to $d_{i+1} = \lambda - \nu_{i+1}$.} 
    Then, notice that by setting $\nu_{i+1} = \nu_{i-1}= 0$, we have
    \begin{align*}
        d_1 &= \lambda
        \\
        d_2 &= \lambda - \nu_2
        \\
        d_3 &= \lambda + \nu_2 - \nu_3
        \\
        \vdots
        \\
        d_{i-1} &= \lambda + \nu_{i-2}
        \\
        d_i &= \lambda
        \\
        d_{i+1} &= \lambda,
    \end{align*}
    which is a contradiction as $i$ should be the slope-changing index and thus it should be the case that $\lambda = d_1 > d_2 > \ldots > d_i = d_{i+1} = \lambda$, but we have $d_1 = \lambda \le d_{i-1}  = \lambda + \nu_{i-2} \ge d_i = d_{i+1}$. 
    
    Therefore, we have $\nu_{i-1} = 0$.
    In this case, again, we have
    \begin{align*}
        d_1 &= \lambda
        \\
        d_2 &= \lambda - \nu_2
        \\
        d_3 &= \lambda + \nu_2 - \nu_3
        \\
        \vdots
        \\
        d_i &= \lambda
        \\
        d_{i+1} &= \lambda - \nu_{i+1},
    \end{align*}
    and thus it should be the case that $\lambda = d_1 > d_2 > \ldots \ge d_k < \ldots < d_i = \lambda < d_{i+1}$ since slope-changing needs to occur strictly before the index $i$, which is a contradiction as $\lambda = d_i \ge d_{i+1} = \lambda - \nu_{i+1}$.
    Thus, if $p_1 \neq p_2$, we always have $p_1 \neq p_2 = \ldots = p_{n-1} \ge p_n = 0$.

    \textbf{Scenario  2: $\mathbf {p_1 = p_2}$.}
    {
    Now the remaining case is when $p_1 = p_2$.
    Similarly, we will separately deal with two cases: (i) when $p_i \neq p_{i+1}$ for some $i=3,\ldots, n-2$, and (ii) when $p_2 \neq p_3$.
    }
    Let $i \ge 2$ be the first index such that $p_i \neq p_{i+1}$, \ie $\nu_i = 0$, for $i \in [n-2]$.

    \noindent\textbf{Scenario  2a: $\mathbf {p_1 = p_2>\ldots > p_i = p_{i+1}}$ for $i =3,\ldots, n-2$.}
    In this case, we have
    \begin{align*}
        d_1 &= \lambda - \nu_1
        \\
        d_2 &= \lambda + \nu_1 - \nu_2
        \\
        d_3 &= \lambda + \nu_2 - \nu_3
        \\
        \vdots
        \\
        d_{i-1} &= \lambda + \nu_{i-2} - \nu_{i-1}
        \\
        d_i &= \lambda + \nu_{i-1}
        \\
        d_{i+1} &= \lambda - \nu_{i+1}.
    \end{align*}
    For these equations to be true under the condition on $d_i$'s, it should be the case that either of  $\nu_{i-1}$ and $\nu_{i+1}$ is zero as otherwise it increases from $d_1$ then decreases to $d_{i+1}$.
    If $\nu_{i-1} = 0$, we have $d_{i-1} = \lambda + \nu_{i-2}$, which is again a contradiction as we have $d_1 \le d_{i-1} \ge d_i \ge d_{i+1}$.
    If $\nu_{i+1} = 0$, then we should have $d_i = d_{i+1}$ as otherwise it increases from $d_1 = \lambda - \nu_1$ then decreases to $\lambda$.
    Thus, $d_i = d_{i+1} = \lambda$.
    Again, in this case, we have $d_{i-1} = \lambda + \nu_{i-2}$, which is a contradiction.

    \noindent\textbf{Scenario  2b: $\mathbf {p_1 = p_2 > p_3}$.}
    In this case, $\nu_2 = 0$ and we have
    \begin{align*}
        d_1 &= \lambda - \nu_1
        \\
        d_2 &= \lambda + \nu_1
        \\
        d_3 &= \lambda - \nu_3
        \\
        d_4 &= \lambda + \nu_3 - \nu_4
    \end{align*}
    Thus, it should be the case that $\nu_1 = 0$ as otherwise $d_1 < d_2 \ge d_3$.
    Then, the resulting equations are:
    \begin{align*}
        d_1 &= \lambda
        \\
        d_2 &= \lambda
        \\
        d_3 &= \lambda - \nu_3
        \\
        d_4 &= \lambda + \nu_3 - \nu_4,
    \end{align*}
    and thus the slope-changing index should be $k = 1$.
    Thus, after $k+1= 2$, it should always strictly increase, which is a contradiction to $d_3 = \lambda - \nu_3$.
    Therefore, if $p_1 = p_2$, we always have $p_1 = p_2 = \ldots = p_{n-1}$.

    Combining, the only possible cases are:
    \begin{align*}
        p_1 > p_2 =\ldots =p_{n-1} \ge p_n = 0
        \\
        p_1 = p_2 = \ldots = p_{n-1} \ge p_n = 0,
    \end{align*}
    and it finishes the proof.


\end{proof}

\section{Omitted Proofs in Section~\ref{sec:computation}}\label{apd:computation}

Before presenting the proofs, we need the following technical lemma, the proof of which is omitted as it is an elementary algebra.
\begin{lemma}\label{lem:power}
For $r\ge1$ and $a,b\in[0,1]$, $|a^r-b^r|\le r\,|a-b|$. 
For $r\in(0,1]$ and $a,b\ge0$, $|a^r-b^r|\le |a-b|^r$.
\end{lemma}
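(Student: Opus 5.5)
We need to prove Lemma \ref{lem:power}: for $r\ge1$ and $a,b\in[0,1]$, $|a^r-b^r|\le r|a-b|$; and for $r\in(0,1]$, $a,b\ge0$, $|a^r-b^r|\le|a-b|^r$.

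By symmetry of both sides in $(a,b)$, we assume throughout that $a\ge b$; the case $a=b$ is trivial, so we take $a>b$.

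\emph{First part ($r\ge1$).} We apply the mean value theorem to $t\mapsto t^r$ on $[b,a]\subseteq[0,1]$. This gives $a^r-b^r=r\xi^{r-1}(a-b)$ for some $\xi\in(b,a)$. Since $r-1\ge0$ and $\xi\le1$, we have $\xi^{r-1}\le1$, which yields the bound. The function is differentiable on $(0,1)$, and continuity at the endpoints suffices for the mean value theorem, so nothing delicate happens at $0$.

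\emph{Second part ($0<r\le1$).} The key step is subadditivity of $t\mapsto t^r$ on $[0,\infty)$: $(x+y)^r\le x^r+y^r$ for $x,y\ge0$. If $x+y=0$ this is trivial. Otherwise, set $s=x/(x+y)\in[0,1]$, so the claim becomes $1\le s^r+(1-s)^r$. This holds because $s^r\ge s$ and $(1-s)^r\ge 1-s$ for $s\in[0,1]$ and $r\le1$.

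Applying subadditivity with $x=a-b$ and $y=b$ gives $a^r\le(a-b)^r+b^r$, that is, $a^r-b^r\le(a-b)^r$. Since $a\ge b$, the left side is nonnegative, so this is exactly $|a^r-b^r|\le|a-b|^r$.

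The only real point is the subadditivity step, which reduces to the elementary inequality $s^r\ge s$ on $[0,1]$. Nothing else is an obstacle.
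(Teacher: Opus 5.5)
Your proof is correct and complete: the mean value theorem handles $r\ge 1$ (using $\xi^{r-1}\le 1$ for $\xi\in(0,1)$), and subadditivity of $t\mapsto t^r$ on $[0,\infty)$, applied with $x=a-b$ and $y=b$, handles $r\in(0,1]$. The paper omits this proof as elementary, but your route is the standard one and rests on the same subadditivity $(x+y)^{\rho}\le x^{\rho}+y^{\rho}$ that the paper itself invokes in the remark following Lemma~\ref{lem:gap}.
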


\subsection{Proof of Lemma~\ref{lem:computation-validity}}
\begin{proof}[Proof of Lemma~\ref{lem:computation-validity}]
For each fixed $x$, $p_1\mapsto h(x;p_1)=c_0(x)+c_1(x)p_1$ is affine and $t\mapsto t^{r}$ is increasing
for $r>0$; hence $\sup_{p_1\in[\ell,u]} h(x;p_1)^{r}=\max\{h(x;\ell)^r,h(x;u)^r\}$.
Integrating in $x$, and summing the two exponents $r\in\{1+1/\beta,\,1/\beta\}$ with weights
$\alpha n$ and $1-\alpha$, gives the upper bound. The lower bound follows since $L(I)$ is the
objective at boundary points.
\end{proof}

\subsection{Proof of Lemma~\ref{lem:gap}}
\begin{proof}[Proof of Lemma~\ref{lem:gap}]
Let $p^\diamond\in\{\ell,u\}$ be such that $L(I)=G(p^\diamond)$ and let $p^\circ$ be the other endpoint.
By the definition of $U(I)$ and $L(I)$, linearity of the integral, and the inequality
$\max\{A,B\}-A\le |B-A|$ (valid for all real $A,B$), we have
\begin{align*}
U(I)-L(I)
&= \alpha n\!\int_0^1\!\!\Big(\max\{h(x;p^\circ)^{1+1/\beta},h(x;p^\diamond)^{1+1/\beta}\}-h(x;p^\diamond)^{1+1/\beta}\Big)\,dx\\
&\quad+(1-\alpha)\!\int_0^1\!\!\Big(\max\{h(x;p^\circ)^{1/\beta},h(x;p^\diamond)^{1/\beta}\}-h(x;p^\diamond)^{1/\beta}\Big)\,dx\\
&\le \alpha n\!\int_0^1\!\big|h(x;p^\circ)^{1+1/\beta}-h(x;p^\diamond)^{1+1/\beta}\big|\,dx\\
&\qquad+(1-\alpha)\!\int_0^1\!\big|h(x;p^\circ)^{1/\beta}-h(x;p^\diamond)^{1/\beta}\big|\,dx.
\end{align*}
Define $\Delta_r(x):=\big|h(x;p^\circ)^r-h(x;p^\diamond)^r\big|$ for $r>0$. Then
\begin{align}
U(I)-L(I)\ \le\ \alpha n\!\int_0^1\!\Delta_{1+1/\beta}(x)\,dx+(1-\alpha)\!\int_0^1\!\Delta_{1/\beta}(x)\,dx.
\label{eq:ULtoDelta}
\end{align}

Because $h(x;p_1)\in[0,1]$ and $p_1\mapsto h(x;p_1)$ is affine,
the increment across the interval $I=[\ell,u]$ is
\[
h(x;u)-h(x;\ell)=c_1(x)\,(u-\ell)=c_1(x)\,|I|.
\]
For $r\ge1$ and $a,b\in[0,1]$, $|a^r-b^r|\le r\,|a-b|$ and for $r\in(0,1]$ and $a,b\ge0$, $|a^r-b^r|\le |a-b|^r$ by Lemma~\ref{lem:power}.
Applying these with $(a,b)=(h(x;u),h(x;\ell))$ yields
\[
\Delta_{1+1/\beta}(x)\le \Big(1+\tfrac1\beta\Big)\,|c_1(x)|\,|I|,
\qquad
\Delta_{1/\beta}(x)\le |c_1(x)|^{1/\beta}\,|I|^{1/\beta}.
\]

Substituting the bounds for $\Delta_{1+1/\beta}$ and $\Delta_{1/\beta}$ into \eqref{eq:ULtoDelta},
\[
U(I)-L(I)\ \le\ \alpha n\Big(1+\tfrac1\beta\Big)\!\int_0^1\!|c_1(x)|\,dx\;|I|
\ +\ (1-\alpha)\!\int_0^1\!|c_1(x)|^{1/\beta}\,dx\;|I|^{1/\beta},
\]
which is the claimed inequality with $C_1$ and $C_2$ as stated.
\end{proof}

\subsection{Proof of Lemma~\ref{prop:invariants}}
\begin{proof}[Proof of Lemma~\ref{prop:invariants}]
(i) $L^\star$ is the maximum of objective values at feasible endpoints, hence a valid global lower
bound. (ii) Let $I^\star$ be any interval (among the descendants of $I_0$) that contains a global
maximizer; Lemma~\ref{lem:computation-validity} gives $U(I^\star)\ge\sup_{p\in I^\star}G(p)\ge\mathrm{OPT}$, and
$I^\star$ remains in $\mathcal A$ until pruned or split, so $\max_{I\in\mathcal A}U(I)\ge U(I^\star)\ge\mathrm{OPT}$.
\end{proof}

\subsection{Proof of Theorem~\ref{thm:rate}}
\begin{proof}[Proof of Theorem~\ref{thm:rate}]
Let $I^\star$ be any active interval containing a maximizer. If $|I^\star|\le\delta(\varepsilon)$, then by
Lemma~\ref{lem:gap}, $U(I^\star)-L(I^\star)\le\varepsilon$. Since $\mathrm{OPT}\le U(I^\star)$ and
$L^\star\ge L(I^\star)$, we get $\mathrm{OPT}-L^\star\le \varepsilon$. The loop exits only when
$\max_{I\in\mathcal A}U(I)\le L^\star+\varepsilon$; combined with Proposition~\ref{prop:invariants},
\[
\mathrm{OPT}\le \max_{I}U(I)\le L^\star+\varepsilon\quad\Rightarrow\quad G(p^\star)=L^\star\ge \mathrm{OPT}-\varepsilon.
\]
If $|I^\star|>\delta(\varepsilon)$, it cannot be pruned because Proposition~\ref{prop:invariants}
gives $U(I^\star)\ge\mathrm{OPT}>L^\star+\varepsilon$. Hence it must be split. After $k$ bisections
of the interval containing the maximizer, the length is $\le D\,2^{-k}$. The stated bound on $k$
ensures $D\,2^{-k}\le\delta(\varepsilon)$, completing the proof.
To see the time complexity, note that a uniform partition of $\mathcal D$ with mesh $\delta(\varepsilon)$ has $O(D/\delta(\varepsilon))$
intervals, hence the total number of nodes explored by the algorithm is
\[
O\!\Big(\tfrac{C_1}{\varepsilon}\Big)\ +\ O\!\Big(\tfrac{C_2}{\varepsilon}\Big)^{\!\beta},
\]
dominated by the second term when $\beta>1$, which is polynomial in $1/\eps$ given constant $\beta$.
\end{proof}

\subsection{Proof of Lemma~\ref{lem:riemann}}
\begin{proof}[Proof of Lemma~\ref{lem:riemann}]
For a monotone increasing $f:[0,1]\to\mathbb{R}$, the upper–minus–lower Riemann sum on the uniform grid $\{j/m\}_{j=0}^m$ equals
$\frac{1}{m}\sum_{j=1}^{m}[f(j/m)-f((j-1)/m)]=\frac{f(1)-f(0)}{m}$.
The integral lies between the two sums, proving the bound.
Monotonicity of $f_r$ follows from Lemma~\ref{lm:h-monotone} and $r>0$.
\end{proof}

\section{Omitted Proofs in Section~\ref{sec:general}}\label{apd:general-obj}
\subsection{Proof of Theorem~\ref{thm:general-obj}}
\begin{proof}[Proof of Theorem~\ref{thm:general-obj}]
    We first prove the general statement for arbitrary polynomial with nonnegative cofficients.
    It suffices to prove that the resulting sequence of partial derivatives with respect to $p_i$'s has quasiconvexity by Lemma~\ref{lm:unimodal} under the new objective function $\cG(\bmu, \bp)$.
    
    Since we deal with symmetric MNE $\bmu$, we will simply drop the index $i$ in $q_i$ and $\mu_i$.
    Using~\eqref{eq:03092207},\footnote{Note that one can analogously prove $p_n = 0$, and we omit the detail due to its redundancy.} for any $j \in [n]$, we have
    \begin{align*}
        \Exu{q \sim \mu}{e_jq^{k_j}}
        &= \int_{q \in \supp(F)} e_jq^{k_j} dF(q)
        \\
        &= 
        \int_{q \in \supp(F)} e_j\parans{\parans{\sum_{i=1}^n p_i \binom{n-1}{i-1} F(q)^{n-i}(1-F(q))^{i-1}}^{{k_j}}}^{1/\beta}dF(q)
        \\
        &=
        \int_{q \in \supp(F)} e_j\parans{\sum_{i=1}^n p_i \binom{n-1}{i-1} F(q)^{n-i}(1-F(q))^{i-1}}^{{k_j}/\beta}dF(q)
        \\
        &=
        \int_0^1 e_jh(x,\bp)^{{k_j}/\beta}dx.
    \end{align*}
    
    Note that the partial derivative $d_i$ of the objective function with respect to $p_i$ can be written as:
    \begin{align*}
        d_i = 
        \frac{\partial \cG(x,\bp)}{\partial p_i} 
        &= \frac{\partial}{\partial p_i} \sum_{j=1}^m \int_0^1 e_jh(x,\bp)^{{k_j}/\beta}dx
        \\
        &=
        \sum_{j=1}^m \int_0^1 \frac{j}{\beta} e_jh(x,\bp)^{{k_j}/\beta-1}a_i(x) dx
        \\
        &=
        \int_0^1 \parans{\sum_{j=1}^m \frac{j}{\beta}e_jh(x,\bp)^{{k_j}/\beta - 1}}a_i(x) dx.
    \end{align*}
    Note that the same term $\sum_{j=1}^m \frac{k_j}{\beta}e_jh(x,\bp)^{k_j/\beta - 1}$ remains the same for every $i \in [n]$.

    Given $\bp$, define $q(x) = \sum_{j=1}^m \frac{k_j}{\beta}e_jh(x,\bp)^{k_j/\beta - 1}$.
    We will prove that $q(x)$ is quasiconvex, \ie it either decreases, increases, or decreases then increases.
    Note that this concludes the proof.

    Let us compute the derivative $q'(x)$:
    \begin{align*}
        q'(x) &= \frac{d}{dx} \left( \sum_{j=1}^{m} e_j \frac{k_j}{\beta} h(x,\bp)^{\frac{k_j}{\beta} - 1} \right) 
        \\
        &= \sum_{j=1}^{m} e_j \frac{k_j}{\beta} \left(\frac{k_j}{\beta} - 1\right) h(x,\bp)^{\frac{k_j}{\beta} - 2} \cdot \frac{dh(x,\bp)}{dx}.
    \end{align*}
    We can factor out the common terms and write:
    \begin{align*}
        q'(x) = \frac{dh(x,\bp)}{dx}h(x,\bp)^{-2} \parans{ \sum_{j=1}^{m} e_j \frac{k_j}{\beta} \left(\frac{k_j}{\beta} - 1\right) h(x,\bp)^{\frac{k_j}{\beta}} }.
    \end{align*}
    By Lemma~\ref{lm:h-monotone}, we know that $h'(x,\bp) \ge 0$ for any $x \in [0,1]$.
    Further, $h(x,\bp) \ge 0$ for any $x \in (0,1)$.
    Thus, it suffices to prove that the following quantity
    \begin{align}
        \sum_{j=1}^{m} e_j \frac{k_j}{\beta} \left(\frac{k_j}{\beta} - 1\right) h(x,\bp)^{\frac{k_j}{\beta}},\label{eq:07082306}
    \end{align}
    changes its sign at most once, from $-$ to $+$, as this implies that the original function $q(x)$ is decreases then increases, or just monotone.
    
    Define $c_j = e_j \frac{k_j}{\beta}(\frac{k_j}{\beta}-1)$.
    Noting that $h(x,\bp)$ is strictly increasing and nonnegative by Lemma~\ref{lm:h-monotone}, we will simply write $z = h(x,\bp)^{1/\beta}$.
    Then, Equation~\eqref{eq:07082306} simply becomes a posynomial over $z$:
    \begin{align*}
        P(z) = \sum_{j=1}^m c_jz^{k_j}.
    \end{align*}
    By Descartes' rule of signs, the number of positive real roots of $P(z)$ is less than or equal to the number of sign changes in the sequence of its coefficients $(c_1,\ldots, c_m)$.\footnote{We note that Descartes' rule is proven to be true with real exponents by~\cite{curtiss1918recent}.}
    
    Given that $e_j \ge 0$ for all $j \in [m]$, the sign of $c_j$ is entirely determined by the $e_j\frac{k_j}{\beta}(\frac{k_j}{\beta} - 1)$.
    Note that this changes from $-$ to $+$ based on our assumption.
    In particular, when $e_j$'s are all nonnegative, this is positive or negative depending on whether $k_j/\beta - 1 > 0$ or $<0$, \ie whether $k_j > \beta$ or $k_j < \beta$.
    Further, if $c_j \ge 0$, this implies that $j \ge \beta$ and thus $c_{k} \ge 0$ for any $k \ge j$.
    Combining, we conclude that the sequence of coefficients $(c_1,\ldots, c_m)$ changes its sign from $-$ to $+$.
    Thus, the posynomial $P(z)$ has at most one positive real root by Descartes' rule of signs.
    Further, if we take limit $x \to 0+$, we have $z = h(x,\bp)^{1/\beta}\to p_n = 0$.
    In this case, the term with the smallest power of $z$ will dominate in $P(z)$.
    Thus, the sign changes of $P(z)$ can only happen from $-$ to $+$.
    This concludes that $q(x)$ is quasiconvex.

\end{proof}

\subsection{Proof of Corollary~\ref{cor:social-welfare}}
\begin{proof}[Proof of Corollary~\ref{cor:social-welfare}]
    Recall that under the symmetric MNE $\bmu$, we have
    \begin{align*}
        \cW(\bmu, \bp) &= n\int_0^1 h(x,\bp)^{1/\beta + 1}dx
        \\
        \cV(\bmu, \bp) &= 
        \sum_{j=1}^m \int_0^1 h(x,\bp)^{j/\beta}dx
        \\
        \cU(\mu, \bmu_{-i},\bp) &= \frac{1}{n} - \Exu{q \in \mu}{q^\beta}
        \\
        &= \frac{1}{n} - n\int_0^1 h(x,\bp)dx,
    \end{align*}
    where the last equation follows from~\eqref{eq:03092207}.
    Notice, however, that $\int_0^1h(x,\bp)dx = \frac{\sum_{i=1}^n p_i}{n} = \frac{1}{n}$,
    by Lemma~\ref{lorentz2012bernstein}.
    Thus, it remains to show that
    \begin{align*}
        \int_0^1 h(x,\bp)^{1/\beta + 1}dx + \sum_{j=1}^m \int_0^1 h(x,\bp)h(x)^{j/\beta}dx
    \end{align*}
    satisfies the condition stated in Theorem~\ref{thm:general-obj}, which is straightforward since all the coefficients are nonnegative.
    This concludes the proof.
\end{proof}

\begin{proof}[Proof of Corollary~\ref{cor:order-statistics}]
    Let $F_{(1)}$ be the c.d.f. of the largest order statistics among $q_i$'s at symmetric MNE $\bmu$ with c.d.f. $F$.
    Then, we have
    \begin{align*}
        \Exu{q_i \sim \mu_i, i\in[n]}{q_{(1)}}
        &= \int_{q \in \supp(F)} (1-F_{(1)(q)})dF(q)
        \\
        &= \int_{q \in \supp(F)} (1-F(q)^n)dF(q)
        \\
        &=
        \int_0^1  (1-x^n)\psi(x,\bp) h(x,\bp)^{1/\beta - 1}dx
        \\
        &=
        n\int_0^1 x^{n-1}h(x,\bp)^{1/\beta}dx,
    \end{align*}
    where the derivations are analogous to the proof of Theorem~\ref{thm:prop-user}.

    Taking a partial derivative with respect to $p_i$, we obtain
    \begin{align*}
        n\int_0^1x^{n-1}h(x,\bp)^{1/\beta}a_i(x) dx.
    \end{align*}
    Thus, we take $q(x) = nx^{n-1}h(x,\bp)^{1/\beta}$ to apply Lemma~\ref{lm:unimodal}.
    Then, notice that
    \begin{align*}
        \frac{dq(x)}{dx} 
        &= n(n-1)x^{n-2}h(x,\bp)^{1/\beta} + \frac{n}{\beta}x^{n-1}h(x,\bp)^{1/\beta - 1}\psi(x,\bp)
        \\
        &=
        nx^{n-2}h(x,\bp)^{1/\beta - 1} \parans{(n-1) h(x,\bp) + x/\beta \psi(x,\bp)}.
    \end{align*}
    Since $\psi(x,\bp) \ge 0, h(x,\bp) \ge 0$, this is always nonnegative, and thus quasiconvex.
    Thus, we can apply Lemma~\ref{lm:unimodal} and finish the proof.
\end{proof}

\subsection{Proof of Corollary~\ref{cor:exp-utility}}
\begin{proof}[Proof of Corollary~\ref{cor:exp-utility}]
Let us define
\[
\Phi(\bp, \mu)\ :=\ \Exu{q \sim \mu}{e^{\lambda q}},
\]
for the symmetric MNE strategy $\mu$ given the policy $\bp$.
For $M\in \N$, we define the truncated posynomial
\[
\Phi_M(\bp, \mu)\ :=\ \Ex{\sum_{j=0}^{M}\frac{\lambda^j}{j!}q^{j}}
 = \sum_{j=0}^{M}\frac{\lambda^j}{j!}\Ex{q^{j}}.
\]
For each $M$, $\Phi_M$ is a finite linear combination of moments $\Ex{q^j}$ with nonnegative
coefficients $e_j:=\lambda^j/j!$. 
Ordering the exponents as $k_j=j$, the sign pattern
$e_j(k_j-\beta)$ exhibits at most one sign change as $j$ increases (it is negative for $j<\beta$,
nonnegative for $j\ge \beta$). Hence Theorem~\ref{thm:general-obj} applies and yields a maximizer $\bp^{(M)}$ which is piecewise constant with at most two value changes at $p_1$ or $p_n$.

Since $q\in[0,1]$, the Lagrange remainder of the exponential series in its Taylor series gives, uniformly
over $q\in[0,1]$,
\[
\Big|e^{\lambda q}-\sum_{j=0}^{M}\frac{\lambda^j}{j!}\,q^{j}\Big|
\ \le\ \frac{e^{\lambda}\,(\lambda q)^{M+1}}{(M+1)!}
\ \le\ \frac{e^{\lambda}\,\lambda^{M+1}}{(M+1)!}.
\]
Thus, we obtain
\[
\sup_{\bp \in \Delta} \big|\Phi(\bp, \mu)-\Phi_M(\bp, \mu)\big|
 \le \sup_{q\in[0,1]}\Big|e^{\lambda q}-\sum_{j=0}^{M}\frac{\lambda^j}{j!}\,q^{j}\Big|
\le \frac{e^{\lambda}\lambda^{M+1}}{(M+1)!}.
\]

The sequence
$\{\bp^{(M)}\}$ lies in $\Delta$, which is a compact feasible set of policies, so it has a convergent subsequence $\bp^{(M_k)}\to \bp^\star$.
Since we know that $\Phi_M\to\Phi$ uniformly on $\Delta$, so
\[
\Phi\big(\bp^{(M_k)}\big)\ \ge\ \sup_{\bp \in \Delta}\Phi_{M_k}(\bp)-\frac{e^{\lambda}\lambda^{M_k+1}}{(M_k+1)!}
\ \ge\ \Phi_{M_k}(\bp)-\frac{e^{\lambda}\lambda^{M_k+1}}{(M_k+1)!}\,,
\]
Hence, letting $k\to\infty$ yields $\Phi(\bp^\star)\ge \Phi(\bp)$ for all $\bp \in \Delta$, \ie $\bp^\star$ is optimal for $\Phi$.
Since $\bp^\star$ is a convergent sequence of $\bp^{(M)}$ with the structure $p_1 \ge p_2 = \ldots = p_{n-1} \ge p_n = 0$, the resulting sequence $\bp^\star$ also inherits the same structure, as desired.

Finally, it is immediate to extend the analysis to any convex combination of exponential utilities with different $\lambda$'s, since the uniform convergence analogously carries over and the structural result also follows as the sign of $e_j(k_j - \beta)$ only depends on whether $k_j \ge \beta$ or not.
\end{proof}

\end{document}